\newtheorem{theorem}{Theorem}[section]
\newtheorem{lemma}[theorem]{Lemma}
\newtheorem{claim}[theorem]{Claim}
\newtheorem{observation}[theorem]{Observation}
\newtheorem{corollary}[theorem]{Corollary}
\newtheorem{definition}[theorem]{Definition}
\newtheorem{question}[theorem]{Question}
\newcommand{\parent}{\operatorname{Pa}}
\newcommand{\len}[1]{{\norm{#1}}}
\def\eps{\e}
\newcommand{\home}{\mathrm{home}}
\newcommand{\lca}{\mathcal{LCA}}
\DeclareMathOperator\dist{\delta}
\DeclareMathOperator*{\cost}{cost}
\DeclareMathOperator*{\tw}{tw}
\def\note#1{}
\begin{document}

\begin{titlepage}
	
\title{Covering Planar Metrics (and Beyond): O(1) Trees Suffice}


\author{%
Hsien-Chih Chang%
\thanks{Department of Computer Science, Dartmouth College. Email: {\tt hsien-chih.chang@dartmouth.edu}.}  
\and 
Jonathan Conroy%
\thanks{Department of Computer Science, Dartmouth College. Email: {\tt jonathan.conroy.gr@dartmouth.edu}}  
\and 
Hung Le%
\thanks{Manning CICS, UMass Amherst. Email: {\tt hungle@cs.umass.edu}}  
\and
Lazar Milenkovic%
\thanks{Tel Aviv University}  
\and
Shay Solomon%
\thanks{Tel Aviv University}  
\and
Cuong Than%
\thanks{Manning CICS, UMass Amherst. Email: {\tt cthan@cs.umass.edu}}  
}

\date{}

\maketitle
	
\thispagestyle{empty}
	
\begin{abstract}
While research on the geometry of planar graphs has been active in the past decades, many properties of planar metrics remain mysterious.   This paper studies a fundamental aspect of the planar graph geometry: covering planar metrics by a small collection of simpler metrics.  Specifically, a \EMPH{tree cover} of a metric space $(X, \delta)$ is a collection of trees, so that every pair of points $u$ and $v$ in $X$ has a low-distortion path in at least one of the trees. 

The celebrated ``Dumbbell Theorem''~\cite{ADMSS95} states that any low-dimensional Euclidean space admits a tree cover with $O(1)$ trees and distortion $1+\varepsilon$, for any fixed $\varepsilon \in (0,1)$. This result has found numerous algorithmic applications, and has been generalized to the wider family of doubling metrics~\cite{BFN19Ramsey}.  Does the same result hold for planar metrics?  A positive answer would add another evidence to the well-observed connection between Euclidean/doubling metrics and planar metrics. 

In this work, we answer this fundamental question affirmatively. Specifically, we show that for any given fixed $\varepsilon \in (0,1)$, any planar metric can be covered by $O(1)$ trees with distortion $1+\varepsilon$.  Our result for planar metrics follows from a rather general framework:  First we reduce the problem to constructing tree covers with \emph{additive distortion}.  Then we introduce the notion of \emph{shortcut partition}, and draw connection between shortcut partition and additive tree cover.  Finally we prove the existence of shortcut partition for any planar metric, using new insights regarding the grid-like structure of planar graphs.
To demonstrate the power of our framework: 
\begin{itemize}
    \item We establish additional tree cover results beyond planar metrics; in particular, we present an $O(1)$-size tree cover with distortion $1+\varepsilon$ for \emph{bounded treewidth} metrics;
    \item We obtain several algorithmic applications in planar graphs from our tree cover. 
\end{itemize}

The grid-like structure is a technical contribution that we believe is of independent interest. We showcase its applicability beyond tree cover by constructing a simpler and better embedding of planar graphs into $O(1)$-treewidth graphs with  small additive distortion, resolving an open problem in this line of research.
\end{abstract}

\end{titlepage}


\section{Introduction}

Research on the structure of planar graphs has provided many algorithmic tools such as separators~\cite{LT79} and cycle separators~\cite{Miller86}, $r$-divisions~\cite{Federickson87}, sphere-cut decomposition~\cite{DPBF09}, abstract Voronoi diagram~\cite{Cabello18}, and strong product theorem~\cite{DJMPUW20}, to name a few.
These structural results are rooted in the simple topology of planar graphs.  
Another line of important and complementary research from an algorithmic point of view is to understand the geometry of planar graphs and, more precisely, the metric spaces induced by shortest-path distances in planar graphs. Such metric spaces are called \emph{planar metrics}. Understanding the properties of metric spaces in general---the main subject of the metric embedding literature---has led to surprising algorithmic consequences~\cite{LLR95,Bar96,FRT04}. 
One may naturally expect that the simple topology of planar graphs would help in understanding planar metrics, from which we could significantly extend our algorithmic toolkit. 
Indeed, there have been a few such successful attempts, such as padded decompositions with $O(1)$ padding parameter~\cite{KPR93}, or embedding into $\ell_2$ with $O(\sqrt{\log n})$ distortion~\cite{Rao99} which has a matching lower bound~\cite{NR02}, and several other results~\cite{GNRS04,AFGN22}. 
However, many basic questions remain open.  A notable example is the $\ell_1$ embedding conjecture: can we embed a planar metric into $\ell_1$ with $O(1)$ distortion~\cite{GNRS04}?  More generally, what is the distortion for embedding into $\ell_p$ for any $p \geq 1, p\not=2$? 
(See~\cite{GNRS04,BC05,AFGN22,Filtser20}, and references therein for a host of other related questions.) 
These suggest that understanding the geometry of planar graphs is very challenging.  On the other hand, a deep understanding of the geometry of planar graphs often leads to remarkable algorithmic results. For example, a constant approximation for the sparsest cut problem in planar graphs would follow from a positive resolution of the $\ell_1$ embedding conjecture; the best-known algorithm achieving such constant factor approximation without relying on the unproven conjecture runs in quasi-polynomial time~\cite{CGKL21}.

Embedding is one important aspect of the geometry of planar metrics, but it might not be the only telling one.  As planar graphs are defined by drawing in the Euclidean plane, can we relate planar metrics to the 2D (or more generally, low-dimensional) Euclidean metric? 
Embedding is not illuminating in this respect: there exists an (unweighted) planar metric of $n$ points that requires distortion $\Omega(n^{2/3})$ for any embedding into $\mathbb{R}^2$~\cite{BDHM07}. More generally, as simple as the unweighted star graph of $n$ vertices, any embedding into an $O(1)$-dimensional Euclidean space requires distortion $n^{\Omega(1)}$ by a volume argument. 
This motivates us to look into the \emph{covering} aspect, namely, covering metrics by simpler metrics. 
Here \emph{tree metrics} are of special interest due to their simplicity and algorithmic applicability; in addition, Euclidean/doubling metrics were known to have a covering of constant size, as defined next.

We say that an edge-weighted tree $T$ with shortest-path distance $d_T$ is a \EMPH{dominating tree} of a metric space $(X,\delta_X)$ if $X$ is a vertex subset of $T$, and for every two points $x$ and $y$ in $X$, one has $\delta_X(x,y)\leq d_T(x,y)$. 
For any given parameter $\alpha\geq 1$, we say that a collection of trees, denoted by \EMPH{$\mathcal{F}$}, is a \EMPH{$\alpha$-tree cover} of $(X,\delta_X)$ if every tree in $\mathcal{F}$ is dominating, and for every two points $x\neq y$ in $X$, there is a tree $T$ in $\mathcal{F}$ such that $d_T(x,y) \leq \alpha\cdot \delta_X(x,y)$.  The \EMPH{size} of the tree cover is the number of trees in $\mathcal{F}$. 
Parameter $\alpha$ is called the \EMPH{distortion} of the tree cover. 
When $\alpha=1$, we say that the tree cover is an \EMPH{exact tree cover}. The notion of tree covers, and its variants, were studied in the past by many researchers~\cite{AP92,AKP94,ADMSS95,GKR01,BFN19Ramsey,FL22,KLMS22}. 

More than two decades ago, Arya et al.~\cite{ADMSS95} showed that any set of $n$ points in $\mathbb{R}^2$ admits a $(1+\e)$-tree cover with a constant number of trees for any fixed $\eps \in (0,1)$.%
\footnote{In this work, we consider $\eps$ to be a fixed constant. We only spell out the precise dependency on $\eps$ in theorem~statements.}
This result indeed holds for any Euclidean space of constant dimension and can be extended to any metric of constant doubling dimension~\cite{BFN19Ramsey}. These results naturally motivate the following question:

\begin{question}\label{ques:planar-cover}
Can planar metrics be covered by a constant number of trees with a multiplicative distortion $(1+\eps)$ for any fixed $\eps \in (0,1)$?
\end{question}

 A positive answer to \Cref{ques:planar-cover} would imply that planar metrics are similar to Euclidean/doubling metrics in the tree-covering sense. 
Furthermore, the tree cover serves as a bridge to transfer algorithmic results from Euclidean/doubling metrics to planar graphs/metrics; for example, routing, spanners, emulators, distance oracles, and possibly more. 
In their pioneering work~\cite{BFN19Ramsey}, Bartal, Fandina, and Neiman constructed a $(1+\e)$-tree cover of size $O(\log^2 n)$ for planar metrics; they left \Cref{ques:planar-cover} as an open problem. 

\subsection{Our contributions}

\paragraph{Tree cover for planar graphs and related results.} Our main result is a positive answer to \Cref{ques:planar-cover}. 

\begin{theorem}\label{thm:main}
    Let $G$ be any edge-weighted undirected planar graph with $n$ vertices.  For any parameter $\eps \in (0,1)$, there is a $(1+\eps)$-tree cover $\mathcal{F}$ for the shortest path metric of $G$ using $O(\eps^{-3} \cdot \log(1/\eps))$~trees.
\end{theorem}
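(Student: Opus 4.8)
The plan is to execute the three-stage framework announced in the abstract, and I expect the last stage to carry essentially all the difficulty.

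\textbf{Stage 1: reduce to a single scale with additive error.} Rather than building a $(1+\eps)$-multiplicative tree cover directly, I would first construct, for each dyadic scale $\Delta$, a \emph{scale-$\Delta$ additive tree cover}: a collection of $\mathrm{poly}(1/\eps)$ dominating trees such that every pair $u,v$ with $\delta_G(u,v)\in[\Delta,2\Delta)$ is $(\delta_G(u,v)+\eps\Delta)$-approximated in at least one of the trees. Since $\eps\Delta\le\eps\,\delta_G(u,v)$ for such a pair, a single scale already handles its own dyadic window multiplicatively. The remaining issue is that there are $\Theta(\log\Phi)$ scales for aspect ratio $\Phi$, which is unbounded; to collapse this I would use the standard residue-class trick: split the scales into $O(\log(1/\eps))$ classes so that consecutive active scales inside a class are a factor $\ge 1/\eps$ apart, then stitch the (index-$j$) additive trees over a whole class into a single dominating tree by hanging the coarse part of each larger-scale tree above the next smaller-scale tree. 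Because consecutive scales in a class are far apart, the stitching errors form a geometric series dominated by the current scale's $\eps\Delta$ term, so the merged trees still $(1+O(\eps))$-approximate every pair. This yields a multiplicative tree cover whose size is $O(\log(1/\eps))$ times the size of one additive tree cover.

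\textbf{Stage 2: additive tree cover from a shortcut partition.} A \emph{shortcut partition} at scale $\Delta$ is a partition of $V(G)$ into clusters of diameter $O(\eps\Delta)$ whose \emph{cluster graph} $H$ --- one node per cluster, adjacent clusters joined by an edge of weight equal to their distance in $G$ --- is an $O(1/\eps)$-hop $(1+\eps)$-emulator: for every pair $u,v$ with $\delta_G(u,v)\le 2\Delta$ there is a path from the cluster of $u$ to the cluster of $v$ using $O(1/\eps)$ clusters and of length at most $(1+\eps)\,\delta_G(u,v)$. Given this, the key observation is that $H$ is essentially tree-like as far as these $O(1/\eps)$-hop connections are concerned: I would cover $H$ by $\mathrm{poly}(1/\eps)$ shortest-path-style trees (peeling BFS layers and recording, per ``hop budget'', which pairs keep their low-hop connecting path nearly intact) so that every relevant pair is well served by one of them. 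Each tree on $H$ is then pulled back to a tree on $G$ by expanding each cluster into an arbitrary spanning tree of $G$-diameter $O(\eps\Delta)$; this inflates distances by $O(\eps\Delta)$ per cluster used, hence by $O(1/\eps)\cdot O(\eps\Delta)=O(\Delta)$ in total, which after rescaling $\eps$ by a constant is exactly the additive slack we are allowed (domination being enforced by the usual metric-closure / weight-adjustment tricks). Thus a shortcut partition at scale $\Delta$ gives a scale-$\Delta$ additive tree cover of size $\mathrm{poly}(1/\eps)$, and combining with Stage 1 and tracking the exponents yields the stated $O(\eps^{-3}\log(1/\eps))$ bound.

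\textbf{Stage 3 (the main obstacle): every planar metric has a shortcut partition.} This is where the ``grid-like structure'' insight does the work. The plan is a recursive decomposition of $G$ along shortest paths in the spirit of the Klein--Plotkin--Rao padded-decomposition machinery~\cite{KPR93}: repeatedly extract a shortest-path separator, chop each separator path into segments of length $\Theta(\eps\Delta)$ --- each segment becoming the seed of a cluster --- and recurse on the resulting pieces until everything has diameter $O(\eps\Delta)$. The heart of the proof is certifying the hop bound: a shortest path $Q$ of length $O(\Delta)$ must be reroutable through the segment-clusters while changing clusters only $O(1/\eps)$ times. For this I would show that (a)~whenever $Q$ crosses a separator path it can jump to the endpoint of the segment it crosses at cost $O(\eps\Delta)$; (b)~a planarity / Jordan-curve argument bounds by $O(1)$ the number of distinct separator paths, and distinct recursion levels, that $Q$ ever interacts with; and (c)~inside a single ``grid-like'' region the segmentation behaves literally like slicing a grid into $\eps\times\eps$ blocks, so $Q$, being a shortest path and hence quasi-monotone across blocks, crosses only $O(1/\eps)$ of them. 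The delicate point is making (a)--(c) hold \emph{simultaneously} across all recursion levels while keeping per-level errors from compounding and ensuring the clusters form an honest partition of $V(G)$; isolating the right notion of ``grid-like region'' for an arbitrary planar graph, and proving that shortest paths are quasi-monotone within it, is the new structural contribution on which the whole argument rests.
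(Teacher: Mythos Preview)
Your Stage~1 is essentially the paper's reduction (Lemma~1.5), and your Stage~3 is in the right spirit (the paper's gridtree hierarchy is indeed a recursive shortest-path decomposition in the style of \cite{BLT14}). The genuine gap is Stage~2, in two places.

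\textbf{First, the pull-back error is too large.} You expand each cluster into a spanning tree of diameter $O(\eps\Delta)$ and then walk through $O(1/\eps)$ clusters, correctly computing $O(1/\eps)\cdot O(\eps\Delta)=O(\Delta)$ additive error. But you need $+O(\eps\Delta)$, and ``rescaling $\eps$ by a constant'' does not turn $O(\Delta)$ into $O(\eps\Delta)$. The paper avoids this accumulation via the \emph{root preservation property}: the exact tree cover of the (unweighted) cluster graph is built so that every relevant path in a tree passes through its root; one then replaces each rooted tree by a single star whose edge weights are \emph{true $G$-distances} to the root. Now the only slack comes from the detour to the root inside \emph{one} cluster, which is $O(\eps\Delta)$.

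\textbf{Second, the poly$(1/\eps)$ tree count for the cluster graph is unjustified, and the black-box route gives exponential.} The paper's general reduction (Theorem~1.6) from an $(\eps,h)$-shortcut partition to an additive tree cover produces $2^{O(h)}$ trees, because the exact tree cover of a diameter-$h$ minor-free graph (Theorem~2.3) has size $2^{O(h)}$. Your ``peeling BFS layers'' sketch does not beat this. To get the stated $O(\eps^{-3})$, the paper \emph{does not} go through the cluster-graph black-box at all: it exploits the gridtree structure directly (Section~5). Each column carries an ordered sequence of cluster centers along a shortest path; one plants an SSSP tree (in $G$) of radius $O(\Delta)$ at each center; centers that are $\Theta(1/\eps)$ positions apart along the column have disjoint balls, giving $O(1/\eps)$ forests per column; columns that are $\Theta(1/\eps)$ levels apart in the gridtree are also far enough to be disjoint, giving another $O(1/\eps)$ factor; and the gridtree hierarchy has depth $O(1/\eps)$, giving the third $O(1/\eps)$ factor. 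The product is $O(\eps^{-3})$ forests, each of whose SSSP trees already satisfies root preservation. This direct construction, not a generic tree-covering of the cluster graph, is what makes the count polynomial.
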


We note a few related known results about tree covers. 
If we allow the distortion to be a rather large constant $C$, then Bartal, Fandina, and Neiman~\cite{BFN19Ramsey} showed that it is possible to construct a $C$-tree cover with $O(1)$ size. 
Gupta,  Kumar, and Rastogi~\cite{GKR01} constructed a tree cover with distortion $3$ using $O(\log n)$ trees. 
For distortion $1$ (exact tree cover), $O(\sqrt{n})$ trees is sufficient~\cite{GKR01}; furthermore, $\Omega(\sqrt{n})$ trees are necessary for some planar graphs if each tree in the tree cover must be a spanning tree of $G$.  
The main takeaway is that either the distortion is too high for a constant number of trees, or the number of trees have to depend on the number of vertices. 
Our tree cover constructed in \Cref{thm:main} simultaneously has $1+\e$ distortion and has no dependency on the size of the graph.

Our proof of \Cref{thm:main} follows from a \EMPH{general framework} that we introduce to construct tree covers. 
Our framework applies to minor-free graphs, which is a much broader class than planar graphs. 
At a high level, the framework has three steps. 
First, we devise a reduction of a tree cover with \emph{multiplicative distortion} to the construction of tree covers with \emph{additive distortion}.  Second, inspired by the scattering partition defined by Filtser~\cite{Filtser20B}, we introduce the notion of \emph{shortcut partition}, and show that the existence of a shortcut partition suffices to get a tree cover with additive distortion and a constant number of trees. 
The final step involves constructing shortcut partitions for graphs of interest. 
Applying our framework as is, we obtain $(1+\eps)$-tree covers for planar metrics using $O_\e(1)$ trees, with dependency exponential in $1/\eps$. 
Surprisingly, we manage to identify a new structural result by leveraging the fact that planar graphs are \emph{grid-like} in a formal sense, and then proceed to construct shortcut partition with additional properties. The additional structure allows us to reduce tree-cover size to a polynomial in $1/\eps$, as stated in \Cref{thm:main}. To keep our construction simple, we do not attempt to optimize the dependency on $1/\eps$; determining the exact dependency on $1/\eps$ for the size of the tree cover is an interesting question that we do not pursue in this work.

The (weighted) planar grids are often used as canonical examples in developing structural and algorithmic results for planar graphs. In many cases, the planar grids are ``hard'': for example, the worst-case bound on the separator/treewidth of planar graphs with $n$ vertices is realized by a $\sqrt{n} \times \sqrt{n}$ planar grid---this fact plays a central role in the Robertson-Seymour graph minor theory~\cite{RS86}. On the other hand, the planar grids also have a simple regular structure that often serves as a starting point for algorithmic developments (e.g.\ \cite{GKR01,CK15,FKS19}). 
Thus, our new grid-like structure for planar graphs may be used to leverage insights developed for planar grids to solve problems on general planar graphs.
In addition to our tree cover result (\Cref{thm:main}), we showcase another application:
embedding planar graphs into bounded treewidth graphs with small \emph{additive distortion.}  

More formally, given a weighted planar graph $G$ of diameter $\Delta$, we want to construct an embedding $f: V(G)\rightarrow V(H)$ into a graph $H$ such that $\delta_G(x,y)\leq d_H(f(x), f(y)) \leq \delta_G(x,y) + \eps \Delta$ and the treewidth of $H$, denoted by $\tw(H)$, is minimized.  
The work of Fox-Epstein, Klein, and Schild~\cite{FKS19}
was the first to show that $\tw(H) = O(\eps^{-c})$ for some constant $c$. 
However, the constant $c$ they obtained is very big---a rough estimate\footnote{In page 1084 of~\cite{FKS19}, the treewidth bound is at least $\eps^{-3}\cdot h(\eps^{-11})$ with $h(x)=O(x^{5})$ determined by Proposition~7.7.} from their paper gives $c\geq 58$---and the proof is extremely complicated, with several reduction steps to what they called the \emph{cage instances}, which themselves require another level of technicality to handle. Followup work~\cite{CFKL20,LS22} provided simpler constructions with  a linear dependency on $1/\eps$, at the cost of an extra $O((\log\log n)^2)$ factor in the treewidth. 
It remains an open problem how to construct an embedding that has the best of both: a \emph{simpler  construction} such that the treewidth has a reasonable dependency on $1/\eps$ but no dependency on $n$. 
We exploit the aforementioned grid-like structure to resolve this problem.

\begin{theorem}\label{thm:FKS-improved}
    Let $G$ be any given edge-weighted planar graph with $n$ vertices and diameter $\Delta$.  For any given parameter $\eps \in (0,1)$, we can construct in polynomial time an embedding of $G$ into a graph $H$ such that the additive distortion is $\eps \Delta$ and $\tw(H) = O(1/\eps^4)$.
\end{theorem}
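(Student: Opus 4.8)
Here is how I would approach \Cref{thm:FKS-improved}.

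The plan is to reuse the grid-like structure developed for \Cref{thm:main} and to turn it into a low-treewidth scaffold graph $H$ on the vertex set $V(G)$. Fix a fineness parameter $\gamma = \Theta(\eps^2)$. First I would apply the shortcut partition to $G$ at scale $\Delta$ to obtain a partition $\mathcal P$ of $V(G)$ into clusters of diameter at most $\gamma\Delta$, so that for every $u, v \in V(G)$ there is a $u$--$v$ path in $G$ of length at most $\delta_G(u,v) + \eps\Delta/2$ that meets only a ``grid-monotone'' sequence of clusters, that is, a sequence of clusters lying along a bounded-depth, $O(1/\eps)$-branching grid-like decomposition $\mathcal T$ of the cluster graph $\hat G$. (This is exactly the object the framework produces; the bounded recursion depth and the $O(1/\eps)$ branching are the ``grid-like'' ingredients, and are what should make the construction simpler than that of~\cite{FKS19}.) The internal nodes of $\mathcal T$ correspond to ``super-clusters'' obtained by unioning clusters along the decomposition, and the leaves to the original diameter-$\gamma\Delta$ clusters; the grid-like structure further guarantees that the boundary of each node consists of $O(1)$ shortest paths of $G$.

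Next I would build $H$. For each node of $\mathcal T$, designate $O(1/\eps)$ \emph{portals} spaced at distance $O(\eps\Delta)$ along its $O(1)$ boundary shortest paths. Let $H$ have vertex set $V(G)$, with edge set the union of: (i) for each leaf cluster $C$ of $\mathcal T$, a shortest-path forest of $G$ inside $C$ spanning $C$ together with all portals lying in $C$, carrying the original $G$-weights; and (ii) for each node of $\mathcal T$, a clique on its $O(1/\eps)$ portals with edge weights equal to the corresponding $\delta_G$-distances, identifying portals that are shared between a node and its neighbours in $\mathcal T$. The lower bound $\delta_G(x,y) \le d_H(f(x), f(y))$ is then immediate: every edge of $H$ has weight at least the $\delta_G$-distance between its endpoints — edges of type (i) are edges of $G$, and edges of type (ii) carry exactly a $\delta_G$-distance — and since $\delta_G$ satisfies the triangle inequality no $H$-path can be shorter. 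For the upper bound, given $u, v$ I would take the near-shortest path $P$ from the shortcut partition, walk along the grid-monotone sequence of clusters it visits, and reroute it to pass through portals: each time $P$ enters or leaves a node of $\mathcal T$ it is redirected to the nearest portal of that node, and inside each visited leaf cluster the detour to reconnect costs at most its diameter $\gamma\Delta$. Choosing the portal spacing and $\gamma$ so that the $O(1)$ levels of detours plus the per-cluster detours along a path that may traverse $\Theta(1/\eps)$ clusters telescope below $\eps\Delta/2$ yields $d_H(f(u),f(v)) \le \delta_G(u,v) + \eps\Delta$.

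For the treewidth I would take the tree decomposition of $H$ whose tree is a bounded-treewidth supergraph of $\mathcal T$, putting into the bag of each node its $O(1/\eps)$ portals together with those of its $O(1)$ ancestors along $\mathcal T$; since each leaf cluster induces only a shortest-path forest in $H$ (treewidth $1$), its full vertex content can be threaded into the decomposition at the cost of one extra multiplicative factor. Tracking the $O(1/\eps)$ factors — the branching explored, the portals per node, the recursion depth, and the forest-threading — gives $\tw(H) = O(1/\eps^4)$, and every step above runs in polynomial time.

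The main obstacle, I expect, is the simultaneous calibration in the rerouting step and the treewidth bookkeeping: one must choose the cluster diameter $\gamma\Delta$, the portal spacing, and the decomposition depth together so that (a) the per-cluster and per-portal detours sum to at most $\eps\Delta$ across a path that may visit $\Theta(1/\eps)$ clusters, while (b) the portal count needed to achieve (a) is still only $\mathrm{poly}(1/\eps)$, so the bags stay bounded. A secondary subtlety is ensuring that gluing the shortest-path forests of adjacent leaf clusters along shared portals creates no spurious short edges — which is the reason the type-(ii) edges are introduced with exact $\delta_G$-weights rather than by contraction.
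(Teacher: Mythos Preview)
Your approach is a portal-based construction in the spirit of \cite{FKS19}, and it is genuinely different from what the paper does. The paper does not place portals along boundary paths at all. Instead it takes the shortcut partition $\mathfrak{P}$ (with cluster diameter $\e\Delta$, not $\e^2\Delta$) together with the $O(\e^{-3})$-size spanning forest cover $\mathcal{F}$ built in \Cref{S:treecover-gridlike}; the host graph $\hat G$ is the cluster graph (with each supernode expanded into a star) augmented by a direct edge from the root of every SSSP tree $T$ in $\mathcal{F}$ to every vertex of $T$, weighted by the true $G$-distance. Distortion comes for free from the root preservation property of $\mathcal{F}$, and the treewidth bound comes from a clean standalone statement (\Cref{lem:extend-forest}): if $G'$ has treewidth $w$ and one adds, for each of $k$ spanning forests, an edge from each root to each vertex in its tree, the resulting graph has treewidth $O(wk)$. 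Here $w=O(\e^{-1})$ (planar with hop-diameter $O(\e^{-1})$) and $k=|\mathcal{F}|=O(\e^{-3})$, giving $O(\e^{-4})$. The cluster-disjointness of the trees within each forest is exactly what makes \Cref{lem:extend-forest} applicable after translating the forests from $G$ to $G'$.

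Your sketch has a real gap in precisely the place you flag as the main obstacle. You assume the grid-like decomposition has ``bounded depth'' with ``$O(1)$ ancestors'' and ``$O(1/\eps)$-branching'', but neither holds for the paper's gridtree: a single gridtree can have arbitrary branching, its depth is $\Theta(1/\e)$ (by the column width property), and the gridtree \emph{hierarchy} adds another $\Theta(1/\e)$ layers on top (\Cref{L:layer-depth}). Moreover, with your cluster diameter $\gamma\Delta=\Theta(\e^2\Delta)$, the low-hop guarantee (\Cref{thm:gridtree-implies-shortcut}) gives $\Theta(1/\e^2)$ clusters along a shortest path, not $\Theta(1/\e)$. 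So either the per-cluster detours sum to $\Theta(\Delta)$, or you must tighten the portal spacing, which blows up the portal count per column (the column paths $\pi_\eta$ can have length $\Theta(\Delta)$) and hence the bag size; tracking the factors honestly does not land at $\e^{-4}$. This accumulation of detours over many boundary crossings is exactly why the paper abandons portals: replacing ``many portals per boundary path'' by ``one root per SSSP tree'' sidesteps the telescoping problem, and the cluster-disjointness of the forests is what keeps the treewidth augmentation linear in $|\mathcal{F}|$ rather than multiplicative in the depth of the decomposition.
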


\paragraph{Beyond planar graphs.}
We also obtain several other results as corollaries of the framework and its technical construction. 
In particular, we show that bounded treewidth graphs admit a tree cover of constant size as well.

\begin{theorem}\label{thm:treewidth}
    Let $G$ be any graph of treewidth $t$ with $n$ vertices. 
    For any given parameter $\eps \in (0,1)$, there is a $(1+\eps)$-tree cover $\mathcal{F}$ for the shortest path metric of $G$ using $2^{(t/\eps)^{O(t)}}$ trees.
\end{theorem}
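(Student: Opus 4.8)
The plan is to instantiate the paper's three-step framework for the class of treewidth-$t$ graphs. By the first two steps --- the reduction from multiplicative to additive distortion, and the construction of an additive $(1+\eps)$-tree cover from a shortcut partition --- it suffices to prove that every graph $G$ of treewidth $t$ admits, at every distance scale $\Delta$, a partition of $V(G)$ into clusters of strong diameter at most $\eps\Delta$ such that every shortest path of length at most $\Delta$ is contained in the union of at most $h$ clusters, where the hop parameter is $h = (t/\eps)^{O(t)}$. The framework then assembles the per-scale partitions over a geometric sequence of scales into $O_\eps(1)$ dominating trees, and produces a tree cover whose size is singly exponential in $h$; since $2^{(t/\eps)^{O(t)}} = 2^{O(h)}$, this yields \Cref{thm:treewidth} after rescaling $\eps$ by a constant factor.

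To build the shortcut partition at a fixed scale $\Delta$, fix a rooted tree decomposition $(T,\{B_x\})$ of $G$ of width $t$, so that every bag has at most $t+1$ vertices and is a separator of $G$. Process the bags in order of increasing depth in $T$, maintaining the set $U$ of vertices not yet placed in a cluster (initially $U=V(G)$). When a bag $B_x$ is reached, for each vertex $v\in B_x\cap U$ --- there are at most $t+1$ of them --- carve out a new cluster $C_v$ around $v$ as a ball of radius roughly $\eps\Delta$, restricted to the current $U$, and delete $C_v$ from $U$; the radii are chosen by nested region growing so that each cluster has strong diameter at most $\eps\Delta$ and so that any shortest path must travel $\Omega(\eps\Delta)$ before it can leave a cluster. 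Every cluster is anchored at a bag vertex, which gives a map from the clusters met by a path to the bags of $T$.

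The crux is the hop bound. Fix $u,v$ with $\delta_G(u,v)\le\Delta$ and let $P$ be a shortest $u$--$v$ path; we bound the number of clusters $P$ enters. Placing an $\Omega(\eps\Delta)$-net on $P$ and using that the length of $P$ is at most $\Delta$, the net has $O(1/\eps)$ points. The separator property of bags forces strong locality in $T$: once $P$ exits the vertices controlled by a subtree of $T$ it can only re-enter through the at most $t+1$ vertices of the boundary bag, so the home-bags of the clusters that $P$ meets between two consecutive net points span a bounded-radius subtree of $T$; bounding the number of combinatorial patterns in which $P$ can thread through at most $t+1$ vertices per bag across $O(1/\eps)$ net intervals yields $h=(t/\eps)^{O(t)}$. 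I expect this step to be the main obstacle: the delicate point is to make the nested cluster radii interact with the separator structure so that a shortest path's excursions out of clusters are genuinely limited, and to correctly account for the $t^{O(t)}$ blow-up forced by the bag size. With the shortcut partition in hand, the statement follows immediately from the framework, with tree-cover size $2^{O(h)} = 2^{(t/\eps)^{O(t)}}$.
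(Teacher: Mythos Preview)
Your high-level plan is right and matches the paper: apply \Cref{lm:reduction} to reduce to additive distortion, apply \Cref{thm:partition-to-cover} to reduce to constructing an $(\eps,h)$-shortcut partition, and then build such a partition for treewidth-$t$ graphs with $h=(t/\eps)^{O(t)}$. The paper does exactly this, so the framework instantiation is fine.

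Where you diverge from the paper is in the shortcut partition itself. The paper does \emph{not} do a single top-down pass over bags. Instead (following Friedrich et al.\ on multicut in bounded-treewidth graphs) it runs $t+1$ \emph{rounds}. In each round, it recursively visits root bags of subtrees, grows $\eps\Delta$-balls around the still-unclustered vertices of the current root bag, deletes all bags touched by these balls, and recurses on the remaining subtrees. The key structural facts are: (i) two root bags of the same round that lie on an ancestor--descendant path in $T$ have centers at distance $>\eps\Delta$ (Observation~6.4), so a length-$\Delta$ shortest path meets at most $O(1/\eps)$ round-$i$ root bags; and (ii) between consecutive round-$i$ root bags, the path sits inside a region that has strictly fewer unclustered vertices per bag, so it is handled in later rounds. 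This yields the clean recurrence $J_i \le ((t+1)(1/\eps+3)+1)\cdot J_{i-1} + O(t/\eps)$, which solves to $J_{t+1}=O((t/\eps)^{t+1})$.

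Your single-pass scheme with ``nested region growing'' and a hop bound via ``combinatorial patterns across $O(1/\eps)$ net intervals'' is not obviously wrong, but as written it is a sketch of a sketch: you do not say what the nested radii are, you do not justify why the home-bags between two net points span a bounded-radius subtree of $T$, and you do not carry out the pattern count that is supposed to produce $(t/\eps)^{O(t)}$. The paper's multi-round construction sidesteps all of this: the $t+1$ rounds are exactly what converts the $(t+1)$-vertex bag separators into a multiplicative recurrence, and the hop bound falls out without any combinatorial case analysis. If you want to salvage your approach, you would need either to discover the round structure yourself or to give a genuinely different hop-bound argument; as it stands, the ``crux'' paragraph is where the proof is missing.
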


\noindent \Cref{thm:treewidth} improves upon the tree cover construction by Gupta,  Kumar, and Rastogi~\cite{GKR01} who obtained a tree cover of size $O(\log n)$ for constant treewidth $t$ and constant $\eps$.  Again, our tree cover has no dependency on the number of vertices in $G$.

\bigskip
We also obtain the first non-trivial result for \emph{exact} tree covers in \emph{unweighted} minor-free graphs $G$ with small diameter.
Graphs of small diameter have been central in distributed computing. Specifically, the structure of unweighted \emph{planar graphs} of constant diameter has recently attracted attention from distributed computing community~\cite{GH15,GH16,HIZ16,LP19}.  
We obtain an \emph{exact} tree cover of constant size for such graphs.  Furthermore, our tree cover is \EMPH{spanning} (in the sense of metric embedding literature~\cite{AKPW95,AN12}); that is, every tree in the tree cover is a subgraph of $G$. Having a spanning tree cover is important: For example, it is useful in distributed computing, as messages can only be sent along the edges of the input graph. We believe our result for the exact tree cover is of independent interest.

\begin{theorem}\label{thm:planar-spanning}
    Let $G$ be any unweighted $K_r$-minor-free graph (for any constant $k$) with $n$ vertices and diameter $\Delta$.  
    There is an exact spanning cover $\mathcal{F}$ for the shortest path metric of $G$ using $2^{O(\Delta)}$ trees.
\end{theorem}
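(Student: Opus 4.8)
The plan is to build the cover out of BFS trees organized by a scale-based recursion. The first step is a trivial but essential reduction: since $G$ is unweighted, every spanning subtree $T$ of $G$ already satisfies $d_T(x,y)\ge\delta_G(x,y)$, so $T$ is automatically a dominating tree; hence it suffices to produce a small family of spanning subtrees of $G$ such that every pair $u,v$ has one tree containing an \emph{entire} shortest $u$--$v$ path. The workhorse is the following one-line observation about BFS trees: if $w\in V(G)$ and $T_w$ is \emph{any} spanning BFS tree rooted at $w$, then $T_w$ realizes every pair $(u,v)$ for which $w$ lies on a shortest $u$--$v$ path. Indeed, with $z=\mathrm{lca}_{T_w}(u,v)$ and $\mathrm{lvl}(\cdot)$ the BFS distance from $w$, we have $d_{T_w}(u,v)=(\mathrm{lvl}(u)-\mathrm{lvl}(z))+(\mathrm{lvl}(v)-\mathrm{lvl}(z))$; if $z\ne w$ then the tree paths give $\delta_G(u,z)=\mathrm{lvl}(u)-\mathrm{lvl}(z)$ and $\delta_G(v,z)=\mathrm{lvl}(v)-\mathrm{lvl}(z)$, so $\delta_G(u,v)\le \mathrm{lvl}(u)+\mathrm{lvl}(v)-2\,\mathrm{lvl}(z)<\mathrm{lvl}(u)+\mathrm{lvl}(v)=\delta_G(u,w)+\delta_G(w,v)=\delta_G(u,v)$, a contradiction; thus $z=w$ and $d_{T_w}(u,v)=\delta_G(u,v)$. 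In particular one BFS tree from $w$ realizes \emph{all} pairs that $w$ separates in $G$ and that have $w$ on a shortest path between them.

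This gives the recursive skeleton. For a well-chosen vertex $\rho$, put one BFS tree $T_\rho$ into $\mathcal{F}$, and recurse on the components of $G-\rho$, each with its own shortest-path metric; the combined trees are formed by unioning, for each recursive index, the corresponding tree from every component and adding $O(1)$ edges to stitch everything (and $\rho$) into a spanning tree of $G$ (stitching preserves realized pairs because each component tree sits as a connected subtree). This is correct because a pair not realized by $T_\rho$ has \emph{all} of its shortest paths avoiding $\rho$, hence lies inside one component $C$ of $G-\rho$ with $\delta_C(u,v)=\delta_G(u,v)$, while a pair split by $\rho$ is realized by $T_\rho$. Iterating naively removes one vertex per level and uses up to $n$ trees, which is too many when $\Delta$ is small, so instead I would peel off, at each level, a whole bounded-diameter structure: fix scale $\Theta(\Delta)$ and use a strong-diameter \emph{shortcut partition} of the $K_r$-minor-free graph (as provided by the technical machinery behind our framework, adapted to the unweighted/exact regime), so that every shortest path (of length $\le\Delta$) is broken into at most $O_r(1)$ subpaths, each lying inside a cluster of diameter at most $\Delta/2$ and each being a shortest path \emph{within that cluster} too (a subpath of a shortest path cannot be shortcut inside a subgraph that already contains it). Recursing inside every cluster handles the within-cluster subpaths; the recursion depth becomes $O(\log\Delta)$ with $O_r(1)$ branching, bottoming out at singletons (or, if one prefers, at diameter $1$, where a $K_r$-minor-free graph has fewer than $r$ vertices, or where bounded arboricity of minor-free graphs suffices), and a pair is reassembled from one recursive tree per cluster it crosses plus the $O_r(1)$ transition edges between consecutive clusters. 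Summing over the recursion, the number of trees is polynomial in $\Delta$ and a function of $r$, comfortably within $2^{O(\Delta)}$.

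The crux — and where I expect the real work to be — is the cross-cluster bookkeeping in the last step: producing, with only $f(r)$-many trees per recursion level, a single spanning tree of $G$ that simultaneously contains the chosen recursive shortest path inside \emph{each} of the clusters touched by a given pair together with the transition edges, even though different pairs touch different clusters and a cluster plays different "roles" for different pairs — so one cannot simply add all transition edges to one spanning tree, and the indexing and merging of the per-cluster covers must be arranged carefully, which is exactly where the concrete combinatorics of the shortcut partition (rather than a black-box use of it) is needed. Two further points require care: the recursion must make genuine progress, so the decomposition must yield sub-instances of \emph{strictly smaller} diameter on the relevant metric — since deleting vertices generally increases distances, it is essential that the shortcut partition gives pieces of smaller \emph{strong} diameter on which subpaths of globally shortest paths stay shortest; and unweightedness is what keeps everything integral and exact, in the base case, in the choice of scales, and in the opening reduction to spanning subtrees.
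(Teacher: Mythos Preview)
Your BFS observation---that a BFS tree rooted at $w$ exactly realizes every pair $(u,v)$ with $w$ on some shortest $u$--$v$ path---is correct and is essentially the paper's Observation~\ref{Obs:path-in-tree}. But the recursive scheme you build on top of it has real gaps.

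First, the partition you invoke does not exist in the form you need. You want clusters of strong diameter $\Delta/2$ such that \emph{every} shortest path of length $\le\Delta$ intersects only $O_r(1)$ clusters. That is a \emph{scattering} partition, not a shortcut partition; the paper explicitly notes (footnote to Definition~\ref{def:shortcut-partition}) that scattering partitions for minor-free graphs are only conjectured. The shortcut partition the paper \emph{does} construct is only for planar and bounded-treewidth graphs, not general $K_r$-minor-free graphs, and even there it guarantees only a low-hop path in the cluster graph among clusters touched by an \emph{approximate} shortest path---it does not bound how many clusters an \emph{exact} shortest path actually intersects. So you cannot ``break a shortest path into $O_r(1)$ subpaths, each inside one cluster'' as a black box.

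Second, you yourself identify the cross-cluster reassembly as the crux and leave it open. It is not a detail: with a scattering-type partition you would have, for each pair, a sequence of $O_r(1)$ cluster-internal shortest subpaths joined by specific transition edges, and you need a \emph{single} spanning tree of $G$ containing all of them simultaneously, while different pairs use different clusters and different transitions. Nothing in your outline produces such trees, and the ``indexing and merging'' you allude to is exactly the missing argument.

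The paper avoids both problems by taking a completely different, more elementary route that uses no partition theorem at all---only the bounded degeneracy of $K_r$-minor-free graphs. It inducts on path \emph{length} $k=1,\dots,\Delta$ (not on diameter). Base case: decompose $E(G)$ into $O_r(1)$ star forests; each star is a BFS tree preserving all length-$1$ paths. Inductive step (Lemma~\ref{lem:expand-forest}, ``root expansion''): given a BFS forest $F$ whose trees preserve all length-$k$ path prefixes, contract each tree of $F$ to a supernode, properly color the resulting minor-free graph with $O_r(1)$ colors, and for each color class contract only those trees, decompose the result into $O_r(1)$ star forests, and ``uncompress'' each star back to a BFS tree in $G$. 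A short case analysis shows that any path whose prefix was preserved by $F$ is now preserved by one of these $O_r(1)$ new BFS forests. Iterating $\Delta$ times gives $O_r(1)^{\Delta}=2^{O(\Delta)}$ BFS forests preserving all shortest paths, and they are spanning by construction. The whole argument fits in a page and never touches shortcut or scattering partitions.
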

Gupta, Kumar, and Rastogi~\cite{GKR01} showed that there exists an $n$-vertex planar graph such that any spanning tree cover of the graph must have size $\Omega(\sqrt{n})$.
Our \Cref{thm:planar-spanning} circumvents their lower bound when the diameter of the graph is small.

\paragraph{Applications.}
One application of our tree cover result (\Cref{thm:main}) is to the design of $(1+\eps)$-approximate distance oracle for planar graphs. 
The goal is to construct a data structure of small space $S(n)$ for a given planar graph, so that each distance query can be answered quickly in time $Q(n)$ and the returned distance is within $1+\eps$ factor of the queried distance. 
Ideally, we want $S(n) = O(n)$ and $Q(n) = O(1)$. 
For doubling metrics, such a data structure was known for a long time~\cite{HM06}, but for planar metrics, attempts to obtain the same result were not successful for more than two decades~\cite{Thorup04,Klein02,WulffNilsen16} until the recent work by Le and Wulff-Nilsen~\cite{LW21}. 
Our \Cref{thm:main} provides a simple reduction to the same problem \emph{in trees}: given a distance query, query the distance in each tree and then return the minimum. This illustrates the power of our tree cover theorem in mirroring results in Euclidean/doubling metrics to planar metrics. 

The distance oracle constructed by our tree cover theorem has other advantages over that of Le and Wulff-Nilsen. 
First, their algorithm is very complicated, with many steps using the full power of the RAM model to pack $O(\log n)$ bits of data in $O(1)$ words to guarantee $O(n)$ space. 
In many ways, bit packing
can be viewed as ``abusing'' the RAM model. 
Second, in weaker models, such as the \emph{pointer machine model}---a popular and natural model in data structures---where bit packing is not allowed, their construction does not give anything better (and sometimes worse) than the older constructions. 
The best oracle in the pointer machine model was by Wulff-Nilsen~\cite{WulffNilsen16}, with $O(n \poly(\log \log n))$ space and $O(\poly(\log \log n))$ query time.  

We instead reduce to querying distances on trees, which we further reduce to the lowest common ancestor (LCA) problem.  LCA admits a data structure with $O(n)$ space and $O(1)$ time in the RAM model~\cite{HT84,BF00}, and $O(n)$ space and $O(\log \log n)$ query time in the pointer machine model~\cite{VanLeeuwen76}. (Harel and Tarjan~\cite{HT84} showed a lower bound of $\Omega(\log \log n)$ for querying LCA in the pointer machine model.) As a result, we not only recover the result by Le and Wulff-Nilsen~\cite{LW21}, but also obtain the best known distance oracle in the pointer machine model with $O(n)$ space and $O(\log \log n)$ query time.

\begin{theorem} \label{thm:app-distance-oracle} 
Suppose that any $n$-vertex tree admits a data structure for querying lowest common ancestors with space $S_{LCA}(n)$ and query time $Q_{LCA}(n)$. Then given any parameter $\eps \in (0,1)$, and any edge-weighted undirected planar graphs with $n$ vertices, we can design a $(1+\eps)$-approximate distance oracle with space $O(S_{LCA}(O(n))\cdot\tau(\e))$ and query time $O(Q_{LCA}(O(n))\cdot\tau(\e))$, where $\tau(\e) = \eps^{-3}\log(1/\eps)$.  
Consequently, we obtain:
\begin{itemize}\itemsep=0pt
    \item In the word RAM model with word size $\Omega(\log n)$, our oracle has space $O(n\cdot\tau(\e))$ and query time~$O(\tau(\e))$.
    \item In the pointer machine model, our oracle has space $O(n\cdot\tau(\e))$ and query time $O(\log \log n\cdot\tau(\e))$.
\end{itemize}
\end{theorem}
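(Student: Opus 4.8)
The plan is to turn the distance‑oracle problem on a planar graph $G$ into the lowest‑common‑ancestor problem on a constant number of trees, using \Cref{thm:main} as the bridge. First I would invoke \Cref{thm:main} to obtain a $(1+\eps)$‑tree cover $\mathcal{F}$ of the shortest‑path metric of $G$ with $|\mathcal{F}| = O(\eps^{-3}\log(1/\eps)) = O(\tau(\eps))$ dominating trees. One point to verify — and which the construction underlying \Cref{thm:main} supplies — is that each $T \in \mathcal{F}$ has only $O(n)$ vertices, i.e.\ the number of Steiner vertices introduced is linear in $n$; this is exactly what lets us invoke an LCA data structure on $O(n)$‑vertex trees.

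Next, for each $T \in \mathcal{F}$ I would build a structure answering exact distance queries in $T$. Root $T$ at an arbitrary vertex $r_T$ and precompute and store $d_T(r_T, x)$ for every vertex $x$ of $T$, using $O(n)$ space per tree (each distance fits in one word, as is standard in this model); also equip $T$ with the hypothesized LCA data structure, of space $S_{LCA}(O(n))$ and query time $Q_{LCA}(O(n))$. For any $u,v \in V(G) \subseteq V(T)$, the tree distance is recovered in $O(1)$ arithmetic operations after one LCA query via
\[
d_T(u,v) = d_T(r_T,u) + d_T(r_T,v) - 2\,d_T\bigl(r_T,\ \mathrm{lca}_T(u,v)\bigr).
\]
Summing over the $O(\tau(\eps))$ trees, the total space is $O\bigl(S_{LCA}(O(n))\cdot\tau(\eps)\bigr)$ (using $S_{LCA}(m)=\Omega(m)$ to absorb the root‑distance arrays), and a single distance query — evaluate $d_T(u,v)$ in every tree and return the minimum — takes $O\bigl(Q_{LCA}(O(n))\cdot\tau(\eps)\bigr)$ time. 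Correctness is immediate from the tree‑cover axioms: every $T$ is dominating, so $d_T(u,v) \ge \delta_G(u,v)$ for all $T$ and the returned minimum is at least $\delta_G(u,v)$; and some $T^\star \in \mathcal{F}$ satisfies $d_{T^\star}(u,v) \le (1+\eps)\,\delta_G(u,v)$, so the minimum is at most $(1+\eps)\,\delta_G(u,v)$.

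Finally, I would instantiate the LCA primitive with the known constructions cited in the excerpt. In the word‑RAM model with word size $\Omega(\log n)$, the data structures of Harel–Tarjan~\cite{HT84} and Bender–Farach-Colton~\cite{BF00} give $S_{LCA}(n)=O(n)$ and $Q_{LCA}(n)=O(1)$, yielding space $O(n\cdot\tau(\eps))$ and query time $O(\tau(\eps))$. In the pointer‑machine model, van Leeuwen's structure~\cite{VanLeeuwen76} gives $S_{LCA}(n)=O(n)$ and $Q_{LCA}(n)=O(\log\log n)$, yielding space $O(n\cdot\tau(\eps))$ and query time $O(\log\log n\cdot\tau(\eps))$ — and the $\Omega(\log\log n)$ lower bound of Harel–Tarjan shows that this reduction cannot do better in that model. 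The only genuinely delicate ingredient in the whole argument is the linear bound on the number of Steiner vertices per tree of the cover; everything else is routine bookkeeping, since all the heavy lifting is already packed into \Cref{thm:main}.
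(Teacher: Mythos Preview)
Your proposal is correct and follows essentially the same argument as the paper: build the tree cover of \Cref{thm:main}, root each tree and store root distances, equip each tree with an LCA structure, answer a query by the standard $d_T(r,u)+d_T(r,v)-2d_T(r,\mathrm{lca})$ formula in every tree and return the minimum. You are in fact slightly more careful than the paper in flagging the $|V(T)|=O(n)$ requirement on Steiner vertices, which the paper simply asserts.
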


In \Cref{S:applications}, we also discuss other applications of our tree cover theorems. We construct the first $(1+\eps)$-emulator of planar graphs with linear size. We also obtain improved bounds for several problems in constructing low-hop emulators and routing studied in prior work~\cite{GKR01,CKT22,KLMS22}.

\subsection{Techniques}

Previous constructions of exact or $(1+\eps)$-multiplicative tree covers   rely on \emph{separators} of planar graphs.  
In particular, Gupta, Kumar, and Rastogi~\cite{GKR01} constructed an exact tree cover of size $O(\sqrt{n})$ by first finding a separator of size $O(\sqrt{n})$, creating $O(\sqrt{n})$ shortest path trees each rooted at a vertex in the separator, and recursing on the rest of the graph.  
The $O(\log^2 n)$-size construction of Bartal, Fandina, and Neiman~\cite{BFN19Ramsey} follows the same line, but uses \emph{shortest-path separators} instead. 
More precisely, they find a balanced separator consisting of $O(1)$ many shortest paths. 
Then for each shortest path, they construct $O(\log n)$ trees by randomly ``attaching'' remaining vertices (via new edges) to the $O(1/\e)$ portals (\`{a} la Thorup~\cite{Thorup04}) along the shortest path, and recurse. 
As the recursion depth is $O(\log n)$, they obtain a $(1+\e)$-tree cover of size $O(\log^2 n)$. 
It is possible to remove the $O(\log n)$ factor in the random attachment step by a more careful analysis, but $\Omega(\log n)$ is the barrier to the number of trees for recursive constructions using balanced separators. (Indeed, many other constructions in planar graphs suffer from the same $\log n$ barriers~\cite{Klein02,Thorup04,EKM13,CGH16,CFKL20}, some of which were overcome by very different techniques~\cite{FKS19,LW21,LS22,CKT22}.)

Here we devise a new technical framework to overcome the $\log n$ barrier in the construction of tree covers. 
The first step in the framework is a reduction to a tree cover with additive distortion. 
Given $\beta > 0$, we say that a set of trees $\mathcal{F}$ is a tree cover with \EMPH{additive distortion} $+\beta$ if every tree in $\mathcal{F}$ is dominating and for any $x,y \in V(G)$, there exists a tree $T\in \mathcal{F}$ such that $d_T(x,y)\leq \delta_G(x,y) + \beta$. 
We say that the tree cover $\mathcal{F}$ is \EMPH{$\Delta$-bounded} if the diameter of every tree in $\mathcal{F}$ is at most $\Delta$. 
In \Cref{S:add-to-mul}, we show that the reduction to tree cover with additive distortion $+\e\Delta$ only incurs tiny loss of $O(\log(1/\eps))$ factor on the size of the tree cover.

\begin{restatable}[Reduction to additive tree covers]{lemma}{MultToAdd}
\label{lm:reduction} 
Let $(X,\delta_X)$ be a $K_r$-minor-free metric
(for any constant $r$)
with $n$ points. For any parameter $\e \in (0,1)$, suppose that any $K_r$-minor-free submetric induced by a subset $Y\subseteq X$ with diameter $\Delta$ has an $O(\Delta)$-bounded tree cover $\mathcal{F}$ of size $\tau(\eps)$ of additive distortion $+\eps \Delta$. Then $(X,\delta_X)$ has a tree cover of size $O(\tau(O(\eps))\cdot \log(1/\eps))$ with multiplicative distortion $1+\eps$. 
\end{restatable}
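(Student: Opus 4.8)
The plan is to run the classical ``one distance scale at a time'' construction, but to group the scales so that infinitely many of them fold into a bounded number of trees; apart from the $\tau$ factor, the only $\eps$-dependence in the final count will come from how many scales one can safely fold together. The first ingredient is the KPR/sparse-cover machinery for $K_r$-minor-free metrics. For any $\eps$ it provides a collection of $s=O_r(1)$ \emph{laminar} partition families $\{\mathcal P^{(q)}_i\}_{i\in\mathbb Z,\ q\in[s]}$, where $\mathcal P^{(q)}_i$ partitions $X$ into clusters of strong diameter $O(2^i)$; each scale-$2^{i+1}$ cluster of family $q$ is a union of scale-$2^i$ clusters of family $q$; and for every $x\ne y$ some family $q$ keeps $x$ and $y$ in a common cluster of $\mathcal P^{(q)}_i$ for \emph{every} $i$ with $2^i\ge c_0\,\delta_X(x,y)$, where $c_0=O_r(1)$. (Such families come from running KPR's chopping recursively with a fixed tuple of chop-offsets per family, the offsets nested across scales so that ``uncut at a pair's own scale'' propagates to all coarser scales.) Fix also a representative point $v_C\in C$ for every cluster $C$ in every family; these laminar families will be the skeleton of our trees.

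\emph{Construction.} Normalize the minimum interpoint distance to $1$ and split the dyadic scales into the $L:=\lceil\log_2(1/\eps)\rceil+O(1)$ arithmetic progressions $\mathcal I_\rho:=\{2^{\rho+kL}:k\ge 0\}$, $\rho\in\{0,\dots,L-1\}$. Choose $\eps':=\Theta_r(\eps)$ small (pinned below). For each residue $\rho$, family $q\in[s]$, and index $i\in\{1,\dots,\tau(\eps')\}$ we build one tree $T^{\rho,q}_i$: for every cluster $C$ of family $q$ at a scale in $\mathcal I_\rho$, take a disjoint copy of its $i$-th additive dominating tree $\widehat T_C$ — supplied by the hypothesis applied to the ($K_r$-minor-free) submetric $(C,\delta_X|_C)$, with additive distortion $+\eps'\operatorname{diam}(C)$ — and for every such $C$ with $\mathcal I_\rho$-parent $C'$ in the laminar family, add a zero-weight edge joining the copy of $v_C$ in $\widehat T_C$ to the copy of $v_C$ in $\widehat T_{C'}$ (legitimate since $v_C\in C\subseteq C'$). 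Because the family is nested, the zero-weight edges form an ``upward'' forest with the single top cluster $X$ as sink, so $T^{\rho,q}_i$ is a tree; identify each $x\in X$ with its copy in the smallest $\mathcal I_\rho$-cluster containing it. The number of trees is $L\cdot s\cdot\tau(\eps')=O\bigl(\tau(O(\eps))\cdot\log(1/\eps)\bigr)$.

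\emph{Correctness.} Each $T^{\rho,q}_i$ is dominating: on the unique path between the canonical copies of $x$ and $y$, zero-weight edges only move between copies of a single point, while each maximal sub-path lying inside one $\widehat T_C$ joins two points of $C$ and hence has length $\ge$ their $\delta_X$-distance; summing and using the triangle inequality gives $d_{T^{\rho,q}_i}(x,y)\ge\delta_X(x,y)$. For the stretch, take $x\ne y$ with $d:=\delta_X(x,y)$, let $q$ be a family good for the pair, put $S:=2^{\lceil\log_2(c_0d)\rceil}\in[c_0d,2c_0d]$ and $\rho:=\lceil\log_2(c_0d)\rceil\bmod L$, so $S\in\mathcal I_\rho$ and $x,y$ lie in a common scale-$S$ cluster $C$ of family $q$, but in distinct clusters at scale $S/2^L=\Theta(\eps d)$ (those have diameter $O(\eps d)<d$). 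In $T^{\rho,q}_i$ the $x$--$y$ path climbs, one $\mathcal I_\rho$-scale at a time and always through cluster representatives, from the copy of $x$ at scale $2^\rho$ up into $\widehat T_C$, crosses $\widehat T_C$ between the representatives of the two scale-$(S/2^L)$ clusters holding $x$ and $y$, then descends symmetrically to $y$. A climb step inside a scale-$2^{\rho+jL}$ cluster costs at most $(1+\eps')$ times its diameter, i.e.\ $O(2^{\rho+jL})$; since these scales form a geometric progression of ratio $2^L=\Theta(1/\eps)$ capped at $S/2^L$, the climb — and likewise the descent — costs $O(S/2^L)=O(\eps d)$, while the crossing inside $\widehat T_C$ costs at most $\bigl(d+2\,O(\eps d)\bigr)+\eps'\operatorname{diam}(C)=d+O(\eps d)$ (triangle inequality for the first summand, the additive guarantee for the second). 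Hence $d_{T^{\rho,q}_i}(x,y)\le d+O(\eps d)$, which is $\le(1+\eps)d$ once $\eps'$ is a small enough constant multiple of $\eps$.

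\emph{The hard part.} The crux is the folding of scales: a priori, merging every dyadic scale into finitely many trees forces a pair to route up and down a whole chain of ancestor clusters, which buys only $O(1)$ distortion. What restores $1+\eps$, and pins $L=\Theta(\log(1/\eps))$ as exactly the right number of phases, is that within one progression the scale gap is a factor $2^L=\Theta(1/\eps)$, so the geometric sum of climb/descent detours is dominated by its top term — an $\eps$-fraction of the serving scale $\Theta(d)$, already covered by the additive slack of the additive tree cover there. The second, more routine obstacle is producing the skeleton partitions with only $O_r(1)$ laminar families rather than $\Omega(\log n)$ (a naive random-partition union bound over pairs would cost $\log n$): this is where $K_r$-minor-freeness enters a second time, via KPR-type padded decompositions.
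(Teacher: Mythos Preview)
Your proposal is correct and follows essentially the same approach as the paper's proof. The paper formalizes the ``split dyadic scales into $\Theta(\log(1/\eps))$ residue classes'' idea via an intermediate abstraction it calls a \emph{hierarchical pairwise partition family} (HPPF), obtained by thinning a KLMN-style hierarchical partition family so that the scale ratio becomes $\ge 1/\eps$; it then glues additive tree covers across scales with zero-weight edges through net-point representatives, exactly as you do with your $v_C$'s. Two cosmetic differences: the paper builds the additive cover at scale $i$ only on the net $N_{i-1}\cap S$ rather than on all of $S$, and your claimed property of the laminar families (a single family $q$ good at \emph{all} scales $\ge c_0 d$) is slightly stronger than what you actually use---you only invoke it at the single scale $S$, which is precisely the padded-HPF guarantee from KLMN (and laminarity then gives the stronger statement for free). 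Also, your ``$(1+\eps')$ times its diameter'' for a climb step should read $O(\text{diameter})$, coming from the $O(\Delta)$-bounded hypothesis rather than from the additive guarantee---but your subsequent geometric-series bound only needs the $O(\cdot)$.
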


To prove \Cref{lm:reduction}, we introduce the notion of a \EMPH{hierarchical pairwise partition family} (HPPF), and show that the reduction follows from an HPPF. 
An HPPF is a collection of hierarchies of partitions, where each partition at level $i$ of a hierarchy in the family is a partition of the planar metric into clusters of diameter roughly $\Theta(1/\eps^i)$. The HPPF has a special property called the \emph{pairwise property}: for every pair $(x,y)\in X$, there is a partition at some level in a hierarchy in the family such that both $x$ and $y$ are contained in some cluster $C$ of the partition, and the diameter of $C$ is roughly $\Theta(\delta_X(x,y))$. 
We show that HPPF can be constructed from a hierarchical partition family studied in  previous works~\cite{KLMN04,BFN19Ramsey}. 

\medskip
Next we focus on constructing a tree cover for planar graphs of diameter $\Delta$  with additive distortion $+\eps \Delta$.  
Our goal is to construct a clustering where each cluster has small diameter $O(\eps \Delta)$. 
After contracting all these clusters, we obtain a cluster graph $\check{G}$, which we would like to treat as an unweighted graph. 
Furthermore, the clusters we constructed will ensure that $\check{G}$ has a small unweighted diameter, independent to the size of the original graph. 
We formalize the properties we need through the notion of an \EMPH{$(\e,h)$-shortcut partition}: every cluster in the partition has diameter at most $\e\Delta$ (where $\Delta$ denotes the diameter of $G$), and (loosely speaking) the hop diameter of $\check{G}$ is at most $h$; see \Cref{def:shortcut-partition} for a more formal definition. 
Now for this special case when cluster graph $\check{G}$ has constant diameter $h = O_\e(1)$, we devise an inductive construction that runs in $h$ rounds, where in the $i$-th round we only preserve paths in $\check{G}$ with distances up to $i$. 
The inductive construction not only gives us an exact tree cover of constant size (as claimed in \Cref{thm:planar-spanning}), but also has other attractive properties. 
One property, which then plays a crucial role in our construction of a tree cover with additive distortion, is what we call the \EMPH{root preservation property} (see Section~\ref{S:exact-cover}):
every tree in the tree cover can be decomposed into a forest, where each tree in the forest is a BFS tree, and the distance between any two vertices is preserved by a path going through the root of a tree in some forest.
Our construction only makes use of the fact that any minor of the input graph has bounded \emph{degeneracy}; hence the result can be extended to any minor-free graphs. 
Furthermore, the tree cover we get is a \emph{spanning} tree cover, which means that the trees are subgraphs of the input graph. We emphasize again that our construction is the first that does not use balanced separators.

Here comes another issue: if we construct (an exact) tree cover of $\check{G}$, we must somehow turn it into a tree cover for $G$. 
A simple idea is to take a tree, say $\check{T}$, in a tree cover of  $\check{G}$,
and \emph{expand} each (contracted) vertex $\check{c}$ in $\check{T}$ with the corresponding cluster $C$ in $G$: 
attach every vertex $v\in C$ to $\check{c}$ by an edge of length roughly $\eps \Delta$ (the diameter of $C$). 
However, the shortest path from $u$ to $v$ in $G$ could intersect multiple clusters, and simply expanding clusters as described above would incur a very large additive distortion (remember that we can only tolerate up to $+\eps \Delta$ distortion). This is where the root preservation property comes to the rescue:
we can replace every rooted tree with a (Steiner) star centered at the root, where the star edges are weighted \emph{according to distance on $G$}. 
The root preservation property
implies the only relevant paths in the tree are those that pass through the root---and transforming trees into stars preserves these distances up to $+O(\e\Delta)$ distortion.
As a result, we are able to show a black-box reduction from $(\e,h)$-shortcut partition to tree covers with additive distortion. This reduction works for any minor-free graph.

\begin{restatable}{theorem}{PartitionToCover} \label{thm:partition-to-cover}
Let $G$ be an (undirected) weighted minor-free graph with diameter $\Delta$. Suppose that  for any $\e > 0$ there is an $(\e,f(\eps))$-shortcut partition for $G$ for some function $f(\e)$ depending only on $\e$, Then $G$ admits a tree cover of size $2^{O(f(\e))}$ with additive distortion $+O(\e\Delta)$.
\end{restatable}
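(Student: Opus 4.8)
\emph{Proof plan.} The plan is to contract the shortcut partition into a cluster graph $\check{G}$, build an \emph{exact} spanning tree cover of $\check{G}$ using the construction behind \Cref{thm:planar-spanning}, and then lift each tree back to $G$ by collapsing its constituent BFS trees into stars whose edges carry the true shortest-path weights of $G$.

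First I would fix an $(\eps,f(\eps))$-shortcut partition $\mathcal{C}$ of $G$ and form the cluster graph $\check{G}$: one vertex $\check{C}$ per cluster $C\in\mathcal{C}$, with $\check{C}$ adjacent to $\check{C}'$ whenever an edge of $G$ joins $C$ and $C'$. By definition of shortcut partition every cluster has $G$-diameter at most $\eps\Delta$ and $\check{G}$ has hop-diameter at most $f(\eps)$. As $\check{G}$ is a minor of $G$ it is again minor-free, so every minor of $\check{G}$ has bounded degeneracy --- all that the inductive construction of \Cref{thm:planar-spanning} needs. Applied to $\check{G}$, viewed as an unweighted graph of diameter $f(\eps)$, that construction yields an exact spanning tree cover $\check{\mathcal{F}}$ of $\check{G}$ of size $2^{O(f(\eps))}$ which moreover has the root-preservation property: each $\check{T}\in\check{\mathcal{F}}$ decomposes into a forest of BFS trees, and for every $\check{x},\check{y}\in V(\check{G})$ there is a $\check{T}\in\check{\mathcal{F}}$ and a BFS tree in its forest, with root $\check{r}$, that contains both $\check{x}$ and $\check{y}$ with the $\check{T}$-path between them passing through $\check{r}$; in particular $d_{\check{G}}(\check{x},\check{r})+d_{\check{G}}(\check{r},\check{y})=d_{\check{G}}(\check{x},\check{y})$, i.e.\ $\check{r}$ lies on a $\check{G}$-geodesic from $\check{x}$ to $\check{y}$.

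Next, fix a representative $r_C\in C$ for each cluster $C$. For each $\check{T}\in\check{\mathcal{F}}$ I build a weighted tree $T$ on $V(G)$ by blowing up every node of $\check{T}$ into a star: (i) inside each cluster $C$ attach every $w\in C$ to $r_C$ by an edge of weight $\delta_G(r_C,w)\le\eps\Delta$; (ii) for each BFS tree of the forest decomposition of $\check{T}$, with root cluster $R$, attach $r_R$ to $r_C$ for every other cluster $C$ in that BFS tree, with edge weight $\delta_G(r_R,r_C)$ --- thereby collapsing the BFS tree to a star centered at $r_R$; and (iii) for each edge of $\check{T}$ joining two distinct BFS trees of the decomposition, add the edge between the corresponding representatives, weighted by their $G$-distance. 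The collection $\mathcal{F}=\{T:\check{T}\in\check{\mathcal{F}}\}$ has size $2^{O(f(\eps))}$, and each $T$ is a tree (it is $\check{T}$ with every node expanded into a star). Domination is immediate: every edge of every $T$ carries weight exactly the $G$-distance between its endpoints, so summing along the $T$-path and applying the triangle inequality gives $d_T(x,y)\ge\delta_G(x,y)$ for all $x,y$.

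Finally I bound the additive distortion. If $x,y$ lie in one cluster $C$ then $d_T(x,y)\le d_T(x,r_C)+d_T(r_C,y)\le 2\eps\Delta$ for every $T$, which suffices. Otherwise apply root preservation to $(\check{C}_x,\check{C}_y)$, obtaining $\check{T}$ and a BFS-tree root $\check{r}$ (a cluster $R$) with $\check{C}_x,\check{C}_y$ in that BFS tree and their $\check{T}$-path through $\check{r}$; after the blow-up the $T$-path from $x$ to $y$ is $x\to r_{C_x}\to r_R\to r_{C_y}\to y$, of weight at most $2\eps\Delta+\delta_G(r_{C_x},r_R)+\delta_G(r_R,r_{C_y})$. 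Since $\lvert\delta_G(r_{C_x},r_{C_y})-\delta_G(x,y)\rvert\le 2\eps\Delta$, it remains to show $\delta_G(r_{C_x},r_R)+\delta_G(r_R,r_{C_y})\le\delta_G(r_{C_x},r_{C_y})+O(\eps\Delta)$, i.e.\ that $r_R$ lies within $O(\eps\Delta)$ of some shortest $x$--$y$ path of $G$. This last claim is the heart of the proof and the step I expect to be the main obstacle: it is the only place where the full strength of the shortcut partition --- not merely its hop-diameter bound --- is needed. The intended argument is that a shortest path of $G$ between $x$ and $y$ traverses a \emph{shortest} path of clusters in $\check{G}$, so the geodesic identity $d_{\check{G}}(\check{C}_x,\check{r})+d_{\check{G}}(\check{r},\check{C}_y)=d_{\check{G}}(\check{C}_x,\check{C}_y)$ forces $R$ to be one of the (at most $f(\eps)$) clusters met by that shortest path; since each cluster has diameter at most $\eps\Delta$, the representative $r_R$ is then within $\eps\Delta$ of a point on that path, which after substitution yields $d_T(x,y)\le\delta_G(x,y)+O(\eps\Delta)$. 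Making the passage to cluster-geodesics precise --- reconciling the BFS/forest structure of the $\check{G}$-cover with the way shortest paths of $G$ meet the partition --- is the technical crux; everything else (minor-closure of $\check{G}$, domination, and the star replacement) is routine.
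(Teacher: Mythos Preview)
Your overall strategy matches the paper's, but the final step has a real gap that stems from a weakened statement of root preservation.

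You phrase root preservation as: for every \emph{pair} $(\check{C}_x,\check{C}_y)$ there is a BFS tree whose root $\check{r}$ lies on some $\check{G}$-geodesic between them. With only that, you are forced to argue that ``a shortest path of $G$ between $x$ and $y$ traverses a shortest path of clusters in $\check{G}$,'' and that $\check{r}$ lies on \emph{that particular} cluster-geodesic. Neither holds in general: the shortcut partition does not say the $G$-geodesic projects to a $\check{G}$-geodesic (it may touch far more than $f(\eps)$ clusters), and even if it did, $\check{r}$ could sit on a different $\check{G}$-geodesic. So the step you flag as ``the technical crux'' is not just technical---as stated it is false.

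The fix is to use the stronger form of root preservation that the inductive construction actually delivers: for every \emph{path} $\check{\pi}$ in $\check{G}$ of hop-length at most $f(\eps)$, some BFS tree in the cover preserves $\check{\pi}$ (contains all its vertices, with the root on $\check{\pi}$). Now invoke the low-hop clause of the shortcut partition directly: it hands you an approximate $G$-shortest path $\pi$ between $x$ and $y$ together with a specific $\check{\pi}$ in $\check{G}$ of hop-length at most $f(\eps)$ whose clusters all intersect $\pi$. Apply root preservation to this $\check{\pi}$; the root cluster $R$ is one of the clusters on $\check{\pi}$, hence intersects $\pi$, so $r_R$ is within $\eps\Delta$ of a point on $\pi$. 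The detour bound $\delta_G(x,r_R)+\delta_G(r_R,y)\le \delta_G(x,y)+O(\eps\Delta)$ follows immediately, with no appeal to cluster-geodesics. This also lets you simplify the lift: replace each BFS tree by a single star at $r_R$ with leaves \emph{all} vertices of $G$ in the constituent clusters, edge-weighted by $\delta_G(r_R,\cdot)$; the two-level construction with per-cluster representatives is unnecessary.
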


The construction of tree cover now is reduced to constructing an $(\eps ,f(\eps))$-shortcut partition. 
For graphs with treewidth $t$, we provide a 
construction with $f(\eps) = (t/\eps)^{O(t)}$; see \Cref{S:weak-scattering-treewidth} for details. 
This, together with \Cref{lm:reduction} and \Cref{thm:partition-to-cover}, implies our $O(1)$-size tree cover for bounded-treewidth graphs (\Cref{thm:treewidth}).
For planar graphs, constructing an $(\eps,f(\eps))$-shortcut partition is much more difficult. 
Our key insight here is that 
planar graphs are \emph{grid-like}.

\paragraph*{Informal discussion of grid-like structure of planar graphs.}  
A planar grid graph can be decomposed into an ordered collection of columns, where each column is a collection of vertices. We identify two important properties of grid graphs:
\begin{itemize}
    \item Each column is a shortest path.
    \item Every edge goes between two vertices either in the same column or in consecutive columns. (In particular, this implies that every path from a column $C_1$ to a column $C_2$ passes through every column in between $C_1$ and $C_2$.)
\end{itemize}
We would like to say that every planar graph admits a partition
into clusters of diameter $\e\Delta$, such that if we contract all clusters into supernodes, the contracted graph satisfies the above properties (for example, $\check{G}_1$ in Figure~\ref{fig:gridlike}). However, this may not always be the case.

\begin{figure}[h!]
\centering
\includegraphics[width=0.7\textwidth]{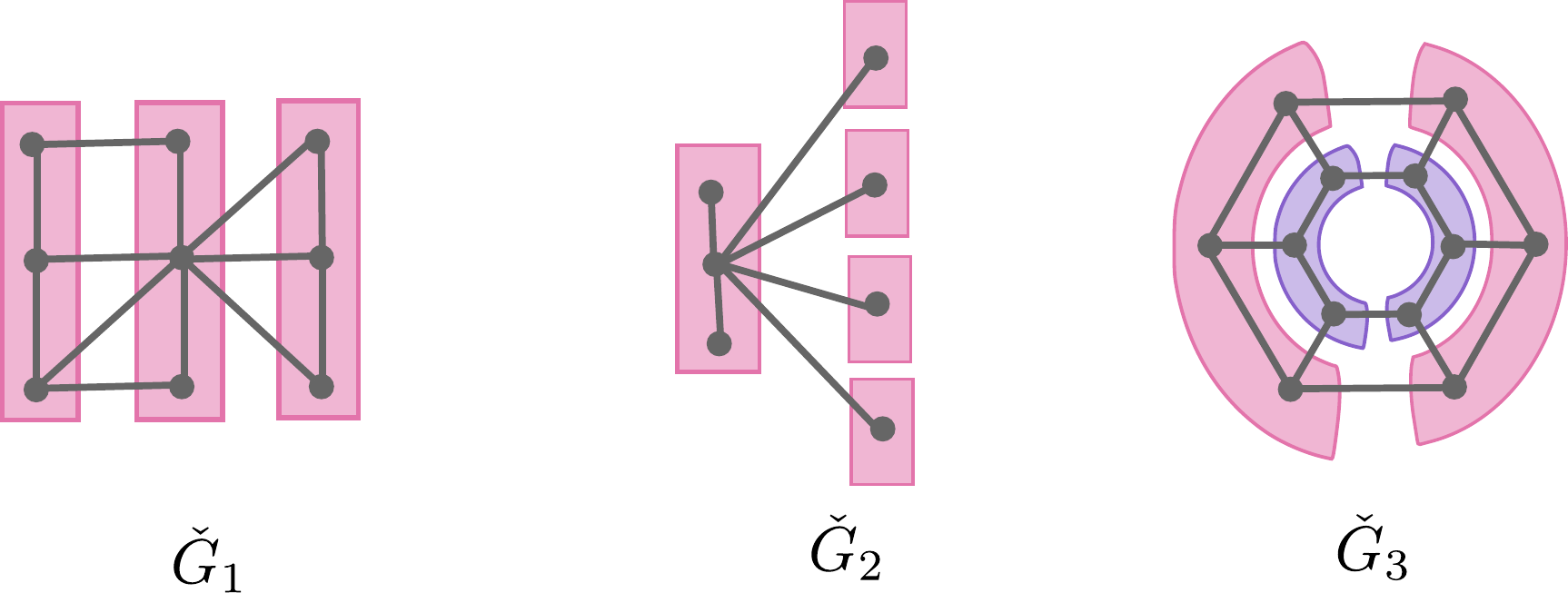}
\caption{Three planar graphs, with ``columns'' highlighted in pink. Graph $\check{G}_1$ looks like a grid; graph $\check{G}_2$ requires tree-like column ordering; graph $\check{G}_3$ requires column nesting. The nested columns in $\check{G}_3$ are highlighted in purple.}
\label{fig:gridlike}
\end{figure}

\medskip \noindent \textbf{Issue 1 (Trees):} If the contracted graph is a star, it does not look like a grid (for example, $\check{G}_2$ in Figure~\ref{fig:gridlike}). There is no way to partition the vertices into columns and assign an ordering such that edges only occur between consecutive columns. As such, we relax the guarantee. Instead of assigning a total order to the columns, we assign a tree structure to the columns such that edges occur only between columns that are adjacent in the tree.

\medskip \noindent \textbf{Issue 2 (Nesting):} If the contracted graph is a circular grid graph, it
also does not look like a grid (for example, $\check{G}_3$ in Figure~\ref{fig:gridlike}). We cannot partition the vertices into columns and assign a valid ordering to the columns. To deal with this, we allow some subgraphs to ``sit in between'' two adjacent columns. These subgraphs (recursively) satisfy the grid-like structure with nesting. We guarantee that there are few layers of nesting. 

\bigskip 
\noindent We show that these are essentially the only ways in which planar graphs can violate the grid-like~property.
To deal with nesting, we decompose the graph into a tree structure called \EMPH{gridtree hierarchy}, such that each node of the hierarchy is associated with a grid-like structure called \EMPH{gridtree}. The hierarchy has depth $O(1/\e)$; i.e., there are few layers of nesting. Each gridtree is a tree, such that each node is associated with a subgraph of $G$ we call a \emph{column}.
Each column contains vertices of distance at most $\eps \Delta$ from a shortest path within;
we say the width of the hierarchy is $\eps \Delta$.
The gridtrees in the hierarchy are reminiscent of the recursive structures built for sparse covers on planar graphs in the work of Busch, LaFortune, and Tirthapura~\cite{BLT14}. 
We show that, given a gridtree hierarchy with width $\eps^2 \Delta$, we can construct an $(\eps, O(1/\eps^2))$-shortcut partition of $G$.
Again with \Cref{lm:reduction} and \Cref{thm:partition-to-cover} we get a tree cover of size $2^{O(1/\eps^2)}$ out of the box using our framework.  
Perhaps surprisingly, we can improve upon the exponential size bound by working with the gridtree directly, using techniques inspired by our earlier reduction to tree cover in low-diameter graphs.
Roughly speaking, shortcut partitions enable us to reduce to the special case of bounded-diameter planar graphs; the extra properties of the gridtree enable us to reduce (at least in spirit) to the special case of bounded-diameter planar grids.
Using the gridtree hierarchy, we construct a tree cover with additive distortion $+\eps \Delta$ and size $O(1/\eps^3)$. Thus, by \Cref{lm:reduction}, we obtain a tree cover of size $O(\eps^{-3}\cdot \log(1/\eps))$ as claimed in \Cref{thm:main}.  

\paragraph{Embedding into bounded treewidth graphs with additive distortion.} 

To demonstrate the power of our new grid-like structure, we prove that any planar graph $G$ of diameter $\Delta$ can be embedded into a bounded treewidth graph with additive distortion $+O(\e\Delta)$.
We construct a shortcut partition $\mathfrak{P}$ and tree cover 
$\mathcal{F}$ (which actually is a collection of \emph{spanning forests} with the root preservation property) 
for $G$ with the following extra property:
\begin{itemize}
    \item \textnormal{[Disjointness.]}
    No two trees in any forest $F$ in $\mathcal{F}$ contain vertices from the same cluster in~$\mathfrak{P}$.
\end{itemize}

Our embedding has three steps. 
First, we contract each cluster of $\mathfrak{P}$ into a supernode and obtain $\check{G}$. 
The low-hop property of shortcut partition $\mathfrak{P}$ implies that $\tw(\check{G}) = O(1/\eps)$. 
Second, for each vertex $\check{c}$ in $\check{G}$, let $C$ be the corresponding cluster in $\mathfrak{P}$. 
We attach (a copy of) each vertex $v$ in $C$ to $\check{c}$ via a single edge. 
At this point, treewidth of $\check{G}$ remains $O(1/\eps)$, but the distortion could be large. 
Third, we augment $\check{G}$ by adding more edges: For each rooted tree $T$ in each forest $F$ in $\mathcal{F}$, we add an edge from (the copy of) the root of $T$, say $r$, in $\check{G}$ to (the copy of) every other vertex of $T$, say $v$, by an edge of weight $\delta_G(r,v)$. 
Let $\hat{G}$ be the resulting graph. 
The root preservation property of tree cover $\mathcal{F}$ implies that the distortion is $+O(\eps\Delta)$, as required by \Cref{thm:FKS-improved}. 
However, it is unclear why $\hat{G}$ has small treewidth. 
Here we use the disjointness property, which intuitively implies that the augmentation only happens at ``disjoint local areas'' of $\hat{G}$. 
We formalize this intuition via a key lemma (\Cref{lem:extend-forest}) showing that $\tw(\hat{G}) = O(|\mathcal{F}|\cdot \tw(\check{G})) = O(\eps^{-3} \cdot \eps^{-1}) = O(\eps^{-4})$. \Cref{thm:FKS-improved} now follows.

\subsection{Related work}

A closely related notion of tree cover is a \emph{Ramsey tree cover}. 
In a Ramsey $t$-tree cover $\mathcal{F}$, each vertex $v$ is associated with a tree $T_{\home(v)}\in \mathcal{F}$, such that the distance from $v$ in  $T_{\home(v)}$ to every other vertex is an approximation of the original distance up to a factor of $t$. 
Thus, Ramsey tree covers have stronger guarantees than ordinary tree covers. 
Both tree covers and Ramsey tree covers for general metrics were studied in the past~\cite{BLMN05,TZ01b,BFN19Ramsey}. 
In general metrics, bounds obtained for tree covers and Ramsey tree covers are almost the same. 
It is possible to construct a Ramsey tree cover with tradeoff between size $k$ and distortion $\tilde{O}(n^{1/k})$~\cite{BFN19Ramsey}, or of size $\tilde{O}(n^{1/t})$ and distortion $O(t)$~\cite{MN06}; these bounds are almost optimal~\cite{BFN19Ramsey}.

However, Ramsey tree covers with a constant number of trees and constant distortion do not exist in planar and doubling metrics. In particular, for metrics which are \emph{both planar and doubling} and for any distortion $\alpha \geq 1$, Bartal \etal~\cite{BFN19Ramsey}  showed that the number of trees in the Ramsey tree cover must be $n^{\Omega(1/(\alpha \log(\alpha)))}$. This sharply contrasts with our result for (non-Ramsey) tree covers in \Cref{thm:main}.


\section{Tree cover for graphs with good shortcut partition}

A \EMPH{tree cover $\mathcal{F}$} of an edge-weighted planar graph $G$ is a collection of trees, so that every pair of vertices $u$ and $v$ in $G$ has a \emph{low-distortion}
path in at least one of the trees in $\mathcal{F}$.
Specifically, 
\begin{itemize}
\item tree cover $\mathcal{F}$ has \EMPH{$\alpha$-multiplicative distortion} if there is a tree $T$ in $\mathcal{F}$ satisfying
\[
\dist_G(u,v) \le \dist_T(u,v) \le \alpha \cdot \dist_G(u,v).
\]

\item tree cover $\mathcal{F}$ has \EMPH{$\beta$-additive distortion} if there is a tree $T$ in $\mathcal{F}$ satisfying
\[
\dist_G(u,v) \le \dist_T(u,v) \le \dist_G(u,v) + \beta.
\]
\end{itemize}
Sometimes it is easier to describe the construction in terms of \EMPH{forest covers}: that is, instead of a collection of trees, we allow a collection of \emph{forests} to be in the cover of $G$. Let $\Delta$ be the diameter of $G$. Recall that a tree cover $\mathcal{T}$ is \EMPH{$\Delta$-bounded} if every tree of $\mathcal{T}$ has diameter at most $\Delta$; we say that a forest cover $\mathcal{F}$ is $\Delta$-bounded if every tree in every forest of $\mathcal{F}$ is $\Delta$-bounded. We remark that one can easily construct an $O(\Delta)$-bounded tree cover from an $O(\Delta)$-bounded forest cover: Simply connect the tree components within each forest into a single tree in a star-like way, assigning weight $\Delta$ to the newly added star edges.
As the diameter of the graph is $\Delta$, each newly constructed tree is a dominating tree. 

Our main result of this section is to show that in order to construct a tree cover of constant size with $(1+\e)$-multiplicative distortion for arbitrary weighted planar graph $G$ with diameter $\Delta$, it is sufficient to do the following:
(1) Reduce the problem to constructing tree covers of constant size with $\e\Delta$-additive distortion.
(2) Find a shortcut partition for $G$ into \emph{clusters} such that every cluster has diameter $\e\Delta$, and contract all clusters into supernodes to form the \emph{cluster graph} $\check{G}$, such that there is a shortest path between any two nodes that has at most $O_\e(1)$ edges.
(3) Construct a constant-size tree cover for the cluster graph $\check{G}$, which has bounded hop-diameter.

After introducing some terminologies (Sections~\ref{SS:prelim} and~\ref{S:exact-cover}),
we first prove that the above reduction strategy works in Section~\ref{SS:reduction} (\Cref{thm:large-diam-cover}), then in Section~\ref{SS:expand-forest}  we prove the existence of tree cover for planar graph with bounded hop-diameter (\Cref{thm:bdd-diameter-tree-cover}).  (Surprisingly, the tree cover constructed preserves the distance \emph{exactly} without distortion.)
In Sections~\ref{S:grid-scattering} and~\ref{S:gridtree-planar} we prove the existence of shortcut partitions for planar graphs.
We construct tree cover for planar graphs whose size is polynomially dependent on $1/\e$ in Section~\ref{S:treecover-gridlike}.
Next, in Section~\ref{S:weak-scattering-treewidth} we construct shortcut partition for bounded-treewidth graphs.
In Section~\ref{S:treewidth-embedding} we prove that any planar graph embeds into a bounded-treewidth graph with additive distortion.
Finally, in Section~\ref{S:add-to-mul} we present the full details of the reduction from multiplicative to additive distortion for tree covers.
We conclude the paper with some applications (Section~\ref{S:applications}).

\subsection{Shortcut partitions}
\label{SS:prelim}

Throughout this section and the rest of the paper, let $\e$ and $\Delta$ be fixed parameters, with $\Delta > 1$
and $0 < \e < 1$.  (Later on in the paper we might recurse on some subgraph $H$ of $G$. In such case we will consistently use $\Delta$ for the diameter of $G$; in particular, the diameter of $H$ might increase and be bigger than $\Delta$.)
Let $G$ be an (undirected) weighted planar graph with diameter $\Delta$.
When a graph $H$ is a subgraph of $G$, we write \EMPH{$H \subseteq G$}.
For any subgraph $H \subseteq G$, let \EMPH{$G[H]$} denote the subgraph of $G$ induced by the vertices of $H$.
A \EMPH{cluster} $C$ is a subset of vertices in $G$ such that the induced subgraph $G[C]$ is \emph{connected}. 
A \EMPH{clustering} of a planar graph $G$ is a partition of the vertices of $G$ into clusters $\mathcal{C} \coloneqq \set{C_1, \ldots, C_m}$.
Let \EMPH{$\check{G}$} be the \EMPH{cluster graph} of $G$ with respect to $\mathcal{C}$, where each cluster in $\mathcal{C}$ is contracted to a \EMPH{supernode}.
We always treat $\check{G}$ as an unweighted graph; to emphasize this, we use the terms \EMPH{hop-length}, \EMPH{hop-distance}, and \EMPH{hop-diameter} as opposed to length, distance, and diameter when referring to $\check{G}$. An \EMPH{$h$-hop path} is a path with $h$~edges.

Our general goal is to construct a clustering for $G$ such that the \emph{diameter} of each cluster is small. 
One can measure the diameter of a cluster in two ways:
\begin{itemize}
\item a cluster $C$ has \EMPH{strong diameter} $D$ if $\dist_{G[C]}(u,v) \leq D$ for any two vertices $u$ and $v$ in $C$;
\item a cluster $C$ has \EMPH{weak diameter} $D$ if $\dist_{G}(u,v) \leq D$ for any two vertices $u$ and $v$ in $C$.
\end{itemize}
Notice that cluster $C$ has strong diameter $D$ implies it has weak diameter $D$ as well; the main difference is whether a shortest path between $u$ and $v$ is within the cluster $C$ itself, or within the whole graph $G$.

\begin{definition} \label{def:shortcut-partition}
An \EMPH{$(\e,h)$-shortcut partition}%
\footnote{Arnold Filtser~\cite{Filtser20B} introduced the notion of \emph{scattering partition}.  In a scattering partition it is required that \emph{every shortest path} of length $\alpha\cdot \e\Delta$ intersects at most $O(\alpha)$ clusters, which is stronger than shortcut partition.  Scattering partition is conjectured to exist for any minor-free graphs~\cite[Conjecture~1]{Filtser20B}.}
is a clustering $\mathcal{C} = \set{C_1, \ldots, C_m}$ of $G$ such that:
\begin{itemize}
    \item \textnormal{[Diameter.]}
    the \emph{strong} diameter of each cluster $C_i$ is at most $\e\Delta$, where $\Delta$ is the diameter of $G$; 
    
    \item \textnormal{[Low-hop.]}
    for any vertices $u$ and $v$ in $G$, there is an approximate shortest-path $\pi$ between $u$ and $v$ in $G$ with length at most $(1+\e) \cdot \dist_G(u,v)$, and there is a path $\check{\pi}$ in the cluster graph $\check{G}$ between the clusters containing $u$ and $v$ such that (1) $\check{\pi}$ has hop-length at most $h$, and (2) $\check{\pi}$ only contains clusters that have nontrivial intersection with $\pi$.

\end{itemize}
\end{definition}

\noindent Notice that the low-hop property does \emph{not} guarantee that path $\pi$ between $u$ and $v$ intersects at most $h$ clusters. Rather, it guarantees that there is some $h$-hop path between the two clusters containing $u$ and $v$ respectively in the cluster graph;
see Figure~\ref{fig:low_hop}.
We remark that the cluster graph obtained by contracting every cluster in an $(\e, h)$-shortcut partition has hop-diameter at most $h$.

\begin{figure}[htbp]
    \centering
    \includegraphics[width=0.4\textwidth]{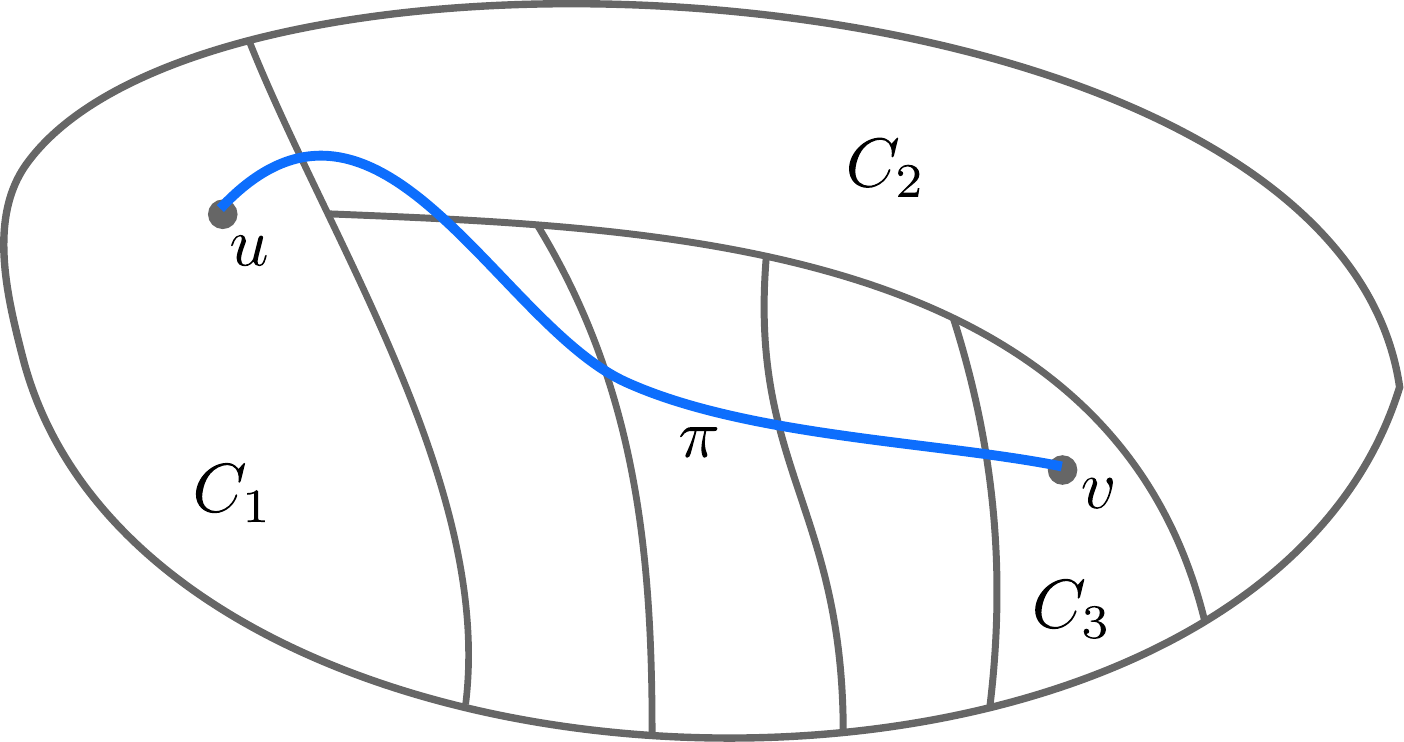}
    \caption{A graph partitioned into clusters. There is a path $\pi$ between vertices $u$ and $v$. Path $\pi$ intersects 6 clusters. In the cluster graph, there is a 2-hop path $\check{\pi} = (C_1, C_2, C_3)$ that only contains clusters intersecting $\pi$.}
    \label{fig:low_hop}
\end{figure}

\noindent The following theorem summarizes our framework of constructing (additive) tree covers for graphs with shortcut partitions; it is a restatement of \Cref{thm:partition-to-cover} with more details. The proof can be found in Section~\ref{SS:reduction} after we introduce the necessary terminologies.

    \begin{theorem}\label{thm:large-diam-cover}
    Let $G$ be an (undirected) weighted minor-free graph with diameter $\Delta$. 
    Suppose that 
    (1) for any $\e > 0$ there is an $(\e,f(\e))$-shortcut partition for $G$ for some function $f(\e)$ depending only on $\e$, and
    (2) the cluster graph $\check{G}$ with respect to the shortcut partition
    has a forest cover of $2^{O(f(\e))}$ size, satisfying the \emph{root preservation property} defined in \Cref{S:exact-cover}.
    Then $G$ admits an $O(\Delta)$-bounded forest cover with additive distortion $+O(\e\Delta)$, with $2^{O(f(\e))}$ forests.
\end{theorem}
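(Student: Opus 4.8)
\emph{Proof plan.} We convert the promised forest cover of the cluster graph $\check G$ into a forest cover of $G$ by a \emph{star-expansion}, and then bound the additive error via the root preservation property. Fix the $(\e,f(\e))$-shortcut partition of $G$, with clustering $\mathcal C$ and cluster graph $\check G$ (whose hop-diameter is at most $f(\e)$), and fix the forest cover $\check{\mathcal F}$ of $\check G$ of size $2^{O(f(\e))}$; by the root preservation property every forest $\check F\in\check{\mathcal F}$ is a collection of vertex-disjoint BFS trees. For each $\check F\in\check{\mathcal F}$ we build a forest $F$ on $V(G)$ as follows: for every BFS tree $\check T\in\check F$, rooted at a supernode $\check r$ with cluster $R$, fix a representative vertex $r^\ast\in R$ and create a star centered at $r^\ast$ whose leaves are all vertices of $G$ lying in some cluster that is a node of $\check T$, each leaf $x$ joined to $r^\ast$ by an edge of weight $\dist_G(r^\ast,x)$. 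Since $\mathcal C$ partitions $V(G)$ and the BFS trees of $\check F$ have pairwise disjoint vertex sets, these stars are vertex-disjoint, so $F$ is a forest; set $\mathcal F:=\{\,F:\check F\in\check{\mathcal F}\,\}$, of size $2^{O(f(\e))}$. Any path inside a tree of some $F\in\mathcal F$ has the form $x\to r^\ast\to y$, hence length $\dist_G(x,r^\ast)+\dist_G(r^\ast,y)\ge\dist_G(x,y)$, so $\mathcal F$ is dominating; and every star edge has weight at most the diameter $\Delta$, so every tree has diameter at most $2\Delta$ and $\mathcal F$ is $O(\Delta)$-bounded.

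\emph{Additive distortion (the crux).} Fix $u,v\in V(G)$ with clusters $C_u,C_v$, and apply the root preservation property to the supernode pair $(C_u,C_v)$: it yields a forest $\check F\in\check{\mathcal F}$ and a BFS tree $\check T\in\check F$ rooted at a supernode $\check r$ with cluster $R$, such that $C_u,C_v\in\check T$ and the (unique) $\check T$-path between them passes through $\check r$ and realizes the hop-distance between $C_u$ and $C_v$ in $\check G$, which is at most $f(\e)$; in particular $\check r$ lies on a hop-shortest $C_u$--$C_v$ path of $\check G$. Let $F\in\mathcal F$ and $r^\ast\in R$ be the corresponding forest and star center; then $u$ and $v$ both belong to that star, so $\dist_F(u,v)=\dist_G(u,r^\ast)+\dist_G(r^\ast,v)$, and it remains to bound this by $\dist_G(u,v)+O(f(\e)\cdot\e\Delta)=\dist_G(u,v)+O(\e\Delta)$ (recall $f(\e)=O(1)$ for fixed $\e$). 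We cannot estimate $\dist_G(u,r^\ast)$ and $\dist_G(r^\ast,v)$ directly from $\check G$, since a single hop may correspond to an arbitrarily heavy edge of $G$; instead we route along a genuine path of $G$. Invoke the low-hop guarantee of the shortcut partition on $(u,v)$: it supplies an approximate shortest path $\pi$ in $G$ of length at most $(1+\e)\dist_G(u,v)\le\dist_G(u,v)+\e\Delta$, together with a hop-path $\check\pi$ of at most $f(\e)$ hops from $C_u$ to $C_v$ all of whose clusters meet $\pi$. Since $\check r$ sits on a hop-shortest $C_u$--$C_v$ path while $\check\pi$ is a low-hop $C_u$--$C_v$ connection through clusters touched by $\pi$, one extracts a vertex $w\in R$ lying within distance $O(f(\e)\cdot\e\Delta)$ of some vertex of $\pi$ (each of the at most $f(\e)$ clusters involved has strong diameter at most $\e\Delta$). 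This gives $\dist_G(u,w)+\dist_G(w,v)\le(\text{length of }\pi)+O(\e\Delta)\le\dist_G(u,v)+O(\e\Delta)$, and, since $w$ and $r^\ast$ both lie in the cluster $R$ of strong diameter at most $\e\Delta$, $\dist_G(u,r^\ast)+\dist_G(r^\ast,v)\le\dist_G(u,w)+\dist_G(w,v)+2\e\Delta\le\dist_G(u,v)+O(\e\Delta)$, as required.

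\emph{Main obstacle.} The delicate point is the last step --- reconciling the hop-metric of the cluster graph with the true metric of $G$. The root preservation property hands us a root cluster $R$ that is well placed \emph{with respect to hop-distances in $\check G$}, whereas the distortion bound needs a vertex of $R$ that is well placed \emph{with respect to true distances in $G$}, i.e.\ close to an approximate shortest $u$--$v$ path; bridging this gap is precisely what forces the shortcut partition's low-hop witness $\pi$ into the argument, and one must then verify carefully that the errors paid crossing clusters total only $O(\e\Delta)$ --- which is where the two quantitative guarantees, strong diameter $\e\Delta$ per cluster and at most $f(\e)$ relevant clusters, get spent. (The remaining sections supply the two hypotheses themselves: $(\e,f(\e))$-shortcut partitions for planar and for bounded-treewidth graphs, and a root-preserving forest cover of the bounded-hop-diameter graph $\check G$.)
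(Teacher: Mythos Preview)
Your construction (star expansion of each BFS tree, with star edges weighted by true $G$-distances) matches the paper exactly, and the domination/boundedness checks are fine. The gap is in the distortion argument, specifically in the order in which you invoke the two hypotheses.

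You first apply root preservation ``to the supernode pair $(C_u,C_v)$'', obtaining a BFS tree $\check T$ whose root $\check r$ lies on \emph{some} hop-shortest $C_u$--$C_v$ path in $\check G$, and only afterwards invoke the shortcut partition to obtain the low-hop witness $\check\pi$ (whose clusters intersect the approximate $G$-shortest path $\pi$). You then assert that ``since $\check r$ sits on a hop-shortest $C_u$--$C_v$ path while $\check\pi$ is a low-hop $C_u$--$C_v$ connection \ldots\ one extracts a vertex $w\in R$ lying within distance $O(f(\e)\cdot\e\Delta)$ of some vertex of $\pi$.'' This does not follow: $\check G$ may have many hop-shortest $C_u$--$C_v$ paths, and there is no reason the particular one through $\check r$ should share any cluster with $\check\pi$, nor any reason the cluster $R$ should be close (in the $G$-metric) to $\pi$. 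Bounded hop-distance in $\check G$ says nothing about $G$-distance, so you cannot hop from $\check r$ towards $\check\pi$ and pay only $\e\Delta$ per hop.

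The fix is to swap the order: first use the shortcut partition to obtain $\pi$ and $\check\pi$, and \emph{then} apply the root preservation property to the specific path $\check\pi$ (which has hop-length at most $f(\e)$, hence is covered). This hands you a BFS tree $\check T$ whose root cluster $R$ is one of the supernodes of $\check\pi$, and therefore $R$ \emph{intersects $\pi$} directly. Now pick $c\in R\cap\pi$; the detour $u\to c\to r^\ast\to c\to v$ along $\pi$ costs $\len{\pi}+2\e\Delta\le\dist_G(u,v)+3\e\Delta$, giving additive error $O(\e\Delta)$ with a constant \emph{independent of $f(\e)$}. (Your route, even if it could be patched, would give $+O(f(\e)\cdot\e\Delta)$, which is not the stated bound once $f(\e)$ grows with $1/\e$.) This is precisely the argument the paper carries out.
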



\subsection{Exact spanning tree cover for small diameter minor-free graphs}
\label{S:exact-cover}

Let $G$ be an \emph{unweighted}, undirected $K_r$-minor-free graph with diameter $\Delta$ for some constant $r$. 
For every $k = 1, \ldots, \Delta$, we will construct a set of forests \EMPH{$\mathcal{F}_{k}$} (each of which is a \emph{subgraph} of $G$) such that if $\dist_G(u,v) \leq k$, then there is some forest $F \in \mathcal{F}_k$ such that $\dist_F(u,v) = \dist_G(u,v)$. 
In other words, the shortest path between $u$ and $v$ in $F$ is also a shortest path between $u$ and $v$ in $G$.
As a base case, we use the fact that the edges of a $K_r$-minor-free graph $G$ can be partitioned into $O_r(1)$ star forests (the \emph{arboricity} and \emph{star arboricity} of $K_r$-minor-free graphs are both $O(r\sqrt{\log r})$~\cite{Kostochka1982}; this set of forests is $\mathcal{F}_1$. 
Roughly speaking, we construct $\mathcal{F}_{k+1}$ from $\mathcal{F}_{k}$ by ``expanding'' all trees in $\mathcal F_k$ to include one more edge. 
As a first attempt at implementing this idea, we might try replacing every tree $T \in \mathcal{F}_k$ with $T \cup N_G(T)$, where $N_G(T)$ denotes the set of all edges incident to $T$ in $G$. 
Expanding all trees in this way would clearly satisfy the distance-preserving guarantee for $k+1$, but the expanded trees might not be vertex-disjoint (i.e., expanding a forest this way might produce a graph that is not a forest).   

Perhaps surprisingly, we demonstrate that the suggested approach isn't far off, and the expansion can indeed be simulated with a set of $O(1)$ forests (see Lemma~\ref{lem:expand-forest}). 
This idea gives a forest cover for $O(1)$-diameter minor-free graphs. 
In fact, the forest cover will have a very specific structure: each tree in each forest will be a BFS tree in the original graph $G$, and the distance between any two vertices in $G$ is preserved in the forest cover by a path going through the root of some tree.
Notice that this implies that the forest cover is \emph{spanning}: in other words, it uses no Steiner points or edges.  
The spanning property is helpful for later applications.

\paragraph{Root expansions.}
Let $G$ be a planar graph with vertices $V(G)$ and edges $E(G)$, and $H$ be a subgraph of $G$.
Let \EMPH{$\dist_H(u,v)$} denote the distance in $H$ between vertices $u$ and $v$ in $V(H)$. 
If $T \subseteq G$ is a tree rooted at $r$, we say that $T$ is a \EMPH{BFS tree} if it is a BFS tree on $G[T]$ (that is, if $\dist_T(r, v) = \dist_{G[T]}(r, v)$ for every vertex $v$ of $T$.  In general, $\dist_{G[T]}(r,v)$ may still be larger than $\dist_G(r,v)$.)
We say that a forest $F \subseteq G$ is a \EMPH{BFS forest} if every tree in $F$ is a BFS tree.
    Let $T \subseteq G$ be a BFS tree rooted at $r$, and let $\pi$ be a path in $G$. 
    The BFS tree $T$ \EMPH{preserves} $\pi$ if (i) every vertex in $\pi$ is in $V(T)$, and (ii) $\pi$ passes through $r$.

\begin{observation}
\label{Obs:path-in-tree}
If tree $T$ preserves a shortest path $\pi$ in $G$ between vertices $u$ and $v$, then $\dist_T(u,v) = \dist_G(u,v)$, even though $\pi$ might not be contained in $T$.
\end{observation}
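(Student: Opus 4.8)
The plan is to prove the two inequalities $\dist_G(u,v) \le \dist_T(u,v)$ and $\dist_T(u,v) \le \dist_G(u,v)$ separately. The first is immediate: since $T \subseteq G$, every $u$--$v$ path in $T$ is also a $u$--$v$ path in $G$, and hence $\dist_T(u,v) \ge \dist_G(u,v)$. (Note we do not even need $T$ to be a ``dominating'' tree in the sense of the introduction, since $T$ is literally a subgraph of $G$.)

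For the reverse inequality I would use the two parts of the hypothesis that $T$ preserves $\pi$. Since $\pi$ passes through the root $r$, write $\pi$ as the concatenation $\pi_1 \cdot \pi_2$, where $\pi_1$ is the subpath from $u$ to $r$ and $\pi_2$ the subpath from $r$ to $v$. Because $\pi$ is a shortest $u$--$v$ path in $G$, each of $\pi_1$ and $\pi_2$ is a shortest path in $G$, so $|\pi_1| = \dist_G(u,r)$, $|\pi_2| = \dist_G(r,v)$, and $\dist_G(u,v) = \dist_G(u,r) + \dist_G(r,v)$.

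The key step is to transfer these distances into $T$. Every vertex of $\pi$ lies in $V(T)$, hence so does every vertex of $\pi_1$; since $G[T]$ is the subgraph of $G$ induced by $V(T)$, the entire path $\pi_1$ lies in $G[T]$, which gives $\dist_{G[T]}(u,r) \le |\pi_1| = \dist_G(u,r)$, and the reverse holds because $G[T] \subseteq G$, so $\dist_{G[T]}(u,r) = \dist_G(u,r)$; likewise $\dist_{G[T]}(r,v) = \dist_G(r,v)$. Now the BFS-tree property of $T$ — namely $\dist_T(r,w) = \dist_{G[T]}(r,w)$ for every $w \in V(T)$ — yields $\dist_T(u,r) = \dist_G(u,r)$ and $\dist_T(r,v) = \dist_G(r,v)$. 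The triangle inequality in $T$ then gives $\dist_T(u,v) \le \dist_T(u,r) + \dist_T(r,v) = \dist_G(u,r) + \dist_G(r,v) = \dist_G(u,v)$, and combining with the first inequality completes the proof.

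I do not anticipate a real obstacle; the only point requiring care is that the subpaths $\pi_1,\pi_2$ of $\pi$ need not lie inside the tree $T$ itself, only inside the induced subgraph $G[T]$ — which is precisely why the definition of a BFS tree is stated in terms of $\dist_{G[T]}$, and why condition (i) of ``$T$ preserves $\pi$'' asks only that the \emph{vertices} of $\pi$ (not its edges) belong to $T$.
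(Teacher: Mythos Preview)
Your proof is correct and is precisely the argument the paper has in mind; the observation is stated without proof in the paper, and your write-up spells out the intended reasoning (split $\pi$ at the root, use that the vertices of each half lie in $G[T]$, apply the BFS-tree equality $\dist_T(r,\cdot)=\dist_{G[T]}(r,\cdot)$, and combine with $T\subseteq G$). Your closing remark about why only the \emph{vertices} of $\pi$ need lie in $T$ is exactly the point the paper is flagging with the phrase ``even though $\pi$ might not be contained in $T$.''
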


We say that a BFS forest $F$ preserves $\pi$ if $F$ contains a BFS tree that preserves $\pi$; a set $\mathcal{F}$ of BFS forests preserves $\pi$ if $\mathcal{F}$ contains a BFS forest that preserves $\pi$.
If $G$ is a graph with diameter $\Delta$ and $\mathcal{F}$ is a forest cover for $G$, we say that $\mathcal{F}$ has the \EMPH{root preservation property} if $\mathcal{F}$ is a set of BFS forests of $G$ that preserves every path in $G$ of length at most $\Delta$.

We now formalize the key idea of ``expanding'' a tree, as described at the beginning of the section.
For any path $\pi$ in $G$, we define the \EMPH{prefix} of $\pi$ to be the path containing all but the last edge of $\pi$.

\begin{definition}\label{def:root-exp}
    Let $T \subseteq G$ be a BFS tree with root $r$. 
    A \EMPH{root expansion} of $T$ is a set of BFS trees $\mathcal{T}$ such that for every path $\pi$ in $G$ whose prefix is preserved by $T$, the set $\mathcal{T}$ preserves $\pi$.
\end{definition}

A \EMPH{root expansion} of a BFS forest $F$ is a set of BFS forests $\mathcal{F}$, each consisting of vertex-disjoint BFS trees, such that the set of all trees in $\mathcal{F}$ forms a root expansion for each tree in $F$.
The \EMPH{size} of the expansion is the number of forests in the set.
We emphasize that the trees in the same BFS forest have to be vertex-disjoint from each other,
while trees in different BFS forests may not be vertex-disjoint.

\begin{restatable*}{lemma}{expandforest}%
\label{lem:expand-forest}
    For every BFS forest $F \subseteq G$,
    there is a root expansion of $F$\! of size $O(1)$.
\end{restatable*}

Postponing the proof of Lemma~\ref{lem:expand-forest} to \Cref{SS:expand-forest}, we first prove our main result of this section.

\begin{theorem}
\label{thm:bdd-diameter-tree-cover}
    Let $G$ be an unweighted planar graph with diameter $\Delta$. Then there is a set of BFS forests~$\mathcal{F}$ of size $2^{O(\Delta)}$, such that $\mathcal{F}$ preserves every path in $G$ of length at most $\Delta$. Consequently,
    for every $u,v \in V(G)$, there is some forest $F \in \mathcal{F}$ where $\dist_F(u,v) = \dist_G(u,v)$.
\end{theorem}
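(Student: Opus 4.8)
The plan is to prove \Cref{thm:bdd-diameter-tree-cover} by induction on $k$, building a set of BFS forests $\mathcal{F}_k$ such that $\mathcal{F}_k$ preserves every path in $G$ of length at most $k$; taking $k = \Delta$ gives the theorem. The final consequence (that $\dist_F(u,v) = \dist_G(u,v)$ for some $F \in \mathcal{F}$) follows immediately from \Cref{Obs:path-in-tree} applied to a shortest $u$--$v$ path, which has length at most $\Delta$ and hence is preserved by some BFS tree in some forest of $\mathcal{F}_\Delta$.

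For the base case $k = 1$: every path of length $1$ is a single edge $uv$. Since $G$ is planar (more generally $K_r$-minor-free), its edge set can be partitioned into $O(1)$ star forests; take $\mathcal{F}_1$ to be this collection of star forests. Each star forest is a BFS forest (a star centered at its hub is trivially a BFS tree of depth $1$), and for any edge $uv$, the star forest containing $uv$ has a star component — a tree rooted at the center — that contains both $u$ and $v$ with the edge $uv$ passing through the root (one of $u,v$ is the center, so the length-$1$ path $uv$ passes through the root). Hence $\mathcal{F}_1$ preserves every path of length $1$, and $|\mathcal{F}_1| = O(1)$.

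For the inductive step, suppose $\mathcal{F}_k$ preserves every path of length at most $k$; I want $\mathcal{F}_{k+1}$ that preserves every path of length at most $k+1$. Any path $\pi$ of length at most $k+1$ either has length at most $k$ (handled by $\mathcal{F}_k$, which I keep inside $\mathcal{F}_{k+1}$), or has length exactly $k+1$; in the latter case its prefix $\pi'$ has length $k$ and is a shortest path in $G$, so $\pi'$ is preserved by some BFS tree $T$ in some BFS forest $F \in \mathcal{F}_k$. By \Cref{def:root-exp}, a root expansion of $T$ preserves $\pi$; so I apply \Cref{lem:expand-forest} to every forest $F \in \mathcal{F}_k$ to get a root expansion of $F$ of size $O(1)$, and let $\mathcal{F}_{k+1}$ be the union of $\mathcal{F}_k$ with all these expansions. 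Since each expansion has size $O(1)$, we get $|\mathcal{F}_{k+1}| \le c \cdot |\mathcal{F}_k|$ for an absolute constant $c$, hence $|\mathcal{F}_\Delta| = 2^{O(\Delta)}$, as claimed. A small subtlety to check: the root expansion of a forest is required to consist of forests of vertex-disjoint BFS trees whose union forms a root expansion of each tree in $F$ simultaneously — this is exactly the guarantee packaged into \Cref{lem:expand-forest}, so no extra work is needed here.

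The main obstacle is entirely deferred to \Cref{lem:expand-forest} (proved later in \Cref{SS:expand-forest}): showing that the naive "expand every tree by its incident edges" operation, which would destroy vertex-disjointness, can nonetheless be simulated by only $O(1)$ genuinely disjoint BFS forests. Given that lemma as a black box, the proof of \Cref{thm:bdd-diameter-tree-cover} is just the clean induction sketched above, and the only things to be careful about are (i) that the prefix of a length-$(k{+}1)$ shortest path is itself a shortest path of length $k$ (true since subpaths of shortest paths are shortest), and (ii) bookkeeping the multiplicative constant blowup per level to land on the $2^{O(\Delta)}$ bound.
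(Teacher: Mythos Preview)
Your proposal is correct and follows essentially the same approach as the paper: build $\mathcal{F}_1$ from an $O(1)$-size star-forest decomposition of the edge set, then repeatedly apply \Cref{lem:expand-forest} to pass from $\mathcal{F}_k$ to $\mathcal{F}_{k+1}$, picking up a constant multiplicative factor each step.

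One small correction: in the inductive step you assert that the prefix $\pi'$ ``is a shortest path in $G$'' and flag this as subtlety (i). This is neither needed nor guaranteed---the theorem claims $\mathcal{F}$ preserves \emph{every} path of length at most $\Delta$, not just shortest ones, so $\pi$ itself need not be shortest and hence its prefix need not be either. Fortunately your inductive hypothesis already says $\mathcal{F}_k$ preserves every path of length at most $k$, so $\pi'$ (having length $k$) is preserved regardless; just drop the shortest-path claim. Also, your explicit inclusion of $\mathcal{F}_k$ inside $\mathcal{F}_{k+1}$ is harmless but redundant: the root expansion constructed in \Cref{lem:expand-forest} already contains the original forest $F$, so the paper simply sets $\mathcal{F}_k$ to be the union of the expansions without separately retaining $\mathcal{F}_{k-1}$.
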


\begin{proof}    
    As $G$ is $K_r$-minor-free, the edges of $G$ can be covered by $O_r(1)$ forests by Nash-Williams theorem~\cite{Nash64}. 
    These forests can be converted into $O_r(1)$ star forests. 
    Each star is a BFS tree in $G$. 
    Let $\mathcal{F}_1$ be such a set of BFS forests. 
    For each $k \in 2, \ldots, \Delta $, let $\mathcal{F}_k$ be the set containing all BFS forests from the root expansions of each BFS forest in $\mathcal{F}_{k-1}$ provided by \Cref{lem:expand-forest}. Return $\mathcal{F}_\Delta$.

    \smallskip \noindent
    By induction, each $\mathcal{F}_k$ has size ${O(1)}^k$; setting $k = \Delta$, we find that $\mathcal{F}_\Delta$ has size $2^{O(\Delta)}$. 
    Again by induction, we have that
    for every path $\pi$ of length at most $k$, there is some forest in $\mathcal{F}_k$ that preserves $\pi$, by definition of the root expansion.  (The base case when $k=1$ is immediate since each length-1 path belongs to one of the stars of $\mathcal{F}_1$.)
    As $G$ has diameter $\Delta$, all shortest paths in $G$ are preserved by $\mathcal{F}_\Delta$.
\end{proof}

\subsection{Proving Theorem~\ref{thm:large-diam-cover}} 
\label{SS:reduction}

\noindent Before we proceed to prove Lemma~\ref{lem:expand-forest}, first we prove the correctness of our reduction to bounded hop-diameter case.

\begin{proof}[of Theorem~\ref{thm:large-diam-cover}]
    By assumption (1) of Theorem~\ref{thm:large-diam-cover}, there is an $(\e,f(\e))$-shortcut partition for $G$.
    As each cluster is connected, the cluster graph $\check{G}$ obtained by contracting each cluster is still minor-free. 
    For each subset $\check{S}$ of $\check{G}$, there is a corresponding set of vertices in $V(G)$ associated with $\check{S}$, which we will naturally denoted as $S$.

    \vspace{-6pt}
    \paragraph{Construction.} 
    Treat $\check{G}$ as an unweighted graph.
    By assumption (2) of Theorem~\ref{thm:large-diam-cover}, there is 
    a forest cover $\check{\mathcal{F}}$ for~$\check{G}$ of size $2^{O(f(\e))}$
    that preserves all paths in $\check{G}$ with hop-length $O(f(\e))$.
    
    For each tree $\check{T}$ in each forest $\check{F}$ in $\check{\mathcal{F}}$ rooted at some supernode $C$, 
    perform the following \emph{transformation}:
    Let $r$ be an arbitrary vertex in~$C$; construct a \EMPH{star $S_r$} centered at $r$ connected to every vertex in $T$, where the weight of the edge\footnote{Note that the star $S_r$ uses Steiner edges.  Unlike in \Cref{thm:bdd-diameter-tree-cover}, the tree cover we construct for \Cref{thm:large-diam-cover} is not a spanning tree cover.} from $r$ to $v$ is set to be $\dist_G(r, v)$. 
    Applying this transformation to every tree in a forest on $V(\check{G})$ produces a forest on $V(G)$: the fact that clusters in the shortcut partition are vertex-disjoint implies that two trees $\check{T_1}$ and $\check{T_2}$ in forest $\check{F}$ are vertex-disjoint if and only if the two corresponding subsets $T_1$ and $T_2$ of $G$ are vertex-disjoint.
    Return $\mathcal{F}$, the set of forests produced.

    \vspace{-6pt}
    \paragraph{Distortion guarantee.} 
    We claim that for every pair $u$ and $v$ in $G$, there is a path in some forest $F \in \mathcal{F}$ such that $\dist_F(u,v) \leq \dist_G(u,v) + O(\e \Delta)$. 
    Let $u$ and $v$ be two vertices in $G$.
    By assumption (1), there is a $f(\e)$-hop path \EMPH{$\check{\pi}$}  in the cluster graph induced by the subset of clusters that has nontrivial intersections with some shortest path $\pi(u,v)$ between $u$ and $v$ in $G$.
    By construction, $\check{\mathcal{F}}$ contains some BFS tree $\check{T}$ in a forest $\check{F}$ that preserves $\check{\pi}$. 
    Let $C$ be the cluster that is the root of $\check{T}$. 
    Then $\mathcal{F}$ contains some star $S_r$ that is rooted at $r \in C$ connecting to every vertex in $T$. 
    As $\check{\pi}$ is preserved by $\check{T}$, $T$ contains both $u$ and $v$ because $\check{\pi}$ starts at the cluster containing $u$ and ends at the one containing $v$. 
    Thus, $\dist_{S_r}(u,v)$ is the length of a shortest path between $u$ and $v$ in $G$ that passes through $r$; i.e.\ $\dist_{S_r}(u,v) = \dist_G(r,u) + \dist_G(r,v)$.
    Further, because $C$ is guaranteed to intersect $\pi(u,v)$ by the property of shortcut partition, there is a short path in $G$ between $u$ and $v$ through $r$---%
    walk from $u$ along $\pi(u,v)$ until reaching some vertex $c \in C$, walk from $c$ to $r$ and then back to $c$ (in $G$ but not necessarily in $C_r$), and then finish traveling along $\pi(u,v)$. 
    As $C$ has diameter $\e \Delta$, this path has length $\dist_G(u,v) + O(\e \Delta)$.
    Thus, the star ${S_r}$ in $\mathcal{F}$ satisfies $\dist_{S_r}(u,v) \le \dist_G(u,v) + O(\e \Delta)$. We remark that each star has radius $\Delta$, so our forest cover is $O(\Delta)$-bounded.
\end{proof}

\subsection{Expanding a forest}
\label{SS:expand-forest}

\expandforest{}
\begin{proof}
    We give an algorithm for constructing a root expansion. 
    Starting with the graph $G$, treat each tree in $F$ as a cluster, and create the cluster graph {$\check{G}$}. 
    Find a vertex-coloring of $\check{G}$ with
    $O_r(1)$ colors (the \emph{chromatic number} of a $K_r$-minor-free graph is upper-bounded by its \emph{degeneracy}, which is asymptotically equivalent to arboricity~\cite{Nash64,Diestel}),
    and let \EMPH{$F_c$} denote the set of trees in $F$ that are colored with color $c$.
    Starting with $G$, contract each tree of $F_c$ into a supernode, and call the resulting graph \EMPH{$\check{G}_c$}. 
    One important property is that 
    the other endpoint of any edge of $\check{G}_c$ incident to a supernode must be an \EMPH{ordinary} vertex not in $F_c$.  (See Figure~\ref{fig:supernodes}.)

\begin{figure}[h!]
    \centering \includegraphics[width=0.7\textwidth]{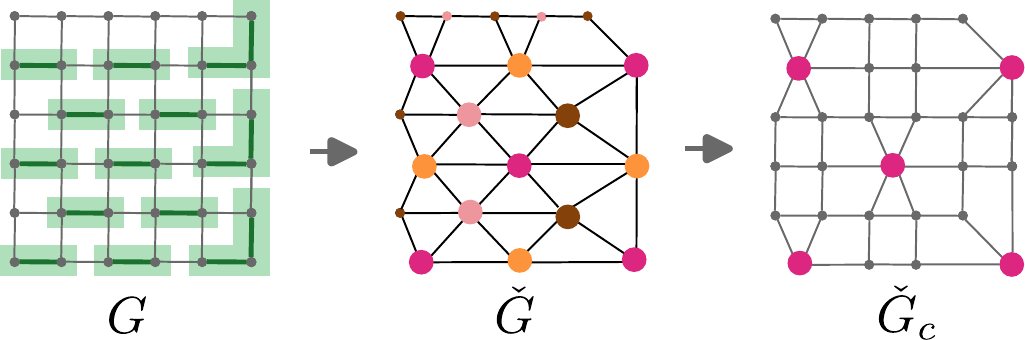}
    \caption{Graph $G$ with forest $F$ colored in green; contracted graph $\check{G}$ with an $O(1)$ vertex-coloring; and contracted graph $\check{G}_c$.}
    \label{fig:supernodes}
\end{figure}
    
    Using the minor-freeness of $\check{G}_c$, find an edge decomposition of $\check{G}_c$ by $O_r(1)$ star forests \EMPH{$\set{\check{F}_{c,1}, \ldots, \check{F}_{c, O(1)}}$}.
    For each star $\check{S}$ in $\check{F}_{c,t}$, define \EMPH{\textsc{Uncompress}$(\check{S})$} as follows:
    \begin{quote}
        If the center of $\check{S}$ is a supernode (corresponding to some BFS tree in $F_c$), let $r \in V(G)$ be the root of that tree. Otherwise, let $r \in V(G)$ be the center of the star $\check{S}$.

        Let \EMPH{$S$} denote the set of vertices in $G$ that belong to $\check{S}$ (both as an ordinary vertex in $G$, or as a vertex in some supernode of $\check{F}_{c,t}$.)
        Return the BFS tree on $G[S]$ rooted at $r$. 
    \end{quote}
    For every star forest $\check{F}_{c,t}$, let \EMPH{$F_{c,t}$} be the union of $\textsc{Uncompress}(\check{S})$ over every star $\check{S}$ in $\check{G}_{c,t}$.  By construction $F_{c,t}$ is a subset of edges in $G$.
    Because the stars in $\check{F}_{c,t}$ are vertex-disjoint in $\check{G}$ and the vertex sets in $G$ corresponding to any two nodes in $\check{G}$ are disjoint,
    $F_{c,t}$ is a disjoint union of BFS trees in $G$, which is a forest.
    Return the set of $O_r(1)$ BFS forests $\mathcal{F} \coloneqq \set{F} \cup \bigcup_{c,t} \set{F_{c,t}}$. 

\begin{figure}[h!]
    \centering \includegraphics[width=0.45\textwidth]{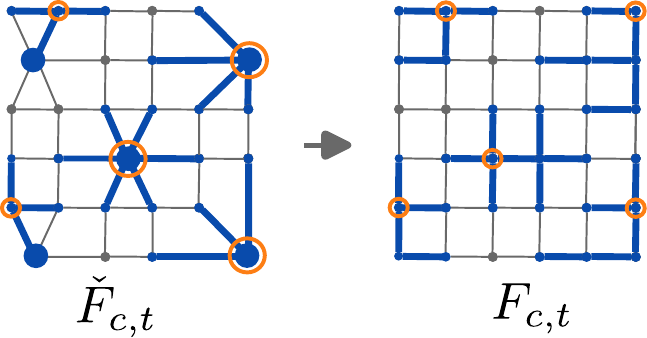}
    \caption{Star forest $\check{F}_{c,t}$ and the uncompressed forest $F_{c,t}$. The root of each tree is circled in orange.}
    \label{fig:uncompress}
\end{figure}

    \paragraph{Preservation guarantee.} 
    Let $\pi$ be a path in $G$ between vertices $u$ and $v$, whose prefix is preserved by $F$. Let $T$ be a tree in $F$ that preserves the prefix of~$\pi$.
    \begin{itemize}
        \item \textbf{Case 1: $v \in V(T)$.} In this case, $T$ preserves the entirety of $\pi$, not just the prefix. As $F \in \mathcal{F}$, the path $\pi$ is preserved by $\mathcal{F}$.
        
        \item \textbf{Case 2: $v \not \in V(T)$.} Let $c$ be the color of the forest $F_c$ where $T$ is in. As $v \not \in V(T)$, there is an edge $e_\pi \in E(\check{G}_c)$ that corresponds to the last edge of $\pi$. 
        The edge $e_\pi$ is covered by a star $\check{S}$ in some $\check{G}_{c,t}$, connecting a vertex in $T$ with $v$.
        There are two subcases. 
        \begin{itemize}
            \item \textbf{Case 2a: $\mathbf{T}$ is the center of $\mathbf{\check{S}}$.} As the prefix of $\pi$ passes through the root $r$ of $T$, clearly $\pi$ passes through $r$. Further, the vertices of $\pi$ are all in $S$, as $e_\pi$ is in $E(\check{S})$. Thus, the BFS tree \textsc{Uncompress}$(\check{S})$ preserves $\pi$.
            
            \item \textbf{Case 2b: $\mathbf{T}$ is not the center of $\mathbf{\check{S}}$.} Here the root of the star $\check{S}$ is $v$, and $T$ is a leaf. 
            Clearly $\pi$ passes through $v$, and the vertices of $\pi$ are all in $S$. Thus, \textsc{Uncompress}$(\check{S})$ preserves~$\pi$. \qed
        \end{itemize}
    \end{itemize}
\end{proof}


\section{A grid-like clustering for planar graphs}
\label{S:grid-scattering}

We show that planar graphs can be partitioned into clusters, such that the clusters interact with each other in a manner similar to a grid graph. Using this structure, we show that every planar graph admits an $(\e, O(1/\e^2))$-shortcut partition.
In this section, we define the grid-like structure and show that it leads naturally to a shortcut partition. In \Cref{S:gridtree-planar}, we show that every planar graph admits a grid-like clustering.  In \Cref{S:treecover-gridlike}, we use this structure to directly construct tree covers for planar graphs. 

\subsection{Definition of gridtree}

Let $G$ be a graph and $H$ be a connected induced subgraph of $G$, and let $w$ be a fixed parameter.
Throughout the rest of the paper, when $G$ is a planar graph, we assume that $G$ has a fixed drawing in the plane; all references to the external face of (a subgraph of) $G$ refer to this fixed drawing.  All subgraphs of $G$ inherit the drawing of $G$.

\begin{definition}
\label{def:column}
    Let $H$ be a connected graph with a disjoint vertex partition into subsets, some of which are \EMPH{columns} and some of which are \EMPH{leftover sets}. 
    A \EMPH{width-$w$ gridtree $\mathcal{T}$} (for short, a \EMPH{$w$-gridtree}) is a tree in which there is a one-to-one correspondence between columns and nodes of $\mathcal{T}$, and between leftover sets and edges of $\mathcal{T}$. 
    The gridtree $\mathcal{T}$ satisfies the following properties:
    \begin{itemize}
        \item 
        \label{def:column-adj}
        \textnormal{[Column adjacency.]}    
        Let $(u,v)$ be an edge of $H$.  
        Either (1) the endpoints $u$ and $v$ belong to the same subset (either a column or a leftover set) in the partition, or (2) $u$ and $v$ belong to columns that are adjacent in $\mathcal{T}$, or (3) $u$ and $v$ belong to a column and a leftover set that are incident in $\mathcal{T}$. 

        \item 
        \label{def:column-width}
        \textnormal{[Column width.]} 
        Let $\eta$ be a column in $\mathcal{T}$. 
        Notice that every column (other than $\eta$) and every edge in $\mathcal{T}$ is either above or below
        $\eta$ in $\mathcal{T}$. 
        If some column or edge containing a vertex $v$ is above (resp.\ below) $\eta$, we say that $v$ is \EMPH{above} (resp.\ \EMPH{below}) $\eta$. 
        If $a$ is a vertex in $\eta$ that is adjacent (in $H$) to a vertex above $\eta$, and $b$ is a vertex below $\eta$, and $P$ is a path in $H$ between $a$ and $b$, then we say $P$ \EMPH{passes through} $\eta$. Every path that passes through $\eta$ has length at least $w$.
        
        \item 
        \label{def:column-shortcut}
        \textnormal{[Column shortcut.]} 
        Let $\eta$ be a column in $\mathcal{T}$, 
        and let \EMPH{$H_\eta$} denote the subgraph induced by all columns below $\eta$, together with all leftover sets below $\eta$ or incident to $\eta$.
        There is a shortest path \EMPH{$\pi_\eta$} in $H_\eta$ such that 
        all vertices in $\eta$ are within distance $2w$ to $\pi_\eta$ with respect to the induced subgraph $H[\eta]$.
    \end{itemize}
\end{definition}

\begin{figure}[h!]
    \centering \includegraphics[width=0.7\textwidth]{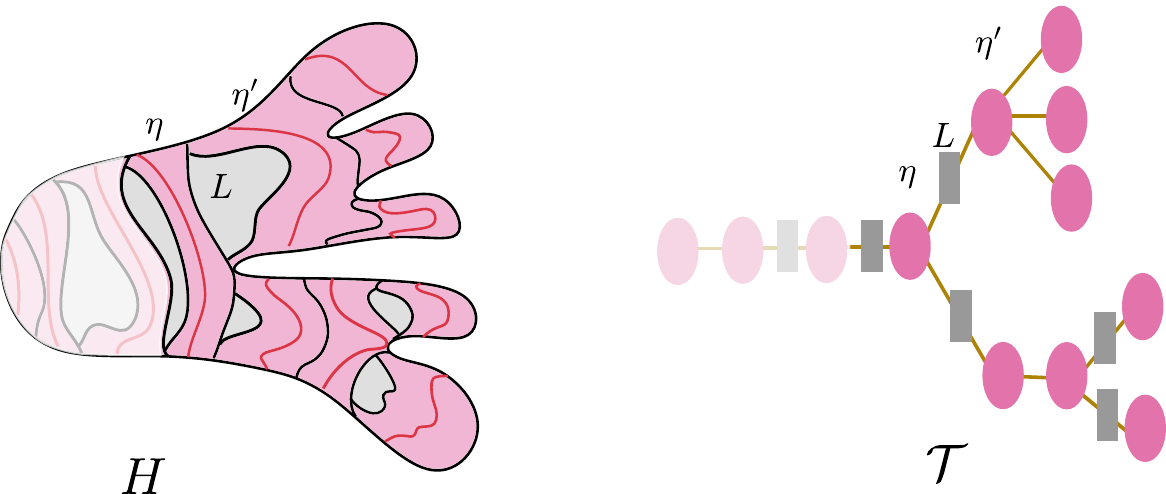}
    \caption{Graph $H$ partitioned into columns and leftover sets, and a gridtree $\mathcal{T}$. Columns $\eta$ and $\eta'$ are adjacent, and the leftover set $L$ lies between them. Column $\eta$ is above $L$ and $\eta$. Each column contains a path $\pi$ (from column shortcut property), marked in red. The subgraph $H_\eta$ is marked in a darker color.}
    \label{fig:gridtree}
\end{figure}

\begin{lemma}
\label{L:gridtree}
    For any $w > 0$, any planar graph $H$ has an $w$-gridtree $\mathcal{T}$ in which every vertex that is at most $w$ away from an external vertex (in a given planar drawing of $H$) belongs to a column of $\mathcal{T}$.
\end{lemma}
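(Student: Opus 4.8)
The plan is to build the gridtree by recursively peeling $H$ inward from its external face, so that the columns behave like the vertical strips of a grid and the tree records how the deeper pieces of $H$ nest inside one another.

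Concretely, I would anchor a BFS-style layering at a well-chosen subpath $\rho$ of the external boundary (for a grid, $\rho$ is one side of the square): let $d(v)$ be the distance from $v$ to $\rho$ in $H$, and group the layers into \emph{slabs} of width $w$, where slab $t$ is the set of vertices with $tw \le d(v) < (t+1)w$. Two elementary facts about this layering drive everything: every edge of $H$ joins two vertices whose $d$-values differ by at most $1$, and every path between a depth-$i$ vertex and a depth-$j$ vertex has at least $|i-j|$ edges. In the case of a grid each slab is a single connected strip, it becomes one column, the slabs chain up into a path, and the three axioms are immediate: column width from the layer property (a path that enters slab $t$ from a shallower slab and leaves to a deeper one must cross $w$ layers), column shortcut by running $\pi_\eta$ along the deep side of the slab, and column adjacency from the edge--layer relation.

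For a general planar $H$ a slab need not be connected, and this is where the tree and the leftover sets come from. After removing a slab, the part of $H$ lying deeper breaks into connected pieces, each drawn inside a single face of the already-peeled region; I recurse on each such piece with its inherited drawing, taking the layer facing the peeled region as its new anchor, and hang the resulting subtree below the column that contains the slab. Each connected piece of a slab becomes its own column; to get column shortcut I would use planarity to carve each slab-piece into $O(1)$ columns, each hugging a nearby shortest path of $H$ within distance $2w$. Vertices that cannot be assigned to a column without creating an edge between two columns that are not adjacent in the tree --- in particular those sitting between a slab and a deeper piece --- are dumped into the leftover set on the corresponding tree edge. Column adjacency is then precisely the assertion that the carving is consistent with the tree, and the near-boundary requirement of the statement holds because the recursion only ever moves vertices into leftover sets at depth $\ge w$, so everything within distance $w$ of an external vertex ends up in a column near the root.

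I expect the carving of the slabs to be the real obstacle. When a slab encircles a nontrivial interior it cannot be linearized --- its pieces would form a cycle rather than a path, and no tree on them can satisfy column adjacency --- which is the ``nesting'' phenomenon flagged in the introduction (the circular grid $\check{G}_3$). Handling it is exactly what forces the leftover sets, since the portion of such a slab where the two sides would meet has to be absorbed into a leftover set between the appropriate columns. Arranging the cuts so that simultaneously (i) every edge of $H$ lands within a piece or between tree-adjacent pieces, (ii) each column still hugs a shortest path within distance $2w$, and (iii) near-boundary vertices are never demoted to a leftover set, is the heart of the argument, and is precisely where the grid-like structure of planar graphs is used; everything else is routine bookkeeping over the BFS layering.
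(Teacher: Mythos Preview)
Your approach is genuinely different from the paper's, and the difference is exactly the one you flag as ``the real obstacle'' --- but you do not resolve it, and I do not see how to resolve it along the lines you sketch.

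The paper does \emph{not} build columns from BFS slabs. Instead it adapts the Busch--LaFortune--Tirthapura sparse-cover construction: start from a single external vertex, take its $w$-neighborhood as the first column; in each remaining component that still touches the external face, planarity guarantees at most two external vertices adjacent to the removed neighborhood, so one takes a shortest path in that component between those two vertices and makes its $w$-neighborhood the next column; components with no external vertices become leftover sets; recurse. Because every column is by construction the $w$-neighborhood of a shortest path in $H_\eta$, the column-shortcut axiom is \emph{free}. Column width is the nontrivial part, and it follows from a separation lemma (any path from the parent side to a child side must hit the shortest path $\pi_\eta$, hence has length at least $w$). A second pass then fattens each column to absorb everything within $2w$ of some $\pi_\eta$, which is what pushes all near-boundary vertices into columns.

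Your BFS-layering proposal inverts the roles: column width is now free (crossing a width-$w$ slab costs $w$), but column shortcut is the hard axiom, and this is where the gap lies. A connected slab-piece at depth $[tw,(t+1)w)$ need not be within $2w$ of any shortest path of $H_\eta$: in an annulus anchored at its whole outer boundary, the first slab is a long ring, and no shortest path comes within $2w$ of all of it. You propose to ``carve each slab-piece into $O(1)$ columns, each hugging a nearby shortest path,'' but you give no mechanism for choosing the cuts, no argument for why $O(1)$ pieces suffice (the gridtree definition does not require this, so it is unclear what you intend), and no argument that the cuts can be made compatible with column adjacency. You correctly identify this as the heart of the argument and then leave it open. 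That is precisely the idea the paper supplies via the path-based construction, and without an analogue of it your proposal is a plan rather than a proof.
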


\begin{definition}\label{def:gridtree-hierarchy}
    A \EMPH{width-$w$ gridtree hierarchy $\mathcal{H}$} of $G$ is a tree in which each node is a pair $\mu = (H^\mu, \mathcal{T}^\mu)$, where $H^\mu$ is a connected subgraph of $G$, and $\mathcal{T}^\mu$ is a $w$-gridtree for $H^\mu$. The root of $\mathcal{H}$ is associated with the entire graph $G$. 
    The parent node of $\mu$ is denoted by $\parent(\mu)$.
    The hierarchy $\mathcal{H}$ satisfies the following:
    \begin{itemize}
        \item \textnormal{[Layer nesting.]} The children of every node $(H^\mu, \mathcal{T}^\mu)$ are in one-to-one correspondence with the components of subgraphs induced by the leftover sets of $\mathcal{T}^\mu$, together with their~gridtrees.
        
        \item \textnormal{[Layer width.]} For every node $(H^\mu, \mathcal{T}^\mu)$, we say that a vertex $v$ in $H^\mu$ is an \EMPH{outer vertex}
        of $H^\mu$ if $v$ is adjacent to some vertex in a column of $\mathcal{T}^{\parent(\mu)}$. (The root node of $\mathcal{H}$ has no outer vertices.)
        For every outer vertex $v$ in $H^\mu$, every vertex $u$ in $H^\mu$ with $\dist_{H^\mu}(u,v) \leq w$ belongs to some column of $\mathcal{T}^\mu$.
        In other words, every vertex that is at most distance $w$ away from any outer vertex is covered by columns of $\mathcal{T}^\mu$.
    \end{itemize}    
\end{definition}
We remark that, when we construct a gridtree hierarchy for planar graphs in \Cref{S:gridtree-planar}, the outer vertices of $H^\mu$ will be the vertices on the external face of $H^\mu$. 

\begin{figure}[h!]
    \centering \includegraphics[width=0.6\textwidth]{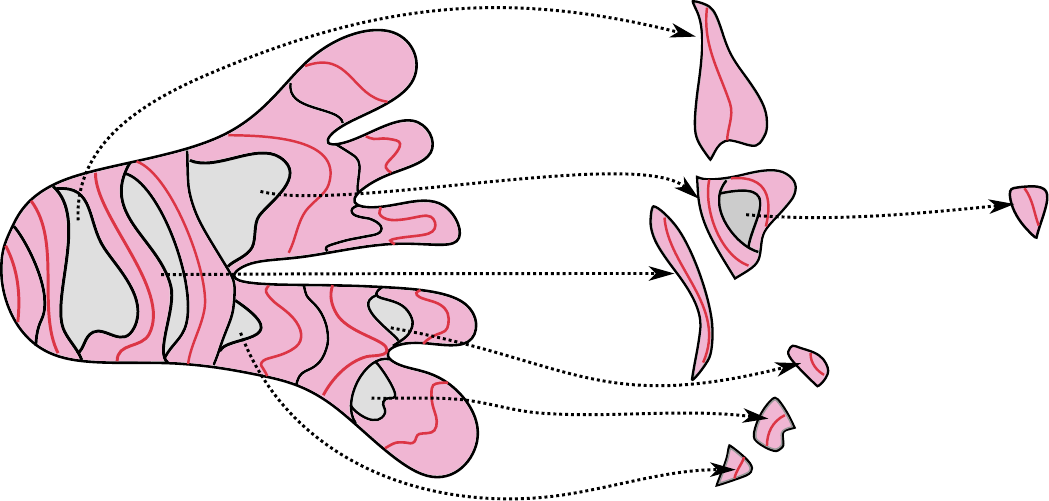}
    \caption{A gridtree hierarchy with depth 3.}
    \label{fig:hierarchy}
\end{figure}

Define \EMPH{level} of a column $\eta$ in gridtree $\mathcal{T}$ to be the distance between $\eta$ and the root of $\mathcal{T}$.
Define \EMPH{layer} of a pair $(H^\mu,\mathcal{T}^\mu)$ in gridtree hierarchy $\mathcal{H}$ to be the distance between $\mu$ and the root of $\mathcal{H}$.
A gridtree hierarchy $\mathcal{H}$ has \EMPH{depth} $d$ if the number of layers in $\mathcal{H}$ is at most $d$. 
We will reiterate that there are two tree structures at play here: 
the \emph{gridtrees $\mathcal{T}^\mu$} associated with each instance $(H^\mu,\mathcal{T}^\mu)$ of the gridtree construction, and the \emph{gridtree hierarchy $\mathcal{H}$} which represents the recursive nature of the construction where every component of the leftover sets of $\mathcal{T}^\mu$ becomes a child of $\mu$.
We will be consistent when it comes to the term \emph{levels} and \emph{layers}, where the former is within a specific grid tree $\mathcal{T}^\mu$, and the latter is for the whole hierarchy $\mathcal{H}$.
Notice that all columns and leftover sets in a gridtree are pairwise vertex-disjoint, and thus so are the subgraphs $H^\mu$ associated with nodes in the same layer of $\mathcal{H}$.

\begin{lemma}
\label{L:gridtree-planar}
    Let $G$ be a planar graph with diameter $\Delta$, and let $\e$ be a parameter in $(0,1)$. Then $G$ has an $\e\Delta$-gridtree hierarchy~$\mathcal{H}$.
\end{lemma}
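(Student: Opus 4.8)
The plan is to prove \Cref{L:gridtree-planar} by induction on the layer structure of the hierarchy, invoking \Cref{L:gridtree} at each node to build the associated gridtree, and showing that the recursion bottoms out after $O(1/\e)$ layers because the ``depth'' (in terms of distance from the external face) of the subgraphs strictly decreases by roughly $\e\Delta$ at each layer. Concretely, I would first set $w = \e\Delta$ and apply \Cref{L:gridtree} to $G$ itself (with its fixed planar drawing) to obtain the root gridtree $\mathcal{T}^{\mathrm{root}}$ for $H^{\mathrm{root}} = G$. The leftover sets of $\mathcal{T}^{\mathrm{root}}$ induce a collection of connected components; each such component $H^\mu$, inheriting the drawing, becomes a child node, and I would recursively apply \Cref{L:gridtree} to each $H^\mu$ to get $\mathcal{T}^\mu$. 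The column-adjacency property of the parent gridtree ensures that every edge of $G$ either stays within a column/leftover set or connects adjacent objects in $\mathcal{T}^{\parent(\mu)}$, which is exactly what is needed for the layer-nesting property of \Cref{def:gridtree-hierarchy} to hold; and the ``every vertex within distance $w$ of an external vertex belongs to a column'' guarantee of \Cref{L:gridtree}, applied to $H^\mu$, gives the layer-width property once I argue that outer vertices of $H^\mu$ (vertices adjacent to a column of the parent) are within bounded distance of external vertices of $H^\mu$ in its inherited drawing.

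The key structural claim driving termination is that the leftover-set components sitting below a node $\mu$ at layer $i$ lie ``deep inside'' $G$: every vertex of $H^\mu$ is at distance more than $i \cdot \e\Delta$ (roughly) from the external face of $G$. I would establish this by induction on the layer. The base case is layer $0$ (all of $G$), which is vacuous. For the inductive step, suppose $H^\mu$ has all vertices at distance $> i\cdot\e\Delta$ from the external face of $G$. By \Cref{L:gridtree} applied to $H^\mu$ with width $w = \e\Delta$, any vertex within distance $\e\Delta$ of an external vertex of $H^\mu$ is placed in a column, hence does not belong to any leftover set; therefore every vertex in a child subgraph $H^{\mu'}$ is at distance $> \e\Delta$ from the external face of $H^\mu$. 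Combined with a planarity observation—that a vertex far from the external face of $H^\mu$ is also far from the external face of $G$, because any path in $G$ reaching the outer face of $G$ from such a vertex must first cross the boundary of $H^\mu$—one gets that all vertices of $H^{\mu'}$ are at distance $> (i+1)\cdot\e\Delta$ from the external face of $G$. Since $G$ has diameter $\Delta$, no vertex can be at distance more than $\Delta$ from any other vertex (in particular from an external vertex, assuming $G$ has at least one external vertex, which it does), so the recursion must terminate by layer $\lceil 1/\e\rceil$, giving depth $O(1/\e)$.

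The main obstacle I anticipate is making the ``distance from the external face'' argument fully rigorous, since it involves the interplay between the intrinsic metric of a subgraph $H^\mu$ and the metric of $G$, as well as the topological fact that the boundary of $H^\mu$ (as a subgraph inheriting $G$'s drawing) separates the interior of $H^\mu$ from the outer face of $G$. One subtlety is that $H^\mu$ is defined as a component of a subgraph induced by leftover sets, so its ``external face'' in the inherited drawing need not be a single cycle, and vertices on it may be reached from the outer face of $G$ only through vertices of columns at the parent layer—one has to check that such a path has length accounted for correctly. I would handle this by tracking, for each node $\mu$ at layer $i$, a set of ``portal'' vertices (those outer vertices of $H^\mu$ adjacent to columns of $\mathcal{T}^{\parent(\mu)}$) and arguing inductively that these portals are the only escape routes from $H^\mu$ to the rest of $G$, so the strong-diameter/column-width guarantees of the gridtree compose correctly across layers. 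A second, more bookkeeping-heavy obstacle is verifying that the column-adjacency property at each layer is consistent with the global edge set of $G$ restricted to $H^\mu$—but this follows directly from \Cref{L:gridtree} applied to the induced subgraph $H^\mu = G[H^\mu]$, since that lemma already delivers a gridtree whose column-adjacency property is stated with respect to the edges of $H^\mu$, and every edge of $G$ with both endpoints in $H^\mu$ is an edge of $G[H^\mu]$. Once these pieces are in place, assembling the tree $\mathcal{H}$ and checking \Cref{def:gridtree-hierarchy} is routine.
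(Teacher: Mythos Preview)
Your overall approach---recursively applying \Cref{L:gridtree} to the leftover components to build the hierarchy---is exactly what the paper does, and your identification of the main point for the layer-width property (outer vertices of $H^\mu$ lie on the external face of $H^\mu$ in the inherited drawing) is correct. The paper's argument for this is crisp: the columns of $\mathcal{T}^{\parent(\mu)}$ form a connected subgraph containing an external vertex of $H^{\parent(\mu)}$, and when a connected set touching the outer face is deleted from a plane graph, every neighbor of that set becomes external in the resulting drawing. So outer vertices of $H^\mu$ are not merely ``within bounded distance of'' external vertices---they \emph{are} external, which is what you need for \Cref{L:gridtree} to deliver layer width directly.

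The main overreach in your proposal is the $O(1/\e)$ depth argument. The statement of \Cref{L:gridtree-planar} (and the definition of a gridtree hierarchy) does not require any depth bound; the paper proves the depth bound separately as \Cref{L:layer-depth}. For the present lemma, termination of the recursion is automatic from finiteness of $G$: every column is nonempty (it contains an external vertex of its $H^\mu$), so each leftover component is a strict subgraph. Thus the entire ``distance from the external face of $G$'' inductive argument, while essentially the right idea for \Cref{L:layer-depth}, is not needed here and can be excised. With that removed, the proof collapses to: define the hierarchy by the recursion, observe that layer nesting is immediate, and verify layer width via the outer-equals-external observation above.
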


We will prove Lemma~\ref{L:gridtree} and Lemma~\ref{L:gridtree-planar} in Section~\ref{S:gridtree-planar}.

\subsection{Gridtree hierarchy gives shortcut partition}
\label{S:gridtree-scatterting}

\noindent For the rest of the section, we prove that every
gridtree hierarchy for a graph $G$ gives rise to a shortcut partition for $G$. 
In particular, the hierarchy constructed by Lemma~\ref{L:gridtree-planar} 
produces a shortcut partition.
This suggests that planar graphs admit a stronger form of shortcut partition; using this partition, we will prove (cf. Section~\ref{S:treecover-gridlike}) that planar graphs have $O_\e(1)$-size tree cover where the constant depends polynomially on $1/\e$.%

Recall that a \EMPH{clustering} of a planar graph $G$ is a partition of the vertices of $G$ into \EMPH{clusters} $\mathcal{C} \coloneqq \set{C_1, \ldots, C_m}$.
Let \EMPH{$\check{G}$} be the \EMPH{cluster graph} of $G$ with each cluster in $\mathcal{C}$ contracted to a \EMPH{supernode}.
Denote \EMPH{$\len{P}$} to be the length of the path $P$; in notation, if $P$ starts at vertex $s$ and ends at $t$, then $\len{P} \coloneqq \dist_P(s,t)$.

\begin{definition}
    The \EMPH{cost} of a path $P$ in $G$ with respect to clustering $\mathcal{C}$, denoted as \EMPH{$\cost_{\mathcal{C}}(P)$}, is equal to the minimum hop-length over all paths $\check{P}$ in $\check{G}$ where (1) the endpoints of $\check{P}$ are the clusters containing $u$ and $v$, and (2) $\check{P}$ only touches supernodes that correspond to clusters that $P$ passes through.
    For any $t$ between $0$ and $1$ and for any vertex pair $(u,v)$, we define the \EMPH{cost with $(1+t)$ distortion} with respect to clustering $\mathcal{C}$, denoted \EMPH{$\cost_{t, \mathcal{C}}{(u,v)}$}, to be the minimum $\cost_{\mathcal{C}}(P)$ across every approximate shortest path $P$ between $u$ and $v$ whose length is at most $(1 + t) \cdot \delta_G(u,v)$.
\end{definition}

When $\mathcal{C}$ is clear from context, we omit it from the subscript and simply write $\cost(P)$ and $\cost_t(u,v)$.  We now introduce a slight generalization of shortcut partitions.

\begin{definition}
    Let $G$ be a graph with diameter $\Delta$. An \EMPH{$(\e, h)$-shortcut partition with $(1+t)$ distortion} for $G$ is a clustering $\mathcal{C} = \set{C_1, \ldots, C_m}$ of $G$ such that:
    \begin{itemize}
    \item \textnormal{[Diameter.]} The strong diameter of each cluster $C_i$ is at most $\e \Delta$;
    \item \textnormal{[Low-hop.]} For any vertices $u$ and $v$ in $G$, we have $\cost_{t, \mathcal{C}}{(u,v)} \le h$.
    \end{itemize}
\end{definition}

Notice that an $(\e, h)$-shortcut partition with $(1+\e)$ distortion (as defined above) is an $(\e, h)$-shortcut partition as defined in Section~\ref{SS:prelim}.
Given a planar graph $G$ with diameter $\Delta$ along with a $(t\e\Delta)$-gridtree hierarchy (where $t$ and $\e$ are between $0$ and $1$), we will find an $(O(\e), O \left(\frac{1}{t\e} \right))$-shortcut partition with $(1 + O(t))$ distortion for $G$.
In fact, the partition constructed will satisfy an extra property:
\begin{itemize}
    \item \textnormal{[Cluster ordering.]}
    For every node $(H, \mathcal{T})$ in the hierarchy $\mathcal{H}$ and for every column $\eta$ in $\mathcal{T}$, there is an ordering on the \EMPH{cluster centers} $c_1, \ldots, c_m$ in $\eta$, one from each cluster containing vertices of $\eta$, such that for every pair $i, j \in \set{1, \ldots, m}$, we have $\dist_{H_\eta}(c_i, c_j) \geq \abs{i - j}\cdot \e\Delta$.
\end{itemize}

\subsubsection{Clustering a column}
\label{SS:cluster-column}

\begin{figure}[ht!]
\centering\small
\begin{algorithm}
\textul{$\textsc{ClusterColumn}(\eta)$:} \+
\\  \Comment{select a set of cluster centers}
\\  $\pi' \gets \pi_\eta$
\\  $i \gets 1$
\\  while $\pi'$ is nonempty: \+
\\      $c_i \gets$ first vertex on $\pi'$
\\      $\pi' \gets$ longest suffix of $\pi'$ that is $\ge \e\Delta$ shorter than $\pi'$
\\      initialize cluster $C_i \gets \set{c_i}$ 
\\      increment $i$ \-
\\
\\  \Comment{assign vertices in $\pi_\eta$ using closest center}
\\  for each $v$ in $V(\pi_\eta)$: \+
\\      $i^* \gets \argmin_{i \in [m]} \dist_\eta(c_i, v)$, breaking ties by choosing the smallest $i$ 
\\      assign $v$ to cluster $C_{i^*}$ \-
\\
\\  \Comment{assign vertices in $\eta$ using closest vertex in $\pi_\eta$}
\\  for each $v$ in $\eta$: \+
\\      $v^* \gets \argmin_{v \in V(\pi_\eta)} \dist_{H[\eta]}(v^*, v)$, breaking ties using some fixed ordering of $V(\pi_\eta)$
\\      assign $v$ to the cluster containing $v^*$ \-
\\  return the clusters $\set{C_1, \ldots, C_m}$
\end{algorithm}
\end{figure}

First we describe how to create clusters for a single column in any $(t\e\Delta)$-gridtree in the hierarchy $\mathcal{H}$.
To simplify the presentation, we assume $t \in (0, 1/8]$; indeed, if $t \in (1/8, 1)$ then we can scale down $t$ by a factor of $8$ without affecting our results by more than a constant factor.
Let $(H, \mathcal{T})$ be a node in $\mathcal{H}$.
Let $\eta$ be a column in $\mathcal{T}$ and let $\pi_\eta$ be the shortest path in $H_\eta$ guaranteed by the column shortcut property (Definition~\ref{def:column-shortcut}) for gridtree. 
In particular, every vertex in $\eta$ is within distance $2t \e \Delta$ to $\pi$. 
Then we can cluster $\eta$ according to the procedure \EMPH{$\textsc{ClusterColumn}(\eta)$}. It is immediate from the construction of the clustering that the cluster centers satisfy the cluster ordering property.
Notice that while the column $\eta$ has width $t \e \Delta$, the diameter of each cluster constructed is $O(\e\Delta)$, not $O(t \e\Delta)$.

\begin{lemma}
\label{lem:cluster-path}
    The algorithm $\textsc{ClusterColumn}(\eta)$ returns a clustering of $\eta$ such that (i) each cluster has strong diameter at most $O(\e \Delta)$, and (ii) for any pair of points $u$ and $v$ in $\eta$, we have \(
        \cost_{8t}(u,v) \leq \frac{\dist_{H_\eta}(u,v)}{\e \Delta} + 4
    \).
\end{lemma}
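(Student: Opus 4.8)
The plan is to analyze the three phases of \textsc{ClusterColumn} separately, then stitch together a low-cost path for an arbitrary pair $u,v \in \eta$.

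\textbf{Bounding the strong diameter.}
First I would argue that every cluster $C_i$ has strong diameter $O(\e\Delta)$, where ``strong'' is measured inside $H[\eta]$ (and hence inside $G$). The cluster centers $c_1, c_2, \dots$ are chosen greedily along $\pi_\eta$: consecutive centers are at distance just under $\e\Delta$ along $\pi_\eta$, and the column shortcut property guarantees every vertex of $\eta$ is within distance $2t\e\Delta \le \e\Delta/4$ of $\pi_\eta$ in $H[\eta]$. Combining these, if $v \in C_{i^*}$ then its closest point $v^* \in \pi_\eta$ lies within $2t\e\Delta$ of $v$, and $v^*$ is assigned to the same cluster, so $v^*$ is within $O(\e\Delta)$ of $c_{i^*}$ along $\pi_\eta$ (a Voronoi-type argument: the interval of $\pi_\eta$ assigned to $c_{i^*}$ has length $O(\e\Delta)$ because the centers are $\e\Delta$-separated along $\pi_\eta$). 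Summing these bounds and doubling gives strong diameter $O(\e\Delta)$. I should double-check that the induced subgraph $G[C_i]$ is connected — this follows because the assignment is ``closest center along a shortest path,'' and each assignment step attaches a vertex via a short $H[\eta]$-path to a point already in the cluster, so connectivity can be established by induction on the assignment order.

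\textbf{Bounding $\cost_{8t}(u,v)$.}
For the low-hop bound, fix $u,v \in \eta$ and let $u^*, v^* \in \pi_\eta$ be their assigned closest points on $\pi_\eta$. I would take the approximate path $P$ to be: the short $H[\eta]$-path from $u$ to $u^*$ (length $\le 2t\e\Delta$), then the subpath of $\pi_\eta$ from $u^*$ to $v^*$ (length $\le \dist_{H_\eta}(u,v) + 4t\e\Delta$ using the triangle inequality twice), then the short path from $v^*$ to $v$ (length $\le 2t\e\Delta$). Total length is at most $\dist_{H_\eta}(u,v) + 8t\e\Delta \le (1+8t)\dist_{H_\eta}(u,v)$ as long as $\dist_{H_\eta}(u,v) \ge \e\Delta$; the other case where $\dist_{H_\eta}(u,v) < \e\Delta$ must be handled by observing that then $u$ and $v$ (or their images $u^*,v^*$) land in $O(1)$ consecutive clusters, so $\cost(u,v) = O(1) \le 4$ directly — this covers the ``$+4$'' slack in the statement. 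Now I count how many clusters $P$ passes through: the $\pi_\eta$-portion passes through clusters $C_i$ whose centers $c_i$ have $\pi_\eta$-interval intersecting $[u^*,v^*]$, and since centers are $\e\Delta$-separated along $\pi_\eta$, the number of such clusters is at most $\frac{\dist_{H_\eta}(u^*,v^*)}{\e\Delta} + O(1) \le \frac{\dist_{H_\eta}(u,v)}{\e\Delta} + O(1)$; the two short end-paths each stay within $O(1)$ clusters by the strong-diameter bound already proved. Concatenating the supernodes visited (in order, deleting repeats) gives a path $\check P$ in $\check G$ touching only clusters $P$ passes through, of hop-length $\le \frac{\dist_{H_\eta}(u,v)}{\e\Delta} + 4$, which is exactly $\cost_{8t}(u,v) \le \frac{\dist_{H_\eta}(u,v)}{\e\Delta} + 4$.

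\textbf{Anticipated obstacle.}
The main delicate point is controlling the additive $+4$ constant rather than a loose $+O(1)$: I need to carefully track that (a) each of the two short end-segments contributes at most one extra cluster beyond $C$ containing the endpoint, and (b) the ``ceiling'' effect of partitioning $[u^*,v^*]$ into $\e\Delta$-length sub-intervals contributes at most one extra, and that the closest-center and closest-$\pi_\eta$-vertex tie-breaking rules don't create detours that leave the Voronoi interval. This amounts to a careful but elementary interval-counting argument; once the separation property $\dist_{H_\eta}(c_i,c_j)\ge |i-j|\e\Delta$ (cluster ordering) is invoked, the bookkeeping closes. I would also make sure the path $\check P$ is genuinely a \emph{walk} turned into a path — repeated supernodes only help since shortcutting a walk in $\check G$ never increases hop-length and never introduces clusters not on the walk, preserving property (2) of $\cost$.
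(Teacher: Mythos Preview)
Your approach is the same as the paper's: strong diameter via the closest-center/Voronoi argument, then the detour $P' \colon u \to u^* \to v^* \to v$ along $\pi_\eta$, counting clusters and bounding length. Two points deserve attention.

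\textbf{The case split.} The paper does \emph{not} split on $\dist_{H_\eta}(u,v) \lessgtr \e\Delta$; it splits on whether $C_u$ and $C_v$ are the same or adjacent in the cluster ordering. Your split leaves a hole: it can happen that $\dist_{H_\eta}(u,v) < \e\Delta$ while $|i-j|=2$ (take $u^*,v^*$ just outside the two boundaries of a full Voronoi cell of length $\approx \e\Delta$, and pull $u,v$ in by $2t\e\Delta$ each). In that regime neither of your two suggestions closes: the detour $P'$ has length $\le \dist_{H_\eta}(u,v) + 8t\e\Delta$, which is \emph{not} $\le (1+8t)\dist_{H_\eta}(u,v)$ when the distance is below $\e\Delta$; and the exact $H_\eta$-shortest path (which \emph{is} a $(1+8t)$-approximation) may leave $\eta$ and never touch $C_{i+1}$, so you cannot build the hop-path $C_i\!-\!C_{i+1}\!-\!C_j$, and $C_i\!-\!C_j$ need not be an edge of $\check G$. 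The paper's split avoids this cleanly: if $C_u,C_v$ are equal/adjacent, take the genuine shortest path and the hop-path $C_u\!-\!C_v$ (which \emph{is} an edge since consecutive Voronoi cells on $\pi_\eta$ share a boundary), giving $\cost_{8t}\le 1$. Otherwise $p_u,p_v$ are separated by a full Voronoi interval on $\pi_\eta$, so $\len{P'[\pi_\eta]}\ge \e\Delta$ and hence $\dist_{H_\eta}(u,v)\ge \e\Delta/2$, which is exactly what makes the additive $O(t\e\Delta)$ turn into the multiplicative $(1+8t)$.

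\textbf{The end-segments.} You argue the paths $u\to u^*$ and $v^*\to v$ ``stay within $O(1)$ clusters by the strong-diameter bound,'' and later say each contributes ``at most one extra cluster.'' In fact each stays in \emph{exactly one} cluster (its own), because the tie-breaking in the assignment forces every vertex on a shortest path from $u$ to $u^*$ to have $u^*$ as its assigned $\pi_\eta$-point; this is what the paper uses to get $\cost(P')=\cost(P'[\pi_\eta])$ and hence the clean $+2$ (rounded to $+4$ in the statement). The strong-diameter bound alone does not give this.
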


\begin{proof}
    \textbf{(i)} By assumption, every vertex $v$ in $\eta$ is within distance $2t \e \Delta$ of some vertex $p_v$ on $\pi_\eta$ such that $v$ is assigned to the same cluster as $p_v$.
    By choice of cluster centers $\set{c_i}$, vertex $p_v$ is within distance $\e \Delta$ (not $t \e \Delta$) of some $c_i$ and is assigned to same cluster as $c_i$.
    Thus, every cluster has radius at most $(1+2t)\e\Delta < 2\e \Delta$ and diameter at most $4 \e \Delta$. 
    Every point on a shortest path between $v$ and $p_v$ belongs to the same cluster, because the construction breaks ties by a fixed ordering of $V(\pi_\eta)$. Similarly, every point on a shortest path between $p_v$ and $c_i$ belongs to the same cluster.
    We conclude that every clusters is connected, and the $4 \e \Delta$ bound applies to strong diameter.

    \medskip \noindent \textbf{(ii)}
    Let $P$ be a shortest path in $H_\eta$  with endpoints $u$ and $v$ in $\eta$. Let $C_u$ and $C_v$ denote the clusters containing $u$ and $v$, respectively. If  $C_u = C_v$ or if $C_u$ and $C_v$ are adjacent in the ordering, then $\cost_{8t}(u,v) \leq 1$ and we are done.
    Otherwise, we note that $P$ may walk outside of the column $\eta$ into other columns or leftover sets. However,
    by the column shortcut property (\Cref{def:column-shortcut}),
    we can find an approximate shortest path \EMPH{$P'$}
    that is contained within $\eta$:
    From vertex $u$, walk (along a shortest path in $C_u$) to the point $p_u$ in $\pi_\eta \cap C_u$ that is closest to $u$,
    then walk along $\pi_\eta$ to the point $p_v$ in $\pi_\eta \cap C_v$ that is closest to $v$,
    and then walk (along a shortest path in $C_v$) to $v$.
    Denote the subpath of $P'$ on $\pi_\eta$ from $p_u$ to $p_v$ as \EMPH{$P'[\pi_\eta]$}; see Figure~\ref{fig:cluster_col}.

    \begin{figure}[h]
    \centering
    \includegraphics[width=0.4\textwidth]{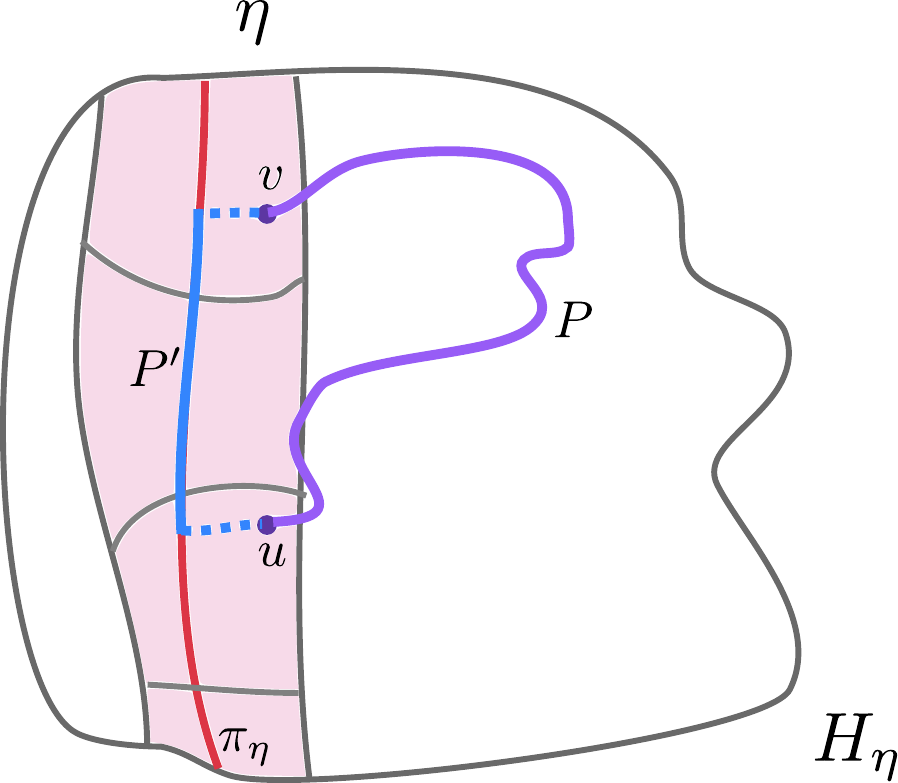}
    \caption{A partition of the column $\eta$ into clusters, the graph $H_\eta$, and vertices $u$ and $v$ in $\eta$. The path $P$ (in purple) is a shortest path between $u$ and $v$ in $H_\eta$. The path $P'$ (in blue) described in the proof of \Cref{lem:cluster-path}; the subpath $P'[\pi_\eta]$ is solid, and the subpaths $P' \setminus P'[\pi_\eta]$ are dotted. }
    \label{fig:cluster_col}
    \end{figure}

    \medskip \noindent \textbf{Cost of $P'$.} First observe that, because of the way vertices not in $\pi_\eta$ were assigned, we can walk from $u$ to $p_u$ (and from $p_v$ to $v$) while remaining within a single cluster. This implies that $\cost(P') = \cost(P'[\pi_\eta])$.
    Because of the cluster ordering, path $P'[\pi_\eta]$ fully walks past every cluster that it intersects (other than $C_u$ and $C_v$); that is, $\cost(P'[\pi_\eta]) \leq \frac{\len{P'[\pi_\eta]}}{\e \Delta} + 2$. Now observe that, because $P'[\pi_\eta]$ is a shortest path in $H_\eta$, we have $\len{P'[\pi_\eta]} \leq \len{P} + 2t \e \Delta$. We conclude that $\cost(P') \leq \frac{\len{P}}{\e \Delta} + 4$.

    \medskip \noindent \textbf{Length of $P'$.}
    By assumption $u$ and $v$ are not in adjacent clusters, so $p_u$ and $p_v$ are also not in adjacent clusters.  
    This implies that $\len{P'[\pi_\eta]} \geq \e\Delta$.
    As every vertex is within distance $2t\e\Delta$ of $\pi$, we have $\len{P} \ge \len{P'[\pi_\eta]} - 4t \e\Delta$.
    Using the fact that $t \le 1/8$, we have 
    $\len{P} \ge \e\Delta/2$,
    and we conclude that $\len{P'} \leq \len{P'[\pi_\eta]} + 2 t\e\Delta  \leq \len{P} + 4 t\e \Delta \leq (1+8t)\cdot \len{P}$.

    \medskip \noindent Combining the two claims, we get that $\cost_{8t}(u,v) \leq \cost(P') \leq \frac{\len{P}}{\e\Delta} + 4$.
\end{proof}

\subsubsection{Clustering a gridtree hierarchy}
\label{SS:cluster}

\begin{figure}[ht!]
\centering\small
\begin{algorithm}
\textul{$\mathsc{ClusterHierarchy}(G, \mathcal{H})$:} \+
\\  \Comment{Cluster each column $\eta$}
\\  $\mathcal{C} \gets \varnothing$
\\  for each node $(H^\mu, \mathcal{T}^\mu)$ in $\mathcal{H}$: \+
\\      for each column $\eta$ of $\mathcal{T}^\mu$: \+
\\          $\mathcal{C} \gets \mathcal{C} \cup \textsc{ClusterColumn}(\eta)$ \-\-
\\  return $\mathcal{C}$
\end{algorithm}
\end{figure}

Given a planar graph $G$ (with fixed parameters $\e$, $t$, and $\Delta$) and a $(t\e\Delta)$-gridtree hierarchy $\mathcal{H}$ for~$G$, we use the recursive algorithm \EMPH{$\textsc{ClusterHierarchy}(G,\mathcal{H})$} to create a shortcut partition of $G$.
Let \EMPH{$\mathcal{C}$} be the resulting clustering.
For every node $(H^\mu,\mathcal{T}^\mu)$ in $\mathcal{H}$,
we assign clusters to each column $\eta$ of $\mathcal{T}^\mu$ by calling $\textsc{ClusterColumn}(\eta)$.
Let \EMPH{$\mathcal{C}^{\mu}$} denote the set of clusters associated with columns of $\mathcal{T}^\mu$.  
We have $\EMPH{$\mathcal{C}$} = \bigcup_{\mu} \mathcal{C}^{\mu}$.
We want to show that $\mathcal{C}$ is indeed a shortcut partition by proving each of the properties. 
The diameter and cluster ordering properties follow directly from the correctness of $\textsc{ClusterColumn}$.

To prove the low-hop property that $\cost_{8t}{(u,v)} \le f(\e)$ for any $u$ and $v$, 
we will look at an arbitrary shortest path $P$ between $u$ and $v$: 
Find the highest node $(H^\mu, \mathcal{T}^\mu)$ in the hierarchy $\mathcal{H}$ such that $H^\mu$ contains a vertex in $P$.
Chop up $P$ into parts, alternating between parts that are covered by the clusters of the columns in $\mathcal{T}^\mu$ and parts that are in the leftover sets.
Inductively the parts in the leftover sets have low costs; we just have to prove that the parts covered by $\mathcal{T}^\mu$ passes through at most $O \left(\frac{1}{t\e} \right)$ many columns in the worse case, using the column width property.
We define the notation \EMPH{$P[u,v]$} to be the subpath of $P$ that starts from vertex $u$ and ends at vertex $v$ for any path $P$.

\begin{lemma}\label{lem:planar-cost-inductive}
    Let $(H, \mathcal{T})$ be node in the hierarchy $\mathcal{H}$ that is not the root. 
    Let $u$ and $v$ be two vertices in $H$ such that $u$ is an outer vertex, and $H$ contains a shortest path $P$ (with respect to $G$) between $u$ and $v$.
    Then $\cost_{8t}(u,v) \leq 85\cdot \frac{\len P}{t \e \Delta} + 16$.
\end{lemma}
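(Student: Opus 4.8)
The plan is to prove \Cref{lem:planar-cost-inductive} by induction on the layer of $(H,\mathcal{T})$ in the hierarchy $\mathcal{H}$, working from the deepest layer upward (equivalently, strong induction on the number of layers below $(H,\mathcal{T})$). Given $u$ (an outer vertex of $H$) and $v$ with a shortest path $P$ (with respect to $G$) between them lying in $H$, I first invoke the Layer width property of the gridtree hierarchy: since $u$ is an outer vertex, every vertex within distance $w = t\e\Delta$ of $u$ in $H$ lies in a column of $\mathcal{T}$. The idea is to walk along $P$ starting at $u$; as long as we stay within the columns of $\mathcal{T}$, we accumulate cost via \Cref{lem:cluster-path}, and whenever $P$ dips into a subgraph $H^\nu$ corresponding to a child node $\nu$ of $\mu$ (i.e., a component of a leftover set), we recurse. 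The crucial point is that each such excursion enters $H^\nu$ at an outer vertex of $H^\nu$ and leaves at an outer vertex of $H^\nu$, so the inductive hypothesis applies to the restricted subpath; since the subpath of a shortest path is a shortest path, this is legitimate.

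The heart of the argument is bounding how many columns of $\mathcal{T}$ the "column portions" of $P$ can pass through. Here I would use the Column width property: any path that passes through a column $\eta$ has length at least $w = t\e\Delta$. I decompose $P$ into maximal subpaths; for each maximal subpath $Q$ that lives entirely within the columns of $\mathcal{T}$ (not inside any child leftover set), I want to say that the number of columns $Q$ touches is at most $O(\len{Q}/(t\e\Delta)) + O(1)$, because $\mathcal{T}$ is a \emph{tree} of columns and traversing from one column to another that is a non-neighbor in $\mathcal{T}$ forces the path to pass through at least one intermediate column, each such passage costing $\ge t\e\Delta$ in length. Summing \Cref{lem:cluster-path}'s bound $\cost_{8t}(\cdot,\cdot) \le \len{\cdot}/(\e\Delta) + 4$ over the consecutive pairs of column-cluster-endpoints along $Q$, plus the $+4$ per column-interface, yields a bound of the shape $O(\len{Q}/(t\e\Delta)) + O(\#\text{interfaces})$. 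The number of interfaces between "column portions" and "leftover portions" along $P$ is itself controlled by the number of distinct columns visited, hence again by $O(\len{P}/(t\e\Delta))$. Combining the column-portion costs with the inductively-bounded leftover-portion costs, I expect to land on $\cost_{8t}(u,v) \le 85 \cdot \frac{\len P}{t\e\Delta} + 16$ after bookkeeping — the explicit constants $85$ and $16$ will come out of adding up the per-segment $+4$ from \Cref{lem:cluster-path}, the per-interface overhead, and the base-level constant from the outermost columns.

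I would be careful about one subtlety: \Cref{lem:cluster-path} bounds the cost in terms of $\dist_{H_\eta}(u,v)$ (distance in the \emph{downward} subgraph $H_\eta$), not in terms of the length of $P$ restricted to $\eta$, and the cost it produces is with respect to $8t$-distortion paths that may wander into $H_\eta$. So when stitching the per-column bounds together I need the cost-paths chosen within different columns (and the recursive cost-paths within child leftover sets) to be consistent — i.e., their union should form a single valid $(1+8t)$-distortion path in $G$ whose cost I am bounding. This should work because $H_\eta$ for a column $\eta$ is contained in $H$ together with its descendants, and the approximate paths produced by \Cref{lem:cluster-path} only use clusters intersecting the original path segment; but verifying that the distortion stays $(1+8t)$ globally (rather than compounding) is the delicate step. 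I would handle this by noting that each local detour only adds an additive $O(t\e\Delta)$ to that local segment (as in the "Length of $P'$" computation in \Cref{lem:cluster-path}), and an additive $O(t\e\Delta)$ charged against a segment of length $\ge \e\Delta/2$ keeps the multiplicative blow-up at $1+8t$ per segment, hence $1+8t$ overall.

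The main obstacle I anticipate is precisely this global consistency / distortion-accounting across the recursive structure: making sure that the $+4$'s and the recursion constants do not compound multiplicatively into something like $4^{\text{depth}}$, and that the chosen low-cost approximate paths in the columns and in the recursively-handled leftover sets glue together into one legitimate approximate shortest path in $G$ witnessing the claimed cost. The tree structure of $\mathcal{T}$ (so that column-to-column travel is forced through intermediates) and the depth being absent from the bound (the bound depends only on $\len P / (t\e\Delta)$, not on the layer) are the two facts that make me believe the accounting closes additively rather than multiplicatively; the Column width lower bound of $w = t\e\Delta$ per column-passage is what converts "number of columns visited" into a term linear in $\len P/(t\e\Delta)$, which is the whole ballgame.
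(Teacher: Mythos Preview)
Your overall architecture matches the paper's: induction on the layer of $(H,\mathcal{T})$ in $\mathcal{H}$, alternation between pieces handled by \Cref{lem:cluster-path} and pieces handled by the inductive hypothesis, and a charging argument that absorbs the additive $+4$'s and $+16$'s into $\len{P}/(t\e\Delta)$. Where you diverge from the paper is in the decomposition of the ``column part'' of $P$, and this is where there is a genuine gap.

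You propose to split $P$ into maximal column-portions and leftover-portions, and within a column-portion $Q$ to apply \Cref{lem:cluster-path} once per column that $Q$ visits, relying on the claim that the number of distinct columns touched is $O(\len{Q}/(t\e\Delta)) + O(1)$. This claim is not correct as stated. Consider a gridtree that is a star with center $\eta_0$ and leaves $\eta_1,\ldots,\eta_m$: a path $Q$ that goes $\eta_0\to\eta_1\to\eta_0\to\eta_2\to\cdots\to\eta_0\to\eta_m$ touches $m+1$ distinct columns, yet never ``passes through'' any column in the sense of the Column width property (no leaf has anything below it, and $\eta_0$ has nothing above it), so no lower bound on $\len{Q}$ in terms of $m$ follows. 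Applying \Cref{lem:cluster-path} once per visited column would then accumulate $\Theta(m)$ many $+4$'s with nothing to charge them to.

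The paper sidesteps this with two moves you are missing. First, it locates the \emph{apex} column $\eta$ --- the lowest node of $\mathcal{T}$ with $H_\eta\supseteq P$ --- and splits $P$ at a vertex $p\in\eta\cap P$ into two halves. Second, on each half it uses the decomposition $P_1\circ Q_1\circ P_2\circ\cdots$ where $P_i$ is the \emph{maximal} prefix ending in the current column $\eta_i$ and $Q_i$ is the maximal following prefix that avoids all columns. Maximality forces $\eta_1,\eta_2,\ldots$ to strictly descend in $\mathcal{T}$, so each intermediate $P_i^+$ genuinely passes through $\eta_i$ and hence has length at least $t\e\Delta$. In the star example all the zigzagging through $\eta_0$ collapses into a single $P_1$ with endpoints in $\eta_0$, and \Cref{lem:cluster-path} is invoked exactly once there; only one more invocation (for $\eta_m$) is needed. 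That is the mechanism that keeps the number of $+4$'s bounded by $O(\len P/(t\e\Delta))+O(1)$.

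Your worry about global consistency of the $(1+8t)$-distortion is, by contrast, not an obstacle: because $P$ is a shortest path in $G$, every subpath is itself a $G$-shortest path between its endpoints, so replacing each subpath by a $(1+8t)$-approximate path and concatenating yields a path of length at most $(1+8t)\len{P}=(1+8t)\delta_G(u,v)$. There is no compounding; the paper simply writes $\cost_{8t}(u,v)\le\sum_i\cost_{8t}(p_i,p_i')+\sum_i\cost_{8t}(q_i,q_i')$ and moves on.
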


\begin{proof}
    We proceed by induction on layers of nodes of $\mathcal{H}$. 
    For the base case, we prove the claim when $(H, \mathcal{T})$ is a leaf in the deepest layer. In the inductive case, we assume the claim is true for all nodes in deeper layers. We begin with the inductive case. 

    Let $\eta$ be the lowest column in $\mathcal{T}^\mu$ such that $H_\eta$ contains $P$. 
    Notice that $\eta$ contains some vertex $p$ of~$P$ (as otherwise, column adjacency property would imply that there is some child $\eta'$ of $\eta$ where $H_{\eta'}$ contains $P$). 
    We split $P$ into two paths: \EMPH{$P^{(u)}$}, the subpath starting at $p$ and ending at $u$; and \EMPH{$P^{(v)}$}, the subpath starting at $p$ and ending at $v$. (Notice that both $P^{(u)}$ and $P^{(v)}$ contain $p$ and are therefore not vertex-disjoint; this is the only time in this section where we split a path into subpaths that are not vertex-disjoint.)
    
    We chop $P^{(u)}$ into $\ell_u$ vertex-disjoint subpaths
    $P^{(u)} = P_1 \circ Q_1 \circ \ldots \circ P_{\ell_u} \circ Q_{\ell_u}$, where each $P_i$ (except for $P_1$
    and $P_{\ell_u}$) \emph{passes through} a column (as defined in the column width property), and each $Q_i$ (possibly empty) is contained within a single leftover set.
    We do this as follows:
    Initialize $P^* \gets P^{(u)}$ and $i \gets 1$.
    Define $\eta_i$ to be the column containing the first vertex of $P^*$.
    Define $P_i$ to be the maximal prefix of $P^*$ that ends at some vertex in $\eta_i$.
    Remove prefix $P_i$ from $P^*$, and define $Q_i$ to be the maximal prefix of $P^*$ that touches no column. (Notice that $Q_i$ may be empty.)
    Remove prefix $Q_i$ from $P^*$, and increment $i$. (Notice that by maximality of $Q_i$, the first vertex on $P^*$ is in some column.)
    The process terminates when $P^*$ is empty.
    Using an identical process, we chop $P^{(v)}$ into $\ell_v$ vertex-disjoint subpaths $P^{(v)} = P_{\ell_u + 1} \circ Q_{\ell_u + 1} \circ \ldots \circ P_{\ell_u + \ell_v} \circ Q_{\ell_u + \ell_v}$; see Figure~\ref{fig:chop}.

\begin{figure}[h!]
    \centering \includegraphics[width=0.5\textwidth]{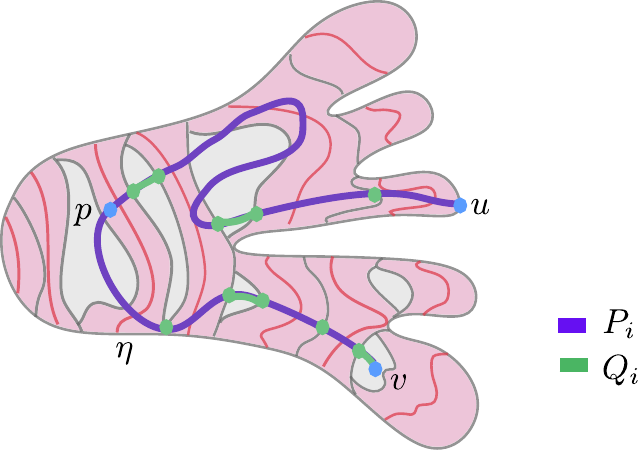}
    \caption{The path $P$ from $u$ to $v$, chopped into subpaths $P_i$ and $Q_i$. The column $\eta$ is the lowest node in the gridtree such that $H_\eta$ contains $P$. Vertex $p$ is in $\eta \cap P$.}
    \label{fig:chop}
\end{figure}

    We now argue that there are few paths, i.e.\ $\ell_u + \ell_v = O \Paren{\frac{\len{P}}{t \e\Delta}}$.
    For every $i$, let $(p_i, p_i')$ be the endpoints of $P_i$, and let $(q_i, q_i')$ be the endpoints of $Q_i$ if $Q_i$ is nonempty.
    For every $P_i$ that is a subpath of $P^{(u)}$, let \EMPH{$P_i^+$} denote the path $P_i$ concatenated with the next vertex in $P^{(u)}$.
    That is, $P_i^+ \coloneqq P^{(u)}[p_i, q_i]$ if $q_i$ exists, and $P_i^+ \coloneqq P^{(u)}[p_i, p_{i+1}]$ if $q_i$ does not exist and $i \neq \ell_u$, and $P_i^+ \coloneqq P_i$ if $q_i$ does not exist and $i = \ell_u$. 
    We similarly define $P_i^+$ for subpaths $P_i$ of $P^{(v)}$, and define \EMPH{$Q_i^+$} for subpaths $Q_i$ of $P^{(u)}$ and $P^{(v)}$.
    Notice that $\len{P} = \sum_i \Paren{\len{P_i^+} + \len{Q_i^+}}$, and that $\len{P} \ge \sum_i \Paren{\len{P_i^+} + \len{Q_i}}$.
    
    By choice of $\eta$ and $p$, the path $P_1^+$ is contained within $H_\eta = H_{\eta_1}$ (and a similar claim holds for $P_{\ell_u + 1}^+$).
    Column adjacency property implies that  path $P_1^+$ ends at a vertex below $\eta_1$ in $\mathcal{T}$ (unless $\ell_{u}=1$). 
    An easy inductive argument generalizes this statement: For all $i$, path $P_i^+$ is contained in $H_{\eta_i}$; if $i \not \in \set{\ell_u, \ell_u + \ell_v}$, then the path $P_i^+$ ends at a vertex below $\eta_i$ in $\mathcal{T}$; and if $i \not \in \set{1, \ell_u + 1}$, then the subpath $P_i^+$ begins at a vertex above $\eta$. This implies that, if $i \not \in \set{1, \ell_u, \ell_u + 1, \ell_u + \ell_v}$, the subpath $P_i^+$ \emph{passes through} column $\eta_i$.
    By the column width property of the $(t \e \Delta)$-gridtree $\mathcal{T}$, every such $P_i^+$ has length at least $t \e \Delta$.
    We are now ready to bound $\cost_{8t}(u,v)$. We can write $\cost_{8t}(s,t)$ in terms of the costs of (the endpoints of) the subpaths $P_i$ and $Q_i$:
    \[
    \cost_{8t}(u,v) \leq \sum_i \cost_{8t}(p_i,p_i') + \sum_{i : Q_i \neq \varnothing} \cost_{8t}(q_i,q_i')
    \]
    We bound the cost of $P_i$s and $Q_i$s separately.

    \begin{itemize}
        \item Path $P_i$ is a shortest path in $H_\eta$,
        so \Cref{lem:cluster-path} implies that
        \(
            \cost(p_i, p_i') \leq \frac{\len{\smash{P_i}}}{\e \Delta} + 4.
        \)
        As $t \le 1$ and $\len{P_i} \le \len{P_i^+}$, we have
        \(
            \cost(p_i, p_i') \leq \frac{\len{\smash{P_i^+}}}{t \e \Delta} + 4.
        \)
        If 
        \(
        i \not \in \set{1, \ell_u, \ell_u+1, \ell_v + \ell_v}, 
        \)
        we have $\len{P_i^+} \geq t \e \Delta$ and so 
        \(
        \cost(p_i, p_i') \leq 5 \cdot \frac{\len{P_i^+}}{t \e \Delta}.
        \)
        Thus, we have
        \begin{equation}
        \label{eq:cost-p}
        \sum_i \cost_{8t}(p_i, p_i') \leq 16 + 5 \cdot \sum_i \frac{\len{P_i^+}}{t\e\Delta}.    
        \end{equation}

        \item Path $Q_i$ lies in some component of a leftover set, which corresponds to a child $\mu$ in the hierarchy~$\mathcal{H}$. The starting vertex $q_i$ is an outer vertex of $H^\mu$ (as it is adjacent to a vertex in a column of $\mathcal{T}$), so induction hypothesis implies
        \(
        \cost_{8t}(q_i, q_i') \le 85\cdot\frac{\len {Q_i}}{t \e \Delta} + 16. 
        \)
        If $i \not \in \set{1, \ell_u, \ell_u+1, \ell_u + \ell_v}$, subpath $Q_i$ is accompanied by an occurrence of $P_i^+$ where $\len{P_i} \geq t \e \Delta$. For these $i$, we have 
        \(
        \cost_{8t}(q_i, q_i')
        \leq 85 \cdot \frac{\len{\smash{Q_i}}}{t \e \Delta} + 16 \cdot \frac{\len{\smash{P_i^+}}}{t \e \Delta}.
        \)
        Thus, we have

        \begin{equation}
        \label{eq:cost-q}
        \sum_{i : Q_i \neq \varnothing} \cost_{8t}(q_i, q_i')
        \leq 64 + \sum_{i : Q_i \neq \varnothing} \Paren{85 \cdot \frac{\len{\smash{Q_i}}}{t \e \Delta} + 16 \cdot \frac{\len{\smash{P_i^+}}}{t\e \Delta}}.    
        \end{equation}

    \end{itemize}

    \noindent We now combine the costs bounds for the $P_i$s and $Q_i$s. We consider two cases.
    \begin{itemize}
        \item If there are no nonempty $Q_i$, then we have
        \[
        \cost_{8t}(u,v) = \sum_i \cost_{8t}(p_i, p_i) \leq 6\cdot \frac{\len{P}}{t \e \Delta} + 16
        \]
        and the claim is satisfied.
        \item If there is some nonempty $Q_i$, then the layer width property implies that $\sum_i \len{P_i^+} \geq t \e \Delta$, as every vertex within $t \e \Delta$ distance of $u$ is assigned to some column. We have
        \begin{align*}
        \cost_{8t}(u,v) 
        &= 
        \sum_i \cost_{8t}(p_i, p_i') + \sum_{i : Q_i \neq \varnothing} \cost_{8t}(q_i, q_i') \\
        &\le 21 \cdot \sum_i \frac{\len {P_i^+}}{t \e \Delta} + 85 \cdot \sum_i \frac{\len {Q_i}}{t \e \Delta} + 80 &\text{by \Cref{eq:cost-p,eq:cost-q}}\\
        &\le 21 \cdot \frac{\sum_i \len {P_i^+}}{t \e \Delta} + 85 \cdot \frac{\len{P} - \sum_i \len{P_i^+}}{t \e \Delta} + 80\\
        &\le 85 \cdot \frac{\len{P}}{t \e\Delta} + 16. 
        \end{align*}
        The last inequality holds because $\frac{\sum_i \len{P_i^+}}{t\e\Delta} \ge 1$. 
    \end{itemize}
    
\noindent This completes the inductive case. The base case, when $(H, \mathcal{T})$ is a leaf at the deepest layer of $\mathcal{H}$, is identical except that all $Q_i$ are empty and so there is no need to appeal to the inductive hypothesis.
\end{proof}

\begin{lemma}
\label{lem:planar-cost}
    Let $u$ and $v$ be two vertices in $G$ (where $u$ is \emph{not} necessarily an outer vertex), and let $P$ be a shortest path (with respect to $G$) between $u$ and $v$. Then $\cost_{8t}(u,v) \leq 85 \cdot \frac{\len{P}}{t \e \Delta} + 80$.
\end{lemma}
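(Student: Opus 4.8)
The plan is to re-run the argument of \Cref{lem:planar-cost-inductive} at the \emph{root} node $(G,\mathcal{T})$ of the hierarchy $\mathcal{H}$; only a single cosmetic change is needed. Since the root is associated with the entire graph, $H = G$ trivially contains the shortest path $P$ between $u$ and $v$, so there is no need for the hypothesis that $u$ be an outer vertex. As before, I would let $\eta$ be the lowest column of $\mathcal{T}$ whose subgraph $H_\eta$ contains $P$; the column adjacency property forces $\eta$ to contain some vertex $p$ of $P$, and I split $P$ at $p$ into $P^{(u)}$ and $P^{(v)}$ and chop each half into alternating subpaths $P_1 \circ Q_1 \circ \cdots$ exactly as in the proof of \Cref{lem:planar-cost-inductive}, so that every $P_i$ whose index lies outside the four boundary indices $\set{1,\ell_u,\ell_u+1,\ell_u+\ell_v}$ passes through a column (hence $\len{P_i^+}\ge t\e\Delta$ by column width), and every $Q_i$ lies in a single component of a leftover set.

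Next I would bound the two families of subpaths just as before. Each $P_i$ is a shortest path in $H_\eta$, so \Cref{lem:cluster-path} applies verbatim and yields $\sum_i \cost_{8t}(p_i,p_i') \le 16 + 5\sum_i \frac{\len{P_i^+}}{t\e\Delta}$, i.e.\ \eqref{eq:cost-p}. For each nonempty $Q_i$, its starting vertex $q_i$ is adjacent to a vertex in a column of $\mathcal{T}$, hence is an outer vertex of the child node $(H^\mu,\mathcal{T}^\mu)$ of the root that contains $Q_i$; the key point is that this child is \emph{not} the root, so \Cref{lem:planar-cost-inductive} applies and gives $\cost_{8t}(q_i,q_i') \le 85\cdot\frac{\len{Q_i}}{t\e\Delta} + 16$, and combining with the $t\e\Delta$ lower bound on the accompanying $P_i^+$ reproduces \eqref{eq:cost-q}: $\sum_{i:Q_i\neq\varnothing} \cost_{8t}(q_i,q_i') \le 64 + \sum_{i:Q_i\neq\varnothing}\bigl(85\cdot\frac{\len{Q_i}}{t\e\Delta} + 16\cdot\frac{\len{P_i^+}}{t\e\Delta}\bigr)$.

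Finally I would combine. Summing the two displays gives $\cost_{8t}(u,v) \le 21\sum_i \frac{\len{P_i^+}}{t\e\Delta} + 85\sum_i \frac{\len{Q_i}}{t\e\Delta} + 80$, and since $21 \le 85$ and $\sum_i\bigl(\len{P_i^+}+\len{Q_i}\bigr) \le \len{P}$, this is at most $85\cdot\frac{\len{P}}{t\e\Delta} + 80$, as claimed. This is the single place where the argument differs from \Cref{lem:planar-cost-inductive}: there, the ``some nonempty $Q_i$'' case invoked the layer width property to conclude $\sum_i\len{P_i^+}\ge t\e\Delta$ and thereby shrink the additive $80$ down to $16$; at the root there is no outer vertex near $u$ to invoke, so I simply keep the weaker additive constant (the ``no nonempty $Q_i$'' case still gives the even smaller bound $6\cdot\frac{\len P}{t\e\Delta}+16$, dominated by the above). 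I expect the only real care needed is bookkeeping around the four boundary indices and confirming that each $Q_i$ indeed lands in a non-root child node so that \Cref{lem:planar-cost-inductive} is applicable; both are routine once the decomposition above is set up, so I do not anticipate a substantive obstacle.
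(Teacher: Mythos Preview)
Your argument is almost identical to the paper's, but there is one genuine gap in your choice to anchor at the root node $(G,\mathcal{T})$ of $\mathcal{H}$. You claim that if $\eta$ is the lowest column of the root gridtree with $H_\eta\supseteq P$, then ``column adjacency forces $\eta$ to contain some vertex $p$ of $P$.'' This relies on the parenthetical justification from the proof of \Cref{lem:planar-cost-inductive}: if $\eta$ had no vertex of $P$, then $P$ would lie in $H_{\eta'}$ for some child $\eta'$ of $\eta$. But that justification silently uses that $P$ touches at least one column of $\mathcal{T}$---guaranteed there because $u$ was an outer vertex and hence lay in a column by the layer width property. At the root node you have no such guarantee: the root has no outer vertices, and if both $u$ and $v$ lie deep inside $G$, the entire path $P$ can sit in a single leftover set $L$ of the root gridtree. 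In that case the lowest column $\eta$ with $H_\eta\supseteq P$ is the lower endpoint of the gridtree edge carrying $L$; this $\eta$ contains no vertex of $P$, no child $\eta'$ satisfies $H_{\eta'}\supseteq P$ (since $L$ is above $\eta$), and your decomposition into $P_1\circ Q_1\circ\cdots$ never gets started because the procedure assumes the first vertex of $P^*$ lies in a column.

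The fix is exactly what the paper does: instead of working at the root of $\mathcal{H}$, let $(H,\mathcal{T})$ be the \emph{lowest} node of the hierarchy whose subgraph $H$ fully contains $P$. Minimality in $\mathcal{H}$ now guarantees $P$ is not wholly contained in any leftover set of $\mathcal{T}$ (else a child node would also contain $P$), so $P$ meets some column and the chop goes through. Every nonempty $Q_i$ then lands in a child of this $(H,\mathcal{T})$, which is still a non-root node, so \Cref{lem:planar-cost-inductive} applies directly to each $Q_i$ just as you wrote. The remainder of your argument---reproving \eqref{eq:cost-p} and \eqref{eq:cost-q} and combining them with the weaker additive constant $80$ because layer width is unavailable---is correct and matches the paper verbatim.
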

\begin{proof}
    Let $(H, \mathcal{T})$ be the lowest node in the hierarchy $\mathcal{H}$ such that $H$ fully contains $P$. There is some vertex along $P$ in a column of $\mathcal{T}$. This means that, following \Cref{lem:planar-cost-inductive}, we can chop $P$ into subpaths $P_1 \circ Q_1 \circ \ldots \circ P_{\ell_1 + \ell_2} \circ Q_{\ell_1 + \ell_2}$. Letting $(p_i, p_i')$ and $(q_i, q_i')$ denote the endpoints of $P_i$ and $Q_i$, respectively, we have
    \[
    \cost_{8t}(u,v) \le \sum_i \cost_{8t}(p_i, p_i') + \sum_{i:Q_i \neq \varnothing} \cost_{8t}(q_i, q_i').
    \]
    To bound $\cost_{8t}(p_i, p_i')$, we may use identical analysis to that in the proof of \Cref{lem:planar-cost-inductive} to show that \Cref{eq:cost-p} still holds. To bound $\cost_{8t}(q_i, q_i')$, we follow a similar proof to that of \Cref{eq:cost-q}. There is one minor change. To prove that $\cost_{8t}(Q_i) \le 85 \cdot \frac{\len{Q_i}}{t\e\Delta} + 16$, the earlier proof of \Cref{eq:cost-q} appealed to an inductive hypothesis. We no longer have this inductive hypothesis; instead, we can directly apply the statement of \Cref{lem:planar-cost-inductive}. The rest of the proof follows verbatim, and so \Cref{eq:cost-q} holds in this setting. Combining these bounds, we conclude:
    \[
    \cost_{8t}(u,v)
    \le 21 \cdot \sum_i \frac{\len {P_i^+}}{t \e \Delta} + 85 \cdot \sum_i \frac{\len {Q_i}}{t \e \Delta} + 80
    \le 85 \cdot \frac{\len{P}}{t \e\Delta} + 80.
    \]
        
\aftermath
\end{proof}

\noindent \Cref{lem:cluster-path,lem:planar-cost} imply the following theorem.

\begin{theorem}
    \label{thm:gridtree-implies-shortcut}
    Let $\e$ be a number in $(0,1)$, let $t$ be a number in $(0,1/8]$, and let $G$ be a graph with diameter $\Delta$ that has a $(t \e \Delta)$-gridtree hierarchy~$\mathcal{H}$. 
    Then the set $\mathcal{C}$ of clusters produced by $\textsc{ClusterHierarchy}(G)$ 
    satisfies the following: (i) each cluster in $\mathcal{C}$ has strong diameter $O(\e \Delta)$, and (ii) for any $u,v \in V(G)$, we have $\cost_{8t, \mathcal{C}}(u,v) = O \left(\frac{1}{t \e} \right) \cdot \frac{\dist_G(u,v)}{\Delta} + O(1)$.
\end{theorem}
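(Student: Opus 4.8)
The plan is to read off \Cref{thm:gridtree-implies-shortcut} as an essentially immediate corollary of the two lemmas just established, with only a little bookkeeping. First I would check that $\mathcal{C} = \bigcup_\mu \mathcal{C}^\mu$ is genuinely a partition of $V(G)$. The layer-nesting property of $\mathcal{H}$ says the columns and leftover sets of each $\mathcal{T}^\mu$ partition $V(H^\mu)$, and that the components of the subgraphs induced by the leftover sets of $\mathcal{T}^\mu$ are exactly the subgraphs $H^{\mu'}$ attached to the children $\mu'$ of $\mu$. Iterating down the finite tree $\mathcal{H}$ (starting from the root, which is associated with all of $G$), every vertex of $G$ lies in exactly one column of exactly one gridtree in the hierarchy; since $\textsc{ClusterColumn}(\eta)$ partitions $\eta$, the sets in $\mathcal{C}$ partition $V(G)$. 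Note also that this same recursion shows each $H^\mu$ is an \emph{induced} subgraph of $G$ (the root is $G$ itself; every child's subgraph is induced on a leftover-set component).

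For part (i): every cluster $C \in \mathcal{C}$ is returned by some call $\textsc{ClusterColumn}(\eta)$ with $\eta$ a column of some $\mathcal{T}^\mu$, so \Cref{lem:cluster-path}(i) gives that $C$ is connected and has strong diameter $O(\e\Delta)$ measured inside the column subgraph $H^\mu[\eta]$, with the diameter-realizing shortest paths staying inside $C$. Since $H^\mu$ is induced in $G$ and $C \subseteq \eta \subseteq V(H^\mu)$, we have $G[C] = H^\mu[C]$, so the $O(\e\Delta)$ strong-diameter bound transfers verbatim to $G[C]$, as required by the definition of strong diameter.

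For part (ii): given $u,v \in V(G)$, let $P$ be a shortest path between them in $G$, so that $\len P = \dist_G(u,v)$. Applying \Cref{lem:planar-cost} directly yields
\[
\cost_{8t,\mathcal{C}}(u,v) \;\le\; 85 \cdot \frac{\len P}{t\e\Delta} + 80 \;=\; O\!\left(\frac{1}{t\e}\right)\cdot\frac{\dist_G(u,v)}{\Delta} + O(1),
\]
which is exactly the stated bound (and in particular, since a shortest path is itself a valid $(1+8t)$-approximate shortest path, the cost $\cost_{8t,\mathcal{C}}(u,v)$ is well-defined and at most $\cost_{\mathcal{C}}(P)$).

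There is no real obstacle in this step: all of the content lives in \Cref{lem:cluster-path} (single-column clustering) and \Cref{lem:planar-cost} (the inductive walk across the hierarchy). The only points requiring care are the two elementary checks above, namely that the hierarchy's recursion exhausts $V(G)$ so $\mathcal{C}$ leaves no vertex unclustered, and that strong diameter measured inside a column coincides with strong diameter in $G$ because each $H^\mu$ induces the same subgraph on a cluster as $G$ does. Everything else is a direct substitution into \Cref{lem:planar-cost}.
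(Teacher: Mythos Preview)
Your proposal is correct and follows exactly the paper's approach: the paper simply states that \Cref{lem:cluster-path,lem:planar-cost} imply the theorem, and you have spelled out precisely how, including the two sanity checks (that $\mathcal{C}$ partitions $V(G)$ via the layer-nesting recursion, and that strong diameter in $H^\mu[\eta]$ coincides with strong diameter in $G$ because each $H^\mu$ is induced).
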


\noindent In particular, planar graphs
have $(t\e\Delta)$-gridtree hierarchies for any $t, \e \in (0,1)$ (by \Cref{L:gridtree-planar}), and so they have $(O(\e), O \left(\frac{1}{t\e} \right))$-shortcut partitions with $(1 + O(t))$ distortion. By choosing $t = O(\e)$ or choosing $t = O(1)$, we conclude:

\begin{corollary}
Any planar graph has an $(\e, O(\e^{-2}))$-shortcut partition with $(1+O(\e))$ distortion, satisfying the diameter, low-hop, and cluster ordering properties.
\end{corollary}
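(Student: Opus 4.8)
The plan is to specialize \Cref{thm:gridtree-implies-shortcut} by pairing it with \Cref{L:gridtree-planar} and then tuning the free parameter $t$. First I would apply \Cref{L:gridtree-planar} to the given planar graph $G$ of diameter $\Delta$ \emph{with width parameter $t\e$ in place of $\e$} (legitimate since $t\e \in (0,1)$ whenever $t,\e \in (0,1)$), obtaining a $(t\e\Delta)$-gridtree hierarchy $\mathcal{H}$ for $G$. Running $\textsc{ClusterHierarchy}(G,\mathcal{H})$ and invoking \Cref{thm:gridtree-implies-shortcut} then yields a clustering $\mathcal{C}$ with (i) strong diameter $O(\e\Delta)$ per cluster and (ii) $\cost_{8t,\mathcal{C}}(u,v) = O\!\left(\frac{1}{t\e}\right)\cdot \frac{\dist_G(u,v)}{\Delta} + O(1)$ for every $u,v \in V(G)$. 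The cluster ordering property is not part of the statement of \Cref{thm:gridtree-implies-shortcut}, but it falls out of the construction of $\textsc{ClusterColumn}$ (noted right after that algorithm), which is precisely the routine $\textsc{ClusterHierarchy}$ calls on each column; so all three target properties are present.

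Next I would use the only fact about distances that is available, namely $\dist_G(u,v) \le \Delta$, to collapse (ii) into the uniform bound $\cost_{8t,\mathcal{C}}(u,v) = O(1/(t\e))$, independent of the vertex pair. At this point $\mathcal{C}$ is, for every admissible $t$, an $(O(\e), O(1/(t\e)))$-shortcut partition with $(1+8t)$ distortion. Choosing $t = \e/8$ (which lies in $(0,1/8]$ as \Cref{thm:gridtree-implies-shortcut} requires) makes the distortion $1 + O(\e)$ and the hop bound $O(1/\e^2)$; finally, replacing $\e$ by $\e/c$ for the constant $c$ hidden in the $O(\e\Delta)$ strong-diameter bound turns the cluster diameter into exactly $\e\Delta$, as the definition of a shortcut partition demands. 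Collecting these, $\mathcal{C}$ is the claimed $(\e, O(\e^{-2}))$-shortcut partition with $(1+O(\e))$ distortion satisfying the diameter, low-hop, and cluster ordering properties.

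I do not expect a genuine obstacle here: all the substantive work is already carried out in \Cref{lem:cluster-path,lem:planar-cost-inductive,lem:planar-cost}, and the corollary is just the instantiation $t = \Theta(\e)$ of \Cref{thm:gridtree-implies-shortcut}. The only thing requiring care will be the parameter bookkeeping — handing width $t\e\Delta$ (not $\e\Delta$) to \Cref{L:gridtree-planar}, keeping $t$ inside $(0,1/8]$, and absorbing the several stray constants (in the strong diameter, in the distortion factor $1+8t$, and in the hop count) into the $O(\cdot)$ notation via one last constant-factor rescaling of $\e$.
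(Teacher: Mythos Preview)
Your proposal is correct and matches the paper's own argument essentially verbatim: the paper derives the corollary in one sentence by noting that planar graphs admit $(t\e\Delta)$-gridtree hierarchies via \Cref{L:gridtree-planar}, applying \Cref{thm:gridtree-implies-shortcut}, and then setting $t = O(\e)$. Your additional remarks about the cluster ordering property and the final rescaling of $\e$ are accurate bookkeeping that the paper leaves implicit.
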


\begin{corollary}
\label{cor:shortcut-constant-distortion}
Any planar graph has an $(\e, O(\e^{-1}))$-shortcut partition with $t$ distortion for any constant $t > 1$, satisfying the diameter, low-hop, and cluster ordering properties.
\end{corollary}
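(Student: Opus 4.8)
The plan is to derive the corollary as a direct instantiation of \Cref{thm:gridtree-implies-shortcut}, this time choosing the auxiliary parameter $t$ in that theorem to be a \emph{constant} rather than letting it shrink with $\e$ as in the previous corollary. Fix the target stretch $t > 1$ and set $t_0 \coloneqq \min\set{(t-1)/8,\; 1/8}$, a positive constant with $8t_0 \le t-1$. By \Cref{L:gridtree-planar} applied with width parameter $t_0\e \in (0,1)$, the planar graph $G$ (of diameter $\Delta$) admits a $(t_0\e\Delta)$-gridtree hierarchy $\mathcal{H}$. Feeding $\mathcal{H}$ into \Cref{thm:gridtree-implies-shortcut} --- whose hypotheses hold since $t_0 \in (0,1/8]$ and $\e \in (0,1)$ --- the clustering $\mathcal{C}$ produced by $\textsc{ClusterHierarchy}(G)$ has strong cluster diameter $O(\e\Delta)$ and satisfies
\[
\cost_{8t_0,\mathcal{C}}(u,v) \;=\; O\!\left(\frac{1}{t_0\e}\right)\cdot\frac{\dist_G(u,v)}{\Delta} + O(1)
\]
for every $u,v\in V(G)$; since $\dist_G(u,v)\le\Delta$ (the diameter) and $t_0$ is a constant, this is $O(1/\e)$. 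The cluster ordering property is inherited verbatim from $\textsc{ClusterColumn}$, whose cluster centers are selected consecutively along the shortest path $\pi_\eta$ in each column.

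It remains to convert the ``$8t_0$ distortion budget'' into the claimed ``$t$ stretch'' and to absorb the constant in the $O(\e\Delta)$ diameter bound. For the distortion, I would use the elementary fact that $\cost_{t',\mathcal{C}}(u,v)$ is non-increasing in $t'$: enlarging the distortion budget only enlarges the family of approximate shortest paths over which $\cost_\mathcal{C}(\cdot)$ is minimized. Hence $\cost_{t-1,\mathcal{C}}(u,v) \le \cost_{8t_0,\mathcal{C}}(u,v) = O(1/\e)$, which is exactly the low-hop requirement for a shortcut partition of stretch $1+(t-1)=t$. For the diameter, I would rerun the construction with $\e$ replaced by $\e/c$, where $c$ is the hidden constant in the $O(\e\Delta)$ bound; this rescales the hop bound to $O(c/\e)=O(1/\e)$ and leaves the distortion (which depends only on $t_0$, hence not on $\e$) unchanged. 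The result is an honest $(\e,O(1/\e))$-shortcut partition of stretch $t$ possessing the diameter, low-hop, and cluster ordering properties.

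There is no real obstacle here: this is a bookkeeping corollary of \Cref{L:gridtree-planar} and \Cref{thm:gridtree-implies-shortcut}. The one mild subtlety worth flagging is that \Cref{thm:gridtree-implies-shortcut} requires its parameter to lie in $(0,1/8]$, so for $t \ge 2$ one cannot literally take $t_0=(t-1)/8$; but this is harmless, since a shortcut partition of stretch $1+8\cdot(1/8)=2\le t$ is a fortiori one of stretch $t$, again by the monotonicity of $\cost$ in the distortion parameter.
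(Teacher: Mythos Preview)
Your proposal is correct and follows exactly the paper's approach: the paper derives both corollaries in one line from \Cref{thm:gridtree-implies-shortcut} and \Cref{L:gridtree-planar} by ``choosing $t = O(\e)$ or choosing $t = O(1)$'', and you have simply spelled out the second choice with explicit constants (your $t_0$) and the routine rescaling of $\e$ to absorb the big-$O$ in the diameter bound. The monotonicity observation about $\cost_{t',\mathcal{C}}$ and the cap $t_0\le 1/8$ are the right bookkeeping details; nothing more is needed.
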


\section{Constructing gridtree hierarchy for planar graphs}
\label{S:gridtree-planar}
 
In this section we prove the existence of an $\e\Delta$-gridtree hierarchy $\mathcal{H}$ for any planar graph $G$ (Lemma~\ref{L:gridtree-planar}).
Throughout this section, we assume that $G$ has a fixed drawing in the plane; all references to the external face of (a subgraph of) $G$ refer to this fixed drawing.  All subgraphs of $G$ inherit the drawing of $G$.

To construct a gridtree hierarchy for $G$, we proceed in rounds. 
In each round we will construct a gridtree $\mathcal{T}$ for a subgraph $H$ of $G$ using the algorithm $\textsc{Gridtree}(H)$, iteratively partitioning the vertices in $H$ into columns and leftover sets, such that every vertex within distance $\e \Delta$ from the exterior face of $H$ is assigned to a column.
We then recursively construct a gridtree hierarchy on each leftover component.
Within a single round, we adapt an algorithm by Busch, LaFortune, and Tirthapura for constructing a sparse cover of a planar graph~\cite{BLT14}. Their algorithm crucially relies on recursively selecting a set of ``far apart'' paths. We follow the same algorithm for selecting paths, and define the columns to be $O(\e \Delta)$-neighborhoods around each path.

\subsection{Constructing a single gridtree}

We begin by defining the algorithm \EMPH{$\textsc{SelectPaths}_H(H', \pi)$}, which appeared implicitly in~\cite{BLT14}. It takes as input a subgraph $H'$ of $H$ (which itself is a subgraph of $G$) and a path $\pi$ that is a shortest path on $H'$ between external vertices of $H$. (We allow $\pi$ to be a path of length $0$ --- that is, $\pi$ may be a single external vertex of $H$.)
\begin{figure}[ht!]
\centering\small
\begin{algorithm}
\textul{$\mathsc{SelectPaths}_H(H', \pi):$} \+
\\  \Comment{Cut away $\e \Delta$-neighborhood of $\pi$}
\\  $N \gets $ $\e\Delta$-neighborhood of $\pi$ in $H'$
\\  $H'_1, \ldots, H'_m \gets$ components of $H' \setminus N$ containing at least an external vertex of $H$
\\
\\  \Comment{Recurse}
\\  for each $H'_i \gets H'_1, \ldots, H'_m$: \+
\\      $Y_i \gets$ set of external vertices of $H'_i$ connected to $N$ by some external edge
\\      \Comment{by~\cite{BLT14}, $1 \leq |Y_i| \leq 2$}
\\      if $|Y_i| = 1$: \+
\\          $\pi_i \gets Y_i$ \-
\\      else: \+
\\          $\pi_i \gets $ shortest path in $H'_i$ connecting both vertices in $Y_i$ \- 
\\      $\textsc{SelectPaths}_H(H'_i, \pi_i)$ \-
\end{algorithm}
\end{figure}

We initialize the process by
selecting an arbitrary external vertex $\pi_0$ of~$H$, and call $\textsc{SelectPaths}_H(H, \pi_0)$. 
There is a tree \EMPH{$\mathcal{T}^*$} naturally associated with the recursion of $\textsc{SelectPaths}_H(H, \pi_0)$. 
Each node $\eta$ of $\mathcal{T}^*$ contains the following:
\begin{itemize}
    \item An induced subgraph $\EMPH{$H^*_\eta$} \subseteq H$, containing at least one external vertex of $G$. We say that a vertex $v$ \EMPH{appears} in the subgraph of $\eta$ if $v \in V(H^*_\eta)$.
    \item A path \EMPH{$\pi_\eta$} which is a shortest path on $H^*_\eta$.
    \item A neighborhood \EMPH{$N_\eta$} defined to be the set of vertices $v \in V(H^*_\eta)$ where $\dist_{H^*_\eta}(v, \pi_\eta) \leq \e \Delta$
    \item A ``leftover'' subgraph \EMPH{$L_\eta$}, defined to be the union of all connected components of $H^*_\eta \setminus N_\eta$ that do \emph{not} contain any external vertices of $H$.
\end{itemize}

The children of $\eta$ are associated with components of $V(H^*_\eta) \setminus (V(L_\eta) \cup N_\eta)$. 
This recursion tree was discussed in~\cite{BLT14}, where the authors implicitly prove the following lemma (cf. \cite[Lemma 9]{BLT14}):

\begin{lemma}
\label{lem:safe-recursion}
    Let $\eta$ be a node of the recursion tree $\mathcal{T}^*$. 
    Let $u$ be a vertex in $H^*_\eta$ that is adjacent in $H$ to some vertex in $V(H^*_{\parent(\eta)})$, where $\parent(\eta)$ is the parent of $\eta$ in $\mathcal{T}^*$. 
    Let $v \in V(H^*_\eta) \setminus (V(L_\eta) \cup N_\eta)$ be a vertex that appears in the subgraph of some child of $\eta$.
    Then every path $P$ in $H^*_\eta$ between $u$ and $v$ intersects $\pi_\eta$.
\end{lemma}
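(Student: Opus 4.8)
The plan is to translate the claim into a Jordan-curve separation in the fixed planar drawing of $H$; this is essentially the topological content of \cite[Lemma~9]{BLT14}, and I would reuse their argument, only spelling out how the recursion-tree objects $H^*_\eta,\pi_\eta,N_\eta$ instantiate theirs. First I would record the relevant picture of $H^*_\eta$. Since $N_{\parent(\eta)}$ is the $\e\Delta$-neighborhood of a shortest path running between external vertices, its image in the drawing is connected and reaches the outer face; and because $H^*_\eta$ is an entire component of $H^*_{\parent(\eta)}\setminus N_{\parent(\eta)}$, every edge of $H$ leaving $H^*_\eta$ goes into $N_{\parent(\eta)}$ and is therefore incident to the outer face of $H^*_\eta$. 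Hence the boundary walk of the outer face of $H^*_\eta$ is the concatenation of an arc $A_{\mathrm{in}}$ facing the deleted set $N_{\parent(\eta)}$ and a complementary arc $A_{\mathrm{out}}$ consisting of vertices and edges that already lay on the outer face of $H$; by the case analysis of \cite{BLT14} these two arcs meet at the at most two vertices $y_1,y_2$ of $H^*_\eta$ adjacent to $N_{\parent(\eta)}$, and $\pi_\eta$ is precisely a shortest $y_1$--$y_2$ path in $H^*_\eta$ (a single vertex, in which case the lemma is immediate, when there is only one such $y$).

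Next I would cut along $\pi_\eta$. As the two ends of the simple path $\pi_\eta$ lie on the outer boundary of $H^*_\eta$, cutting the disk carrying the drawing of $H^*_\eta$ along $\pi_\eta$ splits it into two closed regions --- $D_{\mathrm{in}}\supseteq A_{\mathrm{in}}$ and $D_{\mathrm{out}}\supseteq A_{\mathrm{out}}$ --- meeting only along $\pi_\eta$, so that $\pi_\eta$ separates the rest of $H^*_\eta$ into the ``in'' side ($D_{\mathrm{in}}$ minus $\pi_\eta$) and the ``out'' side. The vertex $u$ is adjacent to $N_{\parent(\eta)}$, hence lies on $A_{\mathrm{in}}$, hence lies on $D_{\mathrm{in}}$ --- either on $\pi_\eta$ itself (and then the lemma holds trivially) or strictly on the ``in'' side. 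The vertex $v$ lies in a component of $H^*_\eta\setminus N_\eta$ that contains an external vertex of $H$; this component is disjoint from $N_\eta\supseteq\pi_\eta$ yet meets $A_{\mathrm{out}}$, so it is confined to the ``out'' side, and in particular so is $v$. Therefore the drawing of any $u$--$v$ path $P$ in $H^*_\eta$ runs from the ``in'' side to the ``out'' side and must cross $\pi_\eta$; since a planar drawing has no edge crossings, this forced intersection occurs at a vertex, so $P$ contains a vertex of $\pi_\eta$, which is the claim.

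The step I expect to require the most care is making this topology fully rigorous when $H^*_\eta$ fails to be $2$-connected, so that its outer boundary is merely a closed walk and $\pi_\eta$ may run along part of it: then ``cut along $\pi_\eta$'' must be defined carefully (e.g.\ by planarizing or by duplicating $\pi_\eta$), and one must argue rather than draw that the deleted neighborhood $N_{\parent(\eta)}$ sits on the ``in'' side while every next-level component sits on the ``out'' side. This is exactly the bookkeeping carried out (implicitly) in \cite[Lemma~9]{BLT14}; the remaining work is to verify that $H^*_\eta,\pi_\eta,N_\eta,y_1,y_2$ from the recursion tree $\mathcal{T}^*$ play the roles of their objects, and that $u$ (an attachment vertex of $H^*_\eta$ to the deleted neighborhood) and $v$ (a vertex of a child subgraph, which still carries an external vertex of $H$) land on opposite sides of $\pi_\eta$.
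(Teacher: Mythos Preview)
The paper does not supply its own proof of this lemma; it states the result and attributes it to \cite[Lemma~9]{BLT14}. Your proposal correctly reconstructs the planar-separation argument behind that reference, and the outline---place $u$ on the boundary arc $A_{\mathrm{in}}$ facing the deleted neighborhood $N_{\parent(\eta)}$, place $v$ in a child component meeting the complementary arc $A_{\mathrm{out}}$ of external vertices of $H$, and observe that the path $\pi_\eta$ joining the two interface points $y_1,y_2$ separates the two sides---is sound.

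One small wording issue: $y_1,y_2$ are not ``the at most two vertices of $H^*_\eta$ adjacent to $N_{\parent(\eta)}$'' (there may be many such vertices, all along $A_{\mathrm{in}}$); rather, they are the at most two \emph{external} vertices of $H$ in $H^*_\eta$ joined to $N_{\parent(\eta)}$ by an \emph{external} edge, i.e.\ precisely the points where $A_{\mathrm{in}}$ meets $A_{\mathrm{out}}$. This does not affect the argument, since you only need $u\in A_{\mathrm{in}}$, which follows from $u$ being adjacent to $N_{\parent(\eta)}$.
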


We can transform the recursion tree $\mathcal{T}^*$ into a gridtree \EMPH{$\mathcal{T}$} for $H$. The columns (resp.\ leftover sets) of $\mathcal{T}$ are in one-to-one correspondence with the nodes (resp.\ edges) of $\mathcal{T}^*$: the column associated with node $\eta \in V(\mathcal{T}^*)$ is $N_\eta$, and the leftover set associated with the edge $(\eta, \eta') \in E(\mathcal{T}^*)$ is $L_{\eta'}$. 
In the proofs below, we will slightly abuse the notation and use $\eta$ to refer simultaneously to the column of $\mathcal{T}$ and to the corresponding node of $\mathcal{T}^*$; for example, the column $\eta$ denotes the same set as $N_\eta$. Notice that the subgraph $H_\eta$ defined in the column shortcut property (\Cref{def:column-shortcut}) is the same as $H^*_\eta$.
\begin{lemma}
\label{lem:thin-gridtree}
    Tree $\mathcal{T}$ is an $\e\Delta$-gridtree for $H$, and every external vertex of $H$ is assigned to some~column.
\end{lemma}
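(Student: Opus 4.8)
The plan is to check, one at a time, the three defining conditions of an $\e\Delta$-gridtree from \Cref{def:column} for the tree $\mathcal{T}$ obtained from the recursion tree $\mathcal{T}^*$, together with the side claim that every external vertex of $H$ lands in a column. Everything rests on the laminar family $\{H^*_\eta\}_{\eta \in V(\mathcal{T}^*)}$: it is nested according to $\mathcal{T}^*$, each $H^*_\eta$ is an \emph{induced} subgraph of $H$, and at each node the algorithm produces the disjoint decomposition $V(H^*_\eta) = N_\eta \sqcup V(L_\eta) \sqcup \bigsqcup_{\eta_c} V(H^*_{\eta_c})$ over the children $\eta_c$ of $\eta$, where the child subgraphs are exactly the components of $H^*_\eta \setminus N_\eta$ touching an external vertex of $H$. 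I will also use \Cref{lem:safe-recursion} and the topological fact from~\cite{BLT14} recorded in the construction: a child subgraph $H^*_{\eta_c}$ attaches to its parent's band $N_\eta$ only through the vertex set $Y_{\eta_c}$ (of size $\le 2$), which lies on $\pi_{\eta_c}$ and hence inside $N_{\eta_c}$; in particular every boundary vertex of $H^*_\eta$ lies on $\pi_\eta$. Unrolling the decomposition from the root gives $V(H) = \bigsqcup_\eta N_\eta \sqcup \bigsqcup_\eta V(L_\eta)$, and since each $N_\eta \supseteq \pi_\eta \neq \varnothing$ the recursion strictly shrinks, so this is a finite partition; the only bookkeeping issue is the root's leftover $L_{\eta_0}$, which is carried by no edge of $\mathcal{T}$ — I would dispose of it harmlessly (attach a virtual parent of $\eta_0$ with an empty column, putting $L_{\eta_0}$ on the new edge; all gridtree conditions at the virtual node are vacuous). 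The external-vertex claim is then immediate: an external vertex $u$ is, at every stage, either put into the current $N_\eta$ or sits in the component of $H^*_\eta \setminus N_\eta$ containing $u$, which is by definition a child subgraph and never a leftover component, so the recursion carries $u$ down until it is absorbed into a column.

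Next I would handle column adjacency. Given an edge $(u,v)$ of $H$ with $u,v$ in different pieces, let $\mu$ be the deepest node with $u,v \in V(H^*_\mu)$; then $(u,v) \in E(H^*_\mu)$ and the two vertices do not both lie in one child subgraph. Going through the decomposition of $V(H^*_\mu)$: if one lies in $N_\mu$ and the other in $V(L_\mu)$ we get case~(3) (column $\mu$ and the leftover on $\mu$'s parent edge are incident); if one lies in $N_\mu$ and the other in a child $H^*_{\mu_c}$, the child endpoint is adjacent to $N_\mu$, hence lies on $\pi_{\mu_c} \subseteq N_{\mu_c}$, so the two pieces are the adjacent columns $\mu,\mu_c$ — case~(2); every other combination is impossible, since distinct components of $H^*_\mu \setminus N_\mu$ carry no edge between them (and ``both in $V(L_\mu)$'' collapses to case~(1)). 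This step is precisely where the planar structure of~\cite{BLT14} enters.

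The main work is the column width property, and the part I expect to be the real obstacle. Fix a column $\eta$ and a path $P$ in $H$ from a vertex $a \in N_\eta$ adjacent (in $H$) to some vertex above $\eta$ to a vertex $b$ below $\eta$; I must show $|P| \ge \e\Delta$. By the adjacency analysis, an ``above $\eta$'' neighbor of $N_\eta$ lies in $N_{\parent(\eta)}$ or in $V(L_\eta)$, and ``$b$ below $\eta$'' forces $b \in V(H^*_{\eta_c})$ for a child $\eta_c$, so $b \notin N_\eta$ and $\dist_{H^*_\eta}(b,\pi_\eta) > \e\Delta$. I would first upgrade this to $\dist_{H}(b,\pi_\eta) > \e\Delta$: any $b$--$\pi_\eta$ path in $H$ either stays inside $H^*_\eta$, where the bound is immediate because $N_\eta$ is the $\e\Delta$-ball of $\pi_\eta$, or it leaves $H^*_\eta$ through a boundary vertex of $H^*_\eta$ — but those lie on $\pi_\eta$, so the path has already met $\pi_\eta$. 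When $a$ is adjacent to $N_{\parent(\eta)}$ (so $a$ is a boundary vertex of $H^*_\eta$), \Cref{lem:safe-recursion} gives that every $a$--$b$ path inside $H^*_\eta$ meets $\pi_\eta$, and an excursion of $P$ out of $H^*_\eta$ again forces a visit to $\pi_\eta$ (the exit vertex); together with $\dist_H(b,\pi_\eta) > \e\Delta$ this yields $|P| > \e\Delta$. The delicate case is $a$ adjacent only to the leftover pocket $L_\eta$: then $a$ sits on the outer rim of the band $N_\eta$, and one must argue that within $H^*_\eta$ the pocket $L_\eta$ and the child subgraph $H^*_{\eta_c}$ are separated by $\pi_\eta$ itself, not merely by the band, so that the same crossing argument applies. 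This separation is exactly where the shortest-path property of $\pi_\eta$ does its work, and it is essentially already packaged inside \Cref{lem:safe-recursion}; I would re-invoke that lemma (in a mildly strengthened form if needed) rather than reprove the planar topology from scratch.

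Finally, column shortcut is easy: for each column $\eta$ take $\pi_\eta$, which is a shortest path in $H_\eta = H^*_\eta$ by construction. Every $v \in \eta = N_\eta$ has $\dist_{H^*_\eta}(v,\pi_\eta) \le \e\Delta$, and a shortest $v$--$\pi_\eta$ path in $H^*_\eta$ has all of its vertices within $\e\Delta$ of $\pi_\eta$, hence inside $N_\eta$, so it is a path in $H[\eta]$ and $\dist_{H[\eta]}(v,\pi_\eta) \le \e\Delta \le 2\e\Delta$, as required (with room to spare). Combining the four parts shows $\mathcal{T}$ is an $\e\Delta$-gridtree for $H$, and the external-vertex claim was settled in the first step.
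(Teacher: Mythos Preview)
Your plan matches the paper's --- check the three gridtree properties plus the external-vertex claim via the laminar family $\{H^*_\eta\}$ and \Cref{lem:safe-recursion} --- and your arguments for column shortcut and for external vertices are essentially the paper's. The gap is the ``topological fact'' you lean on: that $H^*_{\eta_c}$ attaches to $N_\eta$ only through $Y_{\eta_c}$, and hence that every boundary vertex of $H^*_\eta$ lies on $\pi_\eta$. The construction only records $|Y_i|\le 2$ for \emph{external} vertices joined to $N$ by \emph{external} edges; nothing rules out other attachments, and \Cref{lem:safe-recursion} does not put boundary vertices on $\pi_\eta$ --- it only says paths from them to deep vertices cross $\pi_\eta$. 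For column adjacency the paper therefore argues differently: with $u\in N_\eta\cup L_\eta$ and $v$ in a child subgraph $H^*_{\eta'}$, it applies \Cref{lem:safe-recursion} at $\eta'$ (with the edge-endpoint $v$ playing both the ``$u$'' and ``$v$'' roles there) and concludes only $v\in N_{\eta'}\cup L_{\eta'}$, which already gives case~(2) or~(3). Your stronger claim $v\in\pi_{\eta'}$ is neither needed nor justified.

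For column width, your case split on whether $a$'s above-neighbor lies in $N_{\parent(\eta)}$ or in the pocket $L_\eta$ leaves the second case open, and your ``excursion out of $H^*_\eta$ forces a visit to $\pi_\eta$'' detour again invokes the unjustified boundary-vertices-on-$\pi_\eta$ claim. The paper avoids all of this with one move: take the longest suffix $P'$ of $P$ that is contained in $H_\eta=H^*_\eta$; by column adjacency its first vertex is adjacent to some vertex in $V(H^*_{\parent(\eta)})$, so a single application of \Cref{lem:safe-recursion} shows $P'$ meets $\pi_\eta$. Since $b\notin N_\eta$ gives $\dist_{H^*_\eta}(b,\pi_\eta)>\e\Delta$ and $P'$ ends at $b$, the bound $|P|\ge|P'|>\e\Delta$ follows with no separate $L_\eta$ case and no need to track excursions of $P$ outside $H^*_\eta$.
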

\begin{proof}
We prove that $\mathcal{T}$ satisfies the three required properties.

\medskip
\noindent \textbf{[Column adjacency.]} Let $(u,v)$ be an edge in $H$.
Let $\eta_u$ and $\eta_v$ be the nodes of $\mathcal{T}^*$ such that $u \in N_{\eta_u} \cup L_{\eta_u}$ and $v \in N_{\eta_v} \cup L_{\eta_v}$.
If $\eta_u = \eta_v$, then the edge $(u,v)$ satisfies the column adjacency property.
Otherwise, notice that $\eta_u$ and $\eta_v$ are in an ancestor-descendent relationship, as 
$\textsc{SelectPaths}$ recurses on (maximal) connected components in each iteration.
Assume without loss of generality that $\eta_u$ is an ancestor of $\eta_v$. For the sake of simplifying notation, we write $\eta \coloneqq \eta_u$.
Let $\eta'$ denote the child of $\eta$ such that $v$ appears in the subgraph of $\eta'$.
We claim that $v$ does not appear in the subgraph of any child of $\eta'$.
Indeed, if it did, then \Cref{lem:safe-recursion} would imply that the edge $(u,v)$ is incident to $\pi_{\eta'}$; as $u \not \in V(H^*_{\eta'})$, this implies that $v \in \pi_{\eta'} \subseteq N_{\eta'}$ and thus does not appear in the subgraph of any child of $\eta'$. 
We conclude that $v \in L_{\eta'} \cup N_{\eta'}$, and so $(u,v)$ satisfies the column adjacency property.

\medskip
\noindent \textbf{[Column width.]} Let $\eta$ be a column in $\mathcal{T}$.
    Let $P$ be a path that passes through $\eta$. 
    Let $P'$ denote the longest suffix 
    of $P$ that is fully contained in $H_\eta$.
    Because $V(H_\eta)$ includes all vertices below $\eta$ in $\mathcal{T}$, and $P$ starts at a vertex that is adjacent to a vertex above $\eta$, the path $P'$ starts at a vertex that is adjacent to a vertex above $\eta$.
    Column adjacency property implies that $P'$ starts at a vertex that adjacent to some vertex in $V(H^*_{\parent(\eta)})$.
    By \Cref{lem:safe-recursion}, the path $P'$ intersects $\pi_\eta$.
    As every vertex in $V(H^*_\eta) \setminus (V(L_\eta) \cup N_\eta)$ has distance (with respect to $H^*_\eta$) at least $\e\Delta$ from $\pi_\eta$, this implies that $P$ has length at least $\e \Delta$.

\medskip
\noindent \textbf{[Column shortcut.]}
    Let $\eta$ be a column in $\mathcal{T}$, defined to be the vertex set $N_\eta$ from the associated node of $\mathcal{T}^*$. 
    By construction, every vertex in $N_\eta$ is within distance $\e \Delta$ (with respect to $H[N_\eta]$) of the path $\pi_\eta$, and $\pi_\eta$ is a shortest path with respect to $H_\eta$.
\end{proof}

Notice that the definition of an $\e\Delta$-gridtree (column shortcut property) only requires that each vertex $v$ in a column $\eta$ is within distance $2 \cdot \e\Delta$ of the path $\pi_\eta$; however, the proof above gives a better guarantee on the distance.
\begin{observation}
\label{obs:thin-shortcut}
Let $v$ be a vertex in a column $\eta$ in the $\e\Delta$-gridtree $\mathcal{T}$, and let $\pi_\eta$ be the path in $\eta$ guaranteed by the column shortcut property.
Then $v$ is within distance $\e\Delta$ (with respect to $\eta$) of $\pi_\eta$.
\end{observation}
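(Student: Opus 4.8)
The plan is to unwind the definitions and observe that the statement is essentially immediate from the way the column $\eta$ is built in the proof of \Cref{lem:thin-gridtree}, modulo one small subtlety about which graph the distances are measured in. Recall that, as a vertex set, the column $\eta$ is exactly the neighborhood $N_\eta = \set{v \in V(H^*_\eta) : \dist_{H^*_\eta}(v, \pi_\eta) \le \e\Delta}$ associated to the corresponding node of the recursion tree $\mathcal{T}^*$, and that the shortest path $\pi_\eta$ guaranteed by the column shortcut property (\Cref{def:column-shortcut}) is this same path $\pi_\eta \subseteq N_\eta$. Thus every $v \in \eta$ satisfies $\dist_{H^*_\eta}(v, \pi_\eta) \le \e\Delta$ by definition. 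Since the gridtree here is an $\e\Delta$-gridtree (so $w = \e\Delta$), the column shortcut property only promises the weaker bound $\dist_{H[\eta]}(v, \pi_\eta) \le 2w = 2\e\Delta$; the content of the observation is that this factor of $2$ can be removed, i.e.\ that the bound $\e\Delta$ holds even when distances are restricted to the induced subgraph $H[\eta]$.

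To carry this out I would fix $v \in \eta$, pick a vertex $p$ of $\pi_\eta$ closest to $v$ in $H^*_\eta$, and let $Q$ be a corresponding shortest $v$–$p$ path in $H^*_\eta$, so that $\len{Q} = \dist_{H^*_\eta}(v, \pi_\eta) \le \e\Delta$. The key point is that $N_\eta$ is ``prefix-closed'' along such a path: for any vertex $w$ on $Q$, the suffix of $Q$ from $w$ to $p$ is a walk in $H^*_\eta$ of length at most $\len{Q} \le \e\Delta$ that ends on $\pi_\eta$, so $\dist_{H^*_\eta}(w, \pi_\eta) \le \e\Delta$ and hence $w \in N_\eta = \eta$. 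Therefore every vertex of $Q$ lies in $\eta$, so every edge of $Q$ lies in the induced subgraph $H[\eta]$; that is, $Q$ is a path in $H[\eta]$. Since $p \in \pi_\eta$, this gives $\dist_{H[\eta]}(v, \pi_\eta) \le \len{Q} \le \e\Delta$, which is exactly the claimed bound.

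I do not expect a genuine obstacle here: the observation is simply a tightening of the generic column shortcut guarantee that holds because the \emph{specific} path $\pi_\eta$ output by $\textsc{SelectPaths}$ and the \emph{specific} choice of the column as the $\e\Delta$-ball around $\pi_\eta$ are more precise than the abstract gridtree definition requires. The only place to be careful is the distinction between $\dist_{H_\eta}$ (equivalently $\dist_{H^*_\eta}$) and $\dist_{H[\eta]}$, and that is dealt with entirely by the prefix-closedness argument above; once that is in place the observation follows directly.
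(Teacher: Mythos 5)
Your proof is correct and follows the same route as the paper, which simply points back to the construction of $N_\eta$ in the proof of Lemma~\ref{lem:thin-gridtree} and asserts the $\e\Delta$ bound ``by construction.'' Your prefix-closedness argument (every vertex on a shortest $v$--$\pi_\eta$ path in $H^*_\eta$ of length at most $\e\Delta$ is itself within $\e\Delta$ of $\pi_\eta$, hence lies in $N_\eta$, so the path survives in $H[\eta]$) correctly fills in the one detail the paper glosses over, namely that the $\e\Delta$-neighborhood is defined via distances in $H^*_\eta$ while the observation measures distance in the induced subgraph $H[\eta]$.
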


The gridtree $\mathcal{T}$ does \emph{not} guarantee that every vertex within $\e\Delta$ distance of the external face is assigned to a column. It only guarantees that every vertex on the external face itself is assigned to a column. 
 To fix this, we simply assign every vertex within $\e\Delta$ distance of the external face to the closest column in $\mathcal{T}$.
 We describe this approach in the \EMPH{$\textsc{Gridtree}(H)$} algorithm, which takes as input a planar graph $H$.

\begin{figure}[ht!]
\centering\small
\begin{algorithm}
\textul{$\mathsc{Gridtree}(H)$:} \+
\\  $\mathcal{T} \gets$ gridtree associated with $\textsc{SelectPaths}(G)$
\\
\\  \Comment{Select vertices to add to columns}
\\  $X \gets \varnothing$
\\  for each column $\eta$ of $\mathcal{T}$: \+
\\      $X \gets X \cup \set{v \in V(H) \mid \dist_H(v, \pi_\eta) \leq 2 \e \Delta}$ \-
\\
\\  \Comment{Assign vertices to closest $\eta$}
\\  initialize $\eta^+ \gets \eta$ for each column $\eta \in V(\mathcal{\mathcal{T}})$
\\  for each vertex $v$ in $X$: \+
\\      $\eta_v \gets \argmin_{\eta \in V(\mathcal{T})} \dist(v, \eta)$
\\      (breaking ties based on a fixed ordering of $V(\mathcal{T})$)
\\      $\eta_v^+ \gets \eta_v^+ \cup \set{v}$ \-
\\
\\ \Comment{Define the gridtree}
\\ $\mathcal{T}^+ \gets \mathcal{T}$
\\ assign each node $\eta \in V(\mathcal{T}^+)$ to the column $\eta^+$
\\ assign each edge $(\eta, \eta') \in E(\mathcal{T}^+)$ to the leftover set $(\eta, \eta') \in E(\mathcal{T})$, excluding the vertices in $X$
\\ return $\mathcal{T}^+$
\end{algorithm}
\end{figure}

\begin{claim}\label{claim:expand-external}
    Let $v$ be a vertex in $V(H)$. If $v$ is in a column $\eta$ in $\mathcal{T}$, then $v$ is in the associated column $\eta^+$ in $\mathcal{T}^+$. If $v$ is in some leftover set $(\eta_1, \eta_2)$ in $\mathcal{T}$, then in $\mathcal{T}^+$ the vertex $v$ is either in the column $\eta_1^+$, the column $\eta_2^+$, or the leftover set $(\eta_1^+, \eta_2^+)$.
\end{claim}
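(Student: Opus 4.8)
The plan is to trace carefully through the definition of $\textsc{Gridtree}(H)$ and check the three cases directly. The key observation is that the set $X$ of ``extra'' vertices collected in the first loop consists exactly of the vertices that get re-assigned: every vertex of $X$ is placed into some column $\eta_v^+$, and the leftover sets of $\mathcal{T}^+$ are obtained from those of $\mathcal{T}$ by deleting precisely the vertices of $X$. So the proof reduces to understanding, for each vertex $v$, (a) whether $v \in X$, and (b) if so, which column $\eta_v$ is chosen as its nearest column.

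First I would handle the case where $v$ lies in a column $\eta$ of $\mathcal{T}$. By Observation~\ref{obs:thin-shortcut}, $v$ is within distance $\e\Delta$ of $\pi_\eta$ (with respect to $\eta$, hence with respect to $H$), so in particular $\dist_H(v,\pi_\eta) \le 2\e\Delta$, which means $v \in X$. Now in the assignment loop, $v$ is placed into $\eta_v^+$ where $\eta_v = \argmin_{\eta' \in V(\mathcal{T})}\dist(v,\eta')$. But $v \in \eta$ gives $\dist(v,\eta) = 0$, so the argmin (with the fixed tie-breaking rule) selects $\eta$ itself — that is, $\eta_v = \eta$. Hence $v \in \eta^+$, as claimed. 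The only subtlety here is to make sure the notion of $\dist(v,\eta)$ used in the algorithm is distance to the vertex set $\eta = N_\eta$, for which membership forces distance zero; this is immediate.

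Next I would handle the case where $v$ lies in a leftover set $(\eta_1,\eta_2)$ of $\mathcal{T}$, where (say) $\eta_1 = \parent(\eta_2)$ in $\mathcal{T}^*$, i.e. $v \in L_{\eta_2}$. There are two subcases. If $v \notin X$, then the assignment loop does not touch $v$, and since the leftover set $(\eta_1^+,\eta_2^+)$ in $\mathcal{T}^+$ is defined to be $(\eta_1,\eta_2)$ minus $X$, we get $v \in (\eta_1^+,\eta_2^+)$. If $v \in X$, then $v$ is assigned to $\eta_v^+$ where $\eta_v$ is the nearest column; I must argue $\eta_v \in \{\eta_1,\eta_2\}$. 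This is where I expect the main obstacle to lie: I need to show that any column $\eta'$ with $\dist_H(v,\pi_{\eta'}) \le 2\e\Delta$ that could possibly be closest to $v$ must be $\eta_1$ or $\eta_2$. The intuition is that, because $v \in L_{\eta_2}$, any path in $H$ from $v$ to a vertex outside $H^*_{\eta_2}$ must cross the neighborhood $N_{\eta_2} = \eta_2$ (by the structure of $\textsc{SelectPaths}$ — leftover components are cut off from the rest by $N$), and any path into deeper descendants of $\eta_2$ must, by the column-adjacency structure, also pass through $\eta_2$; meanwhile a short path toward an ancestor exits through $\eta_1 = N_{\eta_1}$. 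So the only columns reachable from $v$ without already passing through $\eta_1$ or $\eta_2$ are $\eta_1$ and $\eta_2$ themselves, and in particular the nearest column (and indeed any column within distance $2\e\Delta$, via the triangle inequality applied to the crossing point) is one of these two. Making this rigorous requires invoking the column-adjacency property (Definition~\ref{def:column}) together with the fact (from $\textsc{SelectPaths}$) that $H^*_\eta \setminus N_\eta$ splits into $L_\eta$ and the children's subgraphs, so that $L_{\eta_2}$ is separated from everything else by $\eta_1 \cup \eta_2$. Once this separation statement is in hand, the case analysis closes: $\eta_v \in \{\eta_1,\eta_2\}$ forces $v \in \eta_1^+$ or $v \in \eta_2^+$, completing the proof.
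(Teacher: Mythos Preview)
Your proposal is correct and follows essentially the same approach as the paper: split on whether $v$ lies in a column or a leftover set, and in the latter case use the column-adjacency property to conclude that the first step out of the leftover set lands in $\eta_1$ or $\eta_2$, forcing $\eta_v \in \{\eta_1,\eta_2\}$. Your Case~2b is slightly more elaborate than necessary---you do not need the $\dist_H(v,\pi_{\eta'}) \le 2\e\Delta$ restriction or the detour through the structure of $\textsc{SelectPaths}$; column adjacency applied to the shortest path from $v$ to its nearest column already gives the conclusion in one line, which is exactly what the paper does.
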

\begin{proof}
    If $v$ is in a column of $\mathcal{T}$, the claim is immediate from the construction. Suppose $v$ is in a leftover set $(\eta_1, \eta_2)$ of $\mathcal{T}$. If $v$ is not in the set $X$ during the call to $\textsc{Gridtree}(H)$, then $v$ stays in the associated leftover set $(\eta_1^+, \eta_2^+)$ in $\mathcal{T}^+$. Otherwise, $v$ is assigned to column $\eta^+$ in $\mathcal{T}^+$, where $\eta$ is the the closest column in $\mathcal{T}$ to $v$. Let $P$ be the shortest path from $v$ to $\eta$. By the column adjacency property of $\mathcal{T}$, the first vertex on $P$ that leaves the leftover set $(\eta_1, \eta_2)$ is either in $\eta_1$ or $\eta_2$. Thus, the closest column to $v$ is either $\eta_1$ or $\eta_2$.
\end{proof}

The following lemma is a restatement of \Cref{L:gridtree}.

\begin{lemma}
    The tree $\mathcal{T}^+$ returned by $\textsc{Gridtree}(H)$ is a gridtree for $H$, in which each vertex within distance $\e \Delta$ of an external vertex of $H$ is assigned to a column.
\end{lemma}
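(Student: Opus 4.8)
The plan is to establish the three defining properties of a $w$-gridtree with $w=\e\Delta$ for $\mathcal{T}^+$, together with the coverage claim, bootstrapping entirely from what is already known about the ``thin'' gridtree $\mathcal{T}$ (\Cref{lem:thin-gridtree}, \Cref{obs:thin-shortcut}) and the migration rule of \Cref{claim:expand-external}. The structural facts to lean on throughout: $\mathcal{T}^+$ has the same underlying tree as $\mathcal{T}$; each column $N_\eta$ of $\mathcal{T}$ is contained in the corresponding column $\eta^+$ of $\mathcal{T}^+$; and by \Cref{claim:expand-external} a vertex only ever moves ``upward'', from a leftover set to one of its two incident columns. In particular $\eta^+ \subseteq N_\eta \cup L_\eta \cup \bigcup_{\xi \text{ child of } \eta} L_\xi \subseteq V(H^*_\eta)$, so the subgraph $H_\eta$ computed for $\mathcal{T}^+$ is contained in the subgraph $H_\eta = H^*_\eta$ computed for $\mathcal{T}$ (while still containing $\pi_\eta \subseteq N_\eta$).

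\emph{Coverage and column adjacency.} Coverage is immediate: if $v$ lies within $\e\Delta$ of an external vertex $u$ of $H$, then by \Cref{lem:thin-gridtree} $u$ belongs to some column $N_\eta$ of $\mathcal{T}$, and by \Cref{obs:thin-shortcut} $u$ is within $\e\Delta$ of $\pi_\eta$; hence $v$ is within $2\e\Delta$ of $\pi_\eta$ in $H$, so $v \in X$ and $\textsc{Gridtree}$ assigns $v$ to a column of $\mathcal{T}^+$. For column adjacency, take an edge $(u,v)$ of $H$ and apply column adjacency of $\mathcal{T}$ to conclude that $u,v$ lie in the same subset, in adjacent columns, or in an incident column--leftover pair; then a short case analysis over the possible destinations of $u$ and $v$ given by \Cref{claim:expand-external} shows that one of the same three configurations holds in $\mathcal{T}^+$, since migration only moves a vertex to an incident column and tree-adjacency is unchanged.

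\emph{Column shortcut (the main obstacle).} For a column $\eta^+$ I reuse the very same path $\pi_\eta$. It stays a shortest path in the (possibly smaller) $H_\eta$ of $\mathcal{T}^+$ because that subgraph is contained in $H^*_\eta$ and still contains $\pi_\eta$. A vertex $v$ already in $N_\eta$ satisfies $\dist_{H[N_\eta]}(v,\pi_\eta) \le \e\Delta$ by \Cref{obs:thin-shortcut}, and enlarging the ambient subgraph from $N_\eta$ to $\eta^+$ only decreases this. The delicate case is a vertex $v$ \emph{newly added} to $\eta^+$: the algorithm only guarantees $\dist_H(v,\pi_\eta) \le 2\e\Delta$, whereas the gridtree definition asks for a bound on $\dist_{H[\eta^+]}(v,\pi_\eta)$, i.e.\ inside the column itself. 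The way out is to show that the \emph{entire} shortest $v$-to-$\pi_\eta$ path $P$ in $H$ lands inside $\eta^+$: every vertex $x$ on $P$ is within $2\e\Delta$ of $\pi_\eta$ (hence $x \in X$), and because $x$ lies on a shortest path from $v$ to $\pi_\eta$ and $\eta$ is the closest column to $v$, a triangle-inequality argument (with ties broken by the fixed ordering used by $\textsc{Gridtree}$) shows that $\eta$ is also the closest column to $x$, so $x$ too is assigned to $\eta^+$. Then $\dist_{H[\eta^+]}(v,\pi_\eta) \le |P| \le 2\e\Delta$, which also shows $\eta^+$ is connected. I expect this to be the crux, and it is worth fixing at the outset exactly which quantity $\textsc{Gridtree}$ minimizes when choosing $\eta_v$ (distance to $\pi_\eta$ rather than to the vertex set $N_\eta$), since the $2w$ bound depends on that choice.

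\emph{Column width.} Let $P$ be a path passing through $\eta^+$, say from $a$, adjacent in $H$ to a vertex $a'$ above $\eta$, to a vertex $b$ below $\eta$ in $\mathcal{T}^+$. Using the containments above, $a \in V(H^*_\eta)$, $b \in V(H^*_\eta) \setminus (V(L_\eta) \cup N_\eta)$ appears in a child of $\eta$, and $a' \in V(H^*_{\parent(\eta)})$. Let $P'$ be the longest suffix of $P$ contained in $H^*_\eta = H_\eta$; it ends at $b$, and exactly as in the proof of \Cref{lem:thin-gridtree}---but invoking column adjacency of $\mathcal{T}^+$, already proved---its first vertex is adjacent to a vertex of $V(H^*_{\parent(\eta)})$. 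Now \Cref{lem:safe-recursion} forces $P'$ to intersect $\pi_\eta$, and since $b \notin N_\eta$ has $\dist_{H^*_\eta}(b,\pi_\eta) \ge \e\Delta$, the portion of $P'$ from that intersection to $b$ already has length at least $\e\Delta$; hence $|P| \ge \e\Delta = w$. Combining the four parts, $\mathcal{T}^+$ is an $\e\Delta$-gridtree for $H$ in which every vertex within $\e\Delta$ of an external vertex lies in a column.
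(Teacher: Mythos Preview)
Your proposal is correct and follows essentially the same approach as the paper: both arguments bootstrap all three gridtree properties for $\mathcal{T}^+$ from those of $\mathcal{T}$ via \Cref{claim:expand-external}, reuse the same path $\pi_\eta$ for the column shortcut property, and handle coverage by the triangle inequality through an external vertex.

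One point worth noting: your column width argument is more involved than necessary. Rather than tracking $a$, $a'$, $b$ back into $H^*_\eta$ and invoking \Cref{lem:safe-recursion} again, the paper simply observes (from \Cref{claim:expand-external}) that a vertex is above/below $\eta^+$ in $\mathcal{T}^+$ only if it was above/below $\eta$ in $\mathcal{T}$; hence any path passing through $\eta^+$ already passes through $\eta$ in the sense of $\mathcal{T}$, and the column width property of $\mathcal{T}$ applies directly. This sidesteps the delicate positioning of $b$ that your argument needs. Conversely, your write-up of column shortcut is more careful than the paper's: you correctly isolate that the $2w$ bound must hold \emph{inside} $H[\eta^+]$ and that this hinges on the entire shortest $v$-to-$\pi_\eta$ path being absorbed into $\eta^+$, which both proofs justify via the fixed tie-breaking order. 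Your parenthetical about whether the algorithm minimises distance to $\pi_\eta$ or to $N_\eta$ is a fair caution---the paper's pseudocode writes $\dist(v,\eta)$ and its proof simply asserts the $2\e\Delta$ bound, so being explicit about this is a genuine improvement in rigor.
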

\begin{proof}
As every external vertex in $H$ is assigned to a column in $\mathcal{T}$, and (by \Cref{obs:thin-shortcut}) every such vertex is within distance $\e\Delta$ of a path $\pi_\eta$, triangle inequality implies that every vertex $v$ within distance $\e\Delta$ of some external vertex is within distance $2 \e \Delta$ of some $\pi_\eta$. 
Thus, $v$ is assigned to a column in $\mathcal{T}^+$. We now prove that $\mathcal{T}^+$ is a gridtree.

\medskip
\noindent \textbf{[Column adjacency.]} Let $(u,v) \in E(H)$.
If both $u$ and $v$ were in columns of $\mathcal{T}$, then \Cref{lem:thin-gridtree} implies that $u$ and $v$ are in adjacent columns of $\mathcal{T}$.
\Cref{claim:expand-external} implies that $u$ and $v$ are also in adjacency columns of $\mathcal{T}^+$.
Otherwise, suppose that $u$ is in some leftover set associated with the edge $(\eta_1, \eta_2)$ of $\mathcal{T}$.
By \Cref{lem:thin-gridtree}, the vertex $v$ is either associated with the leftover set $(\eta_1, \eta_2)$ in $\mathcal{T}$, or with the columns $\eta_1$ or $\eta_2$ in $\mathcal{T}$.
By \Cref{claim:expand-external}, the vertices $u$ and $v$ are associated with $(\eta_1^+, \eta_2^+)$ or $\eta_1^+$ or $\eta_2^+$ in $\mathcal{T}^+$.
In each of these cases, the edge $(u,v)$ satisfied the column adjacency property in $\mathcal{T}$.

\medskip
\noindent \textbf{[Column width.]} Let $\eta^+$ be a column in $\mathcal{T}^+$, corresponding to column $\eta$ in $\mathcal{T}$. Let $P$ be a path that passes through $\eta^+$.
Notice that every vertex above (resp. below) $\eta^+$ in $\mathcal{T}^+$ is above (resp. below) $\eta$ in $\mathcal{T}$: this follows from \Cref{claim:expand-external}.
Thus, $P$ passes through $\eta$ in $\mathcal{T}$, and so the column width property of $\mathcal{T}$ implies that it has length at least $\e\Delta$.

\medskip
\noindent \textbf{[Column shortcut.]} Let $\eta^+$ be a column in $\mathcal{T}^+$, corresponding to column $\eta$ in $\mathcal{T}$.
Every vertex $v$ in $\eta^+$ is within $2\e \Delta$ distance of $\pi_{\eta}$. Because $\textsc{Gridtree}(H)$ breaks ties based on a fixed ordering of $V(\mathcal{T})$ when assigning vertices to columns, every vertex in the shortest path between $v$ and $\pi_\eta$ is in $\eta^+$. We now claim $\pi_\eta$ is a shortest path with respect to $H_{\eta^+}$. Indeed, \Cref{claim:expand-external} implies that columns and leftover sets are below $\eta$ in $\mathcal{T}$ if and only if they are below $\eta^+$ in $\mathcal{T}$, from which we can conclude that $H_\eta$ is a superset of $H_{\eta^+}$. By column shortcut property for $\mathcal{T}$, path $\pi_\eta$ is a shortest path with respect to $H_{\eta}$, and thus is a shortest path with respect to $H_{\eta^+}$.
\end{proof}

\subsection{Constructing a gridtree hierarchy}
We summarize our multi-round strategy with the recursive algorithm $\textsc{GridtreeHierarchy}(G)$.

\begin{figure}[h!]
\label{F:gridtree-hierarachy}
    \centering\small
    \begin{algorithm}
    \textul{{$\textsc{GridtreeHierarchy}(G)$:}} \+
    \\  $\mathcal{T} \gets \textsc{Gridtree}(G)$
    \\  set $(G,\mathcal{T})$ to be the root of the hierarchy $\mathcal{H}$
    \\  for each connected component $H$ in a leftover set of $\mathcal{T}$: \+
    \\      $\mathcal{H}_H \gets \textsc{GridtreeHierarchy}(H)$ 
    \\      attach the root of $\mathcal{H}_H$ as a child of the node $(G,\mathcal{T})$ in $\mathcal{H}$ \-
    \\  return $\mathcal{H}$
    \end{algorithm}
\end{figure}

\begin{proof}[of Lemma~\ref{L:gridtree-planar}]
We prove that $\textsc{GridtreeHierarchy}(G)$ returns a gridtree hierarchy for $G$.
The layer nesting property is immediate from the construction. We now prove the layer width property.
Let $\mu = (H^\mu, \mathcal{T}^\mu)$ be some node in the hierarchy, and let $\mu' = (H^{\mu'}, \mathcal{T}^{\mu'})$ be the parent of $\mu$.
If a vertex $v$ in $H$ is adjacent to some vertex $u$ in $H^{\mu'}$, then we claim  that $v$ is an external vertex of $H^\mu$.
Indeed, the vertices in columns of $\mathcal{T}^{\mu'}$ induce a connected component that includes an external vertex of $H^{\mu'}$. 
When an external vertex of a planar graph is removed, its neighbors becomes external vertices of the resulting graph.
The graph $H^\mu$ is a connected component of the graph obtained by removing all columns of $\mathcal{T}^{\mu'}$. We conclude that $v$ is an external vertex of $H^\mu$.
\qed

\end{proof}


\section{Constructing tree cover from grid-like clustering}
\label{S:treecover-gridlike}

Before describing the construction, we prove two lemmas about gridtree hierarchies, which are easy consequences of the layer width and column width properties.

\begin{lemma}
\label{L:layer-depth}
    Any $\e \Delta$-gridtree hierarchy $\mathcal{H}$ for a graph $G$ with diameter $\Delta$ has depth $O(1/\e)$.
\end{lemma}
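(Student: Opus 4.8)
The depth of $\mathcal{H}$ is one more than the length of the longest root-to-leaf path $\mu_0,\mu_1,\dots,\mu_d$ in $\mathcal{H}$ (with $\mu_0$ the root and $\mu_{i+1}$ a child of $\mu_i$), so it suffices to show $d\le 1/\e+1$. By the layer nesting property the vertex sets are nested, $V(H^{\mu_d})\subseteq V(H^{\mu_{d-1}})\subseteq\cdots\subseteq V(H^{\mu_0})=V(G)$, and each $V(H^{\mu_{i+1}})$ lies inside one leftover set of $\mathcal{T}^{\mu_i}$ and is therefore disjoint from every column of $\mathcal{T}^{\mu_i}$. Fix any $x\in V(H^{\mu_d})$, write $Y_j$ for the set of outer vertices of $H^{\mu_j}$, and set $r_j\coloneqq\dist_{H^{\mu_j}}(x,Y_j)$ for $1\le j\le d$ (each $Y_j$ is nonempty: $H^{\mu_j}$ is a component of a leftover set of $\mathcal{T}^{\mu_{j-1}}$ and a proper subgraph of the connected graph $H^{\mu_{j-1}}$, so the column adjacency property forces an edge from $H^{\mu_j}$ to a column of $\mathcal{T}^{\mu_{j-1}}$). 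The lemma will follow from two facts: (a) $r_1\le\Delta$; and (b) $r_j\ge r_{j+1}+\e\Delta$ for $1\le j\le d-1$. Telescoping (b) and using $r_d\ge 0$ gives $r_1\ge(d-1)\e\Delta$, which together with (a) yields $(d-1)\e\Delta\le\Delta$, i.e.\ $d\le 1/\e+1$; hence the depth is $O(1/\e)$.

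Both (a) and (b) rest on one observation about how a path leaves these subgraphs: \emph{if $\mu'$ is a child of $\mu$ and $P$ is a path in $H^{\mu}$ from a vertex of $V(H^{\mu'})$ to a vertex not in $V(H^{\mu'})$, then the last vertex $y$ of the maximal prefix of $P$ contained in $V(H^{\mu'})$ is an outer vertex of $H^{\mu'}$, and that prefix is a path in $H^{\mu'}$.} To prove it, let $z$ be the vertex of $P$ immediately after $y$ (it exists since $P$ ends outside $V(H^{\mu'})$). By layer nesting $V(H^{\mu'})$ sits inside some leftover set $L$ of $\mathcal{T}^{\mu}$, so $y\in L$; if $z$ were also in $L$ it would lie in the same component of the subgraph induced by $L$ as $y$, namely in $H^{\mu'}$, contradicting maximality. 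Thus $z\notin L$, and the column adjacency property of $\mathcal{T}^{\mu}$ applied to the edge $(y,z)$ with $y$ in a leftover set forces $z$ into a column of $\mathcal{T}^{\mu}$; hence $y$ is adjacent to a column vertex, i.e.\ $y\in Y'$. Since $H^{\mu'}$ is an induced subgraph, the prefix $P[\cdot,y]$ is a path in $H^{\mu'}$.

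For (a): pick any vertex $z$ lying in a column of $\mathcal{T}^{\mu_0}$ and let $P$ be a shortest path in $G=H^{\mu_0}$ from $x$ to $z$, so $\len P\le\Delta$; as $z\notin V(H^{\mu_1})$, the observation gives $r_1\le\len{P[x,y]}\le\len P\le\Delta$. For (b): let $v^*\in Y_j$ be an outer vertex of $H^{\mu_j}$ nearest to $x$ in $H^{\mu_j}$ and let $P$ be a shortest path in $H^{\mu_j}$ from $x$ to $v^*$, so $\len P=r_j$; note $v^*$ itself lies in a column of $\mathcal{T}^{\mu_j}$ (layer width applied to $v^*$ at distance $0$), hence $v^*\notin V(H^{\mu_{j+1}})$, and the observation (with $\mu=\mu_j$, $\mu'=\mu_{j+1}$) produces $y\in Y_{j+1}$ with $\len{P[x,y]}\ge\dist_{H^{\mu_{j+1}}}(x,Y_{j+1})=r_{j+1}$. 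Moreover $y\in V(H^{\mu_{j+1}})$ is not in any column of $\mathcal{T}^{\mu_j}$, so the layer width property of $\mathcal{T}^{\mu_j}$ gives $\dist_{H^{\mu_j}}(y,v)>\e\Delta$ for every outer vertex $v$ of $H^{\mu_j}$; since $P[y,v^*]$ is a path in $H^{\mu_j}$ ending at $v^*\in Y_j$ we get $\len{P[y,v^*]}>\e\Delta$, and adding the two pieces of $P$ gives $r_j=\len{P[x,y]}+\len{P[y,v^*]}>r_{j+1}+\e\Delta$.

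The one place where care is genuinely needed is the bookkeeping of \emph{which graph each distance is measured in}: the subgraphs $H^{\mu_j}$ can have diameter far larger than $\Delta$, so a single shortest path in $G$ from a deep vertex to the outside may slice through many nested layers at once and does not by itself certify large depth. The fix, encoded above, is to measure $r_j$ intrinsically inside $H^{\mu_j}$ and telescope, invoking the global diameter $\Delta$ only in the outermost step (a); each individual descent then contributes a full $\e\Delta$ precisely because layer width guarantees that the columns just deleted from $H^{\mu_j}$ form a ``collar'' of thickness $\e\Delta$ separating $H^{\mu_{j+1}}$ from the outer vertices of $H^{\mu_j}$.
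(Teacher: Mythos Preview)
Your proof is correct and follows essentially the same strategy as the paper's: both argue via the layer width property that each descent in the hierarchy forces the intrinsic distance from a deep vertex to the outer boundary to grow by $\e\Delta$, bounding the depth by $O(1/\e)$; the paper phrases this as a forward induction (``distance $\le \alpha\e\Delta$ implies $\le \alpha$ more layers'') while you phrase it as a telescoping bound on $r_j$. Your treatment is in fact more careful than the paper's on the point you flagged---the paper's line ``every vertex has distance at most $\Delta$ from an outer vertex on layer~2'' tacitly requires exactly the argument you give for (a), since that distance must be measured inside $H^{\mu_1}$ rather than in $G$.
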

\begin{proof}
We prove by induction that for any
node $\mu$ of $\mathcal{H}$, every vertex $v$ that has distance at most $\alpha\cdot\e\Delta$ to an outer vertex of $\mu$ belongs to some column of a gridtree at most $\alpha$ layers deeper than $\mu$.
To prove the lemma, we substitute $\alpha = 1/\e$; as $G$ has diameter $\Delta$, every vertex has distance at most $\Delta$ from an outer vertex on layer 2 (recall that the root node at layer 1 has no outer vertices). We conclude that gridtree hierarchy $\mathcal{H}$ with depth $O(1/\e)$ much have every vertex assigned to some column of a gridtree in $\mathcal{H}$.

Consider an arbitrary vertex $v$ in $\mu$ and a shortest path $\pi$ from $v$ to the exterior of $\mu$, whose length is at most $\alpha\cdot\e\Delta$.
By layer width property (guaranteed by Lemma~\ref{L:gridtree}), every vertex in the prefix of $\pi$ that has distance at most $\e\Delta$ to an outer vertex of $\mu$ belongs to some column of the gridtree $\mathcal{T}^\mu$.
If $v$ belongs to some column of $\mathcal{T}^\mu$ as well then we are done.
Otherwise, the suffix of $\pi$ that lies completely in a leftover component $L$ containing $v$ has length at most $(\alpha-1)\cdot\e\Delta$.  
By induction, 
$v$ belongs to some column of a gridtree $\alpha-1$ layers deeper than $L$.
This shows that $v$ belongs to a column of some gridtree at most $\alpha$ layers from $\mu$.
\end{proof}

\begin{claim}\label{claim:far-layers}
    Let $\eta$ and $\eta'$ be two columns in a gridtree $\mathcal{T}$ of a graph $H$. Suppose that $\eta$ is an ancestor of $\eta'$, and that the path between $\eta$ and $\eta'$ in $\mathcal{T}$ consists of $m$ columns $\eta_1, \ldots, \eta_m$ strictly between $\eta$ and $\eta'$. Then any path $P$ in $H$ between a vertex $u$ in $\eta$ and a vertex $v$ in $H_{\eta'}$ has length at least $m \cdot \e \Delta$.
\end{claim}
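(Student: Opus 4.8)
The plan is to extract from $P$ a family of $m$ pairwise edge-disjoint subpaths $Q_1,\dots,Q_m$, one for each column $\eta_i$, such that each $Q_i$ \emph{passes through} $\eta_i$ in the sense of the column width property (\Cref{def:column-width}). That property forces $\len{Q_i}\ge\e\Delta$, and since edge-disjoint subpaths of $P$ have total length at most $\len{P}$, this immediately gives $\len{P}\ge m\cdot\e\Delta$. (The case $m=0$ is vacuous, so assume $m\ge 1$; recall that the path from $\eta$ down to $\eta'$ in $\mathcal{T}$ is $\eta,\eta_1,\dots,\eta_m,\eta'$, with each column the parent of the next.)

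To set up, write $P=(x_0,\dots,x_k)$ with $x_0=u\in\eta$ and $x_k=v\in H_{\eta'}$, and for a vertex $x$ of $H$ let $\phi(x)$ be the set of $\mathcal{T}$-nodes given by the column containing $x$ (a singleton) or by the two endpoints of the leftover set containing $x$. The column adjacency property (\Cref{def:column-adj}) says precisely that for every edge $(x,y)$ of $H$, the set $\phi(x)\cup\phi(y)$ is either a single node or (the two endpoints of) a single edge of $\mathcal{T}$; thus $\phi$ turns $P$ into a walk in $\mathcal{T}$. Let $W_i$ be the vertex set of the subtree of $\mathcal{T}$ rooted at $\eta_i$. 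Unwinding the definition of $H_{\eta'}$ (\Cref{def:column-shortcut}) and using that $\eta'$ is a strict descendant of $\eta_i$, I would check that $v$ lies below $\eta_i$ with $\phi(v)\subseteq W_i$, whereas $u$ lies in the column $\eta$, a strict ancestor of $\eta_i$, so $u$ is not below $\eta_i$ and $\phi(u)\not\subseteq W_i$.

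Next, for each $i$ define $e_i:=\max\{\,j:\phi(x_j)\not\subseteq W_i\,\}$ and $\alpha_i:=\max\{\,j:x_j\text{ is not below }\eta_i\,\}$; both are well defined (the value $j=0$ witnesses membership) and $e_i,\alpha_i<k$. Inspecting the single edge $\phi(x_{e_i})\cup\phi(x_{e_i+1})$ — which must be the unique edge of $\mathcal{T}$ joining $W_i$ to its complement, namely $(\parent(\eta_i),\eta_i)$ — I get that $x_{e_i+1}$ is in the column $\eta_i$ and $x_{e_i}$ is above $\eta_i$ (it lies in $\parent(\eta_i)$ or in the leftover set between $\parent(\eta_i)$ and $\eta_i$). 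A parallel inspection of $\phi(x_{\alpha_i})\cup\phi(x_{\alpha_i+1})$ gives that $x_{\alpha_i}$ is in the column $\eta_i$ and $x_{\alpha_i+1}$ is below $\eta_i$. Since the column $\eta_i$ is not below $\eta_i$ and (as $\eta_i=\parent(\eta_{i+1})$) is not contained in $W_{i+1}$, these facts yield $e_i+1\le\alpha_i\le e_{i+1}$; hence the index intervals $[e_i+1,\alpha_i]$ are pairwise disjoint and occur in order along $P$. Setting $Q_i:=P[x_{e_i+1},x_{\alpha_i+1}]$, the subpath $Q_i$ starts at a vertex of the column $\eta_i$ that is adjacent (in $H$) to the vertex $x_{e_i}$ above $\eta_i$, and it ends at the vertex $x_{\alpha_i+1}$ below $\eta_i$, so $Q_i$ passes through $\eta_i$ and $\len{Q_i}\ge\e\Delta$; moreover the $Q_i$ use disjoint blocks of edges of $P$, so $\len{P}\ge\sum_{i=1}^{m}\len{Q_i}\ge m\cdot\e\Delta$.

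The hard part will be the bookkeeping around the two flavors of ``above/below $\eta_i$'' at play: the purely tree-theoretic one (used for $W_i$, for $H_{\eta'}$, and for the $\phi$-walk) versus the one appearing in the statement of the column width property. These can disagree only for vertices sitting in the leftover sets incident to $\eta_i$, so the crux is to verify that the single-edge conclusion of the column adjacency property forces $x_{e_i+1}$ and $x_{\alpha_i}$ to land in the \emph{column} $\eta_i$ (not in an incident leftover set), and that the leftover set between $\parent(\eta_i)$ and $\eta_i$ indeed counts as lying above $\eta_i$. Once those boundary cases are pinned down, the remaining estimates are routine.
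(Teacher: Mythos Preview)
Your proposal is correct and follows essentially the same approach as the paper: chop $P$ into $m$ edge-disjoint subpaths, one passing through each intermediate column $\eta_i$, and invoke the column width property on each. The only cosmetic difference is that the paper uses \emph{first} occurrences---it lets $p_i$ be the first vertex of $P$ lying in the column $\eta_i$ (for $i=1,\dots,m+1$) and takes the subpaths $P[p_i,p_{i+1}]$---whereas you use \emph{last} occurrences via the indices $e_i$ and $\alpha_i$. The paper's version is shorter because the single marker $p_i$ simultaneously serves as the ``adjacent to something above $\eta_i$'' endpoint for one subpath and the ``below $\eta_{i-1}$'' endpoint for the previous one; your two-marker setup ($e_i,\alpha_i$) is a bit more bookkeeping but makes the edge-disjointness and the boundary cases with leftover sets more explicit. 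Either way, the substance is the same.
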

\begin{proof}
    For ease of notation, we refer to $\eta$ as $\eta_0$ and refer to $\eta'$ as $\eta_{m+1}$. By the column adjacency property, $P$ touches at least one vertex in each of $\eta_1, \ldots, \eta_{m+1}$. For every $i \in \set{1, \ldots, m+1}$, let $p_i$ be the first vertex in $P$ that is in column $\eta_i$, as $P$ travels from $u$ to $v$. The path $P$ can be written as a concatenation of subpaths $P[u, p_1] \circ P[p_1, p_2] \circ \ldots \circ P[p_{m+1}, v]$.
    Column adjacency property implies that every $p_i$ is adjacent (in $H$) to a vertex above $\eta_i$ in $\mathcal{T}$, and $p_{i+1}$ is a vertex below $\eta_i$ in $\mathcal{T}$. From the column width property, we conclude that each of the $m$ subpaths $P[p_i, p_{i+1}]$ has length at least $\e\Delta$.
\end{proof}

\subsection{Construction}

Let $G$ be a graph with an $\e\Delta$-gridtree hierarchy $\mathcal{H}$. 
Recall from \Cref{S:gridtree-scatterting} that $G$ has an $(O(\e), O(\e^{-1}))$-shortcut partition $\mathcal{C}$ with $O(1)$ distortion, associated with $\mathcal{H}$, 
where the clusters satisfy the following properties: 

\begin{itemize}
    \item \textnormal{[Diameter.]}
    There exists some constant \EMPH{$\gamma$} $\ge1$ such that $\delta_C(u,v) \le \gamma \cdot \e\Delta$ for any $u$ and $v$ in the same cluster $C$.
    
    \item \textnormal{[Low-hop.]}
    $\cost_{O(1), \mathcal{C}}{(u,v)} \le O(\e^{-1})$ for any $u$ and $v$ in $G$.%

    \item \textnormal{[Cluster ordering.]}
    For every node $(H, \mathcal{T})$ in the hierarchy $\mathcal{H}$ and for every column $\eta$ in $\mathcal{T}$, there is an ordered set of \EMPH{cluster centers} $c_1, \ldots, c_m \in \eta$, one from each cluster containing vertices of $\eta$, such that for every pair $i, j \in \set{1, \ldots, m}$, we have $d_{H_\eta}(c_i, c_j) \geq \abs{i - j}\cdot \e\Delta$.
\end{itemize}

We use partition $\mathcal{C}$ to build a forest cover for $G$. To construct the forest cover, we only use the cluster ordering property. In a later section (cf. Section~\ref{S:treewidth-embedding}), we show that $G$ can be embedded into a bounded-treewidth graph; this proof will use the fact that a forest cover for $G$ can be built using a partition with the diameter and low-hop properties. In this section, we use the cluster centers associated with $\mathcal{C}$ in an algorithm \EMPH{$\textsc{CoverGridtree}(H, \mathcal{T})$} (see Figure~\ref{F:cover-gridtree}), which takes as input a node $(H, \mathcal{T})$ from the hierarchy $\mathcal{H}$, where $H$ is a subgraph and $\mathcal{T}$ is a gridtree on $H$.
We show that this procedure returns a set of $O(\e^{-2})$ forests (cf. \Cref{lem:gridtree-forest}), and that those forests that preserve distances for vertices in columns of $\mathcal{T}$ (implicit in \Cref{lem:hierarchy-distortion}).

\begin{figure}[h!]
\centering\small
\begin{algorithm}
\textul{$\textsc{CoverGridtree}(H, \mathcal{T})$}: \+
\\  \Comment{Create forests for each column}
\\  for each column $\eta$ in $\mathcal{T}$: \+
\\      $c_1, \ldots, c_m \gets$ cluster centers for $\eta$ from the shortcut partition $\mathcal{C}$
\\      for each $c_i \gets c_1, \ldots, c_m$: \+
\\          $B_i \gets$ ball of radius $(\gamma + 1)\Delta$ with respect to $H_\eta$, centered at $c_i$
\\          $T^\eta_i \gets$ SSSP tree (w.r.t.\ $H_\eta$) of $B_i$ rooted at $c_i$ \-
\\      for $j \gets 0, \ldots, (3\gamma + 2)\e^{-1}-1$: \+
\\          $F^{\eta}_j \gets$ forest containing trees $T^\eta_{j + k\cdot (3\gamma + 2)\e^{-1}}$ for every integer $k$
\\          \Comment{Note: If there are fewer than $j$ portals on $\pi_\eta$, then $F^\eta_j = \varnothing$}\-\-
\\
\\  \Comment{Combine disjoint forests from columns that are far apart}
\\  for each $\ell \gets 0, \ldots, (\gamma + 2) \e^{-1}-1$:\+
\\      for each $j \gets 0, \ldots (3\gamma + 2)\e^{-1}-1$: \+
\\          $F^{\ell}_j \gets \bigcup_\eta F^\eta_j$ over column $\eta$ whose level equals to $\ell \pmod{(\gamma + 2)\e^{-1}}$ in $\mathcal{T}$\-\-
\\
\\  return $\{F^\ell_j\}$ where $\ell \in [(\gamma + 2)\e^{-1}]$ and $j \in [(3\gamma + 2)\e^{-1}]$
\end{algorithm}
\caption{Algorithm $\textsc{CoverGridtree}(H, \mathcal{T})$ constructs a forest cover for vertices in columns of $\mathcal{T}$.}
\label{F:cover-gridtree}
\end{figure}

\begin{lemma}\label{lem:gridtree-forest}
    The procedure $\textsc{CoverGridtree}(H, \mathcal{T})$ returns a set of $\kappa$ spanning forests on $H$, for some $\kappa = O(\e^{-2})$.
\end{lemma}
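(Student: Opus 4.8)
The plan is to check the three things bundled into the statement: that there are $\kappa = O(\e^{-2})$ objects returned, that each is a subgraph of $H$ (so it is \emph{spanning}, using no Steiner vertices or edges), and that each is genuinely a forest, i.e.\ the trees comprising it are pairwise vertex-disjoint. The first two points are pure bookkeeping. The returned family is indexed by $\ell \in [(\gamma+2)\e^{-1}]$ and $j \in [(3\gamma+2)\e^{-1}]$ (rounding $\e$ if needed so these are integers), so $\kappa = (\gamma+2)(3\gamma+2)\e^{-2} = O(\e^{-2})$ since $\gamma$ is the absolute constant from the diameter property of the shortcut partition. And each tree $T^\eta_i$ is a shortest-path tree computed inside $H_\eta$, which is an induced subgraph of $H$; hence every $T^\eta_i$, and therefore every $F^\ell_j$, is a subgraph of $H$. (Some $F^\ell_j$ may be empty, as flagged in the pseudocode; that is harmless.)

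For the forest property I would split into two cases according to whether two trees of a common $F^\ell_j$ sit in the same column. Suppose first that $T^\eta_i$ and $T^\eta_{i'}$ both lie in a column $\eta$ with $i \ne i'$; the grouping rule inside $\textsc{CoverGridtree}$ forces $i \equiv i' \equiv j \pmod{(3\gamma+2)\e^{-1}}$, hence $|i - i'| \ge (3\gamma+2)\e^{-1}$, and the cluster ordering property gives $d_{H_\eta}(c_i, c_{i'}) \ge |i-i'|\cdot\e\Delta \ge (3\gamma+2)\Delta$. Since $T^\eta_i$ and $T^\eta_{i'}$ are shortest-path trees of balls of radius $(\gamma+1)\Delta$ about $c_i$ and $c_{i'}$ taken in $H_\eta$, a shared vertex $v$ would give $d_{H_\eta}(c_i, v), d_{H_\eta}(c_{i'}, v) \le (\gamma+1)\Delta$, whence $d_{H_\eta}(c_i, c_{i'}) \le (2\gamma+2)\Delta < (3\gamma+2)\Delta$ --- a contradiction. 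So each $F^\eta_j$ is already a forest.

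Now suppose $T^\eta_i$ and $T^{\eta'}_{i'}$ come from distinct columns $\eta \ne \eta'$; by the merging loop both have level $\equiv \ell \pmod{(\gamma+2)\e^{-1}}$ in $\mathcal{T}$. If $\eta$ and $\eta'$ are incomparable in the tree $\mathcal{T}$, then unwinding the definition of $H_\eta$ --- built from the columns and leftover sets in the subtree of $\eta$, plus the leftover set on the edge from $\eta$ to its parent --- shows that $V(H_\eta)$ and $V(H_{\eta'})$ are disjoint, so $T^\eta_i \subseteq H_\eta$ and $T^{\eta'}_{i'} \subseteq H_{\eta'}$ can share no vertex. If instead $\eta$ is a strict ancestor of $\eta'$ (the reverse case is symmetric), then since the two columns share a level modulo $(\gamma+2)\e^{-1}$ and $\eta \ne \eta'$, the number $m$ of columns strictly between them along the tree path is at least $(\gamma+2)\e^{-1} - 1$. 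A shared vertex $v$ would satisfy $v \in V(H_{\eta'})$ and $d_{H_\eta}(c_i, v) \le (\gamma+1)\Delta$; but the $H_\eta$-path realizing $d_{H_\eta}(c_i, v)$ is in particular a path in $H$ between $c_i \in \eta$ and $v \in H_{\eta'}$, so \Cref{claim:far-layers} forces its length to be at least $m\cdot\e\Delta \ge (\gamma+2)\Delta - \e\Delta > (\gamma+1)\Delta$ (using $\e < 1$), again a contradiction. Hence any two trees placed in the same $F^\ell_j$ are vertex-disjoint.

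I expect the delicate step to be the cross-column analysis: realizing that incomparable columns automatically have disjoint underlying subgraphs, so the level-residue filter only has to defeat ancestor/descendant clashes, and then converting ``same level modulo $(\gamma+2)\e^{-1}$'' into enough intervening columns for \Cref{claim:far-layers} to exceed the ball radius $(\gamma+1)\Delta$. The moduli $(3\gamma+2)\e^{-1}$ and $(\gamma+2)\e^{-1}$ built into $\textsc{CoverGridtree}$ are exactly what make the two inequalities $2\gamma+2 < 3\gamma+2$ and $(\gamma+2)-\e > \gamma+1$ hold, so no part beyond this requires real work.
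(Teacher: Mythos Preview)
Your proof is correct and follows essentially the same approach as the paper: both split the disjointness argument into the same-column case (handled via the cluster ordering property and the modulus $(3\gamma+2)\e^{-1}$) and the cross-column case (handled by noting incomparable columns have disjoint $H_\eta$'s, while comparable ones are separated by enough intervening columns for \Cref{claim:far-layers} to exceed the ball radius). Your write-up is slightly more explicit about the spanning property and about unwinding the definition of $H_\eta$ in the incomparable case, but the argument is the same.
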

\begin{proof}
    From the procedure $\textsc{CoverGridtree}$,  \EMPH{$F_j^\eta$} is the union of spanning trees in the set $\set{T_{j+k\cdot (3\gamma + 2)\e^{-1}}^\eta : \text{integer $k$}}$.
    We first show that, for every column $\eta$ in $\mathcal{T}$, the graph $F_j^\eta$ is a spanning forest of $H$.  
    Let $T_{i_1}$ and $T_{i_2}$ be two trees from the set, with $T_{i_1} \neq T_{i_2}$. 
    We want to prove $T_{i_1}$ and $T_{i_2}$ are disjoint.
    Let $c_{i_1}$ and $c_{i_2}$ denote the cluster centers that are roots of $T_{i_1}$ and $T_{i_2}$, respectively.
    If $T_{i_1}$ and $T_{i_2}$ were not disjoint, then there would be some vertex $v$ in $T_{i_1} \cap T_{i_2}$ such that $\dist_{H_\eta}(c_{i_1}, v) \leq (\gamma + 1)\Delta$ and $\dist_{H_\eta}(c_{i_2}, v) \leq (\gamma + 1)\Delta$.
    But the choice of cluster centers guarantees that $c_{i_1}$ and $c_{i_2}$ are at distance at least $\abs{i_1 - i_2}\cdot \e\Delta > (3\gamma + 2) \Delta$ with respect to $H_\eta$. Thus, there are no vertices in both $T_{i_1}$ and $T_{i_2}$.

    Forest \EMPH{$F^\ell_j$} is the union of each spanning forest $F^\eta_j$ over the nodes $\eta$ whose level \!$\pmod {(\gamma + 2)\e^{-1}}$ in $\mathcal{T}$ is equal to $\ell$, defined in the procedure $\textsc{CoverGridtree}$.
    We now show that, for each level $\ell$ in the gridtree, the graph $F^\ell_j$ is a spanning forest. 
    Let $\eta_1$ and $\eta_2$ be two such nodes. We want to prove that $F^{\eta_1}_j$ and $F^{\eta_2}_j$ are disjoint. 
    If $\eta_1$ and $\eta_2$ are not in an ancestor-descendent relationship, then $V(H_{\eta_1})$ and $V(H_{\eta_2})$ are disjoint and so the claim holds.
    Otherwise, suppose without loss of generality that $\eta_1$ is an ancestor of $\eta_2$.
    There are at least $(\gamma + 2)\e^{-1} - 1$ columns separating $\eta_1$ and $\eta_2$ in the gridtree $\mathcal{T}$. By \Cref{claim:far-layers}, every vertex in $\eta_1$ is at least $((\gamma + 2)\e^{-1} - 1) \cdot \e\Delta > (\gamma + 1) \Delta$ distance away from the closest vertex in $H_{\eta_2}$.
    As every vertex in $F^{\eta_1}_j$ is within distance $(\gamma + 1)\Delta$ of some vertex in $\eta_1$, the forests $F^{\eta_1}_j$ and $F^{\eta_2}_j$ are disjoint. 
    Thus, each of the $(3\gamma+2)(\gamma+2)\e^{-2} = O(\e^{-2})$ graphs $F^\ell_j$ returned by $\textsc{CoverGridtree}(\mathcal{T})$ is a forest.
\end{proof}

As a brief aside, we can in fact prove something slightly stronger. We say that a two (spanning) trees are \EMPH{cluster-disjoint} with respect to $\mathcal{C}$ if no cluster in $\mathcal{C}$ intersects nontrivially with the two trees. The following lemma will not be used in the proof of $O_\e(1)$-size tree cover, but it will be helpful in Section~\ref{S:treewidth-embedding}, when we embed any planar graph into a bounded-treewidth graph.
\begin{lemma}
    Every forest returned by $\textsc{CoverGridtree}(H, \mathcal{T})$ contains trees that are pairwise cluster-disjoint with respect to $\mathcal{C}$.
\end{lemma}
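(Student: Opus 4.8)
The plan is to refine the case analysis in the proof of Lemma~\ref{lem:gridtree-forest}, upgrading its conclusion from ``the trees in a forest $F^\ell_j$ returned by $\textsc{CoverGridtree}$ are pairwise vertex-disjoint'' to ``no cluster of $\mathcal{C}$ meets two of them''. The one extra ingredient we need is a \emph{localization} property of the shortcut partition: every cluster $C\in\mathcal{C}$ is contained in a single column, or in a single leftover set, of the gridtree $\mathcal{T}$ currently being covered (write $(H,\mathcal{T})=(H^\mu,\mathcal{T}^\mu)$). Indeed, $\mathcal{C}$ is produced by invoking $\textsc{ClusterColumn}$ on individual columns, so each cluster lies inside one column of one gridtree $\mathcal{T}^{\mu_0}$ of the hierarchy; if $\mu_0=\mu$ this is already a column of $\mathcal{T}$, if $\mu_0$ is a strict descendant of $\mu$ then (by layer nesting) $C$ lies inside a single leftover component of $\mathcal{T}$, and otherwise $C$ is disjoint from $V(H)$ altogether. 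Since the subgraph $H_\eta$ of \Cref{def:column-shortcut} is by construction a union of columns and leftover sets of $\mathcal{T}$ (here one uses \Cref{claim:expand-external}, which says that the expansion step of $\textsc{Gridtree}$ only moves vertices laterally, so that the partition pieces of $\mathcal{T}$ respect the ancestor/descendant structure), localization gives: whenever $C$ meets $V(H_\eta)$ we have $C\subseteq V(H_\eta)$, and hence $G[C]\subseteq H_\eta$.

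With this in hand, fix a forest $F^\ell_j$ returned by $\textsc{CoverGridtree}(H,\mathcal{T})$ and two distinct trees $T^{\eta_1}_{i_1},T^{\eta_2}_{i_2}$ in it, rooted at cluster centers $c_1\in\eta_1$ and $c_2\in\eta_2$; suppose for contradiction that some $C\in\mathcal{C}$ contains vertices $v_1\in V(T^{\eta_1}_{i_1})$ and $v_2\in V(T^{\eta_2}_{i_2})$. Since $V(T^{\eta_k}_{i_k})\subseteq V(H_{\eta_k})$, localization yields $C\subseteq V(H_{\eta_1})\cap V(H_{\eta_2})$ together with $G[C]\subseteq H_{\eta_1}$ and $G[C]\subseteq H_{\eta_2}$. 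Now split into the three cases of Lemma~\ref{lem:gridtree-forest}. If $\eta_1,\eta_2$ are incomparable in $\mathcal{T}$, then $V(H_{\eta_1})\cap V(H_{\eta_2})=\varnothing$ (shown in Lemma~\ref{lem:gridtree-forest}), contradicting $C\ne\varnothing$. If $\eta_1=\eta_2=:\eta$, then the two distinct trees both lie in $F^\eta_j$, so their indices differ by a nonzero multiple of $(3\gamma+2)\e^{-1}$, hence $\abs{i_1-i_2}\ge(3\gamma+2)\e^{-1}$ and the cluster-ordering property gives $\dist_{H_\eta}(c_1,c_2)\ge(3\gamma+2)\Delta$; on the other hand $\dist_{H_\eta}(c_k,v_k)\le(\gamma+1)\Delta$ because $v_k$ lies in the radius-$(\gamma+1)\Delta$ ball $B_{i_k}$ around $c_k$ in $H_\eta$, and $\dist_{H_\eta}(v_1,v_2)\le\dist_{G[C]}(v_1,v_2)\le\gamma\e\Delta$ since $G[C]\subseteq H_\eta$ and $C$ has strong diameter at most $\gamma\e\Delta$, so the triangle inequality gives $\dist_{H_\eta}(c_1,c_2)\le(2\gamma+2)\Delta+\gamma\e\Delta<(3\gamma+2)\Delta$ (using $\e<1$) --- a contradiction.

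The last case is $\eta_1$ a strict ancestor of $\eta_2$ (the symmetric case is identical). The crucial point is to avoid routing through the cluster diameter --- a $+\gamma\e\Delta$ slack that is too large when $\e$ is close to $1$ --- and instead use $v_1$ directly. Since both trees appear in $F^\ell_j$, the levels of $\eta_1$ and $\eta_2$ agree modulo $(\gamma+2)\e^{-1}$, so at least $(\gamma+2)\e^{-1}-1$ columns lie strictly between them in $\mathcal{T}$; by \Cref{claim:far-layers}, every path in $H$ from $c_1\in\eta_1$ to a vertex of $H_{\eta_2}$ has length at least $\bigl((\gamma+2)\e^{-1}-1\bigr)\e\Delta=(\gamma+2)\Delta-\e\Delta>(\gamma+1)\Delta$. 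But by localization $v_1\in C\subseteq V(H_{\eta_2})$, whereas $\dist_H(c_1,v_1)\le\dist_{H_{\eta_1}}(c_1,v_1)\le(\gamma+1)\Delta$ since $v_1\in B_{i_1}$ --- contradicting the lower bound. This exhausts all cases, so no cluster of $\mathcal{C}$ meets two trees of $F^\ell_j$, which is the claim.

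I expect the main obstacle to be the bookkeeping behind the localization property: one must verify that the column-expansion step of $\textsc{Gridtree}$ (via \Cref{claim:expand-external}) together with the layer-nesting of the hierarchy guarantee that each $V(H_\eta)$ is an exact union of partition pieces of $\mathcal{T}$, so that a cluster --- which by construction already sits inside one such piece --- is swallowed whole whenever it touches $V(H_\eta)$. A secondary point requiring care is precisely the ancestor/descendant case above, where the naive estimate through the cluster diameter breaks for $\e$ near $1$ and must be replaced by the observation that one endpoint of $C$ is simultaneously contained in the far-away subgraph $H_{\eta_2}$ and in the radius-$(\gamma+1)\Delta$ ball of a root lying in $\eta_1$.
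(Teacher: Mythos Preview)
Your proof is correct and follows essentially the same approach as the paper's. Both arguments hinge on the same localization fact (the paper phrases it as ``every vertex in a subgraph $H_\eta$ belongs to a cluster that is fully contained within $H_\eta$''), and both handle the ancestor/descendant case by using localization to place $v_1$ itself inside $H_{\eta_2}$ and then invoking the distance lower bound from \Cref{claim:far-layers}, rather than detouring through the cluster diameter; your worry about the naive estimate failing for $\e$ near $1$ is thus not a point of divergence, since the paper avoids that estimate too.
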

\begin{proof}
    The proof follows that of \Cref{lem:gridtree-forest} almost verbatim. We use one additional fact: By the construction of clusters based on the \textsc{ClusterGridtree} algorithm from Section~\ref{SS:cluster},
    every vertex in a subgraph $H_\eta$ belongs to a cluster that is fully contained within $H_\eta$.

    We first prove that two trees $T_{i_1}$ and $T_{i_2}$ in a forest $F^\eta_j$ are cluster-disjoint. 
    This follows from the fact that each cluster has strong diameter $\gamma \e\Delta < \gamma \Delta$: if there was some cluster $C \subseteq H_\eta$ containing both a vertex $v_1$ in $T_{i_1}$ and a vertex $v_2$ in $T_{i_2}$, then triangle inequality would imply that $\dist_{H_\eta}(c_{i_1}, c_{i_2}) \leq \dist_{H_\eta}(c_{i_1}, v_1) + \dist_C(v_1, v_2) + \dist_{H_\eta}(v_2, c_{i_2}) \le (2(\gamma + 1) +\gamma \e)\Delta < (3\gamma + 2) \Delta$, a contradiction.

    We then prove that two forests $F^{\eta_1}_j$ and $F^{\eta_2}_j$ in a forest $F^{i}_j$ are cluster-disjoint. 
    If there was some cluster containing both a vertex in $F^{\eta_1}_j$ and $F^{\eta_2}_j$, then that cluster must be entirely in $H_{\eta_2}$ --- but we proved above that $F^{\eta_1}_j$ contains no vertices in $H_{\eta_2}$.
\end{proof}

We now construct a tree cover for $G$ by repeatedly apply $\textsc{CoverGridtree}$ to the gridtrees in each layer of the gridtree hierarchy $\mathcal{H}$. 
The procedure \EMPH{$\textsc{CoverHierarchy}(G, \mathcal{H})$} takes as input a depth-$d$ gridtree hierarchy $\mathcal{H}$ of the graph $G$ and returns a forest cover of size $O(d\cdot\e^{-2})$.

\begin{figure}[ht!]
\centering \small
\begin{algorithm}
\textul{$\textsc{CoverHierarchy}(G, \mathcal{H})$}\+
\\  $d \gets $ depth of gridtree hierarchy $\mathcal{H}$
\\  for each layer $\lambda \gets 1, \ldots, d$ of $\mathcal{H}$:\+
\\      for each node $(H^\mu, \mathcal{T}^\mu)$ in layer $\lambda$ of $\mathcal{H}$: \+
\\          \Comment{$\kappa$ is the constant from \Cref{lem:gridtree-forest}}
\\          $F^{\mu}_1, \ldots F^{\mu}_{\kappa} \gets \textsc{CoverGridtree}(H^\mu, \mathcal{T}^\mu)$
\-
\\      for each index $j \gets 1, \ldots, \kappa$: \+
\\          $F^{\lambda}_j \gets \bigcup_{\text{$\mu$ at layer $\lambda$}} F^{\mu}_j$ \-\-
\\ return the set $\set{F^{\lambda}_j}$ for $\lambda \in [d]$ and $j \in [\kappa]$
\end{algorithm}
\end{figure}

\begin{lemma}
    \label{lem:hierarchy-forest}
    Let $G$ be a graph with a gridtree hierarchy $\mathcal{H}$. Then the algorithm $\textsc{CoverHierarchy}(G, \mathcal{H})$ returns a set of $O(d \cdot \e^{-2})$ spanning forests on $G$. Further, the trees in any forest are pairwise cluster-disjoint with respect to $\mathcal{C}$.
\end{lemma}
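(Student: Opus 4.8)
The plan is to treat this as a ``gluing'' lemma that lifts the two facts already proved for a single gridtree---Lemma~\ref{lem:gridtree-forest} (that $\textsc{CoverGridtree}(H^\mu,\mathcal{T}^\mu)$ returns $\kappa=O(\e^{-2})$ spanning forests on $H^\mu$) and its companion cluster-disjointness lemma for $\textsc{CoverGridtree}$ (that within each such forest the trees are pairwise cluster-disjoint with respect to $\mathcal{C}$)---to the layer-wise unions $F^\lambda_j=\bigcup_{\mu\text{ at layer }\lambda}F^\mu_j$ produced by $\textsc{CoverHierarchy}$. The size bound is immediate: the algorithm outputs exactly one forest $F^\lambda_j$ for each $\lambda\in[d]$ and $j\in[\kappa]$, so there are $d\kappa=O(d\e^{-2})$ of them, and it remains only to check that each $F^\lambda_j$ is a spanning forest on $G$ whose trees are pairwise cluster-disjoint.

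For the forest property I would invoke the structural fact, noted right after Definition~\ref{def:gridtree-hierarchy}, that the subgraphs $H^\mu$ attached to the nodes $\mu$ of a fixed layer $\lambda$ of $\mathcal{H}$ are pairwise vertex-disjoint (since the columns and leftover sets of any single gridtree are). Each tree of $F^\mu_j$ is an SSSP subtree of some $H_\eta$ with $\eta\in\mathcal{T}^\mu$, hence a subgraph of $H^\mu\subseteq G$; therefore $F^\lambda_j$ is a subgraph of $G$ and is the disjoint union of the forests $F^\mu_j$ over pairwise-disjoint vertex sets, so it is itself a forest (no edge runs between two parts, and each part is acyclic), i.e.\ a spanning forest on $G$.

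For cluster-disjointness, trees lying in a common $F^\mu_j$ are already handled by the companion lemma to Lemma~\ref{lem:gridtree-forest}, so the only new case is a tree $T_1\subseteq H^{\mu_1}$ in $F^{\mu_1}_j$ and a tree $T_2\subseteq H^{\mu_2}$ in $F^{\mu_2}_j$ with $\mu_1\ne\mu_2$ on layer $\lambda$. I would first isolate a short claim: \emph{every cluster $C\in\mathcal{C}$ containing a vertex of $H^\mu$ is contained in $V(H^\mu)$}. This follows by bottom-up induction along $\mathcal{H}$: the clustering procedure of Section~\ref{SS:cluster} creates clusters only by calling $\textsc{ClusterColumn}$ on columns, and each such call partitions the column it is given; a vertex of $H^\mu$ either lies in a column of $\mathcal{T}^\mu$---whence its cluster lies inside that column, inside $V(H^\mu)$---or lies in a leftover set of $\mathcal{T}^\mu$, i.e.\ in $V(H^{\mu'})$ for a child $\mu'$ of $\mu$, in which case the inductive hypothesis places its cluster inside $V(H^{\mu'})\subseteq V(H^\mu)$. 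Granting the claim, a cluster meeting both $T_1$ and $T_2$ would be contained in both $V(H^{\mu_1})$ and $V(H^{\mu_2})$, contradicting their disjointness; hence $T_1$ and $T_2$ are cluster-disjoint, and all trees of $F^\lambda_j$ are pairwise cluster-disjoint.

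The only step I expect to need real care is this inter-node argument---ruling out a cluster that touches $H^{\mu_1}$ but also reaches into a sibling or ancestor subgraph. It rests on $\mathcal{C}$ refining the partition of $V(G)$ into the columns of the gridtrees of $\mathcal{H}$ (using Lemma~\ref{L:layer-depth} to know every vertex is captured by some column) and on $\textsc{ClusterColumn}$ never pulling in vertices outside the column it is handed; I would state the ``$C\subseteq V(H^\mu)$'' claim as a standalone lemma to keep the bookkeeping clean. Everything else is routine.
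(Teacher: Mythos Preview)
Your proposal is correct and mirrors the paper's proof almost exactly: the paper also counts $d\cdot\kappa$ forests, uses the pairwise vertex-disjointness of the $H^\mu$ at a fixed layer to see that $F^\lambda_j$ is a disjoint union of forests, and then asserts the same key fact---that every vertex in $H^\mu$ belongs to a cluster entirely within $H^\mu$---to conclude cluster-disjointness across distinct $\mu$. The only difference is that you spell out an inductive justification for that containment claim, whereas the paper simply states it (having already noted the analogous fact in the proof of the companion cluster-disjointness lemma for $\textsc{CoverGridtree}$).
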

\begin{proof}
    For every layer $\lambda$ and index $j$, the graph \EMPH{$F^{\lambda}_j$} is the union of forests $F^{\mu}_j$
    over all nodes $(H^\mu, \mathcal{T}^\mu)$ in layer $\lambda$ of the gridtree hierarchy $\mathcal{H}$. 
    By \Cref{lem:gridtree-forest}, each $F^{\mu}_j$ is a spanning forest of $H^\mu$. 
    For every $H^\mu$ in a single layer $\lambda$, the sets $V(H^\mu)$ are disjoint, so $F^{\lambda}_j$ is the disjoint union of forests. 
    Further, every vertex in $H^\mu$ belongs to a cluster entirely within $H^\mu$, so the trees in $F^{\lambda}_j$ are pairwise cluster-disjoint. There are $d \cdot \kappa = O(d \cdot \e^{-2})$ such forests $F^{\lambda}_j$.
\end{proof}

\begin{lemma}\label{lem:hierarchy-distortion}
    Let $G$ be a graph with diameter $\Delta$ and a gridtree hierarchy $\mathcal{H}$. Then for every pair of vertices $(u,v)$ in $G$, there is some forest $F$ returned by the $\textsc{CoverHierarchy}(G,\mathcal{H})$ such that $\dist_F(u,v) \leq \dist_G(u,v) + O(\e\Delta)$.
\end{lemma}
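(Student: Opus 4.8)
The plan is to exhibit, for an arbitrary pair $u,v$, a single SSSP tree $T^\eta_i$ constructed inside some call to $\textsc{CoverGridtree}$ that already contains both $u$ and $v$ and realizes a $u$--$v$ path of length $\dist_G(u,v)+O(\e\Delta)$. Since every such tree is a subtree of one of the forests returned by $\textsc{CoverHierarchy}$, this immediately yields the lemma (domination $\dist_F(u,v)\ge\dist_G(u,v)$ being automatic because every tree in the cover is a subgraph of $G$).

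\emph{Locating the column.} Let $P$ be a shortest path in $G$ between $u$ and $v$. Following the opening of the proof of \Cref{lem:planar-cost-inductive}, let $(H,\mathcal{T})$ be the lowest node of $\mathcal{H}$ whose subgraph $H$ contains all of $P$; since the column-adjacency property forbids an edge between two distinct leftover sets, $P$ must touch a column of $\mathcal{T}$, and taking $\eta$ to be the lowest column of $\mathcal{T}$ with $P\subseteq H_\eta$, the column $\eta$ contains some vertex $p$ of $P$ (otherwise $P$ would lie entirely in $H_{\eta'}$ for a child $\eta'$ of $\eta$). Let $C_i$ be the cluster of the shortcut partition $\mathcal{C}$ containing $p$, with center $c_i$. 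Because $p$ lies in the column $\eta$, the cluster $C_i$ is among those returned by $\textsc{ClusterColumn}(\eta)$, so $c_i$ is one of the centers $c_1,\dots,c_m$ of $\eta$ used in $\textsc{CoverGridtree}(H,\mathcal{T})$; hence the tree $T^\eta_i$ rooted at $c_i$ is built there and ends up inside some forest $F$ returned by $\textsc{CoverHierarchy}$.

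\emph{Bounding distances inside $H_\eta$.} The clustering produced by $\textsc{ClusterHierarchy}$ has the property---already invoked in the cluster-disjointness lemma above---that any cluster meeting $H_\eta$ is contained in $H_\eta$; together with the diameter property this gives $\dist_{H_\eta}(c_i,p)\le\dist_{G[C_i]}(c_i,p)\le\gamma\e\Delta$. Since $P\subseteq H_\eta$ and $p$ lies on the shortest path $P$, the subpaths $P[p,u]$ and $P[p,v]$ lie in $H_\eta$ and satisfy $\len{P[p,u]}+\len{P[p,v]}=\dist_G(u,v)$, whence
\[
\dist_{H_\eta}(c_i,u)+\dist_{H_\eta}(c_i,v)\;\le\;2\gamma\e\Delta+\len{P[p,u]}+\len{P[p,v]}\;=\;\dist_G(u,v)+2\gamma\e\Delta .
\]
In particular each of $\dist_{H_\eta}(c_i,u)$ and $\dist_{H_\eta}(c_i,v)$ is at most $\Delta+\gamma\e\Delta<(\gamma+1)\Delta$ (using $\e<1$), so $u$ and $v$ lie in the ball $B_i$ of radius $(\gamma+1)\Delta$ about $c_i$ in $H_\eta$, hence in $V(T^\eta_i)$. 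As $T^\eta_i$ is an SSSP tree rooted at $c_i$ within $H_\eta$, it preserves $\dist_{H_\eta}(c_i,\cdot)$ on $B_i$, so the triangle inequality inside the tree gives $\dist_{T^\eta_i}(u,v)\le\dist_{H_\eta}(c_i,u)+\dist_{H_\eta}(c_i,v)\le\dist_G(u,v)+2\gamma\e\Delta$. Since $T^\eta_i\subseteq F$ we conclude $\dist_F(u,v)\le\dist_G(u,v)+O(\e\Delta)$.

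The argument contains no new idea beyond the gridtree machinery and the shortcut partition, so the work is essentially bookkeeping; I expect the only delicate point to be the \emph{joint} condition that the vertex $p$ supplied by the recursion simultaneously lies in a column $\eta$ and satisfies $P\subseteq H_\eta$---which is precisely the step already carried out at the start of the proof of \Cref{lem:planar-cost-inductive}---together with checking that distances measured in the (possibly large-diameter) subgraph $H_\eta$ still stay below the radius $(\gamma+1)\Delta$, which holds because the relevant subpaths of $P$ are themselves contained in $H_\eta$.
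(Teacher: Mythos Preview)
Your proof is correct and follows essentially the same approach as the paper's: locate the lowest node $(H,\mathcal{T})$ of $\mathcal{H}$ containing the shortest path $P$, find a column $\eta$ with $P\subseteq H_\eta$ and $p\in P\cap\eta$, take the cluster center $c_i$ near $p$, and use the SSSP tree $T^\eta_i$ of radius $(\gamma+1)\Delta$ to bound $\dist_F(u,v)\le\dist_G(u,v)+2\gamma\e\Delta$. The only cosmetic difference is that the paper characterizes $\eta$ as the \emph{highest} column of $\mathcal{T}$ containing a vertex of $P$ rather than the \emph{lowest} column with $P\subseteq H_\eta$, but these coincide.
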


\begin{proof}
    Let $P$ be a shortest path between $u$ and $v$ in $G$. 
    Let $(H, \mathcal{T})$ be the lowest node of $\mathcal{H}$ such that $H$ contains all vertices of $P$. Note that this implies $\dist_{H}(u,v) = \dist_G(u,v)$.

    Let $\eta$ be the highest column in $\mathcal{T}$ that contains a vertex in $P$. Let $p$ be a vertex in $P \cap \eta$. 
    There is some cluster center $c_i$ in $\eta$ with $\dist_{H_\eta}(c_i, p) \leq \gamma \e\Delta$. We now claim that $H_\eta$ contains all vertices of $P$: Indeed, the column adjacency property of $\mathcal{T}$ implies that vertices of $H_\eta$ are only incident other vertices in $H_\eta$ and to vertices in $\parent(\eta)$, and the choice of $\eta$ implies that $P$ contains no vertices in $\parent(\eta)$.
    Thus, we have $\dist_{H_\eta}(u, c_i) \leq \dist_G(u, p) + \gamma\e\Delta < (\gamma+1)\Delta$ and $\dist_{H_\eta}(c_i, v) \leq \dist_G(p, v) + \gamma \e\Delta < (\gamma+1) \Delta$. 
    The algorithm $\textsc{CoverGridtree}(\mathcal{T})$ returns a forest $F$ containing an SSSP tree connecting root $c_i$ to every vertex in $H_\eta$ within distance $(\gamma+1)\Delta$ of $c_i$. This forest satisfies
     $\dist_{F}(u,v) \leq \dist_{G}(u,v) + 2\gamma\e\Delta$.
\end{proof}

\begin{observation}
    Every tree in a forest returned by $\textsc{CoverGridtree}(H, \mathcal{T})$ is an SSSP tree with radius $(\gamma+1)\Delta$. Thus, every tree has diameter $O(\Delta)$.
\end{observation}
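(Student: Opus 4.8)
The plan is to unwind the definition of $\textsc{CoverGridtree}$ and observe that the claim is immediate from the construction, so this is pure bookkeeping. First I would note that the only ``atomic'' trees produced inside $\textsc{CoverGridtree}(H,\mathcal{T})$ are the trees $T^\eta_i$, one for each column $\eta$ of $\mathcal{T}$ and each cluster center $c_i$ of $\eta$: by construction $T^\eta_i$ is a single-source shortest-path tree of $H_\eta$ rooted at $c_i$, spanning exactly the ball $B_i$ of radius $(\gamma+1)\Delta$ around $c_i$ in $H_\eta$. Every forest the algorithm returns --- whether $F^\eta_j$, or $F^\ell_j$ obtained by merging the $F^\eta_j$ over columns at a fixed level of $\mathcal{T}$, or ultimately the forests $F^\lambda_j$ output by $\textsc{CoverHierarchy}$ --- is just a vertex-disjoint union of some of these $T^\eta_i$'s. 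Hence it suffices to bound the radius and diameter of a single $T^\eta_i$.

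Then I would observe that, because $T^\eta_i$ is an SSSP tree of $H_\eta$ rooted at $c_i$, we have $\dist_{T^\eta_i}(c_i, v) = \dist_{H_\eta}(c_i, v)$ for every vertex $v$ of $T^\eta_i$, and since the vertex set of $T^\eta_i$ is precisely $B_i$, this distance is at most $(\gamma+1)\Delta$; thus $T^\eta_i$ is an SSSP tree of radius $(\gamma+1)\Delta$ as claimed. Applying the triangle inequality \emph{inside the tree} then gives $\dist_{T^\eta_i}(u,v) \le \dist_{T^\eta_i}(u,c_i) + \dist_{T^\eta_i}(c_i,v) \le 2(\gamma+1)\Delta$ for any two vertices $u,v$ of $T^\eta_i$, so the diameter is at most $2(\gamma+1)\Delta$.

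There is no genuine obstacle here; the one point worth stating explicitly is that $\gamma$ is an absolute constant --- it arises only from the diameter property of the shortcut partition $\mathcal{C}$, namely $\dist_C(u,v) \le \gamma\e\Delta$ for vertices in a common cluster $C$, and in particular does not depend on $\e$. Consequently both the radius bound $(\gamma+1)\Delta$ and the diameter bound $2(\gamma+1)\Delta$ collapse to $O(\Delta)$. This is exactly the property needed downstream: combined with the star-ification of \Cref{thm:large-diam-cover} it yields an $O(\Delta)$-bounded additive forest cover, which feeds into the reduction of \Cref{lm:reduction} to produce the $(1+\e)$-multiplicative tree cover.
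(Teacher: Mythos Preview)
Your argument is correct and matches the paper's intent: the observation is stated without proof precisely because it is immediate from the line ``$T^\eta_i \gets$ SSSP tree (w.r.t.\ $H_\eta$) of $B_i$ rooted at $c_i$'' in \textsc{CoverGridtree}, together with the disjointness established in \Cref{lem:gridtree-forest}. One small inaccuracy in your final paragraph: the $O(\Delta)$-bounded forest cover of \Cref{thm:planar-cover} is fed directly into \Cref{lm:reduction}, not through the star-ification of \Cref{thm:large-diam-cover} (that route is the generic framework yielding the exponential-in-$1/\e$ bound, whereas Section~\ref{S:treecover-gridlike} bypasses it to get the polynomial bound).
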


\noindent Together with the fact that $\e\Delta$-gridtree hierarchy $\mathcal{H}$ of $G$ has depth $O(\e^{-1})$ from \Cref{L:layer-depth}, we conclude:

\begin{theorem}
\label{thm:planar-cover}
Every planar graph $G$ with diameter $\Delta$ has an $O(\Delta)$-bounded spanning forest cover of size $O(\e^{-3})$, with additive distortion $+O(\e\Delta)$.
\end{theorem}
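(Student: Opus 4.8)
The plan is to assemble the components built up over Sections~\ref{S:grid-scattering}--\ref{S:treecover-gridlike}; almost all of the real work has already been done, so the argument is short. First I would invoke Lemma~\ref{L:gridtree-planar} to obtain an $\e\Delta$-gridtree hierarchy $\mathcal{H}$ for $G$ (this is the only step that uses planarity), and then apply Lemma~\ref{L:layer-depth} to bound its depth by $d = O(1/\e)$. Associated with $\mathcal{H}$ is the shortcut partition $\mathcal{C}$ from Section~\ref{S:gridtree-scatterting}, which is what $\textsc{CoverGridtree}$ and $\textsc{CoverHierarchy}$ operate on; once $\mathcal{H}$ is fixed, no further choices are needed.

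Next I would run $\textsc{CoverHierarchy}(G,\mathcal{H})$ and let $\mathcal{F}$ be the resulting collection of forests. By Lemma~\ref{lem:hierarchy-forest}, every member of $\mathcal{F}$ is a spanning forest of $G$ and $|\mathcal{F}| = O(d\cdot\e^{-2})$; substituting $d = O(\e^{-1})$ gives $|\mathcal{F}| = O(\e^{-3})$, as claimed. Because each forest is a subgraph of $G$, every tree $T$ occurring in a forest of $\mathcal{F}$ satisfies $\dist_G(u,v)\le\dist_T(u,v)$ for all $u,v\in V(T)$, so $\mathcal{F}$ is a legitimate (dominating) spanning forest cover. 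The additive distortion bound is then exactly Lemma~\ref{lem:hierarchy-distortion}: for every pair $u,v\in V(G)$ there is an $F\in\mathcal{F}$ with $\dist_F(u,v)\le\dist_G(u,v)+O(\e\Delta)$.

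It remains to check $O(\Delta)$-boundedness. For this I would cite the Observation immediately preceding the statement: every tree returned by $\textsc{CoverGridtree}$ is an SSSP tree of radius $(\gamma+1)\Delta$, and hence has diameter at most $2(\gamma+1)\Delta=O(\Delta)$, where $\gamma$ is the absolute constant from the diameter property of $\mathcal{C}$. Since every tree in every forest of $\mathcal{F}$ arises this way, the cover is $O(\Delta)$-bounded, which completes the proof. There is no genuine obstacle at this stage; the only point that deserves a moment's care is that the two factors being multiplied---the depth bound $O(1/\e)$ from Lemma~\ref{L:layer-depth} and the per-layer forest count $O(\e^{-2})$ from Lemma~\ref{lem:gridtree-forest}---are both manifestly independent of $n$, so the final size $O(\e^{-3})$ carries no hidden dependence on the number of vertices.
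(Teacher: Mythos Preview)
Your proposal is correct and matches the paper's approach exactly: the theorem is stated immediately after the Observation on tree diameters, with the paper simply noting that combining Lemma~\ref{lem:hierarchy-forest}, Lemma~\ref{lem:hierarchy-distortion}, the preceding Observation, and the depth bound $d=O(\e^{-1})$ from Lemma~\ref{L:layer-depth} yields the result. Your write-up is, if anything, more explicit than the paper's one-line assembly of these ingredients.
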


\noindent Combining with the multiplicative-to-additive distortion reduction (\Cref{lm:reduction}), this proves~\Cref{thm:main}.



\section{A shortcut partition for bounded treewidth graphs}
\label{S:weak-scattering-treewidth}

In this section we present an algorithm for creating an $(\e, O_\e(1))$-shortcut partition for graphs with $O(1)$-treewidth. Our algorithm is inspired by the technique developed by Friedrich et al.~\cite{FIKMZ23} for solving the mulicut problem in bounded treewidth graphs. 

\subsection{Algorithm description}

Let $\e \in (0,1)$ be a fixed constant. Our algorithm takes as input a graph $G$ with diameter $\Delta$, and its width-$k$ tree decomposition $T$. 
It returns a set of clusters $\mathcal{C}'$ that form an $(\e, O_\e(1))$-shortcut partition. 

In the preprocessing phase, we modify $G$ and $T$ so that the resulting graph has the same treewidth up to a small constant, but the resulting tree decomposition becomes more suitable for creating clusters; we abuse the notation and refer to the resulting graph and tree decomposition as $G$ and $T$. 
First, for each bag in $T$ we connect all its vertices by a clique in $G$.
Note that each vertex $v$ in $G$ may appear in multiple bags of $T$;
in every such bag, we create a fresh copy of $v$. 
In addition, for every pair of bags $X$ and $Y$ when both contain copies of $v$ and $X$ is a parent of $Y$, we add a 0-weight edge between the corresponding copies of $v$ in $G$.  This concludes the preprocessing phase. 

Next, we build a set of ``preliminary clusters'' $\mathcal{C}$, which might not be vertex-disjoint. This is done in $k+1$ rounds, numbered from $k+1$ to $1$ for technical convenience. In the $i$th round, we invoke a recursive procedure \EMPH{\textsc{ClusteringRound}} that creates clusters. A recursive call of this procedure operates on the root bag $R$ of a subtree \EMPH{$T'$} (of the entire tree $T$) and grows balls of radius $\eps\Delta$ centered at each of the unclustered vertices in $R$. Importantly, the ball is restricted to the vertices in the subtree $T'$ (rather than the whole tree). 
Let \EMPH{$B$} be the set of vertices clustered in this recursive call. The algorithm proceeds recursively with the set of subtrees obtained by removing from $T'$ all the bags containing vertices in $B$. 

As mentioned, the preliminary clusters in $\mathcal{C}$ created in the $k+1$ rounds of the aforementioned recursive procedure might not be vertex-disjoint. To achieve disjoint clusters, the algorithm creates a dummy graph \EMPH{$G'$} by adding a vertex $s$ and connecting it via weight-0 edge to all the centers in $\mathcal{C}'$. The clusters are then obtained by computing an SSSP tree $\tau_s$ of $G'$ rooted at $s$. Specifically, for every child $u$ of $s$, we let $B'_u$ be the set of vertices in the subtree of $\tau_s$ rooted at $u$ and add it to $\mathcal{C}'$. The output is $\mathcal{C}'$.
Our algorithm is described in more detail in \Cref{fig:clustering-tw}.

\begin{figure*}[h!]
\centering\small
\begin{algorithm}
\textul{$\textsc{Clustering}(T, G)$}: \+ 
\\  \Comment{Preprocess $G$ and $T$}
\\  for each bag in $T$, connect its vertices by a clique
\\  for each $v \in V(G)$:\+
\\      replace every occurrence of $v$ with a fresh copy
\\      for every pair parent-child in $T_v$ add an edge between the copies of $v$\-
\\  $\mathcal{C} \gets \varnothing$
\\  for $i \gets k+1$ down to $1$: \Comment{Proceed in $k+1$ rounds, where $k$ is the treewidth of $G$}\+
\\      $\textsc{ClusteringRound}(T, G, i, \mathcal{C})$\-
\\
\\  \Comment{$\mathcal{C}$ may contain overlapping preliminary clusters}
\\ $\mathcal{C}' \leftarrow \varnothing$ 
\\ $G' \gets$ add to $G$ a dummy vertex $s$ and connect $s$ to every cluster center in $\mathcal{C}'$
\\ \Comment{all new edges have weight $0$}
\\ $\tau_s \gets$ an SSSP tree of $G'$ rooted at $s$ 
\\  for each child $u$ of $s$ in $\tau_s$: \+
\\ $B'_u \gets$ the set of vertices in the subtree of $\tau_s$ rooted at $u$
\\ $\mathcal{C}' \leftarrow \mathcal{C}' \cup{B'_u}$ \-
\\  return $\mathcal{C}'$\-
\end{algorithm}

\vspace{6pt}
\centering\small
\begin{algorithm}
\textul{$\textsc{ClusteringRound}(T, G, i, \mathcal{C})$}: \+
\\  $R \gets$ root bag of $T$ 
\\  \Comment{If $R$ contains unclustered vertices, $R$ is called a round-$i$ root bag}
\\  \Comment{Otherwise, we continue recursively with each one of its children}
\\  $B \gets \varnothing$
\\  for each unclustered $u$ in $R$:\+
\\      $B_u \gets \{v \in T \mid \delta_G(u,v) \le \eps\Delta\}$ \Comment{Cluster downwards in the tree}
\\ $\mathcal{C} \leftarrow \mathcal{C} \cup B_u$ \Comment{Add $B_u$ as a preliminary-cluster}
\\  $B \gets B \cup B_u$\-
\\  $S(B) \gets$ the set of bags in $T$ containing a vertex from $B$ 
\\ if $R$ does not contain unclustered vertices: \+ 
\\      $S(B) \gets \{R\}$ \-
\\ \Comment{The subgraph of $T$ induced by $S(B)$ is connected}
\\  $T_1, \ldots, T_g \gets$ subtrees of $T \setminus S(B)$
\\  for $j \gets 1,\ldots, g$: \+
\\		$\textsc{ClusteringRound}(T_j, G, i, \mathcal{C})$
\end{algorithm}
\caption{Algorithms $\textsc{ClusteringRound}(T, G, i, \mathcal{C})$ and $\textsc{Clustering}(T, G, \mathcal{C})$}
\label{fig:clustering-tw}
\end{figure*}

\subsection{Algorithm analysis}

\begin{observation}\label{obs:shareRoot}
Consider any call of \textsc{ClusteringRound}$(T,G,i,\mathcal{C})$.
Let $S(B)$ 
be the set of bags in $T$ containing a vertex from $B$. Then the subgraph of $T$ induced by $S(B)$ is connected.
\end{observation}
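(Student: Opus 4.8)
The plan is to reduce the claim to a convexity property of each individual ball grown in this call, and then to prove that property using the separator structure of the preprocessed tree decomposition. Throughout, write $Y_w$ for the unique bag of $T$ containing a vertex $w$; this is well-defined because the preprocessing replaces every occurrence of a vertex by a fresh copy, so the bags partition the vertex set.

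\textbf{Reducing to a single ball.} First, if the root bag $R$ of $T$ contains no unclustered vertex, then $B=\varnothing$ and the algorithm sets $S(B)=\{R\}$, which is connected. Otherwise $B=\bigcup_u B_u$, the union over unclustered $u\in R$, where $B_u=\{v\in T : \delta_G(u,v)\le\eps\Delta\}$. Set $S(B_u)\coloneqq\{Y_v : v\in B_u\}$, so that $S(B)=\bigcup_u S(B_u)$. Since $u\in B_u$ and $Y_u=R$, every $S(B_u)$ contains $R$; and a union of connected node-sets of a tree that share a common node is connected. Hence it suffices to prove that each $S(B_u)$ induces a connected subtree of $T$.

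\textbf{Path-convexity of $S(B_u)$.} I would first record that the preprocessing keeps the separator property intact: every edge of $G$ joins two vertices whose bags are equal or adjacent in $T$ (the clique edges stay inside a bag, and the new weight-$0$ copy edges run between a parent and a child bag), so for any bag $Z$, every path of $G$ between two vertices lying in bags on opposite sides of $Z$ must use a vertex of $V(Z)$. Now fix $v,v'\in B_u$ with $Y_v\neq Y_{v'}$ (otherwise there is nothing to show), and let $Z$ be any interior bag on the $T$-path from $Y_v$ to $Y_{v'}$; establishing $Z\in S(B_u)$ for every such $Z$ shows $S(B_u)$ is convex, hence connected, in $T$. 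Removing $Z$ separates $Y_v$ from $Y_{v'}$; if $Y_u=Z$ then $u\in V(Z)\cap B_u$ and we are done, and otherwise $Y_u$ lies entirely on one side of $Z$, so $V(Z)$ separates $u$ from $v$ or from $v'$ --- say from $v$. Every shortest path of $G$ between $u$ and $v$ then meets $V(Z)$ at some vertex $z$, with $\delta_G(u,z)\le\delta_G(u,v)\le\eps\Delta$; since $z$ appears in $T$ (its bag is $Z$), we get $z\in B_u$, i.e.\ $Z\in S(B_u)$.

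\textbf{Main obstacle.} The tempting shortcut --- to argue that $B_u$ is a connected subgraph of $G$ and invoke the standard fact that a connected subgraph has connected image in a tree decomposition --- does not work: $B_u$ is the ball \emph{restricted} to the vertices appearing in $T$, and a shortest path realizing $\delta_G(u,v)$ may leave $T$, so $G[B_u]$ can be disconnected. The crux of the argument is that connectivity of $B_u$ is not actually needed; for $S(B_u)$ to be convex in the tree it is enough that every bag separating two bags of $S(B_u)$ contains \emph{some} vertex of $B_u$, and the radius bound $\eps\Delta$ together with the separator property produces exactly such a vertex. A secondary point to check carefully is that the weight-$0$ copy edges introduced in the preprocessing do not break the separator property invoked above.
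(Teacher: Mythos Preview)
Your argument is correct and follows the same skeleton as the paper: handle the trivial case, reduce to showing each $S(B_u)$ is a subtree, and note that all these subtrees share the root bag $R$.

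The difference lies in how you prove $S(B_u)$ is connected. The paper asserts that $B_u$ is connected in $G$ and invokes the standard fact that the bags meeting a connected subgraph form a subtree; you instead prove convexity of $S(B_u)$ directly from the separator property, without needing $G[B_u]$ to be connected. Your route is a little more self-contained, but your claim that the shortcut ``does not work'' is mistaken: because every bag is made into a clique during preprocessing, for any $v \in B_u$ there is a shortest $u$--$v$ path in $G$ whose tree walk is a simple path (this is exactly Observation~\ref{obs:path}), and since $Y_u = R$ and $Y_v$ both lie in the convex subtree $T$, that simple path stays inside $T$. Every vertex on it therefore lies in $B_u$, so $G[B_u]$ is in fact connected and the paper's shorter argument is valid.
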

\begin{proof}
We shall assume that the root bag $R$ contains at least one unclustered vertex, otherwise the subgraph of $T$ induced by $S(B)$ is a single node $R$.
Recall that $B$ is the union of balls $B_u$, taken over all centers $u$ from the root bag $R$.

Consider any vertex $u$ in $R$ and the corresponding ball $B_u$. Since $B_u$ is connected in $G$, we note that the bags containing vertices in $B_u$ form a connected subtree of $T$. 
(This follows from the facts that (1) the bags containing any vertex $v$  form a connected subtree of $T$, and (2) the bags containing vertices along a path between any pair $(u,v)$ form a connected subtree of $T$.)
 
Taking union over all centers in $R$ gives us a union of connected subtrees in $T$ that all contain bag $R$ in common. It follows that the union is connected.
\end{proof}

\begin{lemma}\label{lem:tw-cluster-diameter}
The set $\mathcal{C}'$ of clusters returned by $\textsc{Clustering}(T, G)$ form a partition of $V$.
Moreover, every cluster is connected and has a strong diameter at most $O(\eps \Delta)$.
\end{lemma}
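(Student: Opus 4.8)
The plan is to verify the three parts of \Cref{lem:tw-cluster-diameter} in order: (a) the preliminary clusters of $\mathcal{C}$ together cover all of $V$; (b) the final shortest-path step converts the overlapping preliminary clusters into an honest partition; and (c) each resulting cluster is connected and has strong diameter $O(\eps\Delta)$. I expect part (a) to be the main obstacle, so I describe it first and in the most detail.

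For (a), I would argue by induction on the $k+1$ rounds (indexed $i=k+1,\dots,1$) that after round $i$ every bag of $T$ has at most $i-1$ vertices lying in no cluster of $\mathcal{C}$; since each bag of a width-$k$ decomposition has at most $k+1$ vertices, the case $i=1$ yields that $\mathcal{C}$ covers $V$. Three features of \textsc{ClusteringRound} drive the inductive step. First, a round-$i$ root bag is cleared completely: the algorithm grows a radius-$\eps\Delta$ ball around \emph{every} currently unclustered vertex of it and adds each such ball to $\mathcal{C}$. Second, the preprocessing makes the copies of any one vertex $v$ occurring in the bags of $T_v$ (the subtree of bags containing $v$) pairwise within distance $0$, via $0$-weight edges that stay inside $T_v$; hence a radius-$\eps\Delta$ ball grown around the copy of $v$ in a bag $R$ automatically swallows the copies of $v$ in every bag below $R$ (within $R$'s current subtree) that contains $v$. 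Third, by \Cref{obs:shareRoot} the set $S(B)$ deleted after a clustering step is connected in $T$, so the recursion really does peel $T$ top-down, and along every root-to-leaf path it reaches the topmost still-unclustered bag before that bag can be absorbed into some earlier $S(B)$ --- the bags strictly above it being already cleared, they pass through without deleting anything below them. Combining these, each round clears at least one vertex from every bag that still has an unclustered vertex, which is the claimed invariant. The delicate point---and the main obstacle---is making the third item precise: because balls are restricted to the current subtree, a bag can enter some $S(B)$ before the recursion would otherwise visit it, and one must check that in this situation it nonetheless shares a newly clustered vertex with the clustering step that absorbs it; the $0$-weight copy edges and the connectedness of $S(B)$ are exactly what rule out a wasted round.

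For (b), form $G'$ by adjoining an apex $s$ joined by $0$-weight edges to the center of every preliminary cluster, and let $\tau_s$ be a shortest path tree of $G'$ rooted at $s$. Since $G$, hence $G'$, is connected, $\tau_s$ spans $V(G')$, so each vertex lies in the subtree hanging off exactly one child of $s$; these subtrees are the sets $B'_u$, which therefore partition $V(G)$. As all copies of a given original vertex are mutually at distance $0$, we may choose $\tau_s$ so that all of them lie in one $B'_u$ (equivalently, contract the copies before building $\tau_s$); reading $\mathcal{C}'$ as a family of sets of original vertices then gives a partition of $V$.

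For (c), fix a child $u$ of $s$ in $\tau_s$. The subtree $B'_u$ does not contain $s$, and in $G'$ the vertex $s$ is incident only to the dummy $0$-weight edges; hence every $\tau_s$-edge inside $B'_u$ is an edge of (preprocessed) $G$, so $B'_u$ induces a connected subgraph, and so does its image in $G$ after identifying copies. For the diameter, part (a) ensures every vertex $w$ lies in some radius-$\eps\Delta$ preliminary ball, so $\delta_{G'}(s,w)\le\eps\Delta$; since $\delta_{G'}(s,u)=0$, the downward tree path from $u$ to any $w\in B'_u$ stays inside $B'_u$ and has length $\delta_{G'}(s,w)\le\eps\Delta$, whence any two vertices of $B'_u$ are joined by a path of length at most $2\eps\Delta$ inside $B'_u$; as the preprocessing preserves distances, the same $O(\eps\Delta)$ strong-diameter bound holds in $G$, completing the proof.
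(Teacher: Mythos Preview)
Your overall plan matches the paper's proof: the same per-round invariant for (a), and essentially the same SSSP-subtree arguments for (b) and (c). Parts (b) and (c) are correct and mirror the paper's argument.

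The gap is in (a), precisely at the spot you flag as the main obstacle. The claim that each round clears at least one previously-unclustered vertex from every bag does not follow from your three ingredients, and the closing assertion that the $0$-weight copy edges together with the connectedness of $S(B)$ ``are exactly what rule out a wasted round'' is not yet a proof. The problem is that when a bag $X$ is absorbed into $S(B)$ at some root bag $R$, the only vertices of $X$ lying in $B$ may all have been clustered in an earlier round, so $X$ makes no progress that round. Concretely, take $G$ to be the path on $a,b,c,d,e$ with edge weights $100,100,1,100$, the width-$1$ decomposition with bags $\{a,b\},\{b,c\},\{c,d\},\{d,e\}$ arranged in a path, and $\eps\Delta=2$. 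After the first round the only unclustered copies are the copy of $c$ in bag $\{b,c\}$ and $e$ in bag $\{d,e\}$; in the second round $\{b,c\}$ becomes the root bag, its sole center is that copy of $c$, and the resulting ball reaches the copy of $d$ in bag $\{d,e\}$ (distance $1$) but not $e$ (distance $101$). Hence bag $\{d,e\}$ is absorbed into $S(B)$ solely via an already-clustered vertex, and after $k+1=2$ rounds $e$ is still unclustered. The paper's own proof handles this step in a single line and does not spell out why a wasted round cannot occur; your instinct that this is where the work lies is correct, but the sketch you give does not close the gap.
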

\begin{proof}
In every round $i$, the algorithm visits every bag $X$ of $T$ and either clusters all the unclustered vertices in it (when $X$ is a round-$i$ root bag) or clusters at least one vertex in it. It follows that after $k+1$ rounds every vertex becomes clustered.
The set of resulting clusters in $\mathcal{C}$ (each of which is a ball $B_u$) is a cover of $V$ rather than a partition.
After the last round in \textsc{Clustering}, the algorithm transforms $\mathcal{C}$ to a collection $\mathcal{C}'$ of {\em disjoint} clusters, where every vertex that is clustered under $\mathcal{C}$ remains clustered in $\mathcal{C}'$.

Every cluster is formed by taking all vertices in a subtree of the SSSP tree rooted at the dummy vertex~$s$, and is thus connected. By definition, every cluster of $\mathcal{C}$, which is a ball of radius $\eps\Delta$, has a strong diameter of at most $O(\eps\Delta)$. After running the SSSP algorithm, every cluster may reduce its size, but all the vertices remaining in that cluster are at the same distance from the respective center as before. 
\end{proof}

\begin{definition}\label{def:decompWalk}
Let $\pi = \pi_G(u,v) = \langle u = x_0, x_1, \ldots, x_\ell = v\rangle$ be a path between $u$ and $v$ in $G$ and let $T$ be a tree decomposition of $G$. We say that a walk $w$ in $T$ \emph{corresponds} to $\pi$ if it is the shortest walk in $T$ that visits bags containing $(x_i, x_{i+1})$ for  $0 \le i \le \ell-1$ in that order.
\end{definition}

\begin{observation}\label{obs:path}
    For every pair of points $u$ and $v$ in $G$, there is a shortest path $\pi(u,v)$ in $G$ such that the corresponding walk $\pi_T$ in $T$ is a simple path. This follows from the fact that every bag of $T$ induces a clique in $G$.
\end{observation}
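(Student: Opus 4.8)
The plan is to take a shortest path from $u$ to $v$ that is ``as straight as possible'' relative to $T$ and argue that it cannot double back. Among all shortest $u$–$v$ paths in $G$, choose $\pi = \langle u = x_0,\dots,x_\ell = v\rangle$ whose corresponding walk $\pi_T$ (in the sense of \Cref{def:decompWalk}) traverses the fewest tree-edges of $T$, counted with multiplicity; such a minimizer exists because $G$ is finite. I claim that this $\pi_T$ is a simple path, which is exactly the assertion.

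Suppose not. Since $T$ is a tree, a walk in $T$ that is not a simple path traverses some tree-edge at least twice, so there is an edge $e = (X,Y)$ of $T$ that $\pi_T$ crosses at least twice; orient things so that the first crossing goes from the $X$-side into the $Y$-side and, thereafter, the first return crosses from the $Y$-side back to the $X$-side. The step of $\pi_T$ realizing the first crossing moves between a bag containing edge $(x_j,x_{j+1})$ of $\pi$ and a bag containing edge $(x_{j+1},x_{j+2})$; both of these bags contain $x_{j+1}$, and the bags containing a fixed vertex form a connected subtree of $T$ (the connectivity fact already used in \Cref{obs:shareRoot}), so that subtree contains both $X$ and $Y$, whence $x_{j+1}\in V(X)\cap V(Y)$. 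Applying the same reasoning to the first return step produces an index $j'>j$ with $x_{j'+1}\in V(X)\cap V(Y)$. Put $a := x_{j+1}$ and $b := x_{j'+1}$; since $V(X)\cap V(Y)\subseteq V(X)$ and $V(X)$ induces a clique after the preprocessing, $(a,b)$ is an edge of $G$, and the preprocessing weights this edge so that its weight equals $\delta_G(a,b)$ (the clique-completion does not distort the metric).

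Next I would splice: replace the subpath $\pi[a,b]$, whose length is $\delta_G(a,b)$ because $\pi$ is a shortest path, by the single edge $(a,b)$, obtaining another shortest $u$–$v$ path $\pi'$. It remains to check that $\pi'_T$ is strictly shorter than $\pi_T$. The two walks agree outside the spliced segment. On that segment, $\pi_T$ runs from a bag on the $X$-side, visits the bags of the edges of $\pi[a,b]$ (all of which lie on the $Y$-side of $e$), and returns to a bag on the $X$-side, so it crosses $e$ at least twice; splitting any walk between two $X$-side bags that visits a $Y$-side bag at its $e$-crossings shows such a walk has length at least (tree-distance from its start bag to $X$) $+$ (tree-distance from $X$ to its end bag) $+\,2$. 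By contrast, the corresponding piece of $\pi'_T$ can be realized by a walk that routes through $X$ and stays on the $X$-side, of length exactly those two tree-distances. Hence $\pi'_T$ is at least two tree-edges shorter than $\pi_T$, contradicting the choice of $\pi$. Therefore $\pi_T$ is a simple path.

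The step I expect to be the crux is the last one — turning ``the rerouted walk crosses $e$ fewer times'' into ``the rerouted walk is strictly shorter'' — which needs the careful observation that the bags of the edges of $\pi[a,b]$ all lie on the $Y$-side of $e$, together with the $X$-side/$Y$-side crossing-count bound. A secondary point to handle with care is that after the preprocessing (clique-completion, vertex-copying, and $0$-weight edges) the structure $T$ still supports the tree-decomposition connectivity facts invoked above, and that the clique edges carry weights exactly equal to the original distances so that the splice preserves being a shortest path.
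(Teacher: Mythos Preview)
The paper supplies no proof beyond the embedded justification; the observation is meant to be nearly immediate. Your argument is correct in outline, but more elaborate than what is intended. The direct route is to minimize the number of \emph{edges} of $\pi$ rather than the length of $\pi_T$: if $\pi_T$ revisits a bag $B$, then (by the same subtree reasoning you use) two distinct vertices $a,b$ of $\pi$ lie in $B$, the clique edge $(a,b)$ shortcuts $\pi$ to a shortest path with strictly fewer edges, and you are done --- no edge-crossing bookkeeping is needed for the contradiction. Your version works too; the crux step you flag does go through, since the old middle segment factors as a walk $R_j\to X$, then the step $X\to Y$, then a walk entirely on the $Y$-side, then the step $Y\to X$, then a walk $X\to R_{j'+1}$, giving length at least $d_T(R_j,X)+d_T(X,R_{j'+1})+2$, while choosing the bag $X$ for the spliced edge $(a,b)$ makes the new middle exactly $d_T(R_j,X)+d_T(X,R_{j'+1})$. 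One phrasing slip: a single step of $\pi_T$ crossing $e$ is not literally a move between the two required bags you name but occurs somewhere on the tree-path joining them; your conclusion $x_{j+1}\in V(X)\cap V(Y)$ survives because that entire tree-path lies in the subtree of bags containing $x_{j+1}$. Your closing caveats --- that the clique edges must carry weight $\delta_G(a,b)$, and that after fresh copies each vertex lies in a single bag so ``bag containing an edge'' needs interpretation for the $0$-weight cross-bag edges --- are legitimate places where the paper is terse rather than defects in your argument.
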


\begin{observation}\label{obs:hierarchy}
Let $X$ and $Y$ be two round-$i$ root bags such that $X$ is an ancestor of $Y$. Then, the distance between every center of $X$ clustered at the $i$th round and every center of $Y$ clustered at the $i$th round is greater than $\eps\Delta$.
\end{observation}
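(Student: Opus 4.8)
## Proof proposal for Observation~\ref{obs:hierarchy}

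The plan is to argue by contradiction, exploiting the way \textsc{ClusteringRound} recursively removes the subtree $S(B)$ of bags touched by a round before recursing. Suppose for contradiction that $X$ and $Y$ are both round-$i$ root bags, that $X$ is a proper ancestor of $Y$ in $T$, and that some center $a$ clustered from $X$ in round $i$ and some center $b$ clustered from $Y$ in round $i$ satisfy $\delta_G(a,b) \le \eps\Delta$. The key observation is that, within a single round $i$, once a recursive call processes a root bag $R$ containing unclustered vertices, it grows the balls $B_u$ of radius $\eps\Delta$ around every unclustered $u \in R$, forms $B = \bigcup_u B_u$, and then recurses \emph{only} on the subtrees $T_1,\dots,T_g$ of $T \setminus S(B)$, where $S(B)$ is the (connected, by Observation~\ref{obs:shareRoot}) set of bags meeting $B$. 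So for $Y$ to be a round-$i$ root bag that is a descendant of $X$, the bag $Y$ must lie in one of the subtrees $T_j$ hanging off $S(B_X)$, where $B_X$ is the union of balls grown from $X$ in round $i$; in particular $Y \notin S(B_X)$.

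Next I would use the fact that $b \in B_b \subseteq$ the ball of radius $\eps\Delta$ around $b$ in $G$, and that $a$ is a center from $X$. Since $\delta_G(a,b) \le \eps\Delta$, the vertex $b$ lies in the ball $B_a$ of radius $\eps\Delta$ centered at $a$ — here I need to be slightly careful, because the balls in \textsc{ClusteringRound}$(T',G,i,\cdot)$ are grown \emph{restricted to the subtree $T'$} on which the call operates, not in the whole tree; but $a$ was clustered as a center from the root bag $X$ of the call that created it, and that call's subtree certainly contains $X$. What matters is that $b$, being within distance $\eps\Delta$ of $a$ in $G$, has every $G$-path to $a$ passing through bags that form a connected subtree of $T$ containing both a bag with $a$ and a bag with $b$; combined with the fact that $S(B_X)$ already includes all bags met by the round-$i$ balls grown from $X$ (and $a$'s ball is one of them, or at least $a \in X \subseteq S(B_X)$ so the connected subtree of bags on an $a$–$b$ path that lies "below" $X$ is either entirely inside $S(B_X)$ or entirely inside a single removed subtree $T_j$), I can conclude that $b$ — and hence every bag containing $b$, in particular $Y$ — lies inside $S(B_X)$. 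That contradicts $Y \notin S(B_X)$ established above. Therefore no such pair $(a,b)$ exists, proving the claim.

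The main obstacle, and the step that needs the most care, is reconciling the two notions of "ball": the balls $B_u = \{v \in T' : \delta_G(u,v) \le \eps\Delta\}$ in the pseudocode are defined by $G$-distance but only over vertices appearing in the current subtree $T'$, yet the argument wants to say that $b$ (at $G$-distance $\le \eps\Delta$ from $a$) was necessarily swallowed into $S(B_X)$ when $a$'s round-$i$ cluster was formed. To make this airtight I would (i) observe that because $a \in X$ and $X$ is the root bag of its call, the relevant subtree $T'$ for that call contains $X$, and (ii) use the standard tree-decomposition fact (as in the proof of Observation~\ref{obs:shareRoot}) that the bags containing vertices along a shortest $a$–$b$ path form a connected subtree $\sigma$ of $T$; $\sigma$ contains a bag with $a$, which lies in $S(B_X)$, so by connectivity the portion of $\sigma$ not separated from $X$ by the removal of $S(B_X)$ is forced inside $S(B_X)$, and since round-$i$ recursion only continues on components of $T' \setminus S(B_X)$, the descendant bag $Y$ — which contains $b$ — cannot be a round-$i$ root bag unless it already lies in $S(B_X)$, a contradiction. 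I would also note the symmetric case where $a$ and $b$ are grown in the \emph{same} recursive call is impossible too, since then they'd be in the same $S(B)$ and $Y$ couldn't be a separate round-$i$ root bag below $X$; so the only configuration to rule out is the ancestor/descendant one across distinct calls, which the argument above handles.
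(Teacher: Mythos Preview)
Your approach is the same as the paper's: argue by contradiction that if $\delta_G(a,b)\le\eps\Delta$ then $b$ would already lie in the ball $B_a$ grown from $X$, forcing $Y\in S(B_X)$ and preventing $Y$ from ever becoming a round-$i$ root bag. Your first paragraph already contains the complete argument.

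The detour through the path-subtree $\sigma$ in your second and third paragraphs is unnecessary and, as written, does not actually close the gap you are worried about. The concern you raise---that $B_a$ is restricted to vertices of the current subtree $T'$---is legitimate, but the resolution is much simpler than tracking bags along an $a$--$b$ path. Observe (by an easy induction using Observation~\ref{obs:shareRoot}) that every recursive call of \textsc{ClusteringRound} within round $i$ operates on a \emph{full} subtree of $T$ rooted at some node: since each $S(B)$ is connected and contains the current root, removing it leaves components that are exactly the full subtrees hanging below $S(B)$. Hence when $X$ is processed, $T'$ is the entire subtree of $T$ rooted at $X$; as $Y$ is a descendant of $X$, we have $Y\in T'$ and therefore $b\in T'$. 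Now $\delta_G(a,b)\le\eps\Delta$ gives $b\in B_a$ directly, so $Y\in S(B_X)$---contradiction. This is precisely the paper's proof, and it is what your first paragraph implicitly assumes when you say ``$Y$ must lie in one of the subtrees $T_j$ hanging off $S(B_X)$.'' The $\sigma$ argument, by contrast, never establishes that all bags of $\sigma$ lie in $T'$ (the shortest $a$--$b$ path in $G$ could in principle visit bags above $X$), so it cannot by itself force $Y\in S(B_X)$; you would still need the full-subtree fact to finish, at which point $\sigma$ is redundant.
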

\begin{proof}
Suppose towards contradiction that there were two centers of round-$i$ clusters $x \in X$ and $y \in Y$ at distance at most $\eps\Delta$. 
Consider the recursive call of \textsc{ClusteringRound} when the ball centered at $x$ is created. 
Then, $y$ is in $B_x$ by definition, as
it is distance at most $\eps\Delta$ from $x$, $y \in Y$ and $Y$ is a descendant of $X$. Hence, the bag $Y$ is in $B$, and by the algorithm description $Y$ will not be visited in any recursive call at the $i$th round. Hence, $Y$ cannot be a round-$i$ bag, yielding a contradiction. \end{proof}

\begin{observation}\label{obs:hierachy2}
Consider an arbitrary round $i \in \set{k+1, \ldots, 1}$ during a call to $\textsc{Clustering}$.
Let $\pi$ be a path in $G$ that corresponds to a path $\pi_T$ in $T$. Suppose that $\pi_T$ contains no round-$i'$ root bag, for all of the rounds $i' \ge i$. Then, the vertices of $\pi$ are contained in at most $k+1$ clusters created at the $i$th round.
\end{observation}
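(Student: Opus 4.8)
I want to show that if a path $\pi$ in $G$ corresponds to a simple path $\pi_T$ in $T$ that avoids all round-$i'$ root bags for every $i' \ge i$, then the vertices of $\pi$ lie in at most $k+1$ clusters created at round $i$. The natural approach is to track, along the walk $\pi_T$, how many \emph{distinct} round-$i$ clusters are touched, and argue that each ``new'' cluster encountered corresponds to a \emph{distinct} round-$i$ root bag that is an ancestor of all the bags of $\pi_T$. Since $T$ has width $k$, i.e.\ each bag has at most $k+1$ vertices, and (as I'll argue) all these root bags sit on a single root-to-node path above $\pi_T$ and actually coincide with a single common ancestor bag, the number of distinct round-$i$ clusters is bounded by the size of that bag, namely $k+1$.

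First I would set up the structure: a vertex $x$ on $\pi$ that gets clustered at round $i$ was placed into $B_u$ for some center $u$ living in a round-$i$ root bag $X_u$, and $x$ lies in the subtree $T'$ on which the relevant \textsc{ClusteringRound} recursive call was operating, with $X_u$ the \emph{root} of $T'$. Because the bags containing $x$ form a connected subtree of $T$ and $\pi_T$ passes through such a bag, $X_u$ is an ancestor (in $T$) of some bag on $\pi_T$; by the hypothesis that $\pi_T$ contains no round-$i'$ root bag for $i' \ge i$ — in particular no round-$i$ root bag — the bag $X_u$ is a \emph{strict} ancestor of every bag of $\pi_T$, so it lies on the (unique) path from the root of $T$ down to the topmost bag of the simple path $\pi_T$. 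Now take two vertices $x, x'$ on $\pi$ clustered at round $i$ into centers in root bags $X_u$ and $X_{u'}$; both $X_u, X_{u'}$ lie on that single root-to-$\pi_T$ path, so one is an ancestor of the other, say $X_u$ is an ancestor of $X_{u'}$ (or they are equal). I then invoke \Cref{obs:hierarchy}: if $X_u \neq X_{u'}$, the two round-$i$ centers would be at distance more than $\eps\Delta$ — but that does not immediately contradict anything, so I need to be more careful about \emph{why} two distinct ancestor root bags cannot both contribute clusters intersecting $\pi$.

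The cleaner argument: consider the \emph{first} round-$i$ recursive call in whose execution a vertex of $\pi$ gets clustered. Its root bag $X$ is a strict ancestor of all of $\pi_T$ (as above), and at that moment the entire set $V(\pi)$ still lies inside the subtree $T'$ rooted at $X$ — indeed $T'$ was obtained by deletions that never removed a bag on $\pi_T$ nor any strict ancestor of $\pi_T$ below $X$, since those deletions remove bags inside previously-clustered $B$'s, which by the connectivity of $B$ and \Cref{obs:shareRoot} form subtrees hanging off earlier root bags disjoint from $\pi_T$. Hence in this call, every unclustered vertex $x$ of $\pi$ within distance $\eps\Delta$ of some unclustered center $u \in X$ is clustered into $B_u$; and every vertex of $\pi$ is within distance $\eps\Delta$ of \emph{some} such center because $\pi_T$ passes through $X$ — wait, that's not quite it either, since $\pi$ need not pass through a vertex of $X$. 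So instead: after this call, the vertices of $\pi$ that remain unclustered lie in the subtrees $T_1,\dots,T_g$ of $T \setminus S(B)$; since $\pi_T$ is a simple path and $S(B)$ is connected and contains $X$ (an ancestor of $\pi_T$), $\pi_T \setminus S(B)$ is a single contiguous suffix lying in one $T_j$, so all remaining unclustered vertices of $\pi$ go into one recursive subcall — and inductively that subcall clusters them into at most $k+1$ further clusters. Combined with the at most $k+1$ clusters $\{B_u : u \in X \text{ unclustered}\}$ introduced at $X$, and noting that once $\pi$ fully leaves $S(B)$ we never return, a short induction on the recursion depth of round $i$ gives the bound $k+1$ — the only subtlety being that distinct recursive subcalls at the same round handle disjoint contiguous portions of $\pi_T$, so their cluster counts do \emph{not} add up but the portion of $\pi$ inside each is clustered entirely within that subcall.

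**Main obstacle.** The delicate point is the claim that the portion of $\pi_T$ lying outside $S(B)$ is confined to a \emph{single} subtree $T_j$, so that the recursion does not branch across $\pi$. This needs the simplicity of $\pi_T$ (\Cref{obs:path}) together with the connectivity of $S(B)$ (\Cref{obs:shareRoot}) and the fact that $S(B)$ meets $\pi_T$ in a connected sub-path; I'd prove that intersection is connected by using that $S(B)$ contains $X$ which is an ancestor of all of $\pi_T$, so $\pi_T \cap S(B)$ is exactly the part of the simple path $\pi_T$ that stays within the subtree $S(B)$, which is an interval. Getting this topological bookkeeping exactly right — and making sure that ``a vertex of $\pi$ is clustered at round $i$'' is attributed to the correct recursive call with the correct root bag — is where the real care is needed; the metric statement (\Cref{obs:hierarchy}) is a red herring here and is used (if at all) only to streamline, not to carry the argument.
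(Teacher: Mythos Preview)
Your first line of attack is the right one and is exactly what the paper does: any round-$i$ root bag $X$ that clusters a vertex of $\pi$ must be a \emph{strict} ancestor of the topmost bag $Z$ of $\pi_T$, so any two such root bags are comparable. You abandon this too early. The missing one-line step is: if $X$ is strictly above $X'$, then the call at $X$ produces a connected $S(B)$ (this is \Cref{obs:shareRoot}) that contains both $X$ and some bag $Y_x \in \pi_T$ (the bag of the vertex $x$ clustered there); hence $S(B)$ contains the whole $X$--$Y_x$ path in $T$, which passes through $Z$ and therefore through $X'$. So $X'$ is deleted at $X$'s call and cannot be the root of any later round-$i$ subcall --- contradiction. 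Thus there is a \emph{single} round-$i$ root bag responsible for all round-$i$ clusters touching $\pi$, and it has at most $k+1$ vertices, hence at most $k+1$ clusters. That is the paper's proof in full.

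Your pivot to an inductive argument is where the real gap lies. Even granting that the remaining portion of $\pi_T$ after $X$'s call sits in a single $T_j$, the induction you sketch gives $k+1$ clusters from $X$ \emph{plus} $k+1$ from the subcall, i.e.\ $(k+1)$ per level of round-$i$ recursion depth, not $k+1$ total. The sentence ``their cluster counts do not add up'' is simply false: clusters created at $X$ and clusters created at a deeper root bag are distinct, and a vertex of $\pi$ can land in either kind. Separately, your topological claim is also off: $\pi_T$ has two arms descending from $Z$, and $\pi_T\cap S(B)$ is a connected subpath containing $Z$, so $\pi_T\setminus S(B)$ can split into \emph{two} pieces, one per arm, not a single suffix. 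None of this matters once you finish the first argument, but as written the inductive route does not reach the stated bound. You were right that \Cref{obs:hierarchy} is a red herring here.
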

\begin{proof}

Since no bag in the path $\pi_T$ is a round-$i$ bag, the algorithm implies that every bag on $\pi_T$ intersects a cluster created at some ancestor root bag of the $i$th round. (This is true because the algorithm creates clusters only \emph{down} the tree, and every occurrence of a vertex in the tree is replaced by a fresh copy.)
By \Cref{obs:shareRoot}, there is a single such ancestor root bag, which gives rise to at most $k+1$ clusters. Thus, vertices in the bags of $\pi_T$ are contained in at most $k+1$ clusters in the $i$-th round. The observation follows from the fact that every vertex in $\pi$ lies in some bag in $\pi_T$.
\end{proof}

\begin{lemma}\label{obs:simplePath}
Let $i \in \set{k+1, \ldots, 1}$ be an arbitrary round in a call to $\textsc{Clustering}$.
Let $u$ and $v$ be any two vertices in $G$,  let $\pi$ be a shortest path between $u$ and $v$ in $G$, and let $\pi_T$ be a simple path  corresponding to $\pi$. Suppose that there is no round-$i'$ root bag on $w$, for any earlier round $i' > i$. Then, the union of the vertices in bags in $\pi_T$ is contained in at most $J_{\textit{i}}=O((i/\eps)^i)$ clusters created at any round $i, i-1, \ldots, 1$.
\end{lemma}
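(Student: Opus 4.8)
The plan is to prove this by induction on the round number, counting \emph{upward} from $i=1$ to $i=k+1$ — equivalently, by induction on how many rounds $i, i-1, \dots, 1$ we are allowed to account for. The base case is $i=1$: here no earlier round-$i'$ root bag ($i' > 1$) lies on $\pi_T$, so we may directly apply \Cref{obs:hierachy2} with $i = 1$ to conclude that the vertices in the bags of $\pi_T$ lie in at most $k+1 = J_1$ clusters created at round $1$. This gives the bound with $J_1 = O((1/\eps)^1)$, absorbing the $k+1 = O(1)$ factor into the constant (treewidth $k$ is a fixed constant throughout).

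For the inductive step, suppose the claim holds for round $i-1$ with bound $J_{i-1}$. Consider round $i$ and a simple path $\pi_T$ with no round-$i'$ root bag on it for any $i' > i$. I would first locate the round-$i$ root bags that lie on $\pi_T$: since $\pi_T$ is simple, by \Cref{obs:hierarchy} any two round-$i$ root bags on $\pi_T$ that are in an ancestor–descendant relationship have centers at distance $> \eps\Delta$, and (crucially) the \emph{balls} grown at these root bags have radius $\eps\Delta$, so the portions of $\pi$ between consecutive round-$i$ root bags each have length $> \eps\Delta - O(\eps\Delta) = \Omega(\eps\Delta)$ — wait, more carefully: I would argue that consecutive round-$i$ root bags on $\pi_T$ force $\pi$ to traverse a subpath of length $\Omega(\eps\Delta)$ that escapes the cluster of the previous root bag, so since $\pi$ is a shortest path of length at most $\Delta$, there are at most $O(1/\eps)$ round-$i$ root bags on $\pi_T$. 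These root bags split $\pi_T$ into at most $O(1/\eps) + 1$ maximal subpaths, each of which contains no round-$i'$ root bag for any $i' \ge i$; applying \Cref{obs:hierachy2} to each such subpath contributes at most $k+1$ clusters created at round $i$. Each round-$i$ root bag itself contributes at most $k+1$ more clusters (its own balls). Finally, the part of each subpath's vertex set not yet accounted for lies in clusters created at rounds $i-1, \dots, 1$, and we apply the induction hypothesis to each of the $O(1/\eps)$ subpaths to get at most $O(1/\eps) \cdot J_{i-1}$ clusters from the lower rounds. Summing: $J_i = O(1/\eps)\cdot(k+1) + O(1/\eps)\cdot J_{i-1} = O(1/\eps)\cdot J_{i-1} + O(1/\eps)$, which solves to $J_i = O((i/\eps)^i)$ (again treating $k$ as an absolute constant, or carrying it as $O((ik/\eps)^i)$ and noting $k = O(1)$).

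The main obstacle I anticipate is the bookkeeping in the counting step: making precise why the number of round-$i$ root bags on a \emph{simple} path $\pi_T$ corresponding to a shortest path is $O(1/\eps)$. The subtlety is that \Cref{obs:hierarchy} only bounds distances between \emph{centers} of round-$i$ clusters in an ancestor–descendant pair of root bags, and I need to translate "centers far apart in $T$-ancestor sense" into "the path $\pi$ in $G$ is long." The right argument is that when $\pi_T$ passes through a round-$i$ root bag $X$, it enters and must eventually leave the set $S(B_X)$ of bags hit by the balls grown at $X$; by construction every vertex of $\pi$ inside $S(B_X)$ that is not yet clustered gets clustered at round $i$, and escaping $S(B_X)$ to reach the next round-$i$ root bag costs $\pi$ at least $\eps\Delta$ in $G$-length (the ball radius). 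Pinning down that this escape cost is genuinely $\Omega(\eps\Delta)$ and is not double-counted across consecutive root bags — using that $\pi_T$ is simple and \Cref{obs:shareRoot} — is the part that requires care, but it is essentially a repackaging of the arguments already used in \Cref{obs:hierarchy} and \Cref{obs:hierachy2}.
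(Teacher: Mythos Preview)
Your proposal is essentially the same as the paper's proof: induction on the round index $i$, bounding the number of round-$i$ root bags along $\pi_T$ by $O(1/\eps)$ via \Cref{obs:hierarchy} (and the observation that at most one consecutive pair fails to be in an ancestor--descendant relation because $\pi_T$ is simple), then splitting and recursing. The only cosmetic difference is that the paper splits $\pi$ at the round-$i$ \emph{clusters} (yielding $(k+1)(1/\eps+3)+1$ pieces and the recurrence $J_i = ((k+1)(1/\eps+3)+1)J_{i-1} + (k+1)(1/\eps+3)$), whereas you split $\pi_T$ at the round-$i$ \emph{root bags} and invoke \Cref{obs:hierachy2} on each resulting subpath; both routes land on the same asymptotic bound, and the ``escape cost'' step you flag as the main obstacle is exactly the step the paper handles by appealing to \Cref{obs:hierarchy}.
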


\begin{proof}
The proof proceeds by induction on the round number $i$.
For the basis $i=1$, there at most $k+1$ clusters on $\pi$
by \Cref{obs:hierarchy}.

For the induction step, we assume the statement holds for all values smaller than $i$ (i.e., for all later rounds) and proceed to prove it for $i$, with $i > 1$. We first consider clusters created at the $i$th round.  Every time the algorithm creates new clusters, it creates up to $k+1$ clusters from the vertices of some root bag. 
Suppose first that $\pi_T$ does not pass through any root bag. By \Cref{obs:hierachy2}, vertices in bags of $\pi_T$ are contained in at most $k+1$ clusters, created by some root bag higher than the bags visited by $\pi_T$.
Otherwise, let $X_1, X_2, \ldots, X_\ell$ be the root bags which $\pi_T$ passes through. The number of different round-$i$ clusters visited is at most $(k+1)(\ell+1)$, since the path touches at most $(k+1)\ell$ clusters created by the root bags it visits and at most $k+1$ additional clusters created by some root bag higher than the highest bag of the path. 
We now find an upper bound for $\ell$.
Consider a pair $X_i, X_{i+1}$ of root bags along $\pi_T$ such that $X_i$ is an ancestor of $X_{i+1}$. 
It follows from \Cref{obs:hierarchy} that the distance between the centers of $X_i$ and $X_{i+1}$ is at least $\eps\Delta$. The same conclusion can be reached when $X_{i}$ is a descendant of $X_{i+1}$.
Among the root bags in $X_1, X_2, \ldots, X_\ell$, there is at most one pair of consecutive root bags which are not in ancestor/descendant relation. The path between them is contained in at most $k+1$ clusters by \Cref{obs:hierachy2}.
We conclude that if length of $\pi$ is $\Delta$, then the corresponding walk $\pi_T$ passes through at most $1/\eps + 2$ different root bags, i.e., $\ell \le 1/\eps+2$. Hence, the number of clusters visited by $\pi$ is at most $(k+1)(1/\eps + 3)$. 

The above argument bounds the number of intersections with the clusters built in the $i$-th round by $(k+1)(1/\eps + 3)$. This splits the path into at most $(k+1)(1/\eps + 3)+1$ connected parts that were unclustered by clusters from the $i$th round. Inductively, each such part intersect $J_{i-1}$ clusters. Hence, $J_i$ satisfies recurrence 
\[
J_i = ((k+1)(1/\eps + 3)+1)\cdot J_{i-1} + (k+1)(1/\eps + 3)
\]
with $J_{1} = 1$. The solution is given by $J_{i}=O((i/\eps)^i)$. The lemma follows since the number of rounds is $k+1$ and $J_{k+1}=O((k/\eps)^k)$.
\end{proof}

By combining this lemma with \Cref{obs:path}, we conclude:
\begin{corollary}\label{lem:countClusters} 
For every pair of vertices $u$ and $v$ within distance at most $\Delta$ in $G$, there is an $O((k/\eps)^{k+1})$-hop path in the cluster graph induced by the subset of clusters that has nontrivial intersections with some (approximate) shortest path $\pi(u,v)$ between $u$ and $v$ in $G$.  
\end{corollary}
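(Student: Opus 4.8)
The plan is to obtain the corollary almost immediately from \Cref{obs:simplePath} and \Cref{obs:path}; the only genuine work is transferring the bound from the overlapping preliminary clusters $\mathcal{C}$ to the disjoint output clusters $\mathcal{C}'$, and converting ``few clusters along $\pi$'' into ``short path in $\check{G}$''. First I would apply \Cref{obs:path}: given $u,v$ with $\delta_G(u,v)\le\Delta$, fix a shortest path $\pi=\pi(u,v)$ in $G$ whose corresponding walk $\pi_T$ in $T$ is a \emph{simple} path; this serves as the (approximate, in fact exact) shortest path in the statement.

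Next I would invoke \Cref{obs:simplePath} with $i=k+1$, i.e.\ the first round of $\textsc{Clustering}$: the hypothesis that $\pi_T$ meets no round-$i'$ root bag for an earlier round $i'>k+1$ holds vacuously, since the rounds run from $k+1$ down to $1$. The lemma then gives that the vertices appearing in bags of $\pi_T$ lie in at most $J_{k+1}=O((k/\eps)^{k+1})$ clusters of $\mathcal{C}$, over all rounds. By \Cref{def:decompWalk} every vertex of $\pi$ lies in some bag of $\pi_T$, so $\pi$ has nontrivial intersection with at most $J_{k+1}$ preliminary clusters.

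I would then transfer this to $\mathcal{C}'$. Recall that $\mathcal{C}'$ arises from an SSSP computation from a dummy vertex $s$ joined by weight-$0$ edges to every preliminary center, so each vertex is placed in the cluster of its nearest preliminary center. If a vertex $w$ on $\pi$ is covered in $\mathcal{C}$ by a ball $B_u$ (so $\delta_G(u,w)\le\eps\Delta$) and is assigned in $\mathcal{C}'$ to the cluster of center $u'$, then $\delta_G(u',w)\le\delta_G(u,w)\le\eps\Delta$, hence $w\in B_{u'}$ as well, so $\pi$ already meets the preliminary cluster $B_{u'}$. Consequently every cluster of $\mathcal{C}'$ meeting $\pi$ corresponds to one of the $\le J_{k+1}$ preliminary clusters met by $\pi$, so $\pi$ intersects at most $J_{k+1}=O((k/\eps)^{k+1})$ clusters of $\mathcal{C}'$. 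Finally, walking along $\pi$ from $u$ to $v$ and recording the clusters of $\mathcal{C}'$ of successive vertices produces a walk in the cluster graph $\check{G}$ (consecutive vertices of $\pi$ lie in equal or adjacent clusters) from the cluster of $u$ to the cluster of $v$, touching only clusters that meet $\pi$; shortcutting out repeated clusters turns this into a simple path with the same endpoints and the same property, of hop-length at most $J_{k+1}-1 = O((k/\eps)^{k+1})$.

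I expect the only delicate point to be the passage from the overlapping preliminary clusters $\mathcal{C}$ (about which \Cref{obs:simplePath} speaks) to the disjoint output clusters $\mathcal{C}'$; everything else is bookkeeping. The key observation there is that the SSSP reassignment can only move a vertex to an \emph{even closer} center, keeping it inside some preliminary ball that $\pi$ already intersects, so the count of clusters met by $\pi$ cannot increase when passing to $\mathcal{C}'$.
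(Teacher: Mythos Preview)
Your proposal is correct and matches the paper's one-line proof, which simply combines \Cref{obs:simplePath} (at $i=k+1$, where the hypothesis is vacuous) with \Cref{obs:path}. You add the $\mathcal{C}\to\mathcal{C}'$ transfer and the cluster-graph path construction explicitly, but these are precisely the bookkeeping steps the paper leaves implicit (the former via the assertion in \Cref{lem:tw-cluster-diameter} that the SSSP reassignment only shrinks each preliminary cluster).
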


From \Cref{lem:countClusters} and \Cref{lem:tw-cluster-diameter}, we conclude:
\begin{theorem}
    For any $\e \in (0,1)$, any graph $G$ with treewidth $k$ embeds exactly into a graph $G'$ with treewidth $O(k)$, such that $G'$ has an $(\e, O(k/\e)^{k+1})$-shortcut partition.
\end{theorem}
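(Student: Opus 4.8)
The statement is a repackaging of the two results just established---\Cref{lem:tw-cluster-diameter} for the cluster diameter and \Cref{lem:countClusters} for the hop bound---together with a verification that the \emph{preprocessing} phase of $\textsc{Clustering}$ is an isometric embedding whose target has treewidth $O(k)$. So the plan has three ingredients: (1) show the preprocessing is exact; (2) show it inflates treewidth only by a constant factor; (3) invoke the two lemmas and rescale $\e$.

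\emph{The preprocessing is an exact embedding.} Let $T$ be a width-$k$ tree decomposition of $G$, and form $G'$ as in the preprocessing phase: replace each occurrence of a vertex $v$ in a bag by a fresh copy, add a clique on the copies inside each bag in which the edge between two copies $u,v$ receives weight $\delta_G(u,v)$, and join copies of the same $v$ in a parent bag and a child bag by a $0$-weight edge. Define $f\colon V(G)\to V(G')$ sending $v$ to an arbitrary copy of $v$; this is well defined for distances since the bags of $T$ containing $v$ induce a subtree, so all copies of $v$ are linked by $0$-weight edges and lie at mutual distance $0$ in $G'$. To see $d_{G'}(f(x),f(y))\le\delta_G(x,y)$, walk a shortest path from $x$ to $y$ in $G$ edge by edge: each edge lies in a common bag and is realized by a clique edge of equal weight, and consecutive pieces are stitched together through $0$-weight copy edges. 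Conversely, the projection $g\colon V(G')\to V(G)$ that forgets the copy index sends each edge $(a,b)$ of $G'$ to a pair at $\delta_G$-distance at most the weight of $(a,b)$ (clique edges have weight $\delta_G$; copy edges map to a single vertex), so $g$ is non-expanding and $\delta_G(x,y)=\delta_G(g(f(x)),g(f(y)))\le d_{G'}(f(x),f(y))$. Hence $f$ is an isometry, and in particular $G'$ still has diameter $\Delta$.

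\emph{Treewidth of $G'$.} Keep the shape of $T$; for a node $X$ with parent $Y$ (the root has none) let the new bag be $\{v^X : v\in B_X\}\cup\{v^Y : v\in B_X\cap B_Y\}$, of size at most $2(k{+}1)$. The tree-decomposition axioms are routine: every clique edge of $G'$ lies inside a single new bag; every $0$-weight edge $(v^X,v^{\parent(X)})$ lies inside the new bag of $X$; and for a fixed copy $v^X$ the new bags containing it are exactly $X$ together with the children $Z$ of $X$ with $v\in B_Z$, a star-shaped (hence connected) subtree. Thus $\tw(G')\le 2k+1=O(k)$.

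\emph{Assembling the shortcut partition; main obstacle.} Run the remainder of $\textsc{Clustering}$ on the preprocessed instance. By \Cref{lem:tw-cluster-diameter} the output $\mathcal{C}'$ is a partition of $V(G')$ into connected clusters of strong diameter $O(\e\Delta)$; by \Cref{lem:countClusters} (combined with \Cref{obs:path}, which supplies a shortest path whose corresponding walk in $T$ is simple) every pair $u,v\in V(G')$---all pairs, since the diameter is $\Delta$---admits an $O((k/\e)^{k+1})$-hop path in the cluster graph through only clusters that intersect that shortest path. These are precisely the diameter and low-hop requirements of \Cref{def:shortcut-partition}, with distortion $1$ (a fortiori $\le 1+\e$). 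Finally, rescaling $\e$ by a suitable absolute constant turns ``strong diameter $O(\e\Delta)$'' into ``$\le\e\Delta$'' while preserving the hop bound $O((k/\e)^{k+1})$, yielding the claimed $(\e,O(k/\e)^{k+1})$-shortcut partition of $G'$. There is no deep difficulty here: the only genuine work is the bookkeeping in the two middle paragraphs, showing that the fresh-copy/clique/zero-edge gadget is truly isometric and costs only a constant factor in treewidth---everything else is a direct appeal to \Cref{lem:tw-cluster-diameter} and \Cref{lem:countClusters}.
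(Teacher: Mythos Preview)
Your proposal is correct and matches the paper's approach: the paper states the theorem as an immediate consequence of \Cref{lem:tw-cluster-diameter} and \Cref{lem:countClusters}, and you supply exactly the bookkeeping the paper leaves implicit---namely that the preprocessing phase (cliquing the bags with weights $\delta_G(u,v)$, then splitting into fresh copies joined by $0$-weight edges) is an isometric embedding into a graph of treewidth $O(k)$. Your tree-decomposition construction for $G'$ and the non-expanding projection argument are both sound, and the final rescaling of $\e$ is the standard cleanup.
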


Combining this theorem with our general framework for constructing tree cover (\Cref{lm:reduction} and \Cref{thm:partition-to-cover}) proves \Cref{thm:treewidth}.


\section{Embedding planar graphs into low-treewidth graphs}
\label{S:treewidth-embedding}

We show that every planar graph embeds, with distortion $+O(\e \Delta)$, into a graph with $O_\e(1)$ treewidth.
The proof is simple and uses two facts of planar graphs: it can be covered by $O_\e(1)$ forests, and it has (a weak version of) a shortcut partition.

\begin{claim}\label{claim:cover-partition}
    Let $G$ be a planar graph with diameter $\Delta$, and let $\e > 0$. Then there is a set $\mathcal{F}$ of $O(\e^{-3})$ forests of rooted trees 
    and a partition $\mathfrak{P}$ of $G$ into components with diameter $\e \Delta$, such that:
    \begin{itemize}
        \item \textnormal{[Low-hop.]} For every pair of vertices $u$ and $v$ in $G$, there is a path in $G$ between $u$ and $v$ that intersects at most $O(\e^{-1})$ clusters.
        \item \textnormal{[Root preservation.]} For every $u,v\in V(G)$, there is some tree $T$ in a forest of $\mathcal{F}$ such that (i) $\dist_{G}(u,v) \leq \dist_{T}(u,v) \leq \dist_G(u,v) + O(\e\Delta)$, and (ii) the shortest path between $u$ and $v$ in $T$ passes through the root of $T$.
        \item \textnormal{[Cluster disjointness.]} Every forest in $\mathcal{F}$ is a spanning forest (i.e., it contains no Steiner vertices or edges). Further, no two trees in any forest in $\mathcal{F}$ intersect the same cluster of $\mathfrak{P}$.
    \end{itemize}
\end{claim}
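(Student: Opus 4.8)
The plan is to obtain both objects from the gridtree-hierarchy machinery of \Cref{S:grid-scattering,S:gridtree-planar,S:treecover-gridlike}, taking care to feed the \emph{same} clustering into $\mathfrak{P}$ and into the construction of the forest cover. First I would apply \Cref{L:gridtree-planar} with parameter $\e_0=\Theta(\e)$ to fix an $\e_0\Delta$-gridtree hierarchy $\mathcal{H}$ of $G$, the hidden constant chosen together with a constant distortion parameter $t=\Theta(1)$ so that all the $O(\e\Delta)$ cluster-diameter bounds below become at most $\e\Delta$; this rescaling is purely cosmetic. Running $\textsc{ClusterHierarchy}(G,\mathcal{H})$ with parameter $t$ yields a clustering $\mathcal{C}$; set $\mathfrak{P}\coloneqq\mathcal{C}$. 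By \Cref{thm:gridtree-implies-shortcut} (with \Cref{cor:shortcut-constant-distortion}), $\mathfrak{P}$ partitions $V(G)$ into connected clusters of strong diameter at most $\e\Delta$, and $\cost_{\Theta(1),\mathfrak{P}}(u,v)=O(\e^{-1})$ for every pair $u,v$ (here using $\dist_G(u,v)\le\Delta$). The diameter part of the claim is then immediate.

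For the low-hop part I would turn this cost bound into an honest $G$-path. The bound $\cost_{\Theta(1),\mathfrak{P}}(u,v)=O(\e^{-1})$ provides a path $\check P=(C_0,\dots,C_h)$ in the cluster graph with $h=O(\e^{-1})$, $C_0\ni u$ and $C_h\ni v$; for each edge $(C_{j-1},C_j)$ of $\check P$ pick a witnessing $G$-edge between $C_{j-1}$ and $C_j$, and inside each $C_j$ join the two relevant endpoints by a path of $G[C_j]$ (possible since clusters are connected). Concatenating gives a walk in $G$ from $u$ to $v$ contained in $C_0\cup\cdots\cup C_h$; shortcutting it to a simple path keeps it within those $h+1=O(\e^{-1})$ clusters, which is exactly the claimed path. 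For $\mathcal{F}$ I would take the output of $\textsc{CoverHierarchy}(G,\mathcal{H})$, which internally uses the cluster centers of $\mathcal{C}$. By \Cref{lem:hierarchy-forest} this is a family of $O(d\cdot\e^{-2})$ spanning forests of rooted (SSSP) trees whose trees are pairwise cluster-disjoint with respect to $\mathcal{C}=\mathfrak{P}$, and by the depth bound $d=O(\e^{-1})$ of \Cref{L:layer-depth} we get $|\mathcal{F}|=O(\e^{-3})$; this yields the cluster-disjointness part verbatim.

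For root preservation I would unwind the proof of \Cref{lem:hierarchy-distortion}: given $u,v$ with a $G$-shortest path $P$, it exhibits a tree $T=T^\eta_i\in\mathcal{F}$ rooted at a cluster center $c_i$ that lies within $O(\e\Delta)$ (in $H_\eta$) of a vertex $p\in P$, with $u,v\in V(T)$, and with
\[
\dist_T(c_i,u)+\dist_T(c_i,v)=\dist_{H_\eta}(c_i,u)+\dist_{H_\eta}(c_i,v)\le \dist_G(u,p)+\dist_G(p,v)+O(\e\Delta)=\dist_G(u,v)+O(\e\Delta).
\]
Hence the $u$--$v$ route in $T$ through its root $c_i$ has length at most $\dist_G(u,v)+O(\e\Delta)$, which is property~(ii), and property~(i) follows because $T\subseteq G$ forces $\dist_T(u,v)\ge\dist_G(u,v)$, while $\dist_T(u,v)$ is at most the length of that root-route.

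I do not expect a genuine obstacle: the claim is essentially a repackaging of results already proved. The one point needing care is \emph{consistency} --- $\mathfrak{P}$ and the forest cover $\mathcal{F}$ must come from the same hierarchy $\mathcal{H}$ and the same cluster centers, so that ``cluster-disjoint with respect to $\mathfrak{P}$'' holds literally; this is guaranteed because $\textsc{CoverHierarchy}$ (via $\textsc{CoverGridtree}$) uses exactly the centers produced by $\textsc{ClusterHierarchy}$. A secondary, purely bookkeeping point is reading ``the shortest path between $u$ and $v$ in $T$ passes through the root'' in the operational sense used in \Cref{S:treewidth-embedding} (the short $u$--$v$ route through the root of $T$), which is precisely what the displayed inequality provides; everything else is a direct appeal to \Cref{L:gridtree-planar}, \Cref{thm:gridtree-implies-shortcut}, \Cref{cor:shortcut-constant-distortion}, \Cref{L:layer-depth}, \Cref{lem:hierarchy-forest}, and \Cref{lem:hierarchy-distortion}.
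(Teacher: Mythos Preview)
Your proposal is correct and follows essentially the same approach as the paper: the paper's proof of this claim is a one-paragraph pointer that takes $\mathfrak{P}$ to be the $(\e,O(\e^{-1}))$-shortcut partition with $O(1)$ distortion from \Cref{thm:gridtree-implies-shortcut} and $\mathcal{F}$ to be the associated spanning forest cover from \Cref{S:treecover-gridlike}, citing \Cref{lem:hierarchy-forest} for cluster-disjointness and the proof of \Cref{lem:hierarchy-distortion} for root preservation. Your write-up is simply a more detailed unpacking of the same references, including the explicit construction of the $G$-path from the cluster-graph path and the correct ``operational'' reading of the root-preservation condition (ii) as a bound on $\dist_T(u,\operatorname{root}(T))+\dist_T(\operatorname{root}(T),v)$, which is exactly how it is used downstream in \Cref{lem:hatg-distortion}.
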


\noindent This claim is satisfied by choosing $\mathfrak{P}$ to be an $(\e, O(\e^{-1}))$-shortcut partition with $O(1)$ distortion
(which exists by \Cref{thm:gridtree-implies-shortcut}),
and choosing $\mathcal{F}$ to be the associated spanning forest cover of size $O(\e^{-3})$ described in Section~\ref{S:treecover-gridlike} (cf. \Cref{thm:planar-cover}).
In particular, the low-hop property in the claim follows from the low-hop property of shortcut partitions; the root preservation property follows from the proof of \Cref{lem:hierarchy-distortion}; and the cluster disjointness property follows from \Cref{lem:hierarchy-forest}.

\subsection{Adding edges to a low-treewidth graph}
Before describing the construction of the low-treewidth embedding, we prove a general lemma to bound the treewidth of a graph produced by adding certain edges to a low-treewidth graph. We begin by stating a simple property of tree decompositions.
(For definition and basic properties of treewidth and tree decompositions, we refer the readers to the textbook by Diestel~\cite{Diestel} for an introduction.)

\begin{claim}\label{claim:connected-treedecomp}
    Let $G$ be a graph, and let $\mathcal{T}$ be a tree decomposition for $G$. 
    If $H$ is a connected subgraph of $G$, then the set of all bags intersecting nontrivially with $H$ forms a connected subtree in $\mathcal{T}$.
\end{claim}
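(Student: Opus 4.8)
The plan is to reduce the statement to the two defining axioms of a tree decomposition: (i) for every vertex $v \in V(G)$, the set of bags $\mathcal{T}_v$ containing $v$ induces a connected subtree of $\mathcal{T}$; and (ii) for every edge $(u,v) \in E(G)$, some bag of $\mathcal{T}$ contains both $u$ and $v$. Write $\mathcal{S}$ for the set of bags that intersect $V(H)$ nontrivially; by definition $\mathcal{S} = \bigcup_{v \in V(H)} \mathcal{T}_v$, so the goal is to show this union induces a connected subtree of $\mathcal{T}$.

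First I would record the elementary fact that a union of a family of subtrees of a tree is again a subtree, provided the \emph{intersection graph} of the family — the graph whose vertices are the subtrees, with an edge between two of them whenever they share a node — is connected. This follows by a straightforward induction on the number of subtrees: the union of two subtrees that share a node is connected (walk from any node of the first to the shared node, then to any node of the second, staying inside the union), hence is itself a subtree; and replacing two intersecting subtrees by their union keeps the intersection graph connected, so the induction goes through. (This is standard; see, e.g., Diestel~\cite{Diestel}.)

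Next I would apply this lemma to the family $\{\mathcal{T}_v : v \in V(H)\}$, each member of which is a subtree by axiom (i). It remains to verify that the intersection graph of this family is connected, and here I claim it contains a copy of $H$ as a spanning subgraph: whenever $(u,v) \in E(H) \subseteq E(G)$, axiom (ii) produces a bag containing both $u$ and $v$, so $\mathcal{T}_u \cap \mathcal{T}_v \neq \varnothing$ and hence $\{\mathcal{T}_u,\mathcal{T}_v\}$ is an edge of the intersection graph. Since $H$ is connected, the family indexed by $V(H)$ is therefore connected in the intersection graph, and the lemma gives that $\mathcal{S} = \bigcup_{v \in V(H)} \mathcal{T}_v$ induces a connected subtree of $\mathcal{T}$, as claimed. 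There is essentially no serious obstacle in this argument; the only point needing a modicum of care is the auxiliary lemma on unions of subtrees, which is entirely routine.
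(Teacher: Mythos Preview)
Your proof is correct and uses essentially the same idea as the paper's: both arguments exploit that $\mathcal{S} = \bigcup_{v\in V(H)} \mathcal{T}_v$, that each $\mathcal{T}_v$ is a subtree, and that adjacent vertices in $H$ have overlapping subtrees, so connectivity of $H$ forces connectivity of the union. The paper just packages this slightly differently --- it fixes a root $v$, lets $B_i$ be the bags meeting the $i$-hop neighborhood of $v$ in $H$, and inducts on $i$ --- whereas you isolate the ``union of subtrees with connected intersection graph is a subtree'' lemma and apply it in one shot; the two presentations are equivalent.
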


\begin{proof}
    Let $v$ be an arbitrary vertex in $H$. 
    Let $V_i$ denote the set of all vertices that are within hop-distance $i$ of $v$ in $H$, and let \EMPH{$B_i$} be the set of bags in $\mathcal{T}$ that intersect nontrivially with $V_i$. Because $H$ is connected, there is some $i$ where all bags in $B_i$ intersect nontrivially with $H$. We prove, by induction on $i$, that $B_i$ is a connected subtree in $\mathcal{T}$. In the base case where $i = 0$, $V_0 = \set{v}$ and so the claim follows from the definition of tree decomposition. For the inductive step, assume that the claim holds for every $B_j$ with $j < i$. 
    For every $u \in V_i \setminus V_{i-1}$, let \EMPH{$B^+_u$} denote the set of bags in $\mathcal{T}$ that contain $u$. Notice that every $u \in V_i \setminus V_{i-1}$ is adjacent to some vertex in $V_{i-1}$, so there is some bag (containing both $u$ and a vertex in $V_{i-1}$ adjacent to $u$) that is in both $B^+_u$ and $B_{i-1}$. As $B^+_u$ and $B_{i-1}$ are both connected subsets of $\mathcal{T}$, we conclude that $B_i = \bigcup_{u \in V_i \setminus V_{i-1}} B^+_u \cup B_{i-1}$ is a connected subset of $\mathcal{T}$. 
\end{proof}

\begin{lemma}\label{lem:extend-forest}
    Let $G$ be a graph, and let $\mathcal{T}$ be a tree decomposition for $G$ with treewidth $w$. Let $\mathcal{F} = \set{F_1, \ldots, F_k}$ be a set of spanning forests (that is, each forest is a subgraph of $G$ containing rooted trees). Then there is a tree decomposition $\mathcal{T}'$ for $G$ such that (i) $\mathcal{T}'$ has width $O(wk)$, and (ii) for every tree $T$ in some forest of $\mathcal{F}$ and for every vertex $v$ in $T$, there is some bag in $\mathcal{T}'$ that contains both $v$ and $\operatorname{root}(T)$.
\end{lemma}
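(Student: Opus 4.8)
\textbf{Proof plan for \Cref{lem:extend-forest}.}

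The plan is to build $\mathcal{T}'$ directly from $\mathcal{T}$, keeping the same underlying tree and enlarging each bag by a bounded number of tree-roots. Concretely, for a bag $X$ of $\mathcal{T}$ define
\[
X' \;\coloneqq\; X \;\cup\; \bigl\{\, \operatorname{root}(T) \;:\; 1\le \ell\le k,\ u\in X,\ u \text{ lies in a tree } T \text{ of } F_\ell \,\bigr\},
\]
and let $\mathcal{T}'$ have bag $X'$ in place of $X$. Each $u\in X$ contributes at most one root per forest, so $|X'|\le (k+1)|X|\le (k+1)(w+1)$, which immediately gives width $O(wk)$ and hence property (i). Note also that $X\subseteq X'$ for every bag, so vertex coverage is automatic. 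Property (ii) is built into the construction: for a tree $T$ in some $F_\ell$ and a vertex $v\in V(T)$, pick any bag $X$ of $\mathcal{T}$ with $v\in X$ (one exists since $v\in V(G)$); by definition $\operatorname{root}(T)\in X'$, so $\{v,\operatorname{root}(T)\}\subseteq X'$. This simultaneously shows that $\mathcal{T}'$ covers every edge of $G$ together with every new edge $(\operatorname{root}(T),v)$, so $\mathcal{T}'$ is in fact a tree decomposition of the augmented graph $G'$ used in the application.

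The only condition requiring real care is connectivity: for each vertex $x$, the bags of $\mathcal{T}'$ containing $x$ must form a subtree. Unwinding the definition, $x\in X'$ if and only if either $x\in X$ in $\mathcal{T}$, or $x$ is the root of some tree $T_x^{(\ell)}$ of $F_\ell$ and $X$ contains a vertex of $T_x^{(\ell)}$. Write $\mathcal{B}_0 = \{X : x\in X \text{ in }\mathcal{T}\}$, which is a nonempty subtree of the underlying tree by the tree-decomposition property of $\mathcal{T}$; and for each $\ell$ with $x=\operatorname{root}(T_x^{(\ell)})$, write $\mathcal{B}_\ell = \{X : X \text{ contains a vertex of } T_x^{(\ell)}\}$, which is a connected subtree by \Cref{claim:connected-treedecomp} applied to the connected subgraph $T_x^{(\ell)}$ of $G$. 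Since $x\in V(T_x^{(\ell)})$ we have $\mathcal{B}_0\subseteq\mathcal{B}_\ell$ for every such $\ell$. Hence the set of bags of $\mathcal{T}'$ containing $x$ equals $\mathcal{B}_0\cup\bigcup_\ell \mathcal{B}_\ell$, a union of connected subtrees that all contain the common nonempty set $\mathcal{B}_0$, and is therefore connected. This completes the verification that $\mathcal{T}'$ is a valid tree decomposition.

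I do not expect a serious obstacle here. The two points to be careful about are (a) keeping the per-bag blow-up to a factor of $k+1$ — which is why we attach to each $u\in X$ only the \emph{root} of $u$'s tree in each forest rather than the whole tree or all of its vertices — and (b) making the connectivity argument work for all $k$ forests at once, which is handled cleanly by \Cref{claim:connected-treedecomp} together with the observation that each $\mathcal{B}_\ell$ contains the fixed subtree $\mathcal{B}_0$, so the resulting union is automatically connected.
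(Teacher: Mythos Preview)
Your proposal is correct and follows essentially the same approach as the paper: the paper also keeps the underlying tree of $\mathcal{T}$, defines $R_v=\{\operatorname{root}(T):v\in V(T),\ T\text{ a tree in }\mathcal{F}\}$, and enlarges each bag containing $v$ by $R_v$; the width bound and property~(ii) are argued identically, and connectivity is proved via \Cref{claim:connected-treedecomp} by observing that the relevant trees all share the vertex $x$. Your connectivity argument is in fact slightly more carefully stated than the paper's (you separate out $\mathcal{B}_0$ and the $\mathcal{B}_\ell$ and note $\mathcal{B}_0\subseteq\mathcal{B}_\ell$, whereas the paper lumps everything into a single $H_v$), but the substance is the same.
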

\begin{proof}
    Let $\mathcal{T}' \gets \mathcal{T}$. 
    For every $v \in V(G)$, let $\EMPH{$R_v$} \coloneqq \set{\big. \operatorname{root}(T) \mid \text{$T$ is a tree in $\mathcal{F}$ where $v \in V(T)$}}$, and add the vertices $R_v$ to each bag in $\mathcal{T}'$ that contains $v$.

    \smallskip \noindent \textbf{Width of $\mathcal{T}'$.} Notice that for every $v$, at most one tree per forest of $\mathcal{F}$ contains $v$, so $|R_v| \leq k$. Each bag in $\mathcal{T}$ has at most $w + 1$ vertices. The corresponding bag in $\mathcal{T}'$ contains at most $(k+1)(w+1)$ vertices: it contains each of the $w+1$ original vertices $v$, plus the $w+1$ sets $R_v$ that are each of size at most $k$.

    \smallskip \noindent \textbf{Connectivity in $\mathcal{T}'$.} 
    We need to show that for every vertex $v$ in $G$, the bags in $\mathcal{T}'$ containing $v$ form a connected subtree. 
    For an arbitrary $v$ in $G$, let \EMPH{$H_v$} denote the union of all trees in (some forest of) $\mathcal{F}$ that contain $v$. Notice that $H_v$ is a connected subgraph of $G$, as it is the union of connected subgraphs that share a common vertex. \Cref{claim:connected-treedecomp} implies that the set of bags \EMPH{$B_v$} of $\mathcal{T}$ that intersects nontrivially with $V(H_v)$ form a connected subset in $\mathcal{T}$. 
    However, the set of bags in $\mathcal{T}'$ containing $v$ is precisely $B_v$.  
\end{proof}

\subsection{Embedding a planar graph}
Let $G$ be a planar graph with diameter $\Delta$, and let $\e > 0$. 
Let $\mathcal{F}$ be the forest cover and let $\mathfrak{P}$ be the partition guaranteed by \Cref{claim:cover-partition}. 
We construct a graph \EMPH{$G'$} by first contracting each cluster in $\mathfrak{P}$. For each cluster, choose an arbitrary vertex in that cluster to be the \EMPH{center vertex} of the cluster. Replace each supernode in $G'$ with a star: the center vertex of the cluster is the center of the star, and the non-center vertices in the cluster are the other vertices. Edges between supernodes are replaced with edges between the corresponding center vertices. Assign each edge $(u,v) \in E(G')$ weight equal to $\delta_G(u,v)$.

\begin{lemma}\label{lem:contractg-treewidth}
    The graph $G'$ has treewidth $O(\e^{-1})$.
\end{lemma}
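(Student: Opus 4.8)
The plan is to bound $\tw(G')$ by first controlling the treewidth of the cluster graph $\check{G}$ obtained by contracting each cluster of $\mathfrak{P}$, and then observing that the star-replacement step only attaches pendant vertices to $\check{G}$.

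First I would note that $\check{G}$ is a minor of $G$ — each cluster induces a connected subgraph of $G$, so contracting it is a legal minor operation — and hence $\check{G}$ is planar. Next I would bound the hop-diameter of $\check{G}$: by the low-hop property of \Cref{claim:cover-partition}, any two vertices $u,v\in V(G)$ are joined by a path in $G$ meeting only $O(\e^{-1})$ clusters, and contracting those clusters turns this path into a walk in $\check{G}$ of hop-length $O(\e^{-1})$ between the supernodes containing $u$ and $v$. Thus $\check{G}$ is a planar graph of hop-diameter $O(\e^{-1})$, and the classical diameter--treewidth bound for planar graphs (e.g.\ a planar graph of radius $r$ has treewidth $O(r)$~\cite{RS86}) yields $\tw(\check{G}) = O(\e^{-1})$.

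It remains to pass from $\check{G}$ to $G'$. By construction $G'$ is obtained from $\check{G}$ by: assigning weights $\delta_G(\cdot,\cdot)$ to its edges, which is irrelevant for treewidth; renaming each supernode as the center vertex of the corresponding cluster; and, for every cluster $C$ with center $c$, attaching each of the $|C|-1$ non-center vertices of $C$ to $c$ by a single edge. No edges are introduced among non-center vertices, and the only edges between distinct clusters are the (center-to-center) images of the edges of $\check{G}$. Since attaching a leaf to a graph $H$ produces a graph of treewidth $\max(\tw(H),1)$ (extend any tree decomposition of $H$ by a new bag $\set{u,v}$ adjacent to a bag containing $v$), repeatedly attaching all the non-center vertices keeps the treewidth at $\max(\tw(\check{G}),1) = O(\e^{-1})$.

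I do not anticipate a real obstacle: the argument is essentially a chain of standard facts. The only points needing care are verifying that the star-replacement step adds exactly the edges of $\check{G}$ among centers plus pendant edges (so that there is no unexpected treewidth blow-up), and quoting the diameter--treewidth theorem in a form applicable to planar graphs; all the substantive work lies upstream in establishing the shortcut partition and its low-hop guarantee.
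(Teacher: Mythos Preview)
Your proposal is correct and follows essentially the same approach as the paper. The paper's proof is slightly more compressed: it argues directly that $G'$ itself is planar (since contracting connected subgraphs and then blowing supernodes up into stars both preserve planarity) and has hop-diameter $O(\e^{-1})$, then invokes the diameter--treewidth bound once; you instead apply the bound to $\check{G}$ and then handle the pendant star-leaves separately, which is a harmless stylistic difference.
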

\begin{proof}
    The low-hop property of \Cref{claim:cover-partition} guarantees that there is a path of hop-length $O(\e^{-1})$ between every pair of vertices in $V(G')$. Further, $G'$ is planar: Planar graphs are minor-closed, and replacing supernodes with stars does not affect planarity. Planar graphs have treewidth asymptotically upper-bounded by hop diameter, which proves the claim.
\end{proof}

We now augment $G'$ with extra edges into \EMPH{$\hat{G}$} to reduce the distortion. 
Let $\hat{G} \gets G'$. For every tree $T$ in $\mathcal{F}$, and for every vertex $v$ in $T$, add an edge to $\hat{G}$ between $\operatorname{root}(T)$ and $v$. Assign the edge weight to be $\dist_G(\operatorname{root}(T), v)$.

\begin{lemma}\label{lem:hatg-distortion}
    For every pair of vertices $u$ and $v$ in $\hat{G}$, we have $\dist_G(u,v) \leq \dist_{\hat{G}}(u,v) \leq \dist_{G}(u,v) + O(\e\Delta)$.
\end{lemma}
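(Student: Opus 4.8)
The plan is to prove the two inequalities separately. The lower bound $\dist_G(u,v)\le \dist_{\hat G}(u,v)$ will be essentially immediate, and the upper bound will follow directly from the root-preservation clause of \Cref{claim:cover-partition}; all the real work has already been done there.

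For the lower bound, the first thing I would note is that $V(\hat G)=V(G)$: contracting each cluster of $\mathfrak{P}$ and then re-expanding it as a star retains every original vertex, and the augmentation only adds edges. Next, \emph{every} edge of $\hat G$ — a star edge inside a cluster, an inter-cluster edge inherited from $G'$, or an augmenting edge $(\operatorname{root}(T),v)$ — is assigned weight exactly $\dist_G$ between its two endpoints. Hence for any path $u=x_0,x_1,\dots,x_m=v$ in $\hat G$ its length is $\sum_i \dist_G(x_{i-1},x_i)\ge \dist_G(u,v)$ by the triangle inequality in $G$, so $\dist_{\hat G}(u,v)\ge \dist_G(u,v)$.

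For the upper bound I would invoke the root-preservation property of $\mathcal{F}$: there is a rooted tree $T$ in some forest of $\mathcal{F}$, with root $r:=\operatorname{root}(T)$, such that (i) $\dist_G(u,v)\le \dist_T(u,v)\le \dist_G(u,v)+O(\e\Delta)$ and (ii) the $T$-shortest path between $u$ and $v$ passes through $r$. In particular $u,v\in V(T)$, so during the augmentation step the edges $(r,u)$ and $(r,v)$ — of weights $\dist_G(r,u)$ and $\dist_G(r,v)$ — were both added to $\hat G$. Walking $u\to r\to v$ along these two edges is a path in $\hat G$ of length $\dist_G(r,u)+\dist_G(r,v)$. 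Since the forests of $\mathcal{F}$ are \emph{spanning} (the cluster-disjointness clause of \Cref{claim:cover-partition}), $T$ is a subgraph of $G$, so $\dist_T(x,y)\ge \dist_G(x,y)$ for all $x,y\in V(T)$; combined with (ii) this gives
\[
\dist_G(r,u)+\dist_G(r,v)\;\le\; \dist_T(r,u)+\dist_T(r,v)\;=\;\dist_T(u,v)\;\le\;\dist_G(u,v)+O(\e\Delta).
\]
Therefore $\dist_{\hat G}(u,v)\le \dist_G(r,u)+\dist_G(r,v)\le \dist_G(u,v)+O(\e\Delta)$, as claimed.

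The proof has no genuinely hard step; the only thing to be careful about is the bookkeeping of vertex identifications: that $V(\hat G)=V(G)$ so the statement typechecks, that both $u$ and $v$ lie in $V(T)$ so the two augmenting edges we use actually exist (which is exactly what clauses (i)+(ii) of root preservation ensure), and that the spanning property of $\mathcal{F}$ lets $\dist_T$ dominate $\dist_G$. I would call this verification the "main obstacle" only in the sense that it is the one place where a missing hypothesis in \Cref{claim:cover-partition} would break the argument.
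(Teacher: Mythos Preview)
Your proof is correct and follows essentially the same approach as the paper: the lower bound via the triangle inequality applied to edge weights equal to $\dist_G$, and the upper bound via the two augmenting edges $(r,u)$ and $(r,v)$ coming from the root-preservation property. You are in fact slightly more explicit than the paper in justifying why $\dist_G(r,u)+\dist_G(r,v)\le\dist_T(r,u)+\dist_T(r,v)$ via the spanning property, a step the paper leaves implicit.
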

\begin{proof}
    Let $u$ and $v$ be vertices in $\hat{G}$. By the root preservation property in \Cref{claim:cover-partition}, there is some tree $T$ in $\mathcal{F}$ such that $\dist_T(u, \operatorname{root}(T)) + \dist_T(\operatorname{root}(T), v) \leq \dist_G(u, v) + O(\e\Delta)$. The construction of $\hat{G}$ guarantees that there is an edge in $\hat{G}$ from $u$ to $\operatorname{root}(T)$ and an edge from $\operatorname{root}(T)$ to $v$, weighted according to their distances in $G$. This proves the upper-bound for $\dist_{\hat{G}}(u,v)$. The lower bound follow from the fact that every edge $(u,v)$ in $\hat{G}$ has weight $\dist_G(u,v)$.
\end{proof}

To prove that $\hat{G}$ has low treewidth, we want to use \Cref{lem:extend-forest} applied to $G'$ and the forest cover for~$\mathcal{F}$. 
To do this, we need first to state a lemma that lets us translate between spanning forests in $G$ and spanning forests in $G'$.
\begin{lemma}\label{lem:tree-translation}
    For every $F$ in $\mathcal{F}$, there is a corresponding spanning forest $F'$ of rooted trees (each a subgraph of $G'$), such that the following holds: For every tree $T$ in $F$, there is a tree $T'$ in $F'$ such that $V(T) \subseteq V(T')$ and $\operatorname{root}(T) = \operatorname{root}(T')$.
\end{lemma}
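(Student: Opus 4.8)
The plan is to construct $F'$ from $F$ cluster by cluster, using the fact that each cluster of $\mathfrak{P}$ is contracted to a star in $G'$ with a designated center vertex. Fix a forest $F \in \mathcal{F}$. For each tree $T \in F$, I will build the corresponding tree $T'$ as follows. By the cluster disjointness property of \Cref{claim:cover-partition}, the vertex set $V(T)$ meets each cluster $C$ of $\mathfrak{P}$ in a set $V(T) \cap C$, and no other tree of $F$ touches $C$; so I can safely add, for every cluster $C$ with $V(T) \cap C \neq \varnothing$, all the star edges of $G'$ joining the center vertex of $C$ to the vertices of $V(T) \cap C$. Call this edge set $S_C$. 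Then set $T' := S_C \cup (\text{edges of } G' \text{ realizing the edges of } T \text{ between clusters})$ — more precisely, for every edge $(a,b)$ of $T$ with $a \in C_1$ and $b \in C_2$, $C_1 \neq C_2$, the contraction produces a supernode edge which in $G'$ becomes the edge between the centers of $C_1$ and $C_2$; include that edge. (If $a,b$ lie in the same cluster, the edge of $T$ is already subsumed by the star edges $S_C$, so we discard it.)

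The key steps, in order: (1) verify that $T'$ as defined is connected and acyclic, i.e.\ a tree, and that $V(T) \subseteq V(T')$. Connectivity follows because contracting clusters is a quotient map: $T$ connected implies its image (the "contracted tree", possibly with self-loops removed) is connected, and we have re-expanded each contracted supernode into the star on $V(T)\cap C$, which keeps every original vertex reachable via its cluster center. Acyclicity: the contracted image of $T$ is a tree on the supernodes (a homomorphic image of a tree with parallel edges identified is a tree), each re-expansion into a star adds no cycle, and the stars of distinct clusters are vertex-disjoint except that a cluster center may coincide with an endpoint used by an inter-cluster edge — but since the inter-cluster structure is a tree on centers and each star hangs off a single center, no cycle is created. (2) Designate $\operatorname{root}(T') := \operatorname{root}(T)$; this is legitimate since $\operatorname{root}(T) \in V(T) \subseteq V(T')$. (3) Argue that $F' := \bigcup_{T \in F} T'$ is a spanning forest of $G'$, i.e.\ the various $T'$ are pairwise vertex-disjoint. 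This is exactly where cluster disjointness is essential: two trees $T_1, T_2$ of $F$ are vertex-disjoint in $G$ \emph{and} touch no common cluster of $\mathfrak{P}$, so their re-expanded versions $T_1', T_2'$ share no cluster center and no non-center vertex, hence are disjoint in $G'$. (4) Collect: $\mathcal{F}' := \{F' : F \in \mathcal{F}\}$, of the same size $O(\e^{-3})$, with $V(T) \subseteq V(T')$ and $\operatorname{root}(T) = \operatorname{root}(T')$ for each $T \in F$ matched to $T' \in F'$.

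I expect the main obstacle to be the bookkeeping around the contraction/re-expansion: carefully checking that when an edge of $T$ has both endpoints in the same cluster it can simply be dropped (because $S_C$ already connects those vertices through the center), and that when several edges of $T$ cross between the same pair of clusters they collapse to a single center-to-center edge in $G'$ without disconnecting $T'$. The cleanest way to handle this rigorously is to observe that $T'$ is spanning-connected by a direct reachability argument (every vertex of $V(T)$ reaches its cluster center via $S_C$, and the cluster centers form a connected subgraph because they carry a spanning connected image of $T$), and then count edges: $|E(T')| = |V(T')| - 1$ would follow once we know $T'$ is connected and has the right vertex count, giving acyclicity for free. This lets us sidestep an intricate cycle analysis entirely. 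The proof should therefore be short, with the disjointness of the $T'$s — a one-line consequence of cluster disjointness — being the only genuinely load-bearing point.
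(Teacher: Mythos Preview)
Your overall architecture is right and matches the paper's: build $T'$ inside $G'$ on a vertex set containing $V(T)$ plus the centers of the clusters that $T$ touches, use cluster disjointness to get the $T'$s pairwise vertex-disjoint, and root at $\operatorname{root}(T)$. The disjointness argument in step (3) is exactly the load-bearing point, and you have it correctly.

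The gap is in step (1), acyclicity. The assertion ``a homomorphic image of a tree with parallel edges identified is a tree'' is false. Take $T$ to be the path $a\text{--}b\text{--}c\text{--}d$ with $a,d\in C_1$, $b\in C_2$, $c\in C_3$: the contracted image is the triangle $C_1\text{--}C_2\text{--}C_3\text{--}C_1$. Nothing in the construction of $\mathcal{F}$ or $\mathfrak{P}$ prevents a single tree of $F$ from revisiting a cluster, so your center-to-center edge set can genuinely contain cycles, and your $T'$ need not be a tree. Your fallback via edge counting does not rescue this: connectivity gives only $|E(T')|\ge |V(T')|-1$, and in the example above (with $a,b,c,d$ all non-centers) you get $7$ vertices and $7$ edges.

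The fix is the one the paper uses, and it is simpler than your explicit construction: do not try to name the edges of $T'$ at all. Just observe that the vertex set you care about induces a \emph{connected} subgraph of $G'$ (your reachability argument already shows this: every vertex of $V(T)$ reaches its center via a star edge, and the centers of clusters meeting $T$ are connected via center-to-center edges because $T$ is connected in $G$), and then let $T'$ be \emph{any} spanning tree of that induced subgraph, rooted at $\operatorname{root}(T)$. This sidesteps the cycle issue entirely and leaves the disjointness argument untouched.
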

\begin{proof}
    For each tree $T$ in $F$, 
    Let $V_T \subseteq V(G)$ be the union of cluster vertices in $\mathfrak{P}$ that intersect nontrivially with $T$.
    Let \EMPH{$V'_T$} is the corresponding subset of $V(G')$ by identifying vertices in $V_T$ that came from the same cluster in $\mathfrak{P}$ into a single vertex.
    The sets  $\{V'_T\}_{T\in F}$ are pairwise disjoint, because of the cluster disjointness property of $\mathfrak{P}$ in \Cref{claim:cover-partition}. Further, because each cluster in $\mathfrak{P}$ is connected and because $T$ is a connected subgraph of $G$, each set $V'_T$ induces a connected subgraph of $G'$. Let $T'$ be a spanning tree in $G'$ rooted at $\operatorname{root}(T)$ with vertex set $V'_T$. The collection of all trees $T'$ is pairwise vertex-disjoint, and thus forms a forest on $G'$.
\end{proof}

\begin{lemma}\label{lem:hatg-treewidth}
    The graph $\hat{G}$ has treewidth $O(\e^{-4})$.
\end{lemma}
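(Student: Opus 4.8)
The plan is to apply the machinery of \Cref{lem:extend-forest} to the graph $G'$, using the forest cover $\mathcal{F}$ (translated into a forest cover on $G'$ via \Cref{lem:tree-translation}), and then observe that $\hat{G}$ is a subgraph of the graph witnessed by the resulting tree decomposition. First I would start from the planar graph $G'$, which by \Cref{lem:contractg-treewidth} has a tree decomposition $\mathcal{T}$ of width $O(\e^{-1})$. For each forest $F$ in $\mathcal{F}$, \Cref{lem:tree-translation} gives a corresponding spanning forest $F'$ of rooted trees in $G'$, with $V(T)\subseteq V(T')$ and $\operatorname{root}(T)=\operatorname{root}(T')$ for each $T\in F$; let $\mathcal{F}' = \set{F'_1,\ldots,F'_k}$ where $k = |\mathcal{F}| = O(\e^{-3})$. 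Applying \Cref{lem:extend-forest} to $G'$, $\mathcal{T}$, and $\mathcal{F}'$ produces a tree decomposition $\mathcal{T}'$ of $G'$ of width $O(wk) = O(\e^{-1}\cdot\e^{-3}) = O(\e^{-4})$, with the additional property that for every tree $T'$ in some forest of $\mathcal{F}'$ and every vertex $v\in V(T')$, some bag of $\mathcal{T}'$ contains both $v$ and $\operatorname{root}(T')$.

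Next I would verify that $\mathcal{T}'$ is in fact a valid tree decomposition of $\hat{G}$, not merely of $G'$. The vertex sets of $G'$ and $\hat{G}$ coincide, and $\hat{G}$ differs from $G'$ only by the addition, for every tree $T\in F\in\mathcal{F}$ and every $v\in V(T)$, of the edge $(\operatorname{root}(T), v)$. So it suffices to check that each such new edge has both endpoints in a common bag of $\mathcal{T}'$. Fix such an edge $(\operatorname{root}(T), v)$. By \Cref{lem:tree-translation}, the corresponding tree $T'\in F'$ satisfies $v\in V(T)\subseteq V(T')$ and $\operatorname{root}(T')=\operatorname{root}(T)$; hence by the guarantee of \Cref{lem:extend-forest} applied to $T'$, there is a bag of $\mathcal{T}'$ containing both $v$ and $\operatorname{root}(T')=\operatorname{root}(T)$. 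Therefore every edge of $\hat{G}$ is covered by some bag, and since the vertex-connectivity condition of a tree decomposition depends only on the vertex set (which is unchanged), $\mathcal{T}'$ is a tree decomposition of $\hat{G}$ of width $O(\e^{-4})$. This gives $\tw(\hat{G}) = O(\e^{-4})$, as claimed.

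The only subtle point — and the one I would be most careful about — is the bookkeeping in the translation step: I must make sure that the forests $F'$ produced by \Cref{lem:tree-translation} are genuinely spanning forests of rooted trees in $G'$ (so that \Cref{lem:extend-forest} applies), and that the root-identity $\operatorname{root}(T') = \operatorname{root}(T)$ is exactly what lets the new edges of $\hat{G}$ inherit the bag-sharing guarantee. The cluster-disjointness property of $\mathfrak{P}$ from \Cref{claim:cover-partition} is what makes the $V'_T$ pairwise disjoint and hence $F'$ a legitimate forest; this is used inside \Cref{lem:tree-translation} but is worth flagging as the structural fact without which the whole argument collapses. Everything else is a routine composition of the already-proven lemmas, so there is no genuinely hard step remaining — the work was front-loaded into establishing \Cref{lem:extend-forest}, \Cref{lem:tree-translation}, and \Cref{claim:cover-partition}.
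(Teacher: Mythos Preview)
Your proposal is correct and follows essentially the same approach as the paper: translate each forest of $\mathcal{F}$ to a spanning forest of $G'$ via \Cref{lem:tree-translation}, apply \Cref{lem:extend-forest} to $G'$ (treewidth $O(\e^{-1})$ by \Cref{lem:contractg-treewidth}) with these $O(\e^{-3})$ forests, and then check that every new edge of $\hat{G}$ is covered by the resulting decomposition. If anything, your write-up is slightly more explicit than the paper's in verifying that the decomposition of $G'$ is also valid for $\hat{G}$.
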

\begin{proof}
    For every forest $F$ in $\mathcal{F}$, apply \Cref{lem:tree-translation} to find a corresponding forest $F'$ in $G'$. The resulting set $\mathcal{F}'$ contains $O(\e^{-3})$ forests. By construction of $\hat{G}$, every edge $(u,v) \in E(\hat{G}) \setminus E(G')$ is induced by some tree $T$ where (without loss of generality) $u = \operatorname{root}(T)$ and $v \in V(T)$. By \Cref{lem:tree-translation}, there is a corresponding tree $T'$ in $\mathcal{F}'$ such that $u = \operatorname{root}(T')$ and $v \in V(T')$. Thus, applying \Cref{lem:extend-forest} to $G'$ (which, by \Cref{lem:contractg-treewidth}, has treewidth $O(\e^{-1})$) and $\mathcal{F}'$ yields a valid tree decomposition for $\hat{G}$ with width $O(|\mathcal{F}'|\cdot \tw(G')) =  O(\e^{-4})$.
\end{proof}

 We now observe that \Cref{thm:FKS-improved} follows from \Cref{lem:hatg-distortion} and \Cref{lem:hatg-treewidth}.


\section{Reduction from additive to multiplicative tree cover}
\label{S:add-to-mul}

In this section, we prove \Cref{lm:reduction}, which we restate below.
Recall that a tree cover $\mathcal{T}$ is \EMPH{$\Delta$-bounded} if every tree of $\mathcal{T}$ has diameter at most $\Delta$.

\MultToAdd*

We introduce the notion of a \emph{family of pairwise hierarchical partitions} and base our reduction on this family. To formally define this notion, we first define a hierarchical partition.

\begin{definition}[Hierarchical Partition]\label{label:hier-part}Let $\mu > 1$ be a parameter. 
Let $(X,\delta_X)$ be a metric space with minimum distance $1$ and maximum distance at most $\Phi$. A \EMPH{$\mu$-hierarchical partition} for $(X,\delta_X)$, denoted by $\mathbb{P} = \{\mathcal{P}_0,\mathcal{P}_1,\ldots,\mathcal{P}_{i_{\max}}\}$ where $i_{\max} = O(\log_{\mu}(\Phi)) $, is a set of partitions  such that:
	\begin{enumerate}
		\item  $\mathcal{P}_0$ contains singletons only and $\mathcal{P}_{i_{\max}}$ contains a single set $X$.
		\item \label{it:partition}Each $\mathcal{P}_i$ is a partition of $X$ into clusters of diameter at most $\mu^{i}$ for every integer $i \in [0,i_{\max}]$. We call \EMPH{$\mathcal{P}_i$} a \EMPH{partition at scale $i$} of $\mathbb{P}$. 		
  \item \label{it:nested} Each set $S$ in $\mathcal{P}_i$ for $i\geq 1$ is the union of some sets in $\mathcal{P}_{i-1}$.
	\end{enumerate}
\end{definition}
\Cref{it:nested} implies that the partitions are nested and hence form a hierarchy.

\begin{definition}[Hierarchical Pairwise Partition Family (HPPF)]
\label{def:fam-hier-part}
Let $\sigma,\mu,\rho\geq 1$ be parameters. Let $(X,\delta_X)$ be a metric space with minimum distance $1$ and maximum distance $\Phi$. 
A \EMPH{$(\sigma,\mu,\rho)$-hierarchical pairwise partition family}, which we abbreviate as \EMPH{$(\sigma,\mu,\rho)$-HPPF}, is a family of $\mu$-hierarchical partitions of size $\sigma$, denoted by $\mathfrak{P} = \{\mathbb{P}_1,\ldots, \mathbb{P}_{\sigma}\}$, of $(X,\delta_X)$ such that:	
\begin{enumerate}
    \item Each $\mathbb{P}_j$, $j\in [\sigma]$, is a  $\mu$-hierarchical partition of $(X,\delta_X)$. 
    \item \label{it:dif-point-hierachy} 
    For every two different points $x$ and $y$ in $X$, there exists a  hierarchy  $\mathbb{P}_j$ in $\mathfrak{P}$ and a partition $\mathcal{P}_i$ at scale $i$ of $\mathbb{P}_j$ such that both $x$ and $y$ belong to the same cluster and $\delta_X(x,y)\geq \mu^i/\rho$. 
\end{enumerate}
\end{definition}

\Cref{it:dif-point-hierachy} in \Cref{def:fam-hier-part} is called the \EMPH{pairwise property}. Recall that each cluster in $\mathcal{P}_i$ has diameter at most $\mu^i$. The pairwise property posits that for every pair of points, there is a cluster in some partition at scale $i$ containing the pair, whose diameter is roughly the same as the distance of the pair up to a factor of $\rho$. The pairwise property is crucial in our reduction.  

The HPPF is a variant of the \emph{hierarchical partition family} (HPF) introduced earlier by~\cite{BFN19Ramsey,KLMN04} without the pairwise property. Instead, the HPF in these works has a \emph{padded property}: every ball $B_X(x,\mu^i/\rho)$ is wholly contained in some cluster of some partition at scale $i$.

\begin{lemma}\label{lm:HPPF-minor-free} Let $\eps\in (0,1)$ be a parameter. Any $K_r$-minor-free metric $(X,\delta_X)$ admits  $(\sigma,\mu,\rho)$-HPPF for $\sigma = O(3^r\log(1/\eps)/\log{r}), \mu \geq 1/\eps, \rho = O(r^2)$. 
\end{lemma}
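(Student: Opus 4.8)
The pairwise property is a weakening of the \emph{padded} property, so the plan is to manufacture the HPPF out of the padded hierarchical partition family (HPF) of \cite{KLMN04,BFN19Ramsey}. Recall that any $K_r$-minor-free metric admits, for every scale ratio $\lambda\ge 2$, a family of $O(3^r/\log r)$ $\lambda$-hierarchical partitions with padding $\rho_0=O(r^2)$: for every point $x$ and every scale $s$ there is a hierarchy in the family whose scale-$s$ partition places the entire ball $B_X(x,\lambda^s/\rho_0)$ inside a single cluster. (This is the hierarchical, derandomized form of the classical $O(r^2)$-padded decompositions of \cite{KPR93}.) The first step is to invoke this with a \emph{constant} scale ratio, say $\lambda=2$, obtaining fine-grained hierarchies $\mathcal{Q}_0,\mathcal{Q}_1,\dots$ (consecutive partitions nested, each coarser one a union of clusters of its predecessor) that are padded at every dyadic scale.

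The second step converts these fine-grained hierarchies into genuine $\mu$-hierarchical partitions for the prescribed $\mu\ge 1/\eps$. Fix $\mu$ to be a power of two in $[1/\eps,2/\eps)$ and write $m:=\log_2\mu=\Theta(\log(1/\eps))$. For each residue $c\in\{0,\dots,m-1\}$ and each base hierarchy $\mathbb{Q}_j$, define a hierarchy $\mathbb{P}_{j,c}$ whose level-$i$ partition (for $i\ge1$) is the fine-grained partition of $\mathbb{Q}_j$ at dyadic scale $c+(i-1)m$, with singletons at level $0$ and the trivial one-cluster partition at the top. Since $c+(i-1)m\le im$, the level-$i$ clusters have diameter at most $2^{c+(i-1)m}\le\mu^i$; and the sampled sub-sequence is nested because the full dyadic chain is, so each $\mathbb{P}_{j,c}$ is a legal $\mu$-hierarchical partition. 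The resulting family $\mathfrak{P}=\{\mathbb{P}_{j,c}\}$ has size $\sigma=O(3^r/\log r)\cdot m=O(3^r\log(1/\eps)/\log r)$, the base size $O(3^r/\log r)$ supplying the $1/\log r$ factor and the $m$ interleaved copies supplying the $\log(1/\eps)$ factor. For the pairwise property one proceeds as follows: given $x\ne y$ at distance $d$, select a dyadic scale $s$ with $2^s/\rho_0$ just above $d$, so that the padded guarantee at scale $s$ yields a base hierarchy $\mathbb{Q}_j$ and a cluster $C$ of its scale-$s$ partition containing both $x$ and $y$ with $\operatorname{diam}(C)\le 2^s=O(\rho_0 d)$; this cluster reappears at some level $i$ of the copy $\mathbb{P}_{j,\,s\bmod m}\in\mathfrak{P}$, which certifies the property (see \Cref{def:fam-hier-part}).

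The main obstacle is precisely the interface between the fine dyadic scales used to guarantee padding and the coarse nominal scales $\mu^0,\mu^1,\dots$ demanded of a $\mu$-hierarchical partition: because nominal scales jump by the large factor $\mu\ge1/\eps$, a careless placement of the cluster $C$ at level $i$ makes $\mu^i$ overshoot $d$ by up to a factor $\mu$, which would only give $\rho=O(r^2/\eps)$ rather than the target $\rho=O(r^2)$. Overcoming this is exactly what forces the $\Theta(\log(1/\eps))$-fold interleaving: keeping one copy per residue class ensures that, whatever the distance $d$ and hence the ``right'' dyadic scale $s\approx\log_2(\rho_0 d)$, there is a copy whose level lattice lands near $s$, so that the level $i$ at which $C$ surfaces has $\mu^i$ within a constant factor of $d$; one then needs the freedom, among the admissible scales $s$ with $2^s/\rho_0\ge d$ and among the $m$ copies, to pick one realizing this alignment. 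Verifying this alignment — while simultaneously maintaining the diameter bound $\mu^i$ at every level, the nestedness of the partitions, and the pinning of singletons and the trivial partition at the two extreme levels — is the bulk of the work; given the corresponding properties of the underlying padded HPF each ingredient is routine, but they must be checked in concert, after which both the diameter bound and the pairwise bound $\delta_X(x,y)\ge\mu^i/\rho$ fall out with $\rho=O(r^2)$.
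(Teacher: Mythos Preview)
Your overall strategy—take a padded hierarchical partition family (HPF) and subsample/interleave it by residue classes to produce an HPPF with scale ratio $\mu\ge 1/\eps$—is exactly the paper's approach. The difference is in where the $1/\log r$ factor comes from, and this is where your argument has a gap.

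You assert that $K_r$-minor-free metrics admit a \emph{dyadic} padded HPF (scale ratio $\lambda=2$) of size $O(3^r/\log r)$, and then multiply by $m=\Theta(\log(1/\eps))$ interleaved copies to get $\sigma=O(3^r\log(1/\eps)/\log r)$. But the cited result of \cite{KLMN04} does not give this: it gives $\hat\sigma=O(3^r)$ hierarchies with scale ratio $\hat\mu=O(r^2)$ and padding $\hat\rho=O(r^2)$ (this is the paper's Lemma~\ref{lm:KLMN}). There is no $1/\log r$ savings in the base family size, and refining an $\hat\mu$-HPF down to a dyadic one does not reduce the number of hierarchies—if anything, guaranteeing padding at \emph{every} dyadic scale could cost more copies, not fewer. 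With the honest base size $O(3^r)$ at $\lambda=2$, your interleaving would produce $\sigma=O(3^r\log(1/\eps))$, missing the claimed bound by a $\log r$ factor.

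The paper avoids this by taking the KLMN family at its native scale ratio $\hat\mu=O(r^2)$ and interleaving by $\kappa=\lceil\log_{\hat\mu}(1/\eps)\rceil=O(\log(1/\eps)/\log r)$ residue classes, so the $1/\log r$ comes from the \emph{interleaving step} (fewer copies needed because consecutive base scales are already $O(r^2)$ apart), not from the base family. If you rerun your argument with base ratio $\hat\mu=O(r^2)$ instead of $2$, everything you wrote goes through verbatim and matches the paper's proof; the ``alignment'' discussion you give about residue classes is then precisely the content of the paper's Lemma~9.7.
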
 

We defer the proof of \Cref{lm:HPPF-minor-free} to \Cref{subsec:HPPF}. Equipped with a HPPF as in \Cref{lm:HPPF-minor-free}, we now show the reduction as claimed in \Cref{lm:reduction}.

\subsection{The reduction: proof of \Cref{lm:reduction}}

In this section, we prove \Cref{lm:reduction}, assuming that \Cref{lm:HPPF-minor-free} holds. Let $\mathfrak{P}$ be a $(\sigma,\mu,\rho)$-HPPF for $\sigma = O(3^r\log(1/\eps)/\log{r}), \mu = 1/\eps, \rho = O(r^2)$ as in \Cref{lm:HPPF-minor-free}. For each hierarchy of partitions $\mathbb{P}\in \mathfrak{P}$, we will construct a tree cover $\mathcal{T}_{\mathbb{P}}$. The final tree cover will be $\mathcal{T} \coloneqq \cup_{\mathbb{P}\in \mathfrak{P}}\mathcal{T}_{\mathbb{P}}$.
	
For each scale $i$, we construct a \EMPH{net $N_i$} as a subset of $X$ inductively as follows. 
$N_{i_{\max}}$ 	contains a single point chosen arbitrarily from $X$. Suppose that we are given $N_{i+1}$. For every set $S$ in $\mathcal{P}_i$ (the partition at scale $i$), if $S\cap N_{i+1} = \varnothing$ then we chose a point in $S$ arbitrarily and add it to $N_i$. 
(Initially, $N_i$ is empty.) Otherwise, we add $S\cap N_{i+1}$ to $N_i$. The sets $N_i$'s satisfy the following claim.

\begin{claim}
\label{clm:Ni-prop}
One has
$N_{i_{\max}}\subseteq \ldots\subseteq N_{0} = X$. Furthermore for every scale $i$ and every set $S\in \mathcal{P}_i$, $|S\cap N_i| = 1$.
\end{claim}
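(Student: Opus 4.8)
The plan is to prove both parts of the claim together by a single downward induction on the scale $i$, running from $i = i_{\max}$ down to $i = 0$, with inductive hypothesis that $|S \cap N_i| = 1$ for every cluster $S \in \mathcal{P}_i$ (the nesting statement $N_{i+1} \subseteq N_i$ will drop out along the way). The base case $i = i_{\max}$ is immediate: $\mathcal{P}_{i_{\max}} = \{X\}$ and $N_{i_{\max}}$ is a single point of $X$, so $|X \cap N_{i_{\max}}| = 1$.

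For the inductive step, the first thing I would isolate is the consequence of the nesting property (Item~\ref{it:nested} of Definition~\ref{label:hier-part}): $\mathcal{P}_i$ \emph{refines} $\mathcal{P}_{i+1}$, so every cluster $S \in \mathcal{P}_i$ lies inside a \emph{unique} cluster $\hat{S} \in \mathcal{P}_{i+1}$, and therefore $S \cap N_{i+1} \subseteq \hat{S} \cap N_{i+1}$ has size at most $1$ by the inductive hypothesis. With this in hand I would walk through the construction of $N_i$ one cluster $S \in \mathcal{P}_i$ at a time: if $S \cap N_{i+1} = \varnothing$ the algorithm inserts exactly one fresh point of $S$; otherwise $|S \cap N_{i+1}| = 1$ and the algorithm inserts exactly that single point. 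In both cases the contribution of $S$ is one point lying in $S$; since the clusters of $\mathcal{P}_i$ are pairwise disjoint, no cluster other than $S$ contributes a point inside $S$, giving $|S \cap N_i| = 1$.

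For the nesting $N_{i+1} \subseteq N_i$, I would take an arbitrary $x \in N_{i+1}$, let $S$ be the unique cluster of $\mathcal{P}_i$ containing it, note that $x \in S \cap N_{i+1} \neq \varnothing$, and observe that the algorithm then adds all of $S \cap N_{i+1}$ --- in particular $x$ --- to $N_i$. The equality $N_0 = X$ is then the special case where $\mathcal{P}_0$ is the partition into singletons: each $\{x\}$ contributes $x$ to $N_0$ regardless of which branch of the construction applies. I do not expect a genuine obstacle here; the one thing that must be gotten right --- and the reason the second branch of the construction is well-behaved --- is the direction of refinement, namely that it is $\mathcal{P}_i$ (not $\mathcal{P}_{i+1}$) that refines the other, which is exactly what forces $|S \cap N_{i+1}| \le 1$ in the inductive step.
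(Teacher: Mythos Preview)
Your proposal is correct and follows essentially the same approach as the paper: both argue by downward induction on $i$, using the fact that $\mathcal{P}_i$ refines $\mathcal{P}_{i+1}$ to bound $|S \cap N_{i+1}| \le 1$ via the inductive hypothesis applied to the containing cluster in $\mathcal{P}_{i+1}$, and then case-splitting on whether $S \cap N_{i+1}$ is empty. Your version is slightly more explicit (e.g., noting that disjointness of clusters in $\mathcal{P}_i$ prevents cross-contributions, and spelling out $N_0 = X$), but the structure and key idea are the same.
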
 
\begin{proof}
	The first claim follows directly from the fact that $\mathcal{P}_i$ is a partition of $X$ and whenever  $S\cap N_{i+1}\not=\varnothing$ we add all points of $S\cap N_{i+1}$ to $N_i$.  We prove the second claim by induction: the base case is $N_{i_{\max}}$ that has a single point and hence the claim holds. For the inductive case, if $S\cap N_{i+1} \not= \varnothing$, then $|S\cap N_{i+1}| = 1$. This is because $\mathcal{P}$ is a hierarchy and hence there exists  a   superset of $S$ in $\mathcal{P}_{i+1}$, say $S'$, such that $S\subseteq S'$ and thus has $|S'\cap N_{i+1}| = 1$ by induction. It follows that $|S\cap N_i| = 1$ by construction. Otherwise,  $S\cap N_{i+1} = \varnothing$, we choose a single point of $S$ to add to $N_i$ by construction and hence $|S\cap N_i| = 1$. 
\end{proof}

Let $\EMPH{$r_i$} \coloneqq \mu^i$. \Cref{clm:Ni-prop} shows that $N_{i}$ is a \emph{$r_{i}$-cover} of $N_{i-1}$ as each point $x\in N_{i-1}$ has a point $y\in N_{i}$ such that $\delta_X(x,y) \leq r_i$ since the diameter of sets in $\mathcal{P}_{i}$ is at most $r_i$. For every point $x\in S\in \mathbb{P}_i$, we call the point $x_i\in N_i\cap S$ the \EMPH{ancestor at scale $i$} of $x$. Observe that:

\begin{observation}\label{obs:cover-dist} For any point $x\in X$, $\delta_X(x,x_i)\leq r_i$ where $x_i$ is the ancestor at scale $i$ of $x$.
\end{observation}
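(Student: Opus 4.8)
The plan is to read the statement directly off the definitions, with no induction or case analysis required. Fix a point $x \in X$ and a scale $i$. Let $S$ be the unique cluster of the partition $\mathcal{P}_i$ of $\mathbb{P}$ that contains $x$. By \Cref{clm:Ni-prop} we have $|S \cap N_i| = 1$; writing $S \cap N_i = \{x_i\}$, this $x_i$ is by construction precisely the ancestor of $x$ at scale $i$, so in particular $x_i$ lies in the same cluster $S$ as $x$. Now invoke \Cref{it:partition} of the definition of a $\mu$-hierarchical partition (\Cref{label:hier-part}): every cluster of $\mathcal{P}_i$ has diameter at most $\mu^i$. Since both $x$ and $x_i$ belong to $S$, we obtain $\delta_X(x, x_i) \le \operatorname{diam}(S) \le \mu^i = r_i$, which is exactly the claimed bound.

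I do not anticipate any genuine obstacle here, as the statement is an immediate consequence of the scale-$i$ diameter guarantee combined with the well-definedness of ancestors from \Cref{clm:Ni-prop}. The only point worth stating explicitly is that the ancestor $x_i$ shares a cluster with $x$ at scale $i$ — but this is baked into how ``ancestor at scale $i$'' was defined (it is the unique element of $N_i$ inside the scale-$i$ cluster of $x$), so the diameter bound applies to the pair $(x, x_i)$ directly.
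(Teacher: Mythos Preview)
Your proposal is correct and matches the paper's own proof essentially verbatim: both observe that $x$ and its ancestor $x_i$ lie in the same cluster $S \in \mathcal{P}_i$, and then apply the diameter bound $\operatorname{diam}(S) \le \mu^i = r_i$.
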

\begin{proof}
This follows from the fact that $x, x_i \in S$ for some $S \in \mathbb{P}_i$, and $S$ has diameter at most $r_i$.
\end{proof}

In the following construction, we regard points in different sets $N_i$ as \emph{different} and write \EMPH{$(x,i)$} as a copy of $x$ if $x\in N_i$, and the tree cover we construct is for all points in $N_0\cup N_1\cup \ldots\cup N_{i_{\max}}$. 
For each tree in the final tree cover $\mathcal{T}_{\mathbb{P}}$, we keep only one copy per point, which is the copy in $N_0$ of the point.

For each scale $i\in [i_{\max}]$, we construct a set of $\kappa$ forests $\EMPH{$\mathcal{F}^i$} = \{F^i_{1},\ldots , F^{i}_{\kappa}\}$ rooted at points in $N_i$ as follows. (The value of $\kappa$ will be set later.) 
For each set $S$ in $\mathcal{P}_i$, we construct a tree cover, denoted by \EMPH{$\mathcal{T}^i_S$}, with additive distortion $+\eps r_i/\rho$ for $S\cap N_{i-1}$. 
Note that $r_i$ is the diameter of $S$. 
By \Cref{lm:reduction}, $|\mathcal{T}^i_S| \leq \tau(|S\cap N_{i-1}|,\eps/\rho) \leq \tau(n,O(\eps/r^2))$. 
(Recall that $\rho = O(r^2)$ in \Cref{lm:HPPF-minor-free}.) 
We choose $\kappa = \tau(n,O(\eps/r^2))$ so that $|\mathcal{T}^i_S|\leq \kappa$ for every $S$ in $\mathcal{P}_i$. By duplication, we assume that $\mathcal{T}^i_S$ contains exactly $\kappa$ trees, denoted by \EMPH{$T_{1,S},\ldots, T_{\kappa,S}$}.  
We root each tree $T_{t,S}$ in $\mathcal{T}^i_S$ at the copy $(x,i)$ of the (single) point $x$ in $S\cap N_{i}$; by \Cref{clm:Ni-prop}, there is only one such point.
Next, for each $t\in [\kappa]$, let $\EMPH{$F^{i}_t$} \coloneqq \{T_{t,S}: S \in \mathcal{P}_i\}$ to be the $t$-th forest of $\mathcal{F}^i$. When $i = 0$, we define $F^i_t$ to be the forest of singletons.

Finally, we construct the $t$-th tree for $t\in [\kappa]$, denoted by \EMPH{$T_t$}, in the cover $\mathcal{T}_{\mathbb{P}}$ by connecting the forests $F^{i}_{t}$ at different scales $i$.
Specifically, for every scale $i$ in $[i_{\max}]$, for each point $x$ in $N_{i-1}$, add an edge of length 0 between $(x, i-1)$ and $(x, i)$. This operation effectively connects the connected components of $F^{i}_t$ and the components of $F^{i-1}_t$ for every $i\in [i_{\max}]$.

\begin{figure}[ht!]
    \centering
    \includegraphics[width=\textwidth]{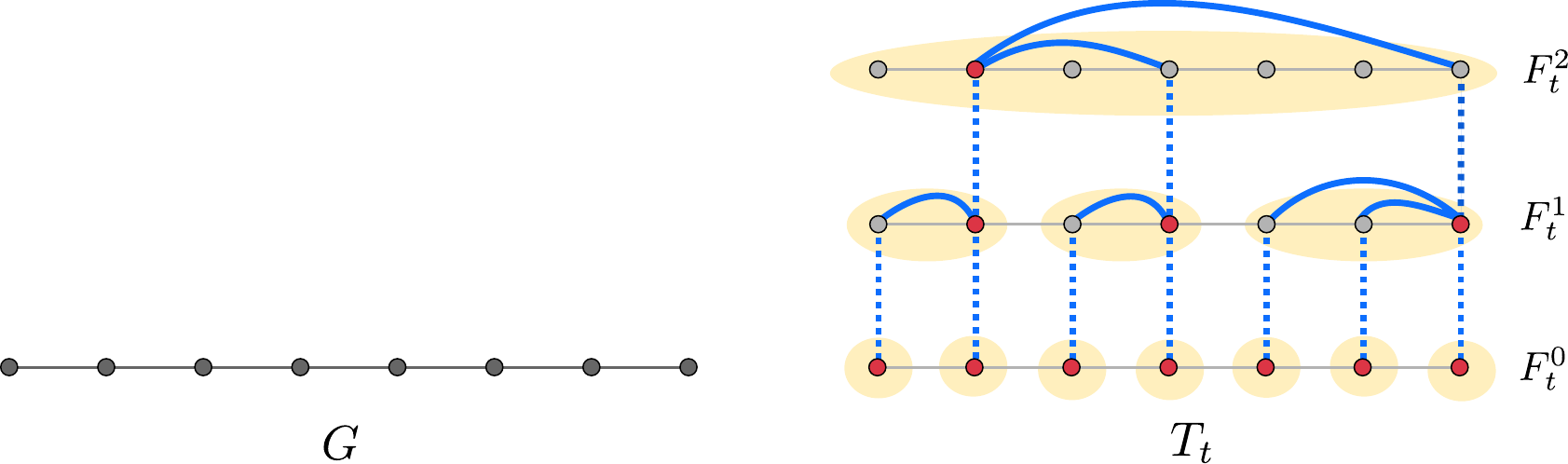}
    \caption{A graph $G$ and a tree $T_t$. The tree $T_t$ consists of forests $F_t^2$, $F_t^1$, and $F_t^0$, connected (by blue dashed lines) to form $T_t$. At each scale $i$:  there is a partition of  $\mathcal{P}_i$ (highlighted in yellow), a set of net points $N_i$ (drawn in red), and a forest $F_t^i$ (drawn with blue lines).}
    \label{fig:nettree}
\end{figure}

\begin{claim}
\label{clm:Xt-tree} 
$T_t$ is a tree.
\end{claim}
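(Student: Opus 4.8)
The plan is to show that $T_t$ is connected and acyclic, equivalently that $T_t$ is connected and has exactly one fewer edge than vertex. First I would set up notation: the vertex set of $T_t$ before the final clean-up is $\bigsqcup_{i=0}^{i_{\max}} N_i$ (with the copies $(x,i)$ regarded as distinct), and $T_t$ is built from the forests $F^i_t$ together with the ``vertical'' zero-length edges joining $(x,i-1)$ to $(x,i)$ for each $x\in N_{i-1}$ and each scale $i\in[i_{\max}]$.

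\textbf{Acyclicity.} First I would argue there is no cycle. Each $F^i_t$ is a forest (it is a disjoint union over $S\in\mathcal P_i$ of the trees $T_{t,S}$, and these are vertex-disjoint because the sets $S$ partition $X$ and hence $S\cap N_{i-1}$ are pairwise disjoint; note $T_{t,S}$ lives on $S\cap N_{i-1}$ together with its root $(x,i)\in S\cap N_i$). The vertical edges form a forest as well: the vertical edges incident to scale-$i$ copies only connect scale $i-1$ to scale $i$, and for a fixed vertex $(x,i)$ there is at most one vertical edge going ``up'' (to $(x,i+1)$, present iff $x\in N_{i+1}$) and at most one going ``down'' (from $(x,i-1)$, present iff $x\in N_{i-1}$, which holds by Claim~\ref{clm:Ni-prop}). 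The cleanest way to rule out a cycle is to orient every edge of $T_t$ toward the scale-$i_{\max}$ root and show each vertex other than the root has out-degree exactly $1$: a copy $(x,i)$ with $x\in N_{i+1}$ has its unique outgoing edge the vertical edge to $(x,i+1)$; a copy $(x,i)$ with $x\notin N_{i+1}$ is a non-root vertex of exactly one tree $T_{t,S}$ with $S\in\mathcal P_{i+1}$ (namely the $S$ for which $x\in S\cap N_i$; uniqueness is because $\mathcal P_{i+1}$ is a partition), and its outgoing edge is its parent edge in $T_{t,S}$. A functional graph in which every non-root vertex has out-degree $1$ and the root has out-degree $0$ is a tree provided it is connected, so it remains to check connectivity (and this orientation argument already shows acyclicity once connectivity is in hand, since following outgoing edges from any vertex strictly increases scale and hence terminates at the root without repeating).

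\textbf{Connectivity.} I would prove by downward induction on $i$ that every copy $(x,i)$ is connected in $T_t$ to the scale-$i_{\max}$ root $(r,i_{\max})$. For the base case $i=i_{\max}$ this is trivial. For the inductive step, given $(x,i)$ with $i<i_{\max}$: let $S\in\mathcal P_{i+1}$ be the (unique) cluster containing $x$, so $(x,i)$ is a vertex of the tree $T_{t,S}$, whose root is $(y,i+1)$ where $y$ is the unique point of $S\cap N_{i+1}$ (Claim~\ref{clm:Ni-prop}); thus $(x,i)$ is connected to $(y,i+1)$ within $T_t$. By the inductive hypothesis $(y,i+1)$ is connected to the root, so $(x,i)$ is too. (For the boundary scale $i=0$ the same argument applies since $F^0_t$ consists of singletons, each of which is the root of, equivalently a leaf attached into, the corresponding tree at scale $1$ via the vertical edge.) This establishes connectivity, and combined with the acyclicity argument above, $T_t$ is a tree. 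Finally, deleting the redundant copies (keeping only the copy in $N_0$ of each point, which is legitimate because all copies of a fixed $x$ are joined by zero-length vertical edges into a path and contracting a path of a tree yields a tree) preserves treeness.

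\textbf{Main obstacle.} The delicate point is the bookkeeping that each copy $(x,i)$ with $x\notin N_{i+1}$ belongs to exactly one tree $T_{t,S}$ at the next scale up and hence has a well-defined ``parent'', i.e., that the vertical edges plus the forest edges do not accidentally create a second connection between two copies at the same or different scales. This is where one must invoke carefully that $\mathcal P_{i+1}$ is a partition and that the $F^i_t$ and the vertical edges only ever link consecutive scales; once that structure is pinned down, both acyclicity and connectivity follow as above.
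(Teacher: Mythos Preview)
Your proof is correct and follows essentially the same idea as the paper's. The paper packages the argument as a single upward induction---defining $\hat F^i_t$ as the graph obtained by attaching $F^i_t$ to $\hat F^{i-1}_t$ via the vertical edges, and maintaining the invariant that $\hat F^i_t$ is a forest whose trees are rooted at (copies of) points of $N_i$; since $|N_{i_{\max}}|=1$ the final forest is a tree---whereas you unbundle this into separate acyclicity (via the out-degree-$1$ orientation) and connectivity (via downward induction) arguments. Both rely on exactly the same structural facts: that $\mathcal P_{i+1}$ is a partition (so each $(x,i)$ lies in a unique $T_{t,S}$ at the next scale) and that $|S\cap N_{i+1}|=1$ (so each such tree has a unique root).

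One small slip: your parenthetical remark that ``following outgoing edges from any vertex strictly increases scale'' is not quite right---within a single $T_{t,S}$ the parent edge may go to another scale-$i$ vertex before eventually reaching the root at scale $i{+}1$. This does not affect your proof, since the edge count ($|V|-1$ edges from out-degree exactly $1$ at every non-root vertex) plus connectivity already forces a tree; but if you want the orientation argument to stand alone you should say scale is nondecreasing and eventually increases. You might also make explicit that the root $(x,i)$ of $T_{t,S}$ is attached to the rest of the tree precisely via the vertical edge to $(x,i-1)$, so that the vertical edge and the parent edge coincide for the vertex $(x,i-1)$ when $x\in N_i$---this is what makes the out-degree exactly one rather than two.
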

\begin{proof} Let $\hat{F}^0_{t} \coloneqq F^0_t$ which is a forest of singletons. Let $\hat{F}^{i}_t$ be the forest obtained by connecting $F_t^{i}$ and $F_t^{i-1}$ in the construction algorithm via edges of length 0 between $(x,i)$ and $(x,i-1)$ for every $(x,i)\in N_i$. Then $\hat{F}^{i}_t$
is a forest containing trees rooted at points of $N_{i}$.
As $N_{i_{\max}}$ consists of a single point, the forest $\hat{F}^{i_{\max}}_t$ (which is also $T_t$) is a tree.
\end{proof}

The following claim is crucial in bounding the distortion of the tree cover.

\begin{claim}\label{clm:Xt-dist} Let $x_0$ be any point in $N_0$ and $x_i\in N_i$ be the ancestor at scale $i$ of $x_0$ for any $i\ge 1$. Then $\delta_G(x_i,x_0)\leq d_{T_t}(x_i,x_0) = O(r_i)$ when $\eps < 1$.
\end{claim}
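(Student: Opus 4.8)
The statement bounds the tree distance from a point $x_0 \in N_0$ to each of its ancestors $x_i \in N_i$ by $O(r_i)$. The two-sided bound $\delta_G(x_i,x_0) \le d_{T_t}(x_i,x_0)$ is immediate since the tree cover is dominating (every $\mathcal{T}^i_S$ is a dominating tree cover, and concatenating with length-$0$ edges cannot decrease distances). So the real content is the upper bound $d_{T_t}(x_i,x_0) = O(r_i)$. I would prove this by walking up the hierarchy level by level: the path in $T_t$ from $(x_0,0)$ to $(x_i,i)$ is built by composing the level-$0$-to-level-$1$ transition, then level-$1$-to-level-$2$, etc., and I would bound the length contributed at each level.

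**Key steps.** First, observe that the ancestor relation is consistent across scales: if $x_j$ is the ancestor of $x_0$ at scale $j$, then $x_{j+1}$ is the ancestor of $x_j$ at scale $j+1$ (because the partitions are nested, $x_0$ and $x_j$ lie in a common cluster $S \in \mathcal{P}_j$, and the single net point of the parent cluster $S' \in \mathcal{P}_{j+1}$ containing $S$ is the common ancestor). Thus the path from $(x_0,0)$ up to $(x_i,i)$ can be decomposed as: for each $j$ from $0$ to $i-1$, first follow the length-$0$ edge from $(x_j, j)$ to $(x_j, j+1)$, then follow the path inside the forest $F^{j+1}_t$ — i.e.\ inside the tree $T_{t,S'}$ of the additive tree cover $\mathcal{T}^{j+1}_{S'}$ for the cluster $S' \in \mathcal{P}_{j+1}$ — from $(x_j, j+1)$ to its root $(x_{j+1}, j+1)$. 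Second, bound the length of that path segment inside $T_{t,S'}$: since $T_{t,S'}$ is a dominating tree for the submetric on $S' \cap N_j$ with diameter at most $r_{j+1} = \mu^{j+1}$, and (by the hypothesis of Lemma~\ref{lm:reduction}) the tree cover is $O(\Delta)$-bounded where $\Delta = \operatorname{diam}(S') \le r_{j+1}$, each tree $T_{t,S'}$ has diameter $O(r_{j+1})$; in particular $d_{T_{t,S'}}(x_j, x_{j+1}) = O(r_{j+1})$. (If one prefers not to invoke boundedness directly, one can instead use $d_{T_{t,S'}}(x_j, x_{j+1}) \le \delta_{S'}(x_j, x_{j+1}) + \eps r_{j+1}/\rho \le r_{j+1} + \eps r_{j+1}/\rho = O(r_{j+1})$ via \Cref{obs:cover-dist} and the additive-distortion guarantee.) Third, sum over $j$:
\[
d_{T_t}(x_i, x_0) \le \sum_{j=0}^{i-1} \left( 0 + O(r_{j+1}) \right) = O\!\left( \sum_{j=1}^{i} \mu^j \right) = O\!\left( \frac{\mu^{i+1}}{\mu - 1} \right) = O(\mu^i) = O(r_i),
\]
where the geometric sum converges (up to a constant factor) because $\mu = 1/\eps \ge 2$ when $\eps < 1/2$ (and for $\eps \in [1/2,1)$ the number of scales is $O(1)$ anyway, or one rescales $\eps$).

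**Main obstacle.** The routine parts are the geometric-series bound and the dominating-tree lower bound; the one place requiring care is making the "walk up level by level" rigorous, specifically the claim that the path in $T_t$ between $(x_0,0)$ and $(x_i,i)$ really does pass through all the intermediate copies $(x_j, j)$ and $(x_j, j+1)$. This follows from the tree structure of $T_t$ (\Cref{clm:Xt-tree}): the length-$0$ identification edges are bridges between the forests at consecutive scales, so the unique tree path is forced to traverse them in order, and within each scale it lies inside a single tree $T_{t,S'}$ (since $x_j$ and $x_{j+1}$ are in the same cluster $S'$, their copies at scale $j+1$ are in the same component of $F^{j+1}_t$). I would spell this out as a short inductive argument on $i$, peeling off the top level $(x_i,i)$–$(x_{i-1},i)$ transition and applying the inductive hypothesis to $x_{i-1}$, which gives $d_{T_t}(x_i,x_0) \le d_{T_t}(x_i, x_{i-1}) + d_{T_t}(x_{i-1}, x_0) \le O(r_i) + O(r_{i-1}) = O(r_i)$.
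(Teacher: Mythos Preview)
Your proposal is correct and follows essentially the same approach as the paper: establish the lower bound via the dominating property, then bound each level-$j$ to level-$(j{+}1)$ transition by $O(r_{j+1})$ using the $O(\Delta)$-boundedness hypothesis on the additive tree covers, and sum the resulting geometric series (the paper writes it as $c_0 r_i(1+\eps+\eps^2+\cdots)$, which is your sum reindexed). Your discussion of why the unique path in $T_t$ passes through the intermediate ancestors is more explicit than the paper's terse treatment, but this is elaboration rather than a different argument.
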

\begin{proof} The lower bound follows from the fact that each tree in the forest $F^i_t$ is a dominating tree for $N_{i-1}\cap S$. We now focus on proving the upper bound. Let $c_0$ be the constant such that every tree cover of additive distortion $+\eps \Delta$ is $(c_0\cdot D)$-bounded following the assumption of \Cref{lm:reduction}. Since every $S\in \mathcal{P}_i$ has diameter at most $r_i$, it follows that $d_{F^i_t}(x_{i-1},x_i)\leq c_0 r_i$. Thus, $d_{T_t}(x_i,x_0)\leq c_0\sum_{j=0}^i r_i \leq c_0r_i(1 + \eps + \eps^2 + \ldots)\leq O(r_i)$ when $\eps <1$.
\end{proof}

The following claim concludes the proof of \Cref{lm:reduction}.

\begin{claim}\label{clm:reduction} Let $\mathcal{T} = \cup_{\mathbb{P}\in \mathfrak{P}}\mathcal{T}_{\mathbb{P}}$. Then $\mathcal{T}$ is a tree cover with multiplicative distortion $(1+O(r^2\eps))$ of size $O(\tau(n,O(\eps/r^2)) \cdot 3^r\log(1/\eps)/\log{r})$.
\end{claim}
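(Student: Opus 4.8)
The plan is to check the three defining properties of a tree cover for $\mathcal{T} = \bigcup_{\mathbb{P}\in\mathfrak{P}}\mathcal{T}_{\mathbb{P}}$ one at a time: the size bound, the dominating property, and the $(1+O(r^2\eps))$-multiplicative distortion. The size is immediate: \Cref{lm:HPPF-minor-free} supplies $\sigma = O(3^r\log(1/\eps)/\log r)$ hierarchies, and each hierarchy $\mathbb{P}$ contributes exactly the $\kappa$ trees $T_1,\dots,T_\kappa$ of $\mathcal{T}_{\mathbb{P}}$, where $\kappa = \tau(n, O(\eps/\rho)) = \tau(n, O(\eps/r^2))$ was chosen so that every inner cover $\mathcal{T}^i_S$ fits inside $\kappa$ trees (here we use $\rho = O(r^2)$). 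Hence $|\mathcal{T}| = \sigma\cdot\kappa = O(\tau(n, O(\eps/r^2))\cdot 3^r\log(1/\eps)/\log r)$.

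For the dominating property, fix $T_t\in\mathcal{T}_{\mathbb{P}}$ and two points $p,q\in X=N_0$. Recall (\Cref{clm:Xt-tree}) that $T_t$ is a tree obtained by gluing the dominating subtrees $T_{t,S}$ together along zero-length edges that identify copies of net points at consecutive scales. I would decompose the unique $p$-to-$q$ path in $T_t$ into maximal segments, each lying entirely inside a single subtree $T_{t,S}$ (with $S\in\mathcal{P}_i$ for some $i$), where consecutive segments meet at a copy of a net point. The point is that the two endpoints of the segment inside $T_{t,S}$ always lie in $S\cap N_{i-1}$, so since $\mathcal{T}^i_S$ is a tree cover its tree $T_{t,S}$ is dominating and that segment has length at least the $\delta_X$-distance between its endpoints; summing and applying the triangle inequality gives $d_{T_t}(p,q)\ge\delta_X(p,q)$. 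This generalizes the special case already recorded in \Cref{clm:Xt-dist}.

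For the distortion, take $x\ne y$ in $X$ and apply the pairwise property (\Cref{it:dif-point-hierachy}) to get a hierarchy $\mathbb{P}_j\in\mathfrak{P}$, a scale $i\ge 1$, and a cluster $S\in\mathcal{P}_i$ with $x,y\in S$ and $\delta_X(x,y)\ge \mu^i/\rho = r_i/\rho$. Let $x_{i-1},y_{i-1}\in N_{i-1}$ be the ancestors of $x,y$ at scale $i-1$; since the hierarchy is nested, both lie in $S\cap N_{i-1}$. The inner cover $\mathcal{T}^i_S$ has additive distortion $+\eps r_i/\rho$, so some $T_{t,S}\subseteq F^i_t\subseteq T_t$ satisfies $d_{T_t}(x_{i-1},y_{i-1})\le d_{T_{t,S}}(x_{i-1},y_{i-1})\le \delta_X(x_{i-1},y_{i-1})+\eps r_i/\rho$. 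Now route in $T_t$ through $x_{i-1}$ and $y_{i-1}$: by \Cref{clm:Xt-dist}, $d_{T_t}(x,x_{i-1}) = O(r_{i-1}) = O(\eps r_i)$ and likewise for $y$ (using $r_{i-1}=\mu^{i-1}=\eps r_i$ since $\mu=1/\eps$), while \Cref{obs:cover-dist} gives $\delta_X(x_{i-1},y_{i-1})\le \delta_X(x,y)+O(\eps r_i)$. Combining, $d_{T_t}(x,y)\le \delta_X(x,y)+O(\eps r_i)$, and since $r_i\le \rho\,\delta_X(x,y)$ with $\rho=O(r^2)$, this is $(1+O(r^2\eps))\,\delta_X(x,y)$, as desired.

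The main obstacle is the dominating direction: $T_t$ stitches together additive covers of many submetrics $S\cap N_{i-1}$ across $O(\log_\mu\Phi)$ scales using zero-length edges, so a $T_t$-path between two points of $X$ may climb and descend through several scales, and one must argue carefully that it can never "shortcut" below $\delta_X$. Pinning this down amounts to verifying that the transition points between consecutive $T_{t,S}$-segments of the path are net points, and that the endpoints of each such segment lie in the exact domain ($S\cap N_{i-1}$) on which the corresponding inner cover is guaranteed dominating. Everything else—the size count and the distortion upper bound—is a routine combination of the pairwise property, \Cref{obs:cover-dist}, and the boundedness estimate \Cref{clm:Xt-dist}.
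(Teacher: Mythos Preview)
Your proposal is correct and follows essentially the same route as the paper: the size count, the use of the pairwise property of the HPPF to locate a scale $i$ and cluster $S$ with $\delta_X(x,y)\ge r_i/\rho$, routing through the scale-$(i-1)$ ancestors, and combining \Cref{obs:cover-dist} with \Cref{clm:Xt-dist} to absorb the $O(r_{i-1})$ and $\eps r_i/\rho$ error terms into $(1+O(r^2\eps))\,\delta_X(x,y)$ are all exactly what the paper does. The only difference is that the paper dispatches the dominating direction in a single sentence (``every tree in $\mathcal{T}_{\mathbb{P}}$ is dominating''), whereas you spell out the segment-by-segment argument along the $T_t$-path; your sketch of that argument is sound, since the zero-length gluing edges are only incident to net-point copies, so each maximal segment inside a $T_{t,S}$ has both endpoints in $S\cap N_{i-1}$ where the inner cover is dominating.
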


\begin{proof} Since $|\mathfrak{P}| = \sigma = O(3^r\log(1/\eps)/\log{r})$ and $|\mathcal{T}_{\mathbb{P}}|\leq \kappa =  \tau(n,O(\eps/r^2))$, $|\mathcal{T}| \leq O(\tau(n,O(\eps/r^2)) \cdot 3^r\log(1/\eps)/\log{r})$ as claimed. It remains to bound the distortion. We observe by the construction that every tree in $\mathcal{T}$ is a dominating tree since every tree in $\mathcal{T}_{\mathbb{P}}$ is dominating. 
	
Let $x,y$ be any two points in $X$. Let $\mathbb{P}\in \mathfrak{P}$ be the hierarchy and $\mathcal{P}_i$ be a partition of $X$ at scale $i$ of $\mathbb{P}$ such that $\{x,y\}\subseteq S$ for some set $S\in \mathcal{P}_i$ and  $\delta_X(x,y)\geq r_i/\rho$. $\mathbb{P}$ and $i$ exist by the definition of HPPF. Let $\hat{x}, \hat{y}$ be the ancestors of $x$ and $y$, respectively, at scale $i-1$. As $\delta_X(x,y)\geq r_i/\rho$, we have:
\begin{equation}\label{eq:dxy-ri}
	r_{i-1}\leq \eps r_i \leq \eps \rho \delta_X(x,y).
\end{equation}
Since $x,y$ are both in $S$, $\hat{x}$ and $\hat{y}$ are both in $S$ by the definition of $N_i$'s.  Since the tree cover $\mathcal{T}^i_S$ for $N_{i-1}\cap S$ in the construction of $\mathcal{F}^i$ has additive distortion $\eps r_i/\rho$, there is a forest $F^i_t\in \mathcal{F}^i$ for some $t\in [\kappa]$ such that $d_{F^i_t}(\hat{x},\hat{y})\leq \delta_X(\hat{x},\hat{y}) + \eps r_i/\rho$. It follows that:
\begin{equation}\label{eq:dhatxy-dist}
	\begin{split}
			d_{T_t}(\hat{x},\hat{y})&\leq \delta_X(\hat{x},\hat{y}) + \eps r_i/\rho\\
			&\leq \delta_X(x,y) + O(r_{i-1}) + \eps r_i/\rho\qquad \mbox{(by triangle inequality and \Cref{obs:cover-dist})}\\
			&= \delta_X(x,y)  + O(\eps\rho \delta_X(x,y)) + \eps \delta_X(x,y) \qquad \mbox{(by \Cref{eq:dxy-ri})}\\
			&= \delta_X(x,y) + O(\rho)\eps \delta_X(x,y)
	\end{split}
\end{equation}
Furthermore, by \Cref{clm:Xt-dist} and the triangle inequality, we have:
\begin{equation}
	\begin{split}
		\delta_{T_t}(x,y) &\leq 	d_{T_t}(\hat{x},\hat{y}) + O(r_{i-1}) \\
		&\leq \delta_X(x,y) + O(\rho)\eps \delta_X(x,y) + O(r_{i-1})  \qquad\mbox{(by \Cref{eq:dhatxy-dist})}\\
		&\leq \delta_X(x,y) + O(\rho)\eps \delta_X(x,y) + O(\rho)\eps \delta_X(x,y)
		\qquad\mbox{(by \Cref{eq:dxy-ri})}\\
		&= (1+O(r^2 \eps))\delta_X(x,y) 
	\end{split}
\end{equation}
since $\rho = O(r^2)$. That is, there exists a tree in $\mathcal{T}_{\mathbb{P}}$ (which is $T_t$), and hence a tree in $\mathcal{T}$, such that the distance between $x$ and $y$ is preserved in the tree up to a factor of $1+O(r^2\eps)$. The claim now follows.
\end{proof}

\subsection{HPPF construction: proof of \Cref{lm:HPPF-minor-free}}\label{subsec:HPPF}

We base our construction of a HPPF  on the  \emph{hierarchical partition family} (HPF) formally define below. The FPF was formally introduced in~\cite{BFN19Ramsey} though its ideas appeared earlier~\cite{KLMN04}.

\begin{definition}[Hierarchical Partition Family (HPF)]\label{def:fam-HPF} Let $\sigma,\mu,\rho\geq 1$ be parameters. Let $(X,\delta_X)$ be a metric space with minimum distance $1$ and maximum distance $\Phi$.  A \EMPH{$(\sigma,\mu,\rho)$-hierarchical partition family} (\EMPH{$(\sigma,\mu,\rho)$-HPF} for short) is a family of $\mu$-hierarchical partitions of size $\sigma$, denoted by $\mathfrak{P} = \{\mathbb{P}_1,\ldots, \mathbb{P}_{\sigma}\}$ of $(X,\delta_X)$, such that:
	
	\begin{enumerate}
		\item Each $\mathbb{P}_j$, $j\in [\sigma]$, is a  $\mu$-hierarchical partitions of $(X,\delta_X)$. 
		\item For every point $x\in X$, there exists a  hierarchy  $\mathbb{P}_j\in \mathfrak{P}$ and a partition $\mathcal{P}_i$ at scale $i$ of $\mathbb{P}_j$ such $B_X(x,\mu^i/\rho)\subseteq S$ for some cluster $S\in \mathcal{P}_i$. 
	\end{enumerate}
\end{definition}

The following lemma was shown in~\cite{KLMN04} as noted by \cite{BFN19Ramsey}.

\begin{lemma}[\cite{KLMN04}]\label{lm:KLMN} Any $K_r$-minor-free metric admits a  $(\sigma,\mu,\rho)$-HPF with $\sigma = O(3^r)$, $\mu = O(r^2)$ and $\rho = O(r^2)$.
\end{lemma}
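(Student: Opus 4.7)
The plan is to prove the existence of the HPF by recursively applying the Klein–Plotkin–Rao (KPR) structural decomposition for $K_r$-minor-free graphs, combined with a derandomization of the ``offset'' using $O(3^r)$ discrete shifts. Recall the KPR theorem: For any $K_r$-minor-free graph $G$ and any scale parameter $\Delta$, one can partition $V(G)$ into clusters of (strong) diameter $O(r^2 \Delta)$ by performing $r-1$ rounds of shortest-path peeling, where each round carves the current pieces into ``rings'' of radial width $\Delta$ around an appropriately chosen shortest path. After all $r-1$ rounds, each remaining connected piece is a cluster. The reason this terminates at $r-1$ rounds is that each cut reduces the largest clique-minor in a structural sense, so that after $r-1$ rounds no nontrivial ``long'' piece remains.

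For each scale $\mu^i$ (with $\mu = \Theta(r^2)$ to be fixed), the decomposition is applied \emph{within} every cluster produced at the coarser scale $\mu^{i+1}$, guaranteeing that the resulting partition $\mathcal{P}_i$ refines $\mathcal{P}_{i+1}$, so that each $\mathbb{P}_j$ is a valid hierarchical partition in the sense of \Cref{label:hier-part}. Each of the $r-1$ rounds at a given scale $\mu^i$ has a continuous ``offset'' parameter in $[0, \mu^i)$; we discretize this by choosing the offset from the three values $\{0,\; \mu^i/3,\; 2\mu^i/3\}$. Fixing one such discrete choice per round yields $3^{r-1} = O(3^r)$ distinct ``shift recipes.'' Importantly, we fix one recipe per hierarchy $\mathbb{P}_j$ and use that same recipe at every scale, so the total family size is $\sigma = O(3^r)$. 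Because each cluster is cut by at most $r-1$ rings and each ring is carved into pieces of width $O(\mu^i)$, the resulting cluster diameter at scale $i$ is $O(r \cdot \mu^i)$, which is absorbed into the allowed $\mu^i$ by choosing $\mu$ large enough, namely $\mu = \Theta(r^2)$.

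The padded property is verified pointwise. Fix $x \in X$ and the scale $i$ of interest. For the ball $B_X(x, \mu^i/\rho)$ with $\rho = \Theta(r^2)$ to be preserved (i.e.\ contained in a single cluster of $\mathcal{P}_i$), it suffices that none of the $r-1$ rings at scale $i$ in the current recipe cuts through the ball. For each round, the ring boundaries form an arithmetic progression with spacing $\mu^i$, shifted by the chosen offset; by pigeonhole among the three discrete offsets $\{0, \mu^i/3, 2\mu^i/3\}$, at least one of them places \emph{all} ring boundaries outside a $\mu^i/(3\rho)$-neighborhood of $x$ along the relevant shortest-path coordinate. Hence, for each of the $r-1$ rounds there is a ``good'' shift for $x$ at scale $i$, and the product construction of $3^{r-1}$ recipes ensures that one of the hierarchies $\mathbb{P}_j$ realizes all good shifts simultaneously. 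For this $j$ and this $i$, $B_X(x, \mu^i/\rho) \subseteq S$ for some $S \in \mathcal{P}_i$, establishing the padded property with $\rho = O(r^2)$.

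The main obstacle will be enforcing nesting (partitions at scale $i$ refining those at scale $i+1$) while simultaneously preserving the padded property: one must argue that the ``good shift'' for $x$ at scale $i$ is compatible with the shift choices already made at coarser scales. This is resolved because the KPR decomposition at scale $\mu^i$ is performed \emph{independently inside each cluster} of $\mathcal{P}_{i+1}$, so the shift recipe at scale $i$ only affects cuts within a single cluster of $\mathcal{P}_{i+1}$ and does not interfere with cuts at coarser scales. Since $B_X(x,\mu^i/\rho)$ has radius much smaller than the diameter of any cluster at coarser scales, it lies strictly inside a single cluster of $\mathcal{P}_{i+1}$, and the independence of the within-cluster cut suffices. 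Combining these ingredients yields a $(\sigma,\mu,\rho)$-HPF with $\sigma = O(3^r)$, $\mu = O(r^2)$, and $\rho = O(r^2)$, as claimed.
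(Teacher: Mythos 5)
First, note that the paper does not prove \Cref{lm:KLMN} at all: it is imported as a black box from \cite{KLMN04} (as restated in \cite{BFN19Ramsey}), so there is no in-paper proof to compare against. Judged on its own terms, your KPR-with-discrete-shifts construction is the right general direction, and your single-scale analysis is fine: with three offsets spaced $\mu^i/3$ apart and a bad interval of length $2\mu^i/\rho < \mu^i/3$, at most one offset per round is bad, and enumerating all $3^{r-1}$ recipes handles the dependence of round $j$'s good shift on rounds $1,\dots,j-1$.

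The genuine gap is in the multi-scale step. Because the partitions in a hierarchy are nested, a ball $B_X(x,\mu^i/\rho)$ is contained in a single cluster of $\mathcal{P}_i$ only if it is \emph{also} uncut by the decompositions at every coarser scale $i' > i$ in the \emph{same} hierarchy. Your argument only verifies that the scale-$i$ rings miss the ball, and you dispose of the coarser scales with the sentence ``Since $B_X(x,\mu^i/\rho)$ has radius much smaller than the diameter of any cluster at coarser scales, it lies strictly inside a single cluster of $\mathcal{P}_{i+1}$.'' That implication is false: a ball of arbitrarily small radius centered at a point adjacent to a cluster boundary of $\mathcal{P}_{i+1}$ is not contained in any cluster of $\mathcal{P}_{i+1}$, and the recipe that is good for $x$ at scale $i$ may be exactly one whose coarser-scale rings pass through $x$. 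Moreover, the obvious patch does not go through with your discretization: at each coarser scale $i'$, again at most one of the three offsets per round is bad for the tiny ball, but it can be a \emph{different} one at each of the $\Theta(\log_\mu \Phi)$ scales, so no single fixed recipe need survive all scales simultaneously. (The bad sets do have geometrically decaying measure, $O(\mu^{i-i'}/\rho)$ at scale $i'$, so a \emph{continuous} random offset per round works with probability $1 - O(r/\rho)$; but a union of $\Theta(\log_\mu\Phi)$ tiny arcs of total measure $1/3$ can still cover all three of your discrete offsets.) To close the gap you need an additional idea that aligns or nests the cut loci across scales — e.g., arranging the scale-$i'$ ring boundaries so that being far from them at scale $i$ forces being far from them at all coarser scales, or a different derandomization of the shared offset — which is precisely the content of the construction in \cite{KLMN04} that your write-up elides.
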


Next, we show how to construct a HPPF from a HPF. The following lemma and \Cref{lm:KLMN} implies \Cref{lm:HPPF-minor-free} as all parameters $\hat{\sigma},\hat{\mu},\hat{\rho}$ are $O(1)$ for planar graphs.

\begin{lemma} Let $(X,\delta_X)$ be a metric space admitting a $(\hat{\sigma},\hat{\mu},\hat{\rho})$-HPF\/. Then for any $\eps\in (0,1)$, $(X,\delta_X)$ admits a $(\sigma,\mu,\rho)$-HPPF with $\sigma = O(\hat{\sigma}\log(1/\eps)/\log(\hat{\mu}))$, $\mu \geq 1/\eps$, and $\rho = \hat{\rho}$. 
\end{lemma}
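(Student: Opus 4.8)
The plan is to promote a $(\hat\sigma,\hat\mu,\hat\rho)$-HPF — which guarantees, for each point $x$, a \emph{padded} scale where a ball around $x$ lands inside a single cluster — into a $(\sigma,\mu,\rho)$-HPPF, which guarantees, for each \emph{pair} $(x,y)$, a scale where both points share a cluster of diameter comparable to $\delta_X(x,y)$. The difficulty is twofold: first, the padded property is a statement about single points, not pairs, so we must somehow combine the padding guarantees for $x$ and for $y$; second, the scales in the given HPF come in geometric steps of $\hat\mu$, which may be a large constant (e.g. $O(r^2)$), whereas the target HPPF needs scales stepping by $\mu\ge 1/\varepsilon$, i.e.\ much \emph{coarser} steps — so we will need to ``subsample'' scales, and this is where the $\log(1/\varepsilon)/\log(\hat\mu)$ blowup in $\sigma$ comes from.

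\textbf{Step 1: Coarsening the scale hierarchy.} Given a $\hat\mu$-hierarchical partition $\hat{\mathbb{P}} = \{\hat{\mathcal P}_0,\dots\}$ from the HPF, I would fix a target $\mu$ with $1/\varepsilon \le \mu = \hat\mu^{q}$ for a suitable integer $q = \Theta(\log(1/\varepsilon)/\log\hat\mu)$, and extract a $\mu$-hierarchical partition by keeping only every $q$-th level: set $\mathcal P_k := \hat{\mathcal P}_{s + kq}$ for an \emph{offset} $s \in \{0,1,\dots,q-1\}$. Since $\hat{\mathcal P}_{s+kq}$ has diameter at most $\hat\mu^{s+kq} \le \hat\mu^{(k+1)q} = \mu^{k+1}$ — up to a constant reindexing this is a legitimate $\mu$-hierarchical partition (nestedness is inherited because we keep a subsequence of a nested family). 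Doing this for all $q$ offsets $s$, and for all $\hat\sigma$ hierarchies in the HPF, produces $\sigma = \hat\sigma\cdot q = O(\hat\sigma\log(1/\varepsilon)/\log\hat\mu)$ hierarchies. That matches the claimed size bound.

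\textbf{Step 2: Verifying the pairwise property.} Fix $x\neq y$ and let $\rho := \hat\rho$. I would locate the scale $i$ in the \emph{original} HPF that is relevant to the pair: informally, choose $i$ so that $\hat\mu^i/\hat\rho$ is of the same order as $\delta_X(x,y)$ — concretely, take $i$ minimal with $\delta_X(x,y) \le \hat\mu^i/\hat\rho$, so that also $\delta_X(x,y) > \hat\mu^{i-1}/\hat\rho$, i.e.\ $\hat\mu^i/\hat\rho \le \hat\mu\cdot\delta_X(x,y)$. By the HPF padded property applied to $x$ at scale $i$, the ball $B_X(x,\hat\mu^i/\hat\rho)$ lies inside a single cluster $S\in\hat{\mathcal P}_i$ of the relevant hierarchy; since $\delta_X(x,y)\le \hat\mu^i/\hat\rho$, the point $y$ lies in this ball and hence $\{x,y\}\subseteq S$. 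Now $S$ has diameter at most $\hat\mu^i \le \hat\rho\cdot\hat\mu\cdot\delta_X(x,y)$ and we need the lower bound $\delta_X(x,y)\ge \operatorname{diam}$-scale$/\rho$: after I reindex $S$ as a cluster $\mathcal P_k$ at some scale $k$ of the coarsened hierarchy (it will appear there precisely when $i \equiv s\pmod q$ for the matching offset $s$, and then $\mu^{k+1} \approx \hat\mu^i$), the required inequality $\delta_X(x,y)\ge \mu^k/\rho$ follows from $\hat\mu^i/\hat\rho \ge \delta_X(x,y)$ together with the factor-$\hat\mu$ slack, by adjusting $\rho$ by at most an $O(\hat\mu^2)$ (still $O(1)$) factor; since the lemma only asks for $\rho = \hat\rho$ asymptotically and all of $\hat\sigma,\hat\mu,\hat\rho$ are constants in the minor-free case, this is within tolerance.

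\textbf{Main obstacle.} The real subtlety, and the step I would spend the most care on, is \emph{reconciling the offsets with the target scale of each pair}: a given pair $(x,y)$ has its padded scale $i$ in the fine hierarchy, but a single coarsened hierarchy only retains scales $\equiv s \pmod q$; so we must take \emph{all} $q$ offsets and argue that for every pair, the offset $s \equiv i \pmod q$ gives a coarsened hierarchy in which $S$ is a genuine cluster at a scale whose diameter is $\Theta(\delta_X(x,y))$. Getting the constants to line up — that coarsening by $q$ levels (with $\mu = \hat\mu^q \ge 1/\varepsilon$) does not lose more than a constant factor in $\rho$, and that the clusters we keep really are nested and really do partition $X$ — is the bookkeeping that needs to be done carefully, but it is routine once the offset idea is in place. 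I would also double-check the edge cases: scales near $i_{\max}$ (where padding forces a cluster equal to $X$) and $\delta_X(x,y)$ near the minimum distance $1$, to ensure the chosen $i$ and offset always exist within the finite range $[0,\hat\imath_{\max}]$.
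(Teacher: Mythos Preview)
Your proposal is correct and is essentially identical to the paper's proof: both take each $\hat\mu$-hierarchy in the HPF, subsample its scales into $\kappa=\lceil\log_{\hat\mu}(1/\eps)\rceil$ residue classes modulo $\kappa$ (your ``offsets''), and then, for a given pair $(x,y)$, pick the fine scale $i$ with $\hat\mu^{i-1}/\hat\rho\le\delta_X(x,y)<\hat\mu^i/\hat\rho$ and use padding at $x$ to capture $y$ in a single cluster that survives in the coarsened hierarchy with offset $i\bmod\kappa$. Your care about the reindexing and the possible constant-factor slack in $\rho$ is well-placed---the paper's proof in fact stops at ``both $x$ and $y$ are in $S$'' without explicitly verifying the inequality $\delta_X(x,y)\ge\mu^j/\rho$ at the new scale, so your bookkeeping is at least as careful as theirs.
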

\begin{proof}
	Let $\hat{\mathfrak{P}} = \{\hat{\mathbb{P}}_1,\ldots, \hat{\mathbb{P}}_{\hat{\sigma}}\}$ be a $(\hat{\sigma},\hat{\mu},\hat{\rho})$-HPF by the assumption of the lemma. 
    For every hierarchy $\hat{\mathbb{P}}_a \in \hat{\mathfrak{P}}$ for $a\in [\hat{\sigma}]$, we ``partition'' it into $\kappa \coloneqq \ceil{\log_{\hat{\mu}}(1/\eps)}$ hierarchies $\set{\mathbb{P}_{a,0},\mathbb{P}_{a,1},\ldots, \mathbb{P}_{a,\kappa-1}}$ as follows: 
    each hierarchy $\mathbb{P}_{a,t}$ for $t\in [\kappa]$ includes all partitions $\mathcal{P}_i$ in all scales $i$'s of $\hat{\mathbb{P}}_a$ such that $i\equiv t \pmod \kappa$. We then define:
	\begin{equation}\label{eq:HPPF}
		\mathfrak{P} \coloneqq \{\mathbb{P}_{a,t}: a\in [\hat{\sigma}], t\in [\kappa]\}.
	\end{equation}
Clearly, $\sigma = |\mathfrak{P}| =\hat{ \sigma}\cdot\kappa  = O(\hat{\sigma}\log(1/\eps)/\log(\hat{\mu}))$ as claimed. Furthermore, for every hierarchy $\mathbb{P}\in \mathfrak{P}$, the ratio of the diameter of scale $i+1$ to the diameter of scale $i$ is $\hat{\mu}^{\kappa} = \hat{\mu}^{\ceil{\log_{\hat{\mu}}(1/\eps)}}\geq 1/\eps$; thus, $\mu\geq 1/\eps$. 

Finally, let $(x,y)$ be any two different points in $X$. 
Let $i\geq 0$ be such that $\hat{\mu}^{i-1}/\rho \leq \delta_X(x,y) < \hat{\mu}^{i}/\rho$; such $i$ exists since $\rho \leq 1$, $\delta_X(x,y)\geq 1, \mu\geq 1$. 
By Item 2 in \Cref{def:fam-HPF}, there exist a hierarchy $\hat{\mathbb{P}}_a \in \hat{\mathfrak{P}}$, a partition $\mathcal{P}_i \in \hat{\mathbb{P}}_a$, and a cluster $S\in \mathcal{P}_i$ such that $B_X(x,\mu^{i}/\rho)\subseteq S$.  Since hierarchies $\set{\mathbb{P}_{a,0},\mathbb{P}_{a,1},\ldots, \mathbb{P}_{a,\kappa-1}}$ partition $\hat{\mathbb{P}}_a$, there exists a hierarchy $\mathbb{P}_{a,t}$ for some $t\in [\kappa]$ such that $S$ belongs to the partition at some scale of $\mathbb{P}_{a,t}$. Furthermore, both $x$ and $y$ are in $S$ since $B_X(x,\mu^{i}/\rho)\subseteq S$, as desired.
\end{proof}


\section{Algorithmic applications}
\label{S:applications}

In this section, we discuss the algorithmic applications of our tree cover theorem (\Cref{thm:main}).

\paragraph{Distance oracles.~}  Let $\mathcal{T}$ be a tree cover of size $O(\eps^{-3}\log(1/\eps))$. In the preprocessing stage, we construct a distance oracle for each tree $T$ in $\mathcal{T}$ using the LCA data structure, following a standard construction~\cite{LW21,FGNW17}, as follows. Root $T$ at a vertex $r$. For each vertex $v$ in $T$, we store the distance $d_T(v,r)$ at $v$. We then construct the LCA data structure for $T$, denoted by $\lca_T$. To query the distance between two vertices $u$ and $v$ in $T$, first we query $\lca_T$ to get the LCA of $u$ and $v$, denoted by $w$. We then return $d_T(u,r) + d_T(v,r) - 2d_T(w,r)$ as the distance between $u$ and $v$ in $G$. 

 The distance oracle for $G$ will contain the distance oracle for each tree $T$ in the tree cover. The total space over all trees is then $\sum_{T\in \mathcal{T}}(S_{LCA}(|V(T)|) + O(|V(T)|)) = O(S_{LCA}(n)\cdot \eps^{-3}\log(1/\eps))$ as $|V(T)| = O(n)$ for every $T$ in $\mathcal{T}$. To query the distance between $u$ and $v$, we simply go over each tree $T$ in $\mathcal{T}$, query the distance $d_T(u,v)$ and finally return $\min_{T\in \mathcal{T}}d_T(u,v)$ as the distance. The query time is $O(Q_{LCA}(n)\cdot \eps^{-3}\log(1/\eps))$. The returned distance is a $(1+\eps)$-approximation of $\delta_G(u,v)$ since the distortion of $\mathcal{T}$ is $(1+\eps)$. Our \Cref{thm:app-distance-oracle} now follows.

\paragraph{Emulator of linear size.~} Given a planar graph $G$ and a set $S$ of $k$ terminals in $G$. We say that a graph $H$ is a \emph{$(1+\eps)$-emulator} if $S\subseteq V(H)$ and for every two terminals $t_1,t_2 \in S$, 
\[
\delta_G(t_1,t_2)\leq d_H(t_1,t_2) \leq (1+\eps)\cdot \delta_G(t_1,t_2). 
\]
The size of the emulator is the number of edges. 
Cheung, Goranci, and Henzinger constructed a \emph{planar} $(1+\eps)$-emulator of almost quadratic size $\tilde{O}(k^2/\eps^2)$. 
Recently, Chang, Krauthgamer, and Tan~\cite{CKT22,CKT22b} obtained a planar $(1+\eps)$-emulator of \emph{almost linear} size $k \poly(\log k, 1/\eps)$,
which breaks below the $\Omega(k^2)$ lower bound when no distortion is allowed.
 
 Using our tree cover in \Cref{thm:main}, we can construct a $(1+\eps)$-emulator of \emph{linear size} as follows. 
 For each tree $T$ in $\mathcal{T}$, we prune $T$ so that it only has $O(k)$ vertices by (i) repeatedly removing non-terminal leaves until every leaf is a terminal in $S$ and (ii) contracting non-terminal vertices of degree 2 and reweighting the new edge appropriately. This pruning does not change the distances in $T$ between terminals. Finally, we simply union all the tree $T$ in the cover $\mathcal{T}$. The same copies of a terminal in different trees will become a single terminal in the emulator. The size of our emulator is $O(k\cdot \eps^{-3}\log(1/\eps))$ since each tree has $O(k)$ vertices. We note that, unlike prior work, our emulator may not be planar.

 \paragraph{Low-hop emulators of planar metrics.} 
 We have just shown that any point set in a planar metric has a linear size $(1+\eps)$-emulator. 
 In some applications, we would want our emulators to have the low-hop property: every vertex $u$ can reach a vertex $v$ by a shortest path having few edges in the emulator.   
 Our tree cover theorem provides a simple way to construct such an emulator, following the line of~\cite{KLMS22}. Specifically, for each tree in the tree cover, construct a low-hop spanner for the tree using the result by Solomon~\cite{Solomon2013}, and take the union of all the spanners. The end result is an emulator with $k$ hops for any integer $k\geq 1$ and size $O(n\alpha_k(n)\cdot \eps^{-3}\log(1/\eps))$ where $\alpha_k(n)$ is the function $\alpha_{k}(n)$ is the inverse of an Ackermann-like function at the $\lfloor k/2 \rfloor$-th level of the primitive recursive hierarchy, where 
$\alpha_0(n) = \lceil n/2\rceil$,
$\alpha_1(n) = \lceil \sqrt{n} \rceil$,
$\alpha_2(n)= \lceil\log{n}\rceil$,
$\alpha_3(n)= \lceil\log\log{n}\rceil$,
$\alpha_4(n)= \log^*{n}$,
$\alpha_5(n)= \lfloor \frac{1}{2}\log^*{n} \rfloor$, etc.

\paragraph{Distance labeling schemes.}
A distance labeling scheme is an assignment of labels (binary strings) to all vertices of a graph,
so that the distance between any two nodes can be computed solely from their labels and the size of labels is as small as possible. 
A major open problem is to determine the complexity of distance labeling in an $n$-vertex unweighted and undirected planar graphs. 
An upper bound of $O(\sqrt{n} \log n)$ bits, complemented with a lower bound of $\Omega(n^{1/3})$, were achieved in the pioneering work of Gavoille et al.\ \cite{GPPR04}. 
The upper bound was later improved to $O(\sqrt{n})$~\cite{GU16}.
For unweighted planar graphs of diameter bounded by $\Delta$, we obtain a distance labeling scheme of $O(2^\Delta \log^2 n)$ bits, as a direct corollary of our tree cover result of \Cref{thm:planar-spanning} in conjunction with
known distance labeling schemes (e.g., \cite{Peleg00}) for trees. This result generalizes for unweighted minor-free graphs of bounded diameter.

For $(1+\eps)$-approximate distance labeling in unweighted planar graphs, the state-of-the-art is a labeling scheme of $O(\log n \cdot \log \log n \cdot \eps^{-1})$ bits \cite{Thorup04}. 
We obtain a $(1+\eps)$-approximate labeling scheme of $O(\log n \cdot \eps^{-3} \cdot \log^2(1/\eps))$ bits, as a direct corollary of our tree cover theorem of \Cref{thm:main} in conjunction with
the approximate distance labeling scheme for trees from \cite{FGNW17}.
This bound is optimal up to the $\eps$-dependence, even for trees \cite{FGNW17}.

\paragraph{Routing in planar metrics.~}
In compact routing schemes, the basic goal is to achieve efficient tradeoffs between the space usage --- the maximal number of bits per node --- and the stretch, which is the maximum ratio between distances of the
route used and the shortest path over all source-destination pairs.
There are two basic variants. In the {\em labeled} model,
the designer is allowed to assign nodes with short labels that can be used for routing. In the
{\em name-independent} model, the node labels are determined by an adversary. 
There is a large body of work on routing in metric spaces, including in restricted families of metrics, most notably in Euclidean and doubling metrics~\cite{DBLP:conf/podc/HassinP00, DBLP:conf/podc/AbrahamM04,Tal04, DBLP:conf/podc/Slivkins05,AGGM06,GR08Soda, CGMZ16}.
Many known routing protocols are rather complex, both conceptually and in implementation details. 

Our tree cover theorems (\Cref{thm:main,thm:treewidth,thm:planar-spanning})
essentially provide a reduction from the problem of routing in planar metrics to that of routing in tree metrics.
Our approach has several advantages over previous work. 
(1) {\em Simple}: Routing along trees is as basic as it gets. 
(2) {\em General}:
Our approach applies beyond planar metrics, to bounded treewidth metrics, to unweighted minor-free graphs of bounded diameter, and in principle to any graph family for which an efficient tree cover theorem exists. (3) {\em Low-hop}: By ``shortcutting'' the trees in our tree cover using the 2-hop sparse 1-spanners 
by Solomon~\cite{Solomon2013},
one can carry out the routing protocol on these spanners;
they are not as basic as tree metrics, but they 
achieve hop-diameter $2$ with size $O(n \log n)$.
As discussed and demonstrated in \cite{KLMS22}, a routing protocol in which the hop-distances are guaranteed to be bounded by 2 is advantageous.

We summarize our result on 2-hop routing in planar metrics.
\begin{theorem}\label{thm:routing}
For any $n$-point planar metric, one can construct a $(1+\eps)$-stretch $2$-hop routing scheme in the labeled, fixed-port model with headers of $\lceil\log{n}\rceil$ bits, labels of $O_\eps(\log^2 n)$ bits, local routing tables of $O_\eps(\log^2 n)$ bits, and local decision time $O(1)$.
\end{theorem}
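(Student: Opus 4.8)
The plan is to derive \Cref{thm:routing} as a direct corollary of the tree cover theorem (\Cref{thm:main}) combined with two well-understood ingredients: tree routing schemes and the $2$-hop sparse $1$-spanners of Solomon~\cite{Solomon2013}. First I would invoke \Cref{thm:main} to obtain a $(1+\eps)$-tree cover $\mathcal{F}$ of size $\tau(\eps) = O(\eps^{-3}\log(1/\eps))$ for the given $n$-point planar metric, so that for every pair $u,v$ there is a dominating tree $T \in \mathcal{F}$ with $d_T(u,v) \le (1+\eps)\delta(u,v)$. The overall strategy is then: route along whichever tree in the cover best approximates the queried distance, and to keep hop-length bounded, route not on the trees themselves but on a $2$-hop sparsifier of each tree.

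The key steps, in order: (1) For each tree $T \in \mathcal{F}$, apply Solomon's construction~\cite{Solomon2013} to build a $2$-hop $1$-spanner $H_T$ of $T$ with $O(|V(T)| \log |V(T)|) = O(n \log n)$ edges and hop-diameter $2$, preserving tree distances exactly. (2) Equip each $H_T$ with a labeled tree-routing scheme in the fixed-port model; standard tree routing schemes (e.g., following the interval-based approach, or routing schemes specialized to the $2$-hop spanner as in~\cite{KLMS22}) give labels and tables of $O(\log^2 n)$ bits with headers of $\lceil \log n \rceil$ bits and $O(1)$ decision time. (3) Assemble the final label of a vertex $v$ as the concatenation of its $\tau(\eps)$ per-tree labels, one for each $T \in \mathcal{F}$; since $\tau(\eps) = O_\eps(1)$, the label size is $O_\eps(\log^2 n)$, and likewise for the local routing tables. (4) Routing: given the labels of source $u$ and destination $v$, the source examines the $\tau(\eps)$ pairs of per-tree labels, identifies the index $i^\star$ of a tree $T_{i^\star}$ minimizing the recorded approximate distance $d_{T_i}(u,v)$ (each computable from the two labels), writes $i^\star$ together with the destination's $T_{i^\star}$-label into the header, and forwards the packet along $H_{T_{i^\star}}$; this completes in $2$ hops with stretch $(1+\eps)$ because $\delta(u,v) \le d_{T_{i^\star}}(u,v) = d_{H_{T_{i^\star}}}(u,v) \le (1+\eps)\delta(u,v)$ by the tree-cover guarantee and the fact that $H_{T}$ preserves $d_T$ exactly.

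I would then verify the parameter bookkeeping: headers carry one tree index ($O(\log(1/\eps))$ bits, absorbed into $\lceil\log n\rceil$) plus one tree-routing header ($\lceil\log n\rceil$ bits), labels and tables are $\tau(\eps)$ copies of $O(\log^2 n)$-bit tree-routing data, hence $O_\eps(\log^2 n)$, and each forwarding decision is a constant number of $O(1)$-time tree-routing decisions, hence $O(1)$. One subtlety to address carefully is that the routing must be over a \emph{fixed-port} model: I need the per-tree schemes to respect the port numbering of $H_T$ as a subgraph (or the schemes of~\cite{Solomon2013, KLMS22} that are already stated in the fixed-port model), and I must argue that unioning the $H_T$'s does not create port conflicts—this is handled by routing tables that store, per tree index, the relevant port assignments, which is where the extra $O_\eps(1)$ factor in table size is spent.

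The main obstacle I anticipate is not the high-level reduction, which is routine once \Cref{thm:main} is in hand, but the fixed-port bookkeeping across the union of the $\tau(\eps)$ spanners: ensuring that a single vertex, appearing in all $\tau(\eps)$ trees with independently-assigned port numbers, can consistently route in whichever spanner is selected without blowing up the label or decision time. This is exactly the kind of detail that~\cite{KLMS22} handles for their $2$-hop routing construction, so I would follow their treatment closely and cite it for the per-tree $2$-hop routing primitive, leaving the planar-specific content entirely to \Cref{thm:main}.
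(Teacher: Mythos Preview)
Your proposal is correct and matches the paper's approach exactly: the paper states \Cref{thm:routing} as a direct corollary of \Cref{thm:main} combined with Solomon's 2-hop sparse 1-spanners~\cite{Solomon2013} and the 2-hop routing machinery of~\cite{KLMS22}, without giving a formal proof beyond this reduction. If anything, your write-up is more detailed than the paper's own treatment, which leaves the per-tree routing bookkeeping and fixed-port handling implicit in the citation to~\cite{KLMS22}.
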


\paragraph{Acknowledgement.~} Hung Le and Cuong Than are supported by the NSF CAREER Award No.\ CCF-2237288 and an NSF Grant No.\ CCF-2121952. Shay Solomon is funded by the European Union (ERC, DynOpt, 101043159).
Views and opinions expressed are however those of the author(s) only and do not necessarily reflect those of the European Union or the European Research Council.
Neither the European Union nor the granting authority can be held responsible for them.
Shay Solomon is also supported by the Israel Science Foundation (ISF) grant No.1991/1.
Shay Solomon and Lazar Milenković are supported by a grant from the United States-Israel Binational Science Foundation (BSF), Jerusalem, Israel, and the United States National Science Foundation (NSF).


\small
\bibliographystyle{alphaurl}
\bibliography{main}



\end{document}